%%
%% This is file `squelette-rr.tex',
%% generated with the docstrip utility.
%%
%% The original source files were:
%%
%% RR.dtx  (with options: `sample')
%% ********************************************************************
%% Copyright (C) 1997-1999 2004 2006-2011 INRIA/APICS/MARELLE by Jose' Grimm
%% This file may be distributed and/or modified under the
%%
%% This is file `squelette-rr.tex',
%% generated with the docstrip utility.
%%
%% The original source files were:
%%
%% RR.dtx  (with options: `sample')
%% ********************************************************************
%% Copyright (C) 1997-1999 2004 2006-2011 INRIA/APICS/MARELLE by Jose' Grimm
%% This file may be distributed and/or modified under the
%% conditions of the LaTeX Project Public License, either version 1.3
%% of this license or (at your option) any later version.
%% The latest version of this license is in
%%    http://www.latex-project.org/lppl.txt
%% and version 1.3 or later is part of all distributions of LaTeX
%% version 2003/12/01 or later.
%% An archive of the software can be found at
%%    ftp://ftp-sop.inria.fr/marelle/RR-INRIA

\documentclass[twoside]{article}
\usepackage[a4paper]{geometry}
\usepackage[utf8]{inputenc}
\usepackage[T1]{fontenc} % ou \usepackage[OT1]{fontenc}
\usepackage{RR}
\usepackage{hyperref}
%\graphicspath{{../Figures/}, {./RR/}}
\usepackage{graphicx}
\usepackage{subfig}
\usepackage{caption}
\captionsetup[subfloat]{labelfont=normalfont}

\usepackage{amsfonts,latexsym,amssymb,amscd,epsf}
\usepackage{amsthm}
\usepackage[tbtags]{amsmath}
\usepackage{mathtools}
\usepackage{hyperref}
\usepackage{algorithm}
\usepackage{algpseudocode}
\usepackage{booktabs,multirow}

\usepackage{array}
\usepackage{makecell}

% A few macros
%Math
\usepackage{dsfont}
\DeclareMathAlphabet{\mymathbb}{U}{BOONDOX-ds}{m}{n}
\DeclareMathAlphabet\mathbfcal{OMS}{cmsy}{b}{n}
\newcommand{\numberset}{\mathbb}
\newcommand{\ten}[1]{\mathbf{#1}}

\newcommand{\0}{\mymathbb{0}}
\newcommand{\1}{\mymathbb{1}}
\newcommand{\I}{\mathbb{I}}

\newcommand{\R}{\numberset{R}}

\newcommand{\N}{\numberset{N}}

\newcommand{\Nabla}{\boldsymbol{\nabla}}

\newcommand{\kron}{\otimes_{\textsc{K}}}
\newcommand{\coreTT}[1]{\underline{\ten{#1}}}
\newcommand{\matcoreTT}[1]{\underline{{#1}}}
%contraction
\usepackage{tensor}
\newcommand{\cont}[2]{\tensor[_{#1}]{\bullet}{_{#2}}} 
%%%%Images
%\usepackage[usenames, dvipsnames]{color}
\newcommand{\dotprod}[2]{{\langle #1, #2\rangle}}
\newcommand{\norm}[1]{\Vert{#1} \Vert}

% A few macros
\newcommand{\ignore}[1]{}
\newcommand{\etab}{\eta_b}
\newcommand{\etaAb}[2]{\eta_{#1,#2}}

\newcommand{\Db}{\Delta b}
\newcommand{\DA}{\Delta A}

\DeclareMathOperator*{\argmin}{arg min}
\DeclareMathOperator*{\argmax}{arg max}
\DeclareMathOperator{\spn}{span}

\newcommand{\hypref}[2]{\hyperref[#2]{#1 \ref*{#2}}}
\newcommand{\mc}[1]{\mathcal{#1}}
\newcommand{\mbbR}{\mathbb{R}}

\newcommand{\reportPaper}[2]{#1} % #1 for the report, #2 for the paper

% line numbering
\usepackage{etoolbox} %% <- for \pretocmd, \apptocmd and \patchcmd

%% Patch 'normal' math environments:

%% Patch AMS math environments:

% THEOREM Environments ---------------------------------------------------
\DeclareMathAlphabet{\mathdutchcal}{U}{dutchcal}{m}{n}
\SetMathAlphabet{\mathdutchcal}{bold}{U}{dutchcal}{b}{n}
\newcommand{\set}[1]{\mathdutchcal{#1}}
%\newcommand{\core}[2]{{#1}_{#2}}
% THEOREM Environments ---------------------------------------------------
 \newtheorem{theorem}{Theorem}[section]
 \newtheorem{corollary}[theorem]{Corollary}
 \newtheorem{lemma}[theorem]{Lemma}
 \newtheorem{proposition}[theorem]{Proposition}

 \newtheorem{remark}[theorem]{Remark}

\usepackage[style=numeric,  backend=bibtex, firstinits, maxbibnames=50,sortcites = true, url=false,eprint=false]{biblatex}
\bibliography{references}

\RRNo{9484}
%% \RTNo{0703}
%%
%% date de publication du rapport
\RRdate{September 2022}
%%
%% Cas d'une version deux
%% \RRversion{2}
%% date de publication de la version 2
%% \RRdater{November 2008}
%%
\RRauthor{% les auteurs
  Olivier Coulaud\thanks{Inria, Inria centre at the University of Bordeaux}
  \and
  Luc Giraud\footnotemark[1]
  \and
  Martina Iannacito\footnotemark[1]
}
%% Ceci apparait sur chaque page paire.
\authorhead{Coulaud et al.}
%% titre francais long
\RRtitle{
Un algorithme GMRES robuste au format tensor train
}
%% English title
\RRetitle{A robust GMRES algorithm in Tensor Train format}
\titlehead{GMRES in Tensor Train}
%%
%\RRnote{This is a note}
%%
\RRresume{
Nous considérons la résolution de systèmes linéaires avec une structure de produit tensoriel en utilisant un algorithme GMRES.
  Afin de faire face à la complexité de calcul en grande dimension, à la fois en termes d'opérations en virgule flottante et d'exigences de mémoire, notre algorithme est basé sur une représentation tensorielle à faible rang, à savoir le format Tensor Train.
  Dans un cadre d'analyse d'erreur inverse, nous montrons comment l'approximation tensorielle affecte la précision de la solution calculée.
  Dans une perspective d'erreur inverse, nous étudions les situations où le problème de dimension $(d+1)$ à résoudre résulte de la concaténation d'une séquence de problèmes de dimension $d$ (comme les problèmes d'opérateurs linéaires paramétriques ou de second membres paramétriques), nous fournissons des bornes d'erreur inverse pour relier la précision de la solution calculée de dimension $(d+1)$ à la qualité numérique de la séquence de solutions de dimension $d$ qui peut être extraite de celle-ci. Cela permet de prescrire un seuil de convergence lors de la résolution du problème à $(d+1)$ dimensions qui garantit la qualité numérique des solutions à $d$ dimensions qui seront extraites de la solution calculée en $(d+1)$ dimensions une fois que le solveur aura convergé.
  Les caractéristiques mentionnées ci-dessus sont illustrées sur un ensemble d'exemples académiques de dimensions et de tailles variables.
}
\RRabstract{
We consider the solution of linear systems with tensor product structure using a GMRES algorithm.
  In order to cope with the computational complexity in large dimension both in terms of floating point operations and memory requirement, our algorithm is based on low-rank tensor representation, namely the Tensor Train format.
  In a backward error analysis framework, we show how the tensor approximation affects the accuracy of the computed solution.
  With the bacwkward perspective, we investigate the situations where the $(d+1)$-dimensional problem to be solved results from the concatenation of a sequence of $d$-dimensional problems (like parametric linear operator or parametric right-hand side problems), we provide backward error bounds to relate the accuracy of the $(d+1)$-dimensional computed solution with the numerical quality of the sequence of $d$-dimensional solutions that can be extracted form it. This enables to prescribe convergence threshold when solving the $(d+1)$-dimensional problem that ensures the numerical quality of the $d$-dimensional solutions that will be extracted from the $(d+1)$-dimensional computed solution once the solver has converged.
  The above mentioned features are illustrated on a set of academic examples of varying dimensions and sizes.
}
\RRmotcle{ GMRES, backward stabilité, format Tenseur Train
}
\RRkeyword{
GMRES, backward stability, Tensor Train format}
%%
%% \RRprojet{Apics}  % cas d'un seul projet
\RRprojet{Concace}
%%
%% \URLorraine % pour ceux qui sont \`a l'est
%% \URRennes  % pour ceux qui sont \`a l'ouest
%% \URRhoneAlpes % pour ceux qui sont dans les montagnes
%% \URRocq % pour ceux qui sont au centre de la France
%% \URFuturs % pour ceux qui sont dans le virtuel
%% \URSophia % pour ceux qui sont au Sud.
%%
\RCBordeaux % centre de recherche Bordeaux - Sud Ouest
%% \RCLille % centre de recherche Lille Nord Europe
%% \RCParis % Paris Rocquencourt
%% \RCSaclay % Saclay \^Ile de France
%% \RCGrenoble % Grenoble - Rh\^one-Alpes
%% \RCNancy % Nancy - Grand Est
%% \RCRennes % Rennes - Bretagne Atlantique
%%\RCSophia % Sophia Antipolis M\'editerran\'ee

%%
\begin{document}
\makeRR   % cas d'un rapport de recherche
%% \makeRT % cas d'un rapport technique.
%% a partir d'ici, chacun fait comme il le souhaite
\let\thefootnote\relax\footnotetext{
Distributed under a \href{https://creativecommons.org/licenses/by/4.0/}{Creative Commons Attribution 4.0} International License
}
\reportPaper{
\tableofcontents
\newpage
}{}

% ----------------%For the purpose of Open Access, a \href{https://creativecommons. org/licenses/by/4.0/}{CC-BY public copyright license} has been applied by the authors to the present document and will be applied to all subsequent versions up to the Author Accepted Manuscript arising from this submission. \\ \includegraphics[scale=0.5]{}---------------
\section{Introduction}
% -------------------------------
%\color{gray}{
%\begin{itemize}
%	\item GMRES for linear system (backward stability)
%	\item curse of dimension
%	\item tensor linear system
%	\item tensor train history
% 	\item previous GMRES realization
% 	\item purpose of the paper
% 	\begin{itemize}
% 		\item describe a backward stable GMRES version in TT-format
% 		\item numerically prove its backward stability different, relate to matrix variable accuracy paper
% 		\item theoretical bounds for left and right-hand sides and numerical quality
% 		\item memory requirement 
% 		\item suitable tool for large size problem
% 	\end{itemize}
% 	\item structure
%\end{itemize}}\normalcolor

In many domains in sciences and engineering, the problems to be solved can naturally be  modeled mathematically as $d$-{dimensional} linear systems with tensor product structure, i.e., as
\[
	\ten{A}\ten{x} = \ten{b}
\]
where $\ten{A}$ represents a multilinear endomorphism operator on $\R^{n_1\times \dots \times n_d}$, $\ten{b}\in\R^{n_1\times \dots \times n_d}$ is the right-hand side and $\ten{x}\in\R^{n_1\times \dots \times n_d}$ is the searched solution. 
Two main approaches have emerged in the search for methods to solve high-{dimensional} linear systems.
%\ignore{The first  approach is based on optimization techniques, such as AMEN~\cite{Dolgov2014} and DMRG~\cite{Oseledets2011}, that break the $d$-order linear system into low dimension minimization sub-problems, getting an high order solution through an optimization process. The second main class focuses on  how to generalize to high-order linear systems iterative methods, as Krylov subspace methods, among which there are  conjugate gradient, Generalized Minimal RESidual (GMRES) and biconjugate gradient method~\MI{ref}.}
The first  approach is based on optimization techniques mainly based on Alternating Linearised Scheme such as ALS, MALS~\cite{Fom2011}  AMEN~\cite{Dolgov2014} and DMRG~\cite{Oseledets2011DMRG}, that break the $d$-{dimensional} linear system into low dimension minimization sub-problems, getting an high {dimensional} solution through an optimization process. The second approach focuses on how to generalize to high-{dimensional} linear systems iterative methods, as Krylov subspace methods, among which there are  conjugate gradient, Generalized Minimal RESidual (GMRES) and biconjugate gradient method~\cite{tobler2012}. 

% Direct methods are not considered since iterative ones appeared as more promising for solving tensor linear systems. 
Over the years, different attempts to extend iterative methods from classical matrix linear systems to high {dimensional} ones have been made, see~\cite{Dolgov2013, KressnerTobler2011,Ballani2013}. However, solving high {dimensional} linear systems is challenging, since the number of variables grows exponentially with the number of dimensions of the problem. To tackle this phenomenon, known in the tensor linear algebra community as `curse of dimensionality', there are several compression techniques, as High Order Singular Value Decomposition~\cite{DeLathauwer2000}, Hierarchical-Tucker~\cite{Grasedyck2009} and Tensor-Train (TT)~\cite{Oseledets2011}. These compression algorithms provide an approximation at a given accuracy of a given tensor, decreasing the memory footprint, but introducing meanwhile rounding errors. 
For the solution of such linear systems, the iterative methods have to rely heavily on compression techniques, to prevent memory deficiencies. Consequently, it is fundamental to take into account the effect of tensor rounding errors due to tensor recompression, when evaluating the numerical quality of the solution obtained from an iterative  algorithm. 

In this work, with a backward error perspective we investigate  the numerical performance of the Modified Gram-Schmidt GMRES (MGS-GMRES)~\cite{saad1986} for tensor linear systems represented through the TT-formalism. In the classical matrix context, it has been shown that MGS-GMRES is backward stable~\cite{paige2006} in the IEEE arithmetic, where the unit round-off $u$ bounds both the data representation and the rounding error of all the elementary floating point operations. In~\cite{agullo2022}, the authors pointed out numerically that the MGS-GMRES backward stability holds even when the data representation introduces component-wise or norm-wise perturbations, different from the unit round-off of the finite precision arithmetic. Differently for previously proposed versions of GMRES in tensor format~\cite{Dolgov2013}, this paper investigates numerically, through many examples, the backward stability of MGS-GMRES for tensor linear systems, where the TT-formalism introduces representation errors bounded by the prescribed accuracy of the computed solution.
 In particular, we consider the situation where either the right-hand side or the multilinear operator of the  $d$-{dimensional} system depends on a parameter. The tensor structure enables us to solve simultaneously for many discrete values of the parameter,  by simply reformulating the problem in a space of dimension $(d+1)$.
We establish theoretical backward error bounds to assess the quality of the $d$-{dimensional} solution extracted from the $(d+1)$-{dimensional} solution. This enables to define the convergence threshold to be used for the solution of the problem of dimension  $(d+1)$ that ensures the numerical quality of the $d$-{dimensional} solution extracted for the individual problem once MGS-GMRES has converged. We verify the tightness of these bounds through  numerical examples. 
We also investigate the memory consumption of our TT-GMRES algorithm. In particular we observe that, as it could have been expected  the memory requirement grows with the number of iterations and with the accuracy of the tensor representation, i.e., how accurate the tensor approximation is. From the memory viewpoint, MGS-GMRES in TT-format happens to be a suitable backward stable method for solving large high-{dimensional} systems, if the number of iterations remains reasonable or if a restart approach is considered. In our work, almost all the examples in TT-format are solved with a right preconditioned MGS-GMRES, to satisfy the prescribed tolerance in a small number of iterations. While we spend some words over the quality of the preconditioner, we do not study elaborated restarting techniques.

The remainder of this paper is organized as follows. In Section~\ref{sec2} we introduce the notation. Then we focus on GMRES, presenting the algorithm in a matrix computation framework. After introducing the TT representation, the MGS-GMRES algorithm in TT-format is fully described.
Next, in Section~\ref{ssec:Ai1}, we present our approach to solve simultaneously multiple linear systems, which share a common structure. We provide some theoretical results about the quality of the solution extracted from the simultaneous system solution.
Numerical experiments are reported in Section~\ref{sec3}, where we first illustrate the main features of the solver and compare its robustness to
the previous realization of GMRES in TT-format~\cite{Dolgov2013}. Then we illustrate the tightness of the bounds derived on Section~\ref{ssec:Ai1}
when solving parameter depend problems formulated in TT format.
\reportPaper{After summarizing the main results of our work in the conclusion, we investigate further the preconditioner choice and the convergence of problems with the same operator and different right-hand sides solved simultaneously in Appendix~\ref{app:prec} and~\ref{app:eig} respectively. The conclusive Appendix~\ref{app:tt} describes in details the construction of the $(d+1)$-dimensional linear system in TT-format from systems of dimension $d$.}
{We draw some concluding remarks in the final section.}

% -------------------------------
\section{Preliminaries on GMRES and tensors\label{sec2}}
% -------------------------------
For ease of reading, we adopt the following notations for the different mathematical objects involved in the description.
Small Latin letters stand for scalars and vectors (e.g., $a$), leaving the context to clarify the object nature. 
Matrices are denoted by capital Latin letters (e.g., $A$), tensors by bold small Latin letters (e.g., $\ten{a}$), the multilinear operator between two  spaces are calligraphic bold capital letter (e.g., $\mathbfcal{A}$) and the tensors representation of linear operators by bold capital Latin letters (e.g., $\ten{A}$).
%{\LG{Do we need it$\ort(m\times n)$ is the set of the real orthogonal matrix of $m$ rows and $n$ columns. I think it is used only once - in Arnoldi equality - and is similar to the zero block matrix that appear in (5) and motivatesthe change of notation for the zero block in (4) ... to be made consistent once we ill have made a decision on this point.}
We adopt the `Matlab notation' denoting by ``$\,:\,$''  all the indices along a mode. For example given a matrix $A\in\R^{m\times n}$, then $A(:, i)$ stands for the $i$-th column of $A$. The tensor product is denoted by $\otimes$ and Kronecker product by $\kron$, while the {Euclidean} dot product by $\dotprod{\cdot}{\cdot}$ both for vectors and tensors, where it is generalized through the tensor contraction. We denote by $||\cdot||$ the Euclidean norm for vectors and the Frobenious norm for matrix and tensors.
Let $\mathbfcal{A}: \R^{n_1\times\dots \times n_d} \rightarrow \R^{n_1\times\dots\times n_d}$ be a linear operator on tensor  product of  spaces and $\ten{A}\in\R^{(n_1\times n_1)\times\dots\times (n_d\times n_d)}$ its tensor representation with respect to the canonical basis, then $||\ten{A}||_2$ is the L$2$ norm of the linear operator $\mathbfcal{A}$. If $d=2$, then we have the L$2$ norm of the matrix associated with a simpler linear operator among two linear vector spaces. 
\subsection{Preconditioned GMRES}
For the solution of a linear system using an iterative solver, it is recommended to used stopping criterion based on a backward error~\cite{drsg:95,gree:97,paige2006}.
For iterative schemes, two normwise backward errors can be considered. The iterative scheme will be stopped when the backward error will become lower than  a user prescribed threshold; that is, when the current iterate can be considered as the exact solution of a perturbed problem where the relative norm of the perturbation is lower than the threshold. 
If we denote $Ax=b$ the linear system to be solved a first  backward error on $A \in \R^{n \times n}$ and $b \in \R^n$ can be considered.
We denote $\etaAb{A}{b}(x_k)$ this normwise backward error associated with the approximate solution
$x_k$ at iteration $k$, that is defined by~\cite{Rigal1967,higham2005}
\begin{align}
\etaAb{A}{b}(x_k) &= \min_{\DA, \Db} \left\{ \tau > 0 : \norm{\DA} \leq \tau \norm{A}, \ 
     \norm{\Db} \leq \tau \norm{b}  \right . \nonumber \nonumber \\
& \left .\qquad \qquad {\rm and}\ (A + \DA)x_k = b + \Db \right\} \nonumber \\
&= \frac{\norm{A x_k - b}}{\norm{A}_2\norm{x_k} + \norm{b}}  \label{eq:BE-Ab}.
\end{align}

In some circumstances, a simpler backward error criterion based on perturbations only in the right-hand side can also be considered, that leads to the second possible choice
\begin{align}
\etab(x_k) &= \min_{\Db} \left\{ \tau > 0 : \norm{\Db} \leq \tau \norm{b} \ {\rm and}\ A x_k = b + \Db \right\} \nonumber \\
&= \frac{\norm{A x_k - b}}{\norm{b}} . \label{eq:BE-b}
\end{align}

Starting from the zero  initial guess,  GMRES~\cite{saad1986} constructs a series of approximations $x_k$ in Krylov subspaces of increasing dimension $k$ so that the residual norm of the sequence of iterates is  decreasing over these
nested spaces. More specifically:
\[x_k = \argmin_{x\in  \mc{K}_k(A, b)}\left\|b - Ax\right\|,\]
with 
\[    \mc{K}_k(A, b) = \spn\{b, Ab, \ldots, A^{k - 1}b\} \]
the $k$-dimensional Krylov subspace spanned by $A$ and $b$. 
In practice,  a matrix $V_k~=~[v_1, \ldots, v_k]~\in~\mbbR^{n\times k}$ with orthonormal columns and an upper Hessenberg matrix $\bar{H}_k\in\mbbR^{(k + 1)\times k}$ are iteratively constructed using the Arnoldi procedure such that $\spn \{ V_k \} = \mc{K}_k(A, b)$ and
\begin{equation} \nonumber % \label{eq:arnoldi}
    AV_k = V_{k + 1}\bar{H}_k, \qquad\text{with}\qquad V_{k + 1}^T  V_{k + 1} = I_{k+1}.
\end{equation}
This is often referred to as the Arnoldi relation. Consequently, $x_k = V_ky_k$ with
\[ y_k = \argmin_{y\in\mbbR^k}\left\|\beta e_1 - \bar{H}_ky\right\|,\]
where $\beta = \left\|b\right\|$ and $e_1 = (1, 0, \ldots, 0)^T\in\mbbR^{k + 1}$ so that in
exact arithmetic the following equality holds between the least square residual and the true residual
\begin{equation}\label{eq:resLS_vs_trueRes}
\Vert \tilde{r}_k \Vert = \Vert \beta e_1 - \bar{H}_ky \Vert  = \Vert b  - A x_k \Vert.
\end{equation}
In finite precision calculation, this equality no longer holds but it has been shown that the GMRES method is backward stable with respect to $\etaAb{A}{b}$~\cite{paige2006} meaning that along the iterations
$\etaAb{A}{b}(x_k)$ might go down-to ${\mathcal O}(u)$ where $u$ is the unit round-off of the floating point arithmetic used to perform the calculations.
An overview of GMRES is given in~\hypref{Algorithm}{alg:gmres}; we refer  to~\cite{saad1986, saad2003} for a
more detailed presentation.
\begin{algorithm}
\caption{$x$, \texttt{hasConverged} $=$ GMRES($A$, $b$, \texttt{m}, $\varepsilon$)}\label{alg:gmres}
    \begin{algorithmic}[1]
        \State \textbf{input:} $A$, $b$, \texttt{m}, $\varepsilon$.
        \State $r_0 = b $, $\beta = \left\|r_0\right\|$ and $v_1 = r_0/\beta$
        \For{$k = 1, \ldots, \texttt{m}$}
            \State $w = A v_k$ \label{alg:gmres:1}
            %\State \%MGS variant  \Comment\MI{ we can comment like this also}
            \For{$i = 1, \ldots, k$} \Comment{MGS variant}
                \State $\bar{H}_{i,k} = \dotprod{v_i}{w}$ %\MI{$\langle v_i\,, w\rangle$, to be consistent with tensor notation?}
                \State $w = w -\bar{H}_{i, k}v_i$
            \EndFor
            \State $\bar{H}_{k + 1, k} = \left\|w\right\|$
            \State $v_{k + 1} = w/\bar{H}_{k + 1, k}$
            \State $\displaystyle y_k = \argmin_{y\in\mbbR^k}\left\|\beta e_1 - \bar{H}_ky\right\|$
            % \State $\tilde{r}_k = \beta e_1 - \bar{H}_ky_k$ \LG{Why do we need this as it is not used ?}
            \State $x_k = V_ky_k$ \label{alg:gmres:2}
            \If{($\etaAb{A}{b}(x_k)  < \varepsilon$)} \State \texttt{hasConverged = True}
            \State \textbf{break}
            \EndIf
        \EndFor
        \State \textbf{return:} $x=x_k$, \texttt{hasConverged}
    \end{algorithmic}
\end{algorithm}

Because the orthonormal basis $V_k$ has to be stored, a restart parameter defining the maximal dimension of the search Krylov space is used to control the memory footprint of the solver.
If the maximum dimension of the search space is reached without converging, the algorithm is restarted using the final iterate as the initial guess for a new cycle of GMRES.
Furthermore, it is often needed to consider a preconditioned to speed-up the convergence. 
Using right-preconditioned GMRES consists in considering a non singular matrix $M$, the so-called preconditioner that approximates
the inverse of $A$ in some sense. In that case,  GMRES is applied to the preconditioned system $AM t = b$.
Once the solution $t$ has been computed the solution of the original system is recovered as $x = M t$.
The right-preconditioned GMRES is sketched in Algorithm~\ref{alg:rgmres} for a restart parameter \texttt{m} and a convergence threshold $\varepsilon$ .
\begin{algorithm}
    \caption{$x$, \texttt{hasConverged} = Right-GMRES($A$, $M$, $b$, $x_0$, \texttt{m}, $\varepsilon$)  }\label{alg:rgmres}
    \begin{algorithmic}[1]
        \State \textbf{input:} $A$, $M$, $b$, \texttt{m}, $\varepsilon$.
        \State \texttt{hasConverged = False}
        \State $x = x_0$
        \While{\texttt{not(hasConverged)}}
          \State $ r = b - A x$
          \hfill\Comment{\small{Iterative refinment step with at most $m$ GMRES iterations on $AM$}}\normalsize
          \State $t_k$, \texttt{hasConverged} =  GMRES($AM$, $r$, \texttt{m}, $\varepsilon$) 
          \State $ x = x + M t_k$ 
          \hfill\Comment{\small{Update the unpreconditionned with the computed correction}}\normalsize
        \EndWhile
        \State \textbf{return:} $x$, \texttt{hasConverged}
    \end{algorithmic}
\end{algorithm}

% \begin{itemize}
% 	\color{lightgray}
%   \item LG: brief recall of the underlying GMRES ideas/principles
%   \item LG: recall the background stability results after a shot introduction of the backward error 
% \end{itemize}
%
%	\normalcolor

 \subsection{The Tensor Train format \label{ssec2:TT}}
 Firstly,  we describe the main key elements of the Tensor Train (TT) notation for tensors and linear operators between tensor  product of  spaces. Secondly, we present the advantages in using this formalism to solve linear systems that are naturally defined in high dimension spaces.

% \subsubsection{TT-notation}
  Let $\ten{x}$ be a $d$-order tensor in  $\R^{n_1\times \dots \times n_d}$ and  $n_k$ the {dimension} of mode $k$ for every $k~\in~\{1, \dots, d\}$. Since storing the full tensor $\ten{x}\in\R^{n_1\times \dots \times n_d}$ has a memory cost of $\mathcal{O}(n^{d})$ with $n=~\max_{i\in\{1,\dots, d\}}\{n_i\}$, different compression techniques were proposed over the years to reduce the memory consumption~\cite{DeLathauwer2000, Grasedyck2009, Oseledets2011}.  For the purpose of this work the most suitable tensor representation  is the \emph{Tensor Train}  (TT) format~\cite{Oseledets2011}. The key idea of TT is expressing a tensor of order $d$ as the contraction of $d$ tensors of order $3$. The contraction is actually the generalization to tensors of the matrix-vector product. Given $\ten{a}\in\R^{n_1\times \dots\times n_{h}\times \dots\times n_{d_1} }$ and $\ten{b}\in\R^{m_1\times \dots\times n_{h}\times \dots\times m_{d_2}}$, their \emph{tensor contraction} with respect to mode $h$, denoted $\ten{a}\bullet_h\ten{b}$, provides a new tensor \[\ten{c}~\in~\R^{n_1\times\cdots\times n_{h-1}\times n_{h+1}\times\cdots\times n_{d_1}\times m_1\times\cdots\times m_{h-1}\times m_{h+1}\times\cdots\times m_{d-1}}\] such that its $(i_1,\dots,i_{h-1},i_{h+1},\dots,i_{{d_1}},j_1\dots,j_{h-1},j_{h+1},\dots,j_{{d_2}})$ element
 % $c=  \ten{c}(i_1,\dots,i_{h-1},i_{h+1},\allowbreak\dots,i_{{d_1}}, j_1\dots,j_{h-1},j_{h+1},\dots,j_{{d_2}})$
  is
 \begin{equation*}
  \begin{split}
		 c&=(\ten{a}\bullet_h\ten{b})(i_1,\dots,i_{h-1},i_{h+1},\dots,i_{{d_1}},j_1\dots,j_{h-1},j_{h+1},\dots,j_{{d_2}}) \\
		&=\sum_{i_h = 1}^{n_h} \ten{a}(i_1,\dots, i_{h},\dots ,i_{d_1})\ten{b}(j_1,\dots, i_{h},\dots ,j_{d_2}).
  \end{split}
  	%\resizebox{0.9\hsize}{!}{$ =$}
  \end{equation*}
  The contraction between tensors is linearly extended to more modes. To shorten the notation we omit the bullet symbol and the mode indices when the modes to contract will be clear from the context. 
  
  The contraction applies also to the tensor representation of tensor linear operators for computing the operator powers. Let $\mathbfcal{A}:\R^{n_1\times \dots \times n_d}\rightarrow \R^{n_1\times \dots \times n_d}$ be a linear operator and let the tensor $\ten{A}\in\R^{(n_1\times n_1)\times\dots\times (n_d\times n_d)}$ be its representation with respect to the canonical basis of $\R^{n_1\times \dots \times n_d}$. Then the tensor representation with respect to this canonical basis of $\mathbfcal{A}^2$ is $\ten{B}\in\R^{(n_1\times n_1)\times\dots\times (n_d\times n_d)}$, whose element $b = \ten{B}(i_1, j_1, \dots, i_d, j_d)$ is
  \begin{equation*}
	\begin{split}
  		b &= \bigl(\ten{A}\cont{h_\textsc{L}}{h_\textsc{R}} \ten{A}\bigr)(i_1, j_1, \dots, i_d, j_d)\\
  		  &= \sum_{k_1, \dots, k_d = 1}^{n_1, \dots , n_d} \ten{A}(i_1, k_1,\dots, i_d, k_d)\ten{A}(k_1, j_1,\dots, k_d, j_d)
  	\end{split}
  \end{equation*}
  with $h_\textsc{L} = \{2,4,\dots, 2d\}$ and $h_\textsc{R} = \{1,3,\dots, 2d-1\}$. From this, we recursively obtain the tensor associated with $\mathbfcal{A}^h$ for $h\in\N$.

  The Tensor Train expression of $\ten{x}\in\R^{n_1\times \dots \times n_d}$ is
  \[
  	\ten{x} = \coreTT{x}_1\coreTT{x}_2\cdots\coreTT{x}_d ,
  \]
  where $\coreTT{x}_k\in\R^{r_{k-1}\times n_k\times r_k}$ is called $k$-th \emph{TT-core} for $k\in\{1,\dots, d\}$, with $r_0 = r_d = 1$. Notice that $\coreTT{x}_1\in\R^{r_0\times n_1\times r_1}$ and $\coreTT{x}_d\in\R^{r_{d-1}\times n_d\times r_d}$ reduce essentially to matrices, but for the notation consistency we represent them as tensor. {The $k$-th TT-core of a tensor are denoted by the same bold letter underlined with a subscript $k$.} The value $r_k$ is called \emph{$k$-th TT-rank}. Thanks to the TT-formalism, the $(i_1,\dots, i_d)$-th element of $\ten{x}$ writes 
  \[
  \ten{x}(i_1,\dots, i_d) = \sum_{j_0,\dots,j_{d} = 1 }^{r_0,\dots, r_{d}}\coreTT{x}_1(j_0, i_1, j_1)\coreTT{x}_2(j_1, i_2, j_2)\dots\coreTT{x}_{d-1}(j_{d-2}, i_{d-1}, j_{d-1})\coreTT{x}_d(j_{d-1}, i_d, j_d).
	\]
  
  Given an index $i_k$, we denote the $i_k$-th matrix slice of $\coreTT{x}_k$ with respect to mode $2$ by $\matcoreTT{X}_k(i_k)$, i.e., $\matcoreTT{X}_k(i_k) = \coreTT{x}_k(:, i_k, :)$. Then each element of the TT-tensor $\ten{x}$ can be expressed as the product of $d$ matrices, i.e.,
  \[
  \ten{x}(i_1,\dots, i_d) = \matcoreTT{X}_1(i_1)\cdots\matcoreTT{X}_{d}(i_d)
  \]
  with $\matcoreTT{X}_k(i_k)\in\R^{r_{k-1}\times r_k}$ for every $i_k\in\{1,\dots, n_k\}$ and $k\in\{2, \dots, d-1\}$, while $\matcoreTT{X}_{1}(i_1)\in\R^{1\times r_1}$ and $\matcoreTT{X}_d(i_d)\in\R^{r_{d-1}\times 1}$. Remark that $\matcoreTT{X}_{1}(i_1)$ and $\matcoreTT{X}_d(i_d)$ are actually vectors, but as before to have an homogeneous notation they write as matrices with a single row or column.
 
  Storing a tensor in TT-format requires $\mathcal{O}(dnr^2)$ units of memory with $n = \max_{i\in\{1,\dots, d\}}\{n_i\}$ and $r = \max_{i\in\{1,\dots, d\}}\{r_i\}$. In this case the  memory footprint growths linearly with the tensor order and quadratically with the maximal TT-rank. Conseuqently, knowing the maximal TT-rank is usually sufficient to get an idea of
  the TT-compression benefit. However, to be more accurate, we introduce the compression ratio measure.
  Let $\ten{x}\in\R^{n_1\times\dots\times n_d}$ be a tensor in TT-format, then the compression ratio is the ratio between the storage cost of a in TT-format over the dense format storage cost, i.e.,
	\[	\dfrac{\sum_{i = 1}^{d}r_{i-1}n_ir_i}{\prod_{j=1}^{d}n_j}\]
  where $r_i$ is the $i$-th TT-rank of $\ten{x}$.
  As highlighted from the compression ratio, to have a significant benefit in the use of this formalism, the TT-ranks $r_i$ have to stay bounded and small. A first possible drawback of the TT-format appears with the addition of two TT-tensors. Indeed given two TT-tensors $\ten{x}$ and $\ten{y}$ with $k$-th TT-rank $r_k$ and $s_k$ respectively, then the $k$-th TT-rank of $\ten{x} + \ten{y}$ is equal to $r_k+s_k$, see~\cite{Gel2017TheTF}. So if $\ten{x}$ is a TT-tensor with $k$-th TT-rank $r_k$, then tensor $\ten{z} = 2\ten{x}$ has $k$-th TT-rank equal to $r_k$ if we simply multiply the first TT-core by $2$, but if it is computed as a sum of twice $\ten{x}$ then its TT-ranks double. In the following part, we discuss a studied solution to address this issue.
  \par The TT-formalism enables us to express in a compact way also linear operators between tensor product of spaces. Let $\mathbfcal{A}:\R^{n_1\times \dots\times n_d}\rightarrow \R^{n_1\times \dots n_d}$ be a linear operator between tensor  product of  spaces, fixed the canonical basis for $\R^{n_1\times \dots\times n_d}$, we associate with $\mathbfcal{A}$ the tensor $\ten{A}\in\R^{(n_1\times n_1)\times\dots \times (n_d\times n_d)}$ in the standard way. Henceforth a tensor associated with a linear operator over tensor  product of  spaces will be called a \emph{tensor operator}. The TT-representation of tensor operator $\ten{A}\in\R^{(n_1\times n_1)\times\dots \times (n_d\times n_d)}$, usually called~\emph{TT-matrix},
  is
  \[
  \ten{A} = \coreTT{a}_1\cdots\coreTT{a}_d,
  \]
where $\coreTT{a}_k\in\R^{r_{k-1}\times n_k\times n_k \times r_k}$, is its $k$-th \emph{TT-core}, with $r_0 = r_d = 1$. So its element $a= \ten{A}(i_1,j_1,\dots, i_d, j_d)$ is expressed in TT-format as
		%=======
		%  where $\core{\ten{f}}{k}\in\R^{r_{k-1}\times n_k\times n_k \times r_k}$, is its $k$-th \emph{TT-core}, with $r_0 = r_d = 1$. So its element $a= \ten{A}(i_1,j_1,\dots, i_d, j_d)$ is expressed in TT-format as
		%>>>>>>> 0b3a490ec425af4949d0f0dd98e660a79ad7fb84
  \begin{equation} \nonumber
  \begin{split}
  a &= \sum_{h_0,\dots,h_{d} = 1 }^{r_0,\dots, r_{d}}\coreTT{a}_1(h_0, i_1, j_1, h_1)\coreTT{a}_2(h_1, i_2, j_2, h_2)\cdots\coreTT{a}_{d-1}(h_{d-2}, i_{d-1}, j_{d-1}, h_{d-1})\coreTT{a}_d(h_{d-1}, i_d, j_d, h_d).
  \end{split}
  \end{equation}
  
  Let $\matcoreTT{A}_k(i_k,j_k)\in\R^{r_{k-1}\times r_k}$ be the $(i_k, j_k)$-th slice with respect to mode $(2,3)$ of $\coreTT{a}_k$ for every $i_k, j_k\in\{1,\dots, n_k\}$ and $k\in\{1,\dots, d\}$. Then the last equation is equivalently expressed as 
  \[
  \ten{A}(i_1,j_1,\dots, i_d, j_d) = \matcoreTT{A}_1(i_1, j_1)\cdots\, \matcoreTT{A}_d(i_d, j_d).
  \]
  As before we estimate the storage cost as $\text{O}(dnmr^d)$ where $n = \max_{i\in\{1,\dots, d\}}\{n_i\}$, $m = \max_{i\in\{1,\dots, d\}}\{m_i\}$ and $r = \max_{i\in\{1,\dots, d\}}\{r_i\}$. However the $k$-th TT-rank of the contraction of a TT-operator and a TT-vector is equal to the product of the $k$-th TT-rank of the two contracted objects, see~\cite{Gel2017TheTF}. For example given the TT-operator $\ten{A} \in \R^{n_1\times m_1\times\dots \times n_d\times m_d}$ and TT-tensor $\ten{x}\in\R^{n_1\times \dots \times n_d}$ with $k$-th TT-rank $r_k$ and $s_k$ respectively, their contraction $\ten{b}= \ten{A}\ten{x}$ is a TT-tensor with $k$-th TT-rank equal to $r_ks_k$. 

The TT-rank growth is a crucial point in the implementation of algorithms using TT-tensors: it may lead to run out of memory and prevent the calculation to complete. To address this issue, a rounding algorithm to reduce the TT-rank was proposed in~\cite{Oseledets2011}. Given a TT-tensor $\ten{x}$ and a relative accuracy $\delta$, the TT-rounding algorithm provides a TT-tensor $\ten{\tilde{x}}$ that is at a relative distance $\delta$ from  $\ten{x}$,  i.e.,  $|| \ten{x}-\ten{\tilde{x}}|| \le \delta ||\ten{x}||$. Given a TT-tensor $\ten{x}\in\R^{n_1\times\dots \times n_d}$ and setting $r = \max_{i\in\{1,\dots, d\}}\{r_i\}$ and $n = \max_{i\in\{1,\dots, d\}}\{n_i\}$, the computational cost, in terms of floating point operations, of a TT-rounding over $\ten{x}$ is $\mathcal{O}(dnr^3)$, as stated in~\cite{Oseledets2011}.

%	{\color{lightgray}{  
% 	\begin{itemize}
% 	\item define the tensor train structure for tensors and tensor operators, with focus on TT-rank 
% 	\item define main operation and rank effect as  
% 	\begin{itemize}
% 		\item sum leads to TT-rank sum 
% 		\item operator-tensor product leads to TT-rank product 
% 	\end{itemize}
% 	\item how to build Laplacian and Gradient operator (rank 2)
% 	\item TT-round algorithm (TT-SVD of Oseledets 2011) 
% 	\item how to build the preconditioner (rank 2M+1)
% 	\end{itemize}}}

\subsection{Preconditioned GMRES in Tensor Train  format}
\normalcolor
	Assume $\ten{A}\in\R^{(n_1\times n_1) \times \dots\times (n_d\times n_d)}$ to be a tensor operator and $\ten{b}\in\R^{n_1\times \dots \times n_d}$ a tensor, then the general tensor linear system is
	\begin{equation}
	\label{eqTT:3}
		\ten{A}\ten{x} = \ten{b}
	\end{equation}
	with 
	%\LG{is it the right space for $\ten{x}$? it ids different from the one of $\ten{b}$ ?}\MI{there were missing dots, it should be correct now} \OC{No,  to have a solution we need to have n=m} 
	$\ten{x}\in\R^{n_1\times\dots\times n_d}$. Notice that setting $d=2$ we have the standard  linear system from classical matrix computation. A possible way for solving~\eqref{eqTT:3} is using a tensor-extended version of GMRES. Since all the operations appearing in this iterative solver are feasible with the TT-formalism, we assume that all the objects are expressed in TT-format. A main drawback in this approach is due to the repetition of sums and contractions in the different loops, which leads to the TT-rank growth and a possible memory over-consumption. Therefore introducing compression steps in TT-GMRES is essential but a particular attention should be paid to the choice of the rounding parameter to ensure that the prescribed GMRES tolerance $\varepsilon$ can be reached. Our TT-GMRES algorithm is fully presented in Algorithm \ref{alg:TT-gmres}.

	\begin{algorithm}
	\caption{$\ten{x}$, \texttt{hasConverged} = TT-GMRES($\ten{A}$, $\ten{b}$, \texttt{m}, $\varepsilon$, $\delta$)}\label{alg:TT-gmres}
	\begin{algorithmic}[1]
		\State \textbf{input:} $\ten{A}$, $\ten{b}$, \texttt{m}, $\varepsilon$, $\delta$.
		\State $\ten{r_0} = \ten{b}$, $\beta = \left\|\ten{r_0}\right\|$ and $\ten{v}_1 = (1/\beta)\ten{r}_0$\label{alg3:rd0}
		\For{$k = 1, \ldots, \texttt{maxit}$}
		\State $\ten{w} = \texttt{TT-round}(\ten{A}\ten{v}_k, \delta)$ \label{alg3:rd1}
                \Comment{MGS variant}
		\For{$i = 1, \ldots, k$}
		\State $\bar{H}_{i,k} = \langle\ten{v}_i\,, \ten{w}\rangle$  %\LG{classical notation for dot product between tensor vectors ?}\MI{the transposition is not `well defined' for tensor}
		\State $\ten{w} = \ten{w} -\bar{H}_{i, k}\ten{v}_i$
		\EndFor
		\State $\ten{w} = \texttt{TT-round}(\ten{w}, \delta)$\label{alg3:rd2}
		\State $\bar{H}_{k + 1, k} = \left\|\ten{w}\right\|$
		\State $\ten{v}_{k + 1} = (1/\bar{H}_{k + 1, k})\ten{w}$
		\State $y_k = \argmin_{y\in\mbbR^k}\left\|\beta e_1 - \bar{H}_ky\right\|$
		%\State $\tilde{r}_k = \beta e_1 - \bar{H}_ky_k$ \LG{Same question than for GMRES in matrix framework}
		\State $\ten{x}_k = \texttt{TT-round}\bigl(\sum_{j = 1}^{k+1} y_k(j)\ten{v}_j, \delta\bigr)$\label{alg3:rd3} %\label{alg:gmres:2}
		\If{($\etaAb{\ten{A}}{\ten{b}}(\ten{x}_k)  < \varepsilon$)}
		\State \texttt{hasConverged = True}
		\State \textbf{break}
		\EndIf
		\EndFor
		\State \textbf{return:} $\ten{x}=\ten{x}_k$, \texttt{hasConverged}
	\end{algorithmic}
	\end{algorithm}

\begin{algorithm}
    \caption{$\ten{x}$, \texttt{hasConverged} = Right-GMRES($\ten{A}$, $\ten{M}$, $\ten{b}$, $\ten{x_0}$, \texttt{m}, $\varepsilon$, $\delta$)  }\label{alg:TT-rgmres}
    \begin{algorithmic}[1]
        \State \textbf{input:} $\ten{A}$, $\ten{M}$, $\ten{b}$, \texttt{m}, $\varepsilon$, $\delta$.
        \State \texttt{hasConverged = False}
        \State $\ten{x} = \ten{x_0}$
        \While{\texttt{not(hasConverged)}}
          \State $ \ten{r} = \texttt{TT-round}( \ten{b} - \ten{A} \ten{x}, \delta)$
          \hfill\Comment{\small{Iterative refinment step with at most $m$ GMRES iterations on $\ten{AM}$}}\normalsize
          \State $\ten{t}_k$, \texttt{hasConverged} =  GMRES($\ten{AM}$, $\ten{r}$, \texttt{m}, $\varepsilon$, $\delta$) 
          \State $\ten{x} = \texttt{TT-round}(\ten{x} + \ten{M t}_k, \delta)$ 
          \hfill\Comment{\small{Update the unpreconditionned with the computed correction}}\normalsize
        \EndWhile
        \State \textbf{return:} $\ten{x}$, \texttt{hasConverged}
    \end{algorithmic}
\end{algorithm}

	In Algorithm~\ref{alg:TT-gmres} and~\ref{alg:TT-rgmres} there is an additional input parameter $\delta$, i.e., the rounding accuracy. The TT-rounding algorithm at accuracy $\delta$ is applied to the result of the contraction between $\ten{A}$ and the last Krylov basis vector computed in Line~\ref{alg3:rd1}, to the new Krylov basis vector after orthogonalization in Line~\ref{alg3:rd2} and to the updated iterative solution, Line~\ref{alg3:rd3}.  The purpose is to balance with the rounding the rank growth due to the tensor contraction or sum that occurred in the immediate previous step. {As it will be observed in the numerical experiments of Section~\ref{sec3}, the rounding accuracy $\delta$ has to be chosen lower or equal than the GMRES target accuracy $\varepsilon$.
	% we also refer to~\cite{agullo2022} for an extensive study of the convergence of GMRES in variable accuracy. }
	%\OC{Shall we say that $\delta$  and $\varepsilon$ are linked?}\MI{fixed}

%	\begin{itemize}
%  	\color{lightgray}
%  		\item make clear where the compression is applied
% 		\item make clear what the stopping criterion
%	\end{itemize}
%	\normalcolor
	
        %%%%%%%%%%%%%%%%%%%%%%%%%%%%%%%%%%%%%%%%%%%%%%%%%%%%%%%%%%%%%%%%%%%%%%%%%%%%%%%%%%%%%%%%%%%%%%%%%%%%%%%%%%%%%%%%
        %%%%%%%%%%%%%%%%%%%%%%%%%%%%%%%%%%%%%%%%%%%%%%%%%%%%%%%%%%%%%%%%%%%%%%%%%%%%%%%%%%%%%%%%%%%%%%%%%%%%%%%%%%%%%%%%
     
\section{Solution of parametric problems in Tensor Train format} \label{ssec:Ai1}
        %%%%%%%%%%%%%%%%%%%%%%%%%%%%%%%%%%%%%%%%%%%%%%%%%%%%%%%%%%%%%%%%%%%%%%%%%%%%%%%%%%%%%%%%%%%%%%%%%%%%%%%%%%%%%%%%
        %%%%%%%%%%%%%%%%%%%%%%%%%%%%%%%%%%%%%%%%%%%%%%%%%%%%%%%%%%%%%%%%%%%%%%%%%%%%%%%%%%%%%%%%%%%%%%%%%%%%%%%%%%%%%%%%
        In this section, we investigate the situation where either the tensor representation of the linear operator or the right-hand side has a mode related to a parameter that is discretized.
        In the case of the parametric linear operator,  we are interested into the numerical quality of the computed solutions when we solve for all the parameters at once compared to the
        solution computed when the parametric systems %individual parameters
         are treated independently.  
        In the case of the right-hand sides depending on a parameter, we investigate the links between the search space of TT-GMRES enabling the solution of  all the right-hand sides at once
        and the spaces built by the GMRES solver on each right-hand side considered independently. In this subsection, tensor slices play a key role, as consequence we introduce a specific notation. Given a tensor $\ten{a}\in\R^{n_1\times \cdots \times n_d}$ in TT-format with TT-cores
		$\coreTT{a}_k\in\R^{r_{k-1}\times n_k\times r_k}$, $\ten{a}^{[k, i_k]}$ denotes the $i_k$-th slice with respect to mode $k$. 
        %In TT-format $\ten{a}^{(k, i_k)}$ writes as
        %\[
        %\ten{a}^{(k, i_k)} = \ten{a}_1\cdots\ten{a}_{k-1}A_k(i_k)\ten{a}_{k+1}\cdots \ten{a}_d
        %\]        with $A_k(i_k)\in\R^{r_{k-1}\times r_k}$. 
        Since henceforth we will take slice only with respect to the first mode, instead of writing $\ten{a}^{[1, i_1]}$ for the $i_1$-th slice on the first mode we will simply write $\ten{a}^{[i_1]}$. Similarly $\ten{A}^{[i_1]}$ denotes the $(i_1, i_1)$-th slice with respect to mode $(1,2)$ of a tensor operator $\ten{A}\in\R^{(n_1\times n_1)\times \cdots \times (n_d\times n_d)}$.
%LG{Slice notation to be double checked - I have introduce a command core that takes two parameter the TT-vector first and the index of the core as second parameter. I think that I propagated the command in the text but it might be worth double checking.}
        %In TT-format the slice $\ten{A}^{(i_1)}$ writes
        %\[
        %\ten{A}^{(i_1)} = _1(i_1, i_2)\ten
        %\]   

        \subsection{Parameter dependent linear operators}
	This subsection focuses on a specific type of parametric tensor operators expressed as $\ten{A}_\alpha~=~\ten{B}_0+\alpha\ten{B}_1$ with $\alpha\in\R$ and $\ten{B}_0, \ten{B}_1$ two tensor operators of $\R^{(n_1\times n_1)\times\dots\times (n_d\times n_d)}$. Assuming that $\alpha$ takes $p$ different real values in the interval $[a,b]$, we define $p$ linear systems of the form %\OC{Problem of notation $\ten{A}_\alpha$ and $	\ten{A}_\ell$}
	\begin{equation}\label{eq:Param Ax=b}
	\ten{A}_{\ell}\ten{y}_\ell = \ten{b}_\ell
	\end{equation}
	where $\ten{A}_{\ell} = \ten{B}_0 + \alpha_\ell\ten{B}_1$, $\ten{b}_\ell\in\R^{n_1\times\dots\times n_d}$, $\alpha_\ell\in[a,b]$ for every $\ell\in\{1,\dots, p\}$.  
	%To shorten the notation, we will denote $\ten{A}_{\alpha_\ell}$ by $\ten{A}_{\ell}$. 
	At this level, it is possible to choose between classically solving each system independently or solving them simultaneously in a higher dimensional space defining the so-called  ``all-in-one'' system. This latter system writes
	\begin{equation}\label{eqn:allInOne}
	\ten{A}\ten{x}=\ten{b}
	\end{equation}
	where  $\ten{A}\in\R^{(p\times p)\times (n_1\times n_1)\times \dots \times(n_d\times n_d)}$ such that
	\begin{equation}
	\label{eqAi1:op}
		\ten{A}( h, \ell, i_1, j_1, \dots, i_d, j_d) =
		\begin{cases}
		\ten{A}_{\ell}(i_1, j_1, \dots, i_d, j_d)\quad&\text{if}\quad h=\ell , \\
		\qquad 0\quad&\text{if}\quad h\ne\ell ,
		\end{cases}
	\end{equation}	
and the right-hand side is $\ten{b}\in\R^{p\times n_1\times\dots  \times n_d}$ defined as 
	\begin{equation}
	\label{eqAi1:rhs}
		\ten{b}(\ell,i_1, \dots, i_d) = \ten{b}_\ell(i_1, \dots, i_d)
	\end{equation}
	 for $i_k, j_k\in\{1,\dots, n_k\}$, $k\in\{1,\dots, d\}$ and $\ell, h\in\{1, \dots, p\}$.  The tensor operator $\ten{A}$ writes in a compact format as %Remark that the tensor structure of $\ten{A}$ is
	 \[
	 \ten{A} = \I_p\otimes\ten{B}_0 + \text{diag}(\alpha_1,\dots, \alpha_p)\otimes\ten{B}_1.
	 \]
	 
  The $(\ell, \ell)$-th slice of $\ten{A}$ with respect to modes $(1,2)$ is denoted 
%\MI{add something to make clear that the TT-matrix in a (1,2) slice} 
	 	\begin{equation}\label{eq:individual matrix}
			\ten{A}^{[\ell]} = \ten{B}_0 + \alpha_\ell \ten{B}_1 = \ten{A}_{\ell}	 	
	 	\end{equation}
	 	and similarly the $\ell$-th slice of $\ten{b}$ with respect to the first mode is $\ten{b}^{[\ell]} = \ten{b}_\ell$ by construction. 
	 	So that Equation~\eqref{eq:Param Ax=b} also writes
	 	$$
	 	  \ten{A}^{[\ell]} \ten{x}^{[\ell]} = \ten{b}^{[\ell]}
	 	$$
	 	with $\ten{x}^{[\ell]} = \ten{y}_\ell$.  It shows that, once the ``all-in-one'' system,  Equation~(\ref{eqn:allInOne}),  has been solved, the solution related to a specific parameter can be extracted as a slice of the ``all-in-one'' solution, obtaining an \emph{extracted individual solution}. In other words, given the $k$-th iterate $\ten{x}_{k}$ of the ``all-in-one'' system, the extracted individual solution for the $\ell$-th problem is $\ten{x}^{[\ell]}_k$, i.e., the $\ell$-th slice with respect to the first mode defined as
	 \[
	 	\ten{x}^{[\ell]}_k = \ten{x}_k(\ell,i_1, \dots, i_d ).
	 \]
	 In the following propositions, we investigate the relation between the backward error of the ``all-in-one'' system solution and the extracted individual one. The equalities given for the ``all-in-one'' system are clearly true if the tensor and the tensor operators are given in full format, but they hold also in TT-format. All the details related to the ``all-in-one'' construction in TT-format are given in  \reportPaper{Appendix~\ref{app:tt}}{\cite{coulaud2022}}. 
	 
	The proven bounds enable us to tune the convergence threshold when solving for multiple parameters while guaranteeing a prescribed quality for the individual extracted solutions.
	In particular, the bound given by Equation~\eqref{eq:beb_multiple operators} in Proposition~\ref{prop:eta_b} shows that if a  certain accuracy $\varepsilon$ is expected for the extracted individual solution in terms of the backward error in~\eqref{eq:BE-b}, a more stringent convergence threshold should be used for the ``all-in-one'' system solution that should be set to $\varepsilon / \sqrt{p}$.
	\begin{proposition}\label{prop:eta_b}
	%\OC{Do we need to repeat all the definitions?, the notation \[\eta_{\ten{v}_\ell}\] is confusing with (2).}\LG{I would vote for the repetition to make the notation clear.}
	Let the ``all-in-one'' operator $\ten{A}\in\R^{(p\times p)\times(n_1\times n_1)\times \dots (n_d\times n_d)}$ 
		and right-hand side $\ten{b}\in\R^{p\times n_1\times\dots  \times n_d}$ be as in Equations~\eqref{eqAi1:op} and~\eqref{eqAi1:rhs} respectively, we consider the ``all-in-one'' system 
		$$\ten{A}\ten{x} = \ten{b}.$$
			Let $\ten{A}_{\ell}\in\R^{(n_1\times n_1)\times \dots \times (n_d\times n_d)}$ be the tensor operator as in Equation~\eqref{eq:individual matrix} and let $\ten{b}_{\ell}\in\R^{n_1\times\dots\times n_d}$ be a tensor such that $||\ten{b}_{\ell}|| = 1$, that defines the individual linear systems 
		$$\ten{A}_{\ell}\ten{y}_\ell = \ten{b}_{\ell}$$
	 with $\ten{A}_{\ell} = \ten{A}^{[\ell]} $ and $ \ten{b}_\ell = \ten{b}^{[\ell]}$ for every $\ell\in\{1,\dots, p\}$.

Let $\ten{x}_k$ denote the ``all-in-one'' iterate, we have
		\begin{equation} \label{eq:beb_multiple operators}
		  \eta_{\ten{b}}(\ten{x}_k)\sqrt{p} \ge \eta_{\ten{b_{\ell}}}(\ten{x}_k^{[\ell]})
	        \end{equation}
		for $\ell\in\{1, \dots, p\}$.
	\end{proposition}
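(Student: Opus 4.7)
The plan is to reduce both backward errors to explicit residual norm ratios via the definition in Equation~\eqref{eq:BE-b}, then exploit the block-diagonal structure of $\ten{A}$ with respect to the parameter mode to bound the individual residual by the global residual.

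First, I would unwind the definitions. By Equation~\eqref{eq:BE-b}, and using the normalization hypothesis $\|\ten{b}_\ell\|=1$, we have $\eta_{\ten{b}_\ell}(\ten{x}_k^{[\ell]}) = \|\ten{A}_\ell \ten{x}_k^{[\ell]}-\ten{b}_\ell\|$. Meanwhile, $\eta_{\ten{b}}(\ten{x}_k) = \|\ten{A}\ten{x}_k-\ten{b}\|/\|\ten{b}\|$. Since $\ten{b}$ is the stack of the $\ten{b}_\ell$ along the parameter mode, the Frobenius norm satisfies $\|\ten{b}\|^2=\sum_{\ell=1}^p\|\ten{b}_\ell\|^2=p$, so the denominator is simply $\sqrt{p}$. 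This already accounts for the $\sqrt{p}$ factor appearing in the statement.

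Next, I would use the key structural observation: because $\ten{A}$ is zero off the diagonal in the $(h,\ell)$ mode pair (Equation~\eqref{eqAi1:op}), the $\ell$-th slice of the residual is exactly the residual of the $\ell$-th individual system. More precisely, $(\ten{A}\ten{x}_k-\ten{b})^{[\ell]} = \ten{A}^{[\ell]} \ten{x}_k^{[\ell]} - \ten{b}^{[\ell]} = \ten{A}_\ell \ten{x}_k^{[\ell]} - \ten{b}_\ell$. Then another application of the Pythagorean identity along the parameter mode yields
\begin{equation*}
\|\ten{A}\ten{x}_k-\ten{b}\|^2 \;=\; \sum_{\ell=1}^{p}\|\ten{A}_\ell \ten{x}_k^{[\ell]}-\ten{b}_\ell\|^2,
\end{equation*}
which immediately gives $\|\ten{A}_\ell \ten{x}_k^{[\ell]}-\ten{b}_\ell\|\le \|\ten{A}\ten{x}_k-\ten{b}\|$ for every $\ell$. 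Combining this inequality with the identity $\|\ten{b}\|=\sqrt{p}$ produces $\eta_{\ten{b}_\ell}(\ten{x}_k^{[\ell]}) \le \sqrt{p}\,\eta_{\ten{b}}(\ten{x}_k)$, which is exactly Equation~\eqref{eq:beb_multiple operators}.

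There is no real obstacle here: the argument is essentially a two-line Pythagorean bookkeeping, once one notes that the block-diagonal structure of $\ten{A}$ along the parameter mode makes the global residual split cleanly into the $p$ individual residuals. The only point requiring a touch of care is to verify that the slicing identity $(\ten{A}\ten{x}_k-\ten{b})^{[\ell]} = \ten{A}_\ell \ten{x}_k^{[\ell]} - \ten{b}_\ell$ remains valid when the objects are represented in TT format rather than in dense form; this is a direct consequence of the construction detailed in the appendix, and it does not change the norm identities used above.
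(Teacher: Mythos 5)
Your proposal is correct and follows essentially the same route as the paper's proof: both exploit the block-diagonal structure of $\ten{A}$ along the parameter mode to split the squared Frobenius norm of the global residual into the sum of the $p$ individual squared residual norms, use $\norm{\ten{b}}^2=\sum_\ell\norm{\ten{b}_\ell}^2=p$, and drop all but one term of the sum. The paper phrases this as the identity $\eta^2_{\ten{b}}(\ten{x})=\frac{1}{p}\sum_{\ell=1}^p\eta^2_{\ten{b}_\ell}(\ten{x}^{[\ell]})$ before taking square roots, which is the same bookkeeping you perform.
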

	\begin{proof}
		For the sake of simplicity we use $\eta_{\ten{b}}$ and $\eta_{\ten{b}_{\ell}}$ squared throughout the proof and discard the subscript of the $k$-th ``all-in-one'' iterate.
		The quantity $\eta^2_{\ten{b}_{\ell}}(\ten{x}^{[\ell]})$ explicitly gets
		\[
		\eta^2_{\ten{b}_{\ell}}(\ten{x}^{[\ell]}) = \frac{\norm{\ten{A}_{\ell}\ten{x}^{[\ell]} - \ten{b}_{\ell}}^2}{\norm{\ten{b}_{\ell}}^2}
		\]
		while $\eta^2_{\ten{b}}(\ten{x})$ is
		\begin{equation}
		\label{eqP:2}
		\eta^2_{\ten{b}}(\ten{x}) = \frac{\norm{\ten{A}\ten{x} - \ten{b}}^2}{\norm{\ten{b}}^2}.
		\end{equation}
		Thanks to the diagonal structure of $\ten{A}$ and the Frobenius norm definition, Equation~\eqref{eqP:2} writes
		\begin{equation}
		\label{eqP:3}
		\eta^2_{\ten{b}}(\ten{x}) = 
		\frac{\sum_{\ell = 1}^n \norm{\bigl(\ten{A}\ten{x} - \ten{b}\bigr)^{[\ell]}}^2}{\sum_{k = 1}^p \norm{\ten{b}^{[k]}}^2} =
		%\frac{\sum_{\ell = 1}^n \norm{\ten{A}^{[\ell]} \ten{x}^{[\ell]} - \ten{b}^{[\ell]}^2}}{\sum_{k = 1}^{p}\norm{\ten{b}^{[k]}}^2} =
        \frac{\sum_{\ell = 1}^n \norm{\ten{A}_{\ell} \ten{x}^{[\ell]} - \ten{b}_{\ell}^2}}{\sum_{k = 1}^{p}\norm{\ten{b}_{k}}^2} =
        \frac{\sum_{\ell = 1}^p \eta_{\ten{b}_{\ell}}^2(\ten{x}^{[\ell]})}{p} 
		\end{equation}
		since $||\ten{b}||^2 = \sum_{k = 1}^{n}||\ten{b}_{k}||^2 = p$. 
    % \LG{Should it not be $||\ten{v}||^2 = \sum_{h = 1}^{p}||\ten{v}_h||^2 = p$?}\MI{fixed}
     From the square root of both sides of this last equation, the result follows.
	\end{proof}

        For the backward error based on perturbation of both the linear operator and the right-hand side defined by~\eqref{eq:BE-Ab}, a similar result can be derived.
 While informative this result has a lower practical interest as the term $\rho_\ell(x)$ in~\eqref{eqP2:T} depends on the solution; so defining the convergence threshold for
the 'all-in-one' solution to guarantee the individual backward error requires some a priori information on the solution norms.
	\begin{proposition}
		\label{prop:eta_Ab}
		With the same hypothesis and notation as for Proposition~\ref{prop:eta_b} for $\eta_{\ten{A}, \ten{b}}(\ten{x})$ and $\eta_{\ten{A}_{\ell}, \ten{b}_{\ell}}(\ten{x}^{[\ell]})$ associated with the linear systems $\ten{A}\ten{x} = \ten{b}$ and $\ten{A}_{\ell}\ten{y}_\ell = \ten{b}_{\ell}$ respectively, for every $\ell\in\{1,\dots, p\}$, we have
		\begin{equation}
		\label{eqP2:T}
		\eta_{\ten{A},\ten{b}}(\ten{x}_k)\,\rho_\ell(\ten{x}_k) \ge \eta_{\ten{A}_{\ell}, \ten{b}_{\ell}}(\ten{x}_k^{[\ell]})\qquad\text{where}\qquad \rho_\ell(\ten{x}_k) = \frac{\norm{\ten{A}}_2\norm{\ten{x}_k} + \sqrt{p}}{\norm{\ten{A}_{\ell}\ten{x}^{[\ell]}_k} + 1 }
		\end{equation}
	with $\ten{x}_k$ the $k$-th ``all-in-one'' iterate and $\ten{x}_k^{[\ell]}$ its $\ell$-th slice with respect to mode $1$.
	\end{proposition}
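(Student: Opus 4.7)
The plan is to start from the definitions and algebraically simplify the product $\eta_{\ten{A},\ten{b}}(\ten{x}_k)\,\rho_\ell(\ten{x}_k)$. Writing the backward error of the ``all-in-one'' system as $\eta_{\ten{A},\ten{b}}(\ten{x}_k)=\|\ten{A}\ten{x}_k-\ten{b}\|/(\|\ten{A}\|_2\|\ten{x}_k\|+\|\ten{b}\|)$ and recalling that $\|\ten{b}\|^2=\sum_{k=1}^p\|\ten{b}_k\|^2=p$ since the $\ten{b}_\ell$ are unit norm, the numerator of $\rho_\ell$ is exactly the denominator of $\eta_{\ten{A},\ten{b}}(\ten{x}_k)$. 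Therefore the product telescopes into
\[
\eta_{\ten{A},\ten{b}}(\ten{x}_k)\,\rho_\ell(\ten{x}_k)=\frac{\|\ten{A}\ten{x}_k-\ten{b}\|}{\|\ten{A}_\ell\ten{x}_k^{[\ell]}\|+1},
\]
while the right-hand side of the claim, by definition~\eqref{eq:BE-Ab} and using $\|\ten{b}_\ell\|=1$, reads $\|\ten{A}_\ell\ten{x}_k^{[\ell]}-\ten{b}_\ell\|/(\|\ten{A}_\ell\|_2\|\ten{x}_k^{[\ell]}\|+1)$.

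The two remaining ingredients are both direct. First, the diagonal block structure of $\ten{A}$ together with the decomposition of $\ten{b}$ into slices (cf.\ the computation carried out in Proposition~\ref{prop:eta_b}) gives the residual identity
\[
\|\ten{A}\ten{x}_k-\ten{b}\|^2=\sum_{j=1}^p\|\ten{A}_{j}\ten{x}_k^{[j]}-\ten{b}_{j}\|^2\ \ge\ \|\ten{A}_\ell\ten{x}_k^{[\ell]}-\ten{b}_\ell\|^2,
\]
so the numerator of the simplified product dominates the numerator of the target. Second, consistency of the operator norm yields $\|\ten{A}_\ell\ten{x}_k^{[\ell]}\|\le\|\ten{A}_\ell\|_2\|\ten{x}_k^{[\ell]}\|$, hence $\|\ten{A}_\ell\ten{x}_k^{[\ell]}\|+1\le\|\ten{A}_\ell\|_2\|\ten{x}_k^{[\ell]}\|+1$, so the denominator of the simplified product is bounded above by the denominator of the target.

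Combining the two inequalities (larger numerator, smaller or equal denominator) gives the claimed bound. The only point that requires a bit of care is the bookkeeping of the denominator of $\rho_\ell$: one must notice that it is the quantity $\|\ten{A}_\ell\ten{x}_k^{[\ell]}\|+1$ rather than $\|\ten{A}_\ell\|_2\|\ten{x}_k^{[\ell]}\|+1$, so that the operator-norm consistency can be used to bridge the two. I expect no real obstacle here; the only mildly subtle step is recognising that the norm identity $\|\ten{b}\|=\sqrt{p}$, which encodes the assumption $\|\ten{b}_\ell\|=1$, makes $\rho_\ell$ exactly the correction factor needed to convert the ``all-in-one'' denominator into the individual one.
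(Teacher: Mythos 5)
Your proof is correct and uses essentially the same ingredients as the paper's: the sliced residual decomposition underlying Proposition~\ref{prop:eta_b} (equivalently, the bound $\eta_{\ten{b}}(\ten{x}_k)\sqrt{p}\ge\eta_{\ten{b}_\ell}(\ten{x}_k^{[\ell]})$) together with the consistency inequality $\norm{\ten{A}_\ell\ten{x}_k^{[\ell]}}\le\norm{\ten{A}_\ell}_2\norm{\ten{x}_k^{[\ell]}}$. The paper merely organizes the same algebra differently, rewriting both backward errors in terms of $\eta_{\ten{b}}$ and $\eta_{\ten{b}_\ell}$ before invoking Proposition~\ref{prop:eta_b}, whereas you cancel the denominator against $\rho_\ell$ first and compare numerators and denominators directly.
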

	\begin{proof}
		For the sake of simplicity, as previously, the subscript of the $k$-th ``all-in-one'' iterate is dropped. The quantity $\eta_{\ten{A},\ten{b}}(\ten{x})$ explicitly writes
		\[
		\eta_{\ten{A},\ten{b}}(\ten{x}) = \frac{\norm{\ten{A}\ten{x} - \ten{b}}}{\norm{\ten{A}}_2\norm{\ten{x}} + \norm{\ten{b}}}.
		\]
		If the previous equation is multiplied equivalently by $\eta_\ten{b}(\ten{x})$, it gets
		\begin{equation}
		\label{eqP2:1}
		\eta_{\ten{A},\ten{b}}(\ten{x}) = \frac{\norm{\ten{A}\ten{x} - \ten{b}}}{\norm{\ten{A}}_2\norm{\ten{x}} + \norm{\ten{b}}}\frac{\eta_{\ten{b}}(\ten{x})}{\eta_{\ten{b}}(\ten{x})} = \frac{\norm{\ten{b}}}{\norm{\ten{A}}_2\norm{\ten{x}} + \norm{\ten{b}}}\eta_{\ten{b}}(\ten{x})
		= \frac{\sqrt{p}}{\norm{\ten{A}}_2\norm{\ten{x}} + \sqrt{p}}\eta_{\ten{b}}(\ten{x})
		\end{equation}
		by the definition of $\eta_{\ten{b}}(\ten{x})$ and $\norm{\ten{b}} = \sqrt{p}$. Similarly $\eta_{\ten{A}_\ell, \ten{b}_\ell}(\ten{x}^{[\ell]})$ is expressed in function of $\eta_{\ten{b}_{\ell}}(\ten{x}^{[\ell]})$ as 
		\begin{equation}
		\label{eqP2:2}
		\eta_{\ten{A}_{\ell},\ten{b}_{\ell}}(\ten{x}^{[\ell]}) = \frac{\norm{\ten{b}_{\ell}}}{\norm{\ten{A}_{\ell}}_2\norm{\ten{x}^{[\ell]}} + \norm{\ten{b}_{\ell}}}\eta_{\ten{b}_{\ell}}(\ten{x}^{[\ell]}) = \frac{1}{\norm{\ten{A}_{\ell}}_2\norm{\ten{x}^{[\ell]}} + 1}\eta_{\ten{b}_{\ell}}(\ten{x}^{[\ell]})
		\end{equation}
		since $\norm{\ten{b}_{\ell}} = 1$. Multiplying each side of Equation~\eqref{eqP2:1} by $(\norm{\ten{A}}_2\norm{\ten{x}} + \sqrt{p})$, it follows
		\[
		\bigl(\norm{\ten{A}}_2\norm{\ten{x}} + \sqrt{p}\bigr)\eta_{\ten{A}, \ten{b}} = \eta_{\ten{b}}\sqrt{p}. 
		\]
		Thanks to the result of Proposition \ref{prop:eta_b}, we have
		\begin{equation}
		\label{eqP2:3}
		\bigl(\norm{\ten{A}}_2\norm{\ten{x}} + \sqrt{p}\bigr)\eta_{\ten{A}, \ten{b}}(\ten{x}) = \eta_{\ten{b}}(\ten{x})\sqrt{p} \ge \eta_{\ten{b}_{\ell}}(\ten{x}^{[\ell]}) = \bigl(\norm{\ten{A}_{\ell}}_2{\norm{\ten{x}^{[\ell]}}} + 1\bigr)\eta_{\ten{A}_{\ell}, \ten{b}_{\ell}}(\ten{x}^{[\ell]})
		\end{equation}
		from Equation~\eqref{eqP2:2}. Dividing both sides of Equation~\eqref{eqP2:3} by $\norm{\ten{A}_{\ell}}_2\norm{\ten{x}^{[\ell]}} + 1$, it becomes
		\begin{equation}
		\label{eqP2:4}
		\frac{\norm{\ten{A}}_2\norm{\ten{x}} + \sqrt{p}}{\norm{\ten{A}_{\ell}\ten{x}^{[\ell]}} + 1}\eta_{\ten{A}, \ten{b}}(\ten{x}) \ge \eta_{\ten{A}_{\ell}, \ten{b}_{\ell}}(\ten{x}^{[\ell]})
		\end{equation}
		since $\norm{\ten{A}_{\ell}}_2\norm{\ten{x}^{[\ell]}} \ge \norm{\ten{A}_{\ell}\ten{x}^{[\ell]}}$ by the definition of the L$2$ norm. 
	\end{proof}

 \subsection{Parameter dependent right-hand sides}
We consider a particular case of this ``all-in-one'' approach. We intend to solve $p$ linear systems with the same linear operator and different right-hand sides. 
Given a linear tensor operator $\ten{A}_0\in\R^{(n_1\times n_1)\times \dots\times(n_d\times n_d)}$, we define the $\ell$-th linear system as
\begin{equation}
\label{eqMrhs:1}
\ten{A}_0\ten{y}_\ell = \ten{b}_\ell
\end{equation}
with $\ten{b}_\ell\in\R^{n_1\times\dots\times n_d}$ for every $\ell\in\{1,\dots, p\}$.
To solve simultaneously all the right-hand sides expressed in Equation~\eqref{eqMrhs:1}, we repeat the construction introduced in Subsection~\ref{ssec:Ai1}, except that $\ten{A}_0$ is repeated on the `diagonal' of tensor linear operator $\ten{A}$ defined in Equation~\eqref{eqAi1:op}. 
Thanks to the tensor properties, the tensor operator $\ten{A}\in\R^{(p\times p)\times(n_1\times n_1)\times\dots\times (n_d\times n_d)}$ writes
\[
\ten{A} = \I_p \otimes \ten{A_0}
\]
so that  $\ten{A}^{[\ell]} = \ten{A}_0$ for every $\ell\in\{1,\dots, p\}$. 
The right-hand side $\ten{b}$ is defined similarly to the previous section, that is $\ten{b}^{[\ell]} = \ten{b}_{\ell}$.
%\OC{why $ \I_s $ and not  $\I_p$?}
If the initial guess is $\ten{x}_{0}\in\R^{p\times n_1\times \dots\times n_d}$ equal to the null tensor, then at the $k$-th iteration TT-GMRES minimizes with respect to $\ten{x}_k$ the norm of the residual $\ten{r}_{k} = \ten{A}\ten{x}_k - \ten{b}$ on the space
\[
\mathcal{K}_k(\ten{A},\ten{b}) = \text{span} \bigl\{\ten{b}, \ten{A}\ten{b},\ten{A}^2\ten{b},\dots, \ten{A}^{k-1}\ten{b}\bigr\} ,
\]
i.e., we seek a tensor $\ten{x}_{k}\in\mathcal{K}_k(\ten{A}, \ten{b})$ such that
\[
\ten{x}_{k} = 
% \argmin_{\ten{x}\in\mathcal{K}_k(\ten{A}, \ten{b})}\norm{\ten{r}_{k}} =
 \argmin_{\ten{x}\in\mathcal{K}_k(\ten{A}, \ten{b})}\norm{\ten{A}\ten{x} - \ten{b}}.
\]
Due to the diagonal structure of $\ten{A}$, the Frobenius norm of $\ten{r}_{k} = \ten{A}\ten{x}_k - \ten{b}$ is naturally written as follows
\[
||\ten{r}_{k}||^2 = \sum_{\ell = 1}^p||\ten{b}_{\ell} - \ten{A}_0 \ten{x}_k^{[\ell]}||^2
\]
with, similarly to the previous section,  $\ten{x}^{[\ell]}_k$ is the $\ell$-th slice with respect to the first mode of $\ten{x}_{k}$. 
Thanks to the diagonal structure of $\ten{A}$, we have  that the $\ell$-th slice of the Krylov basis vector $\ten{A}^h \ten{b}$ with respect to the first mode is $\ten{A}_0^h \ten{b}_\ell$.
Consequently  the $\ell$-th slices of the basis vectors of $\mathcal{K}_k(\ten{A},\ten{b})$ span the Krylov space $\mathcal{K}_k(\ten{A}_0,\ten{b}_\ell)$. 
It means that the individual solutions defined by the slices $\ten{x}_k^{[\ell]}$  of the iterate from the ``all-in-one''  TT-GMRES scheme lie in the same space as the $\ten{{y}}_{\ell,k}$ generated by TT-GMRES applied to the individual systems $\ten{A}_0\ten{y}_\ell = \ten{b}_\ell$ with $\ten{y}_{\ell,{0}}= 0$. 
While the two iterates belong to the same space, they are different since the former, $\ten{x}^{[\ell]}_k$, is build by minimizing the residual norm of $\ten{A} \ten{x} - \ten{b}$
over $\mathcal{K}_k(\ten{A},\ten{b})$ and the latter, $\ten{{y}}_{\ell,k}$, by minimizing the residual norm of $\ten{A}_0 \ten{y}_\ell = \ten{b}_\ell$ over $\mathcal{K}_k(\ten{A}_0,\ten{b}_\ell)$.
If we neglect the effect of the rounding, one can expect that
\[
  \norm{\ten{b}_\ell - \ten{A}_0 \ten{x}^{[\ell]}_k}\ge \norm{\ten{b}_\ell - \ten{A}_0 \ten{{y}}_{\ell,k}}
\]

\begin{remark}
We notice that a block TT-GMRES method could also be defined for the solution of such multiple right-hand side problems.
In that situation each individual residual norm would be minimized over the same space spanned by the sum of the individual Krylov space.
This would be somehow the dual approach to the one described above, where we minimize the sum of the residual norms on each individual Krylov space.
\end{remark}
%\LG{Do we insert a remark to discuss TT-GMRES versus block-GMRES ? I would find it interesting but might be confusing ?
%The point is that in a block-GMRES context, each individual residual norm is minimized over the same space spanned by the sum of the individual Krylov space. Here it is a kind of dual situation, we minimize the sum of the residual norms on each individual Krylov space.}

%that $\ten{x}^{(k)}_{\ell}\in\mathcal{K}_k(\ten{A}_0,\ten{b}_\ell) = \{\ten{b}_\ell, \ten{A}_0\ten{b}_\ell, \dots, \ten{A}_0^{k-1}\ten{b}_\ell\}$  for every $\ell \in\{1,\dots,p\}$. Theoretically speaking,  Moreover from a calculus lemma~\MI{cit?}, the minimum of $\sum_{\ell = 1}^p ||\ten{r}_\ell^{(k)}||^2$ corresponds to the sum of the minimum of $||\ten{r}_\ell^{(k)}||^2$ for every $\ell\in\{1,\dots, p\}$. However when we solve $p$ linear systems in dimension $(d+1)$ simultaneously we follow a different convergence path with respect to the one of the $p$ independent solutions,  because of the structure of TT-GMRES and of the rounding effect. If we eliminate the rounding, solving the ``all-in-one'' system of order $(d+1)$ and $p$ systems of order $d$ lead to the same solution, but with different convergence path.

Regarding the numerical quality of the extracted solution compared to the individually computed solution,
the bound stated in Proposition~\ref{prop:eta_b} is still true. 
As in the previous section an informative, but with lower practical interest, bound similar Proposition~\ref{prop:eta_Ab} can be derived.
\begin{proposition}
	\label{prop:eta_Ab_m}
	Under the hypothesis of Proposition \ref{prop:eta_Ab}, if $\ten{A} = \I_p\otimes \ten{A}_0$, then for $\eta_{\ten{A}, \ten{v}}(\ten{x})$ and $\eta_{\ten{A}_\ell, \ten{b}_\ell}(\ten{x}^{[\ell]})$ associated with the linear systems $\ten{A}\ten{x} = \ten{b}$ and $\ten{A}_0\ten{y}_\ell = \ten{b}_\ell$ respectively, for every $\ell\in\{1,\dots, p\}$ the following inequality holds  
	\begin{equation}
	\label{eqP3:T}
	\eta_{\ten{A},\ten{b}}(\ten{x}_k)\,\psi_\ell(\ten{x}_k) \ge \eta_{\ten{A}_\ell, \ten{b}_\ell}(\ten{x}_k^{[\ell]})\qquad\text{where}\qquad \psi_\ell(\ten{x}_k) = \frac{\norm{\ten{x}_k} + \sqrt{p}/\norm{\ten{A}_0}_2}{\norm{\ten{x}_k^{[\ell]}} + 1/\norm{\ten{A}_0}_2}.
	\end{equation}
\end{proposition}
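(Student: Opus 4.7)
The plan is to mirror the argument of Proposition~\ref{prop:eta_Ab} but to exploit the stronger structural assumption $\ten{A} = \I_p \otimes \ten{A}_0$, which lets us replace the norm inequality step by an equality and thus obtain a somewhat tighter bound.

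First, I would collect the simplifications that the Kronecker structure entails. Because $\ten{A} = \I_p \otimes \ten{A}_0$, every diagonal slice satisfies $\ten{A}^{[\ell]} = \ten{A}_0$, and the spectral norm is preserved: $\norm{\ten{A}}_2 = \norm{\I_p \otimes \ten{A}_0}_2 = \norm{\ten{A}_0}_2$. Also, since $\norm{\ten{b}_\ell} = 1$ by assumption, we still have $\norm{\ten{b}}^2 = p$.

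Next, as in the proof of Proposition~\ref{prop:eta_Ab}, I would rewrite both backward errors as rescaled versions of the right-hand side backward errors. For the ``all-in-one'' system,
\begin{equation*}
\eta_{\ten{A},\ten{b}}(\ten{x}) = \frac{\sqrt{p}}{\norm{\ten{A}_0}_2 \norm{\ten{x}} + \sqrt{p}}\,\eta_{\ten{b}}(\ten{x}),
\end{equation*}
and for the $\ell$-th individual system, using $\ten{A}_\ell = \ten{A}_0$ and $\norm{\ten{b}_\ell}=1$,
\begin{equation*}
\eta_{\ten{A}_\ell,\ten{b}_\ell}(\ten{x}^{[\ell]}) = \frac{1}{\norm{\ten{A}_0}_2 \norm{\ten{x}^{[\ell]}} + 1}\,\eta_{\ten{b}_\ell}(\ten{x}^{[\ell]}).
\end{equation*}
At this point I would invoke Proposition~\ref{prop:eta_b} to get $\eta_{\ten{b}}(\ten{x})\sqrt{p} \ge \eta_{\ten{b}_\ell}(\ten{x}^{[\ell]})$, substitute into the two identities above, and rearrange to isolate $\eta_{\ten{A}_\ell,\ten{b}_\ell}(\ten{x}^{[\ell]})$ on the right. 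This yields
\begin{equation*}
\frac{\norm{\ten{A}_0}_2 \norm{\ten{x}} + \sqrt{p}}{\norm{\ten{A}_0}_2 \norm{\ten{x}^{[\ell]}} + 1}\,\eta_{\ten{A},\ten{b}}(\ten{x}) \ge \eta_{\ten{A}_\ell, \ten{b}_\ell}(\ten{x}^{[\ell]}),
\end{equation*}
and dividing numerator and denominator by $\norm{\ten{A}_0}_2$ produces exactly the factor $\psi_\ell(\ten{x})$ stated in~\eqref{eqP3:T}.

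I do not expect any serious obstacle: the essential work has already been done for the more general Proposition~\ref{prop:eta_Ab}, and the only genuinely new observation is that under the Kronecker hypothesis the spectral norm identity $\norm{\ten{A}}_2 = \norm{\ten{A}_0}_2$ holds exactly, so no L$2$ norm inequality of the form $\norm{\ten{A}_\ell}_2 \norm{\ten{x}^{[\ell]}} \ge \norm{\ten{A}_\ell \ten{x}^{[\ell]}}$ is needed in the final step. The mildest point of care is verifying that $\norm{\I_p \otimes \ten{A}_0}_2 = \norm{\ten{A}_0}_2$ in the TT/multilinear operator sense used in the paper, which follows directly from the block-diagonal action on tensor slices.
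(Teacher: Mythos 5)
Your proof is correct and follows essentially the same route as the paper, which disposes of this proposition in one line by citing Proposition~\ref{prop:eta_Ab} together with the identity $\norm{\ten{A}}_2=\norm{\ten{A}_0}_2$. Your explicit rederivation from the intermediate inequality $\bigl(\norm{\ten{A}}_2\norm{\ten{x}}+\sqrt{p}\bigr)\eta_{\ten{A},\ten{b}}(\ten{x})\ge\bigl(\norm{\ten{A}_\ell}_2\norm{\ten{x}^{[\ell]}}+1\bigr)\eta_{\ten{A}_\ell,\ten{b}_\ell}(\ten{x}^{[\ell]})$, taken before the relaxation $\norm{\ten{A}_\ell}_2\norm{\ten{x}^{[\ell]}}\ge\norm{\ten{A}_\ell\ten{x}^{[\ell]}}$, is in fact the careful way to read that one-line proof, since $\psi_\ell$ is the tighter factor that precedes that relaxation rather than a direct consequence of the stated form of $\rho_\ell$.
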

\begin{proof}
	The result follows from the thesis of Proposition~\ref{prop:eta_Ab}, since $\norm{\ten{A}}_2 = \norm{\ten{A}_0}_2$
\end{proof}

\begin{corollary}
	\label{cor:eta_Ab}
	Given a sequence of iterative solutions $\{\ten{x}_{k}\}_{k\in\N}$ and a value $\nu$, if there exists a $k_{\ell}^*\in\N$ such that $|\,||\ten{A}_\ell \ten{x}_{k}^{[\ell]}|| - 1|\le \nu$ for every $k\ge k_{\ell}^*$, then 
	\begin{equation}
	\label{eqCP2:T1}
		\eta_{\ten{A},\ten{b}}(\ten{x}_{k})\,\rho^*(\ten{x}_{k}) \ge \eta_{\ten{A}_\ell, \ten{b}_\ell}(\ten{x}_{k}^{[\ell]})\qquad\text{where}\qquad \rho^*(\ten{x}_{k}) = \frac{\norm{\ten{A}}_2\norm{\ten{x}_{k}}+ \sqrt{p}}{2-\nu}
	\end{equation}
	for every $\ell\in\{1,\dots, p\}$ and for every $k\in\N$ such that $k \ge k^{**}$ where $k^{**} = \max k_\ell^{*}$. 
\end{corollary}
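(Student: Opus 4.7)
My plan is to derive the corollary as a direct specialization of Proposition~\ref{prop:eta_Ab} (applied in the form of Proposition~\ref{prop:eta_Ab_m}), by using the hypothesis on $\|\ten{A}_\ell \ten{x}_k^{[\ell]}\|$ to bound the denominator of $\rho_\ell(\ten{x}_k)$ uniformly from below. The corollary is, in essence, a clean rewriting of the previously established bound once we know that the quantity $\|\ten{A}_\ell \ten{x}_k^{[\ell]}\|$ stays close to $1$, a situation that is numerically relevant once the iterative solver has reduced the residual sufficiently (so that $\ten{A}_\ell \ten{x}_k^{[\ell]}$ approaches $\ten{b}_\ell$, whose norm equals $1$).

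First I would apply Proposition~\ref{prop:eta_Ab} to the individual system $\ten{A}_\ell \ten{y}_\ell = \ten{b}_\ell$ and the ``all-in-one'' iterate $\ten{x}_k$, which yields
\[
\eta_{\ten{A}_\ell,\ten{b}_\ell}(\ten{x}_k^{[\ell]}) \le \eta_{\ten{A},\ten{b}}(\ten{x}_k)\,\rho_\ell(\ten{x}_k)
\qquad \text{with} \qquad
\rho_\ell(\ten{x}_k) = \frac{\norm{\ten{A}}_2\norm{\ten{x}_k} + \sqrt{p}}{\norm{\ten{A}_\ell \ten{x}_k^{[\ell]}} + 1}.
\]
Next I would invoke the assumption $|\,\|\ten{A}_\ell \ten{x}_k^{[\ell]}\| - 1| \le \nu$, which implies in particular $\|\ten{A}_\ell \ten{x}_k^{[\ell]}\| \ge 1 - \nu$ and therefore $\|\ten{A}_\ell \ten{x}_k^{[\ell]}\| + 1 \ge 2 - \nu$. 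Assuming $\nu < 2$ so that the bound is meaningful, this implies
\[
\rho_\ell(\ten{x}_k) \le \frac{\norm{\ten{A}}_2\norm{\ten{x}_k} + \sqrt{p}}{2 - \nu} = \rho^*(\ten{x}_k),
\]
and chaining with the bound above gives the stated inequality.

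Finally, I would take care of the quantification over $k$ and $\ell$. The hypothesis provides, for each $\ell$, an index $k_\ell^*$ beyond which the closeness condition holds; taking $k^{**} = \max_\ell k_\ell^*$ guarantees that the condition is satisfied simultaneously for every $\ell \in \{1,\dots,p\}$ whenever $k \ge k^{**}$, and on that range the derivation above applies uniformly.

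The only genuine obstacle is the mild sign condition $\nu < 2$ needed to preserve the inequality when passing from $\rho_\ell$ to $\rho^*$; under the intended regime, where $\|\ten{A}_\ell \ten{x}_k^{[\ell]}\|$ is close to $1$, this is implicit, but I would make it explicit in the statement or the proof to avoid any ambiguity. Beyond that, the argument is a one-line bounding of the denominator followed by an application of the previous proposition.
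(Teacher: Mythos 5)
Your proof is correct and matches the derivation the paper intends: Corollary~\ref{cor:eta_Ab} is stated without proof precisely because it follows by bounding the denominator of $\rho_\ell(\ten{x}_k)$ from Proposition~\ref{prop:eta_Ab} below by $2-\nu$ using $\norm{\ten{A}_\ell \ten{x}_k^{[\ell]}} \ge 1-\nu$, exactly as you do. Your explicit remark that $\nu < 2$ is needed for the division to preserve the inequality is a worthwhile clarification that the paper leaves implicit.
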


The thesis of Corollary~\ref{cor:eta_Ab} is independent of the structure of the operator and consequently remains valid in this multiple right-hand side structure described above.
% -------------------------------
\section{Numerical experiments\label{sec3}}
In this section we investigate the numerical behaviour of the TT-GMRES solver for linear problems with increasing {dimension} as it naturally arises in some partial differential equation (PDE)  studies. {We start by illustrating how the TT-operators of our numerical examples are directly constructed in TT-format, thanks to their peculiarity.}
For all the examples, we illustrate numerical concerns related to the algorithm convergence and computational costs, with a focus on memory growth and memory saving.

%\LG{bib reference?}\MI{fixed}. 
%	\LG{Shall we say here on how we estimate $\Vert A \Vert$ and $\Vert A M \Vert$ in the numerical experiments to compute the backward error ?}\MI{given in TT-operator section just after}

%	   \subsection{TT-operator\label{sec:TT-op}} 
	   
	   The linear operators of the main problems, we will address, are \emph{Laplace-like} operators. The Laplace-like tensor operator $\ten{A}\in\R^{n_1\times m_1 \dots\times n_d\times m_d}$ is the sum of operators written as
	\begin{equation}
	\label{eqTT:2}
	\begin{split}
	\ten{A} = &M_1\otimes R_2\otimes R_3\otimes\cdots \otimes R_{d-2}\otimes R_{d-1}\otimes R_d \\
	&+ L_1\otimes M_2\otimes R_3\otimes \cdots\otimes R_{d-2}\otimes R_{d-1}\otimes R_d\\
	&+\cdots + L_1\otimes L_2\otimes L_3\otimes \cdots\otimes L_{d-2}\otimes M_{d-1}\otimes R_d\\
	&+ L_1\otimes L_2\otimes L_3\otimes \cdots\otimes L_{d-2}\otimes L_{d-1}\otimes M_d
	\end{split}
	\end{equation} 
	with $L_k, M_k, R_k \in\R^{n_k\times m_k}$ for every $k\in\{1,\dots, d\}$. As relevant property, these linear operators are expressed in TT-format with TT-rank $2$, i.e.,
	\begin{equation}
	\ten{A} = \begin{bmatrix}
	L_1 & M_1
	\end{bmatrix}\otimes 
	\begin{bmatrix}
	L_2 & M_2 \\
	0 & R_2 
	\end{bmatrix}
	\otimes\cdots\otimes 
	\begin{bmatrix}
	L_{d-1} & M_{d-1} \\
	0 & R_{d-1} 
	\end{bmatrix}
	\otimes
	\begin{bmatrix}
	M_d \\ R_d
	\end{bmatrix}
	\label{eqTT:2a}
	\end{equation}
	as proved in~\cite[Lemma 5.1]{Kazeev2012}. Remarking that the general expression of the discrete $d$-dimensional Laplacian on a uniform grid of $n$ points in each direction is
	\[
	\ten{\Delta}_d = \Delta_1\otimes \I_n \otimes\cdots\otimes \I_n +\cdots + \I_n\otimes \I_n\otimes\dots\otimes \Delta_1
	\]
	where $\I_n$ is the identity matrix of size $n$ and $\Delta_1\in\R^{n\times n}$ is the discrete $1$-dimensional Laplacian using the central-point finite difference scheme with discretization step $h=\frac{1}{n+1}$, i.e.,
	\[
	\Delta_1 = \frac{1}{h^2}\begin{bmatrix}
	-2 & 1  & 0&\dots&0\\
	1  & -2 & 1 &\dots&0\\
	\vdots&\ddots&\ddots&\ddots&\vdots\\
	0 & \dots& 1 & -2&1\\
	0 &0    & \dots& 1 & -2
	\end{bmatrix}.
	\]
	Then the TT-expression of $\ten{\Delta}_d$ is
	\begin{equation}
	\label{eqTT:4}
	\ten{\Delta}_d = \begin{bmatrix}
	\I_n& \Delta_1 
	\end{bmatrix}\otimes 
	\begin{bmatrix}
	\I_n & \Delta_1 \\
	\0 & \I_n 
	\end{bmatrix}
	\otimes\cdots\otimes 
	\begin{bmatrix}
	\I_n & \Delta_1 \\
	\0 & \I_n
	\end{bmatrix}
	\otimes
	\begin{bmatrix}
	\Delta_1 \\ \I_n
	\end{bmatrix}.
	\end{equation}
	
	To solve linear systems efficiently,  we consider an approximation of the inverse of the discrete Laplacian operator, $\ten{M}$,  as a preconditioner~\cite{Hackbusch2006I, Hackbusch2006II}. This operator writes
	% Another tensor operator having a key role in our numerical experiments is $\ten{M}$ an approximation of the inverse of the discrete Laplacian, see~\cite{Hackbusch2006I, Hackbusch2006II}, defined as
	\begin{equation}
	\label{eqTT:5}
	\ten{M} = \sum_{k=-q}^{q}c_k\exp(-t_k\Delta_1)\otimes\dots\otimes\exp(-t_k\Delta_1)
	\end{equation}
	where $c_k = \xi t_k$, $t_k = \exp(k\xi)$ and $\xi = \frac{\pi}{q}$. Thanks to the previously stated property of sum of TT-tensors, we conclude that the TT-ranks of $\ten{M}$ will be at least $2q+1$. 
%	\LG{Shall we say that here on only in the section on numerical experiments?} \OC{Yes}\MI{fixed}
	In Section~\ref{sec3} we consider the linear system $\ten{A}\ten{x} = \ten{b}$ and to speed up its convergence we apply the preconditioner TT-matrix $\ten{M}$, effectively solving $\ten{A}\ten{M}\ten{t} = \ten{b}$. The preconditioner TT-matrix $\ten{M}$ is always computed by a number of addends $q$ equal to a quarter of the grid step {dimension}. To keep the TT-rank of the preconditioner small, we choose to round it to $10^{-2}$. The choice of the number of addends and of the rounding compression is further discussed in  \reportPaper{Appendix~\ref{app:prec}}{\cite{coulaud2022}}. 
	
 To evaluate the converge of the TT-GMRES at the $k$-th iteration, we display in Section~\ref{sec3} the stopping criterion $\eta_{\ten{A}\ten{M}, \ten{b}}$, that is 
	\begin{equation}\label{eq:BE-AMb}
	\eta_{\ten{A}\ten{M}, \ten{b}}(\ten{t}_k) = \frac{\norm{\ten{A}\ten{M}\ten{t}_k - \ten{b}}}{\norm{\ten{A}\ten{M}}_2\norm{\ten{t}_k} + \norm{\ten{b}}}
	\end{equation}
	with $\ten{t}_k$ the preconditioned approximated solution at the $k$-th iteration. We compute exactly the norm of residual, of the right-hand side and of the iterative preconditioned approximated solution. The L2-norm of the preconditioner operator $\ten{A}\ten{M}$ is instead computed by the following sampling approximation. Let $\set{W}$ be a set of normalized TT-vectors generated randomly from a normal distribution, then $\norm{\ten{A}\ten{M}}_2$ is approximated by the maximum of the norm of the image of the elements of $\set{W}$ through $\ten{A}\ten{M}$, i.e.,
	\[
	\norm{\ten{A}\ten{M}}_2 \approx \max_{\ten{w}\in\set{W}}\norm{\ten{A}\ten{M}\ten{w}}.
	\]
	Similarly, the L2-norm of $\ten{A}$ is also approximated by $\max\bigl\{\norm{\ten{A}\ten{w}}s.t. \ten{w}\in\set{W}\bigr\}$. Because we are interested in the magnitude of these norms, we keep this norm estimation process simple and only compute  $10$ random TT-vectors of $\set{W}$.
	
In order to investigate main numerical features of the GMRES implementation described in the previous section we consider 
two classical PDEs that are the Poisson and convection-diffusion equation.

  	The Poisson problem writes
  	\begin{equation}
  	\label{eqNE:Lap}
		\begin{cases}
	  	&-\Delta u = f \quad\text{in}\quad \Omega = [0, 1]^3 ,\\
	  	&\;\;\; u = 0 \quad\text{in}\quad \partial \Omega,
		\end{cases}
  	\end{equation}
  	where $f:\R^{3}\rightarrow \R$ is such that the analytical solution of this Poisson problem is $u:[0,1]^3\rightarrow \R$ defined as $u(x,y,z) = (1-x^2)(1-y^2)(1-z^2)$. Let set a grid of $n$ points per mode over $\Omega$, the discretization of the Laplacian over the Cartesian grid is the linear operator $\ten{-\Delta}_d$ defined in Equation~\eqref{eqTT:4} with $d = 3$. Let $\ten{b}\in\R^{n\times n\times n}$ be the discrete right-hand side in TT-format such that $\ten{b}(i_1, i_2, i_3) = f(x_{i_1}, y_{i_2}, z_{i_3})$. 

      The convection-diffusion problem, identical to the one considered in~\cite{Dolgov2013}, writes 
  \begin{equation}
  \label{eqNE:CD}
   \begin{dcases}
	   	&-\Delta u + 2y(1-x^2)\frac{\partial u}{\partial x} -2x(1-y^2)\frac{\partial u}{\partial y} = 0 \quad\text{in}\quad \Omega = [-1, 1]^3\, , \\[5pt]
	  	&u_{\{y = 1\}} = 1\qquad\text{and}\qquad u_{\partial\Omega \setminus \{y = 1\}} = 0 \, .
   \end{dcases}
  \end{equation}
  Setting a grid of $n$ points per mode over $[-1,1]^3$, the Laplacian is discretized as in Equation~\eqref{eqTT:4} with $d = 3$. Let $\Nabla_x$ be discretization of the first derivative of $u$ with respect to mode $1$ defined as $\Nabla_x = \nabla_1 \otimes \I_n \otimes \I_n$, similarly $\Nabla_y$ is the discrete first derivative with respect to mode $2$ written as $\Nabla_y = \I_n\otimes \nabla_1 \otimes \I_n$, where $\nabla_1$ is the order-$2$ central finite difference matrix, i.e.,
  \[
  \nabla_1 = \frac{1}{2h}\begin{bmatrix}
  0  &1 & 0&\dots&0\\
  -1 & 0 & 1 &\dots&0\\
  \vdots&\ddots&\ddots&\ddots&\vdots\\
  0 & \dots& -1 & 0&1\\
  0 &0    & \dots& -1 & 0
  \end{bmatrix}.
  \]
  Let $v:[-1,1]^{3}\rightarrow \R^2$ be a function such that $v(x,y,z) = \Bigl(2y(1-x^2), -2x(1-y^2)\Bigr)$, the two components of $v$ are descretized over the Cartesian grid set on $[-1,1]^3$ defining two tensors $\ten{V}_1,\ten{V}_2 \in\R^{(n\times n)\times(n\times n)\times  (n\times n)}$ such that $\ten{V}_1 = \text{diag}(1-x^2)\otimes \text{diag}(2y)\otimes \I_n$ and $\ten{V}_2 = \text{diag}(-2x)\otimes \text{diag}(1-y^2)\otimes \I_n$.  Then the discrete diffusion term $\ten{D}$ writes
  \begin{equation}
  \label{eqCD:1}
  \begin{split}
  	\ten{D} &= \ten{V}_1\bullet\Nabla_x + \ten{V}_2\bullet\Nabla_y \\
  	&= \text{diag}(1-x^2)\nabla_1\otimes \text{diag}(2y)\otimes \I_n + \text{diag}(-2x)\otimes \text{diag}(1-y^2)\nabla_1\otimes \I_n \, .
  \end{split}
  \end{equation}
  The final operator passed to the TT-GMRES algorithm is $\ten{A} = -\ten{\Delta}_3 + \ten{D}$, the right-hand side is the TT-vector $\ten{b}\in\R^{n\times n \times n}$ and the initial guess is the zero TT-vector $\ten{x}_0$.
To ensure a fast convergence, similarly to~\cite{Dolgov2013}, we consider a right preconditioner $\ten{M}$ from Equation~\eqref{eqTT:5} for this test example.

\subsection{Main features and robustness properties \label{sec3:s1}}
%%%%%%%%%%%%%%%%%%%%%%%%%%%%%%%%%%%%%%%%%%%o
In this section, we first illustrate in Section~\ref{ssb:Dolgov} the major differences between our GMRES implementation and the one proposed in~\cite{Dolgov2013} 
that mostly highlights the robustness of our variant.
We motivate the need of effective preconditioners in Section~\ref{sec3:Lap} and illustrate the performance and the
main features of preconditioned GMRES in Section~\ref{sssec:CD}.
All the experiments were performed using \texttt{python 3.6.9} and with the tensor toolbox \texttt{ttpy 1.2.0}~\cite{ttpy}.

\subsubsection{Comparison with previous tensor GMRES algorithm\label{ssb:Dolgov}}
 In this section we describe the TT-GMRES introduced in~\cite{Dolgov2013}, that we refer to as relaxed TT-GMRES, that attempts to use advanced features enabled by the inexact GMRES theory~\cite{bofr:04, gigl:07, sisz:03, sles:04}.
In particular, these inexact GMRES theoretical results show that some perturbations can be introduced in the linear operator when enlarging the
Krylov space so that the magnitude of these perturbations can grow essentially as the inverse of the true residual norm of the current iterate.
In that context the accuracy of computation of the linear operator can be relaxed, that motivated the use of this terminology in~\cite{bofr:04, gigl:07}.
The inexact GMRES theory assumes exact arithmetic so that Equation~\eqref{eq:resLS_vs_trueRes} holds.
In practice, this equality becomes invalid as soon as some loss of orthogonality appears in the Arnoldi basis so that
\begin{equation}\label{eq:diff_LS_trure_res}
\Vert \tilde{r}_k \Vert = \Vert \beta e_1 - \bar{H}_ky \Vert  \ne \Vert r_k \Vert = \Vert b  - A x_k \Vert;
\end{equation} 
that is, the norms of the least squares residual  and the true residual differ.

In a TT-computational context these inexact Krylov results motivated the heuristic presented in~\cite{Dolgov2013}, that consists in transferring the perturbation policy from the matrix to the output of the matrix-vector product. More precisely, the variable perturbation magnitude is implemented by varying the rounding threshold $\delta$ applied to the tensor resulting from the matrix-vector product along the iterations.
Furthermore, the magnitude of the rounding $\delta$ is computed using the least squares residual norm rather than the true residual norm for practical computational reasons.
A possible consequence of this choice is that  $\delta$ is somehow artificially increased.

Although the rounding are performed exactly at the same step in the two algorithms, 
there are  two differences between our TT-GMRES and the relaxed TT-GMRES~\cite{Dolgov2013}.
The first difference is related to the rounding threshold policy that is variable (or relaxed to use the terminology of the pioneer  paper on inexact GMRES~\cite{bofr:04}) and constant in our case.
We  simply define the value of $\delta$ essentially to the value of  the target accuracy in terms of backward error~\eqref{eq:BE-Ab} (\eqref{eq:BE-AMb} when a preconditioner is used).
The second difference is related to the stopping criterion that is defined in terms of backward error~\eqref{eq:BE-Ab}  in our case (\eqref{eq:BE-AMb} when a preconditioner is used) while it is based on a scaled least squares residual defined by Equation~\eqref{eq:stopCritDolgov} in~\cite{Dolgov2013}:
\begin{equation}\label{eq:stopCritDolgov}
  \tilde{\eta}_{\ten{b}} (\ten{x}_k) = \frac{\norm{\tilde{r}_k}}{\norm{\ten{b}}}.
\end{equation}
Because in practice the true residual differs from the least squares residual, this latter is monotonically decreasing towards zero, such a stopping criterion can lead to an earlier stop.

We choose this  stopping criterion based on backward error because it is the one for which, in the matrix framework, GMRES is backward stable in finite precision~\cite{paige2006}.
  	Through intensive numerical experiments~\cite{agullo2022}, we observed that our TT-GMRES inherits the same backward stability property. Indeed if $\delta$ is the rounding accuracy and $\ten{x}_k$ the GMRES solution at iteration $k$, then $\etaAb{\ten{A}}{\ten{b}}(\ten{x}_k)$ is $\mathcal{O}(\delta)$ as $\delta$ is the dominating part of the rounding error occurring during the numerical calculation. Consequently assuming $\delta \le \varepsilon$, our GMRES variant is able to ensure a $\varepsilon$-backward stable solution. 
  	\begin{figure}[!hbt]
  	\centering
  		\includegraphics[scale = 0.45, width=0.33\linewidth]{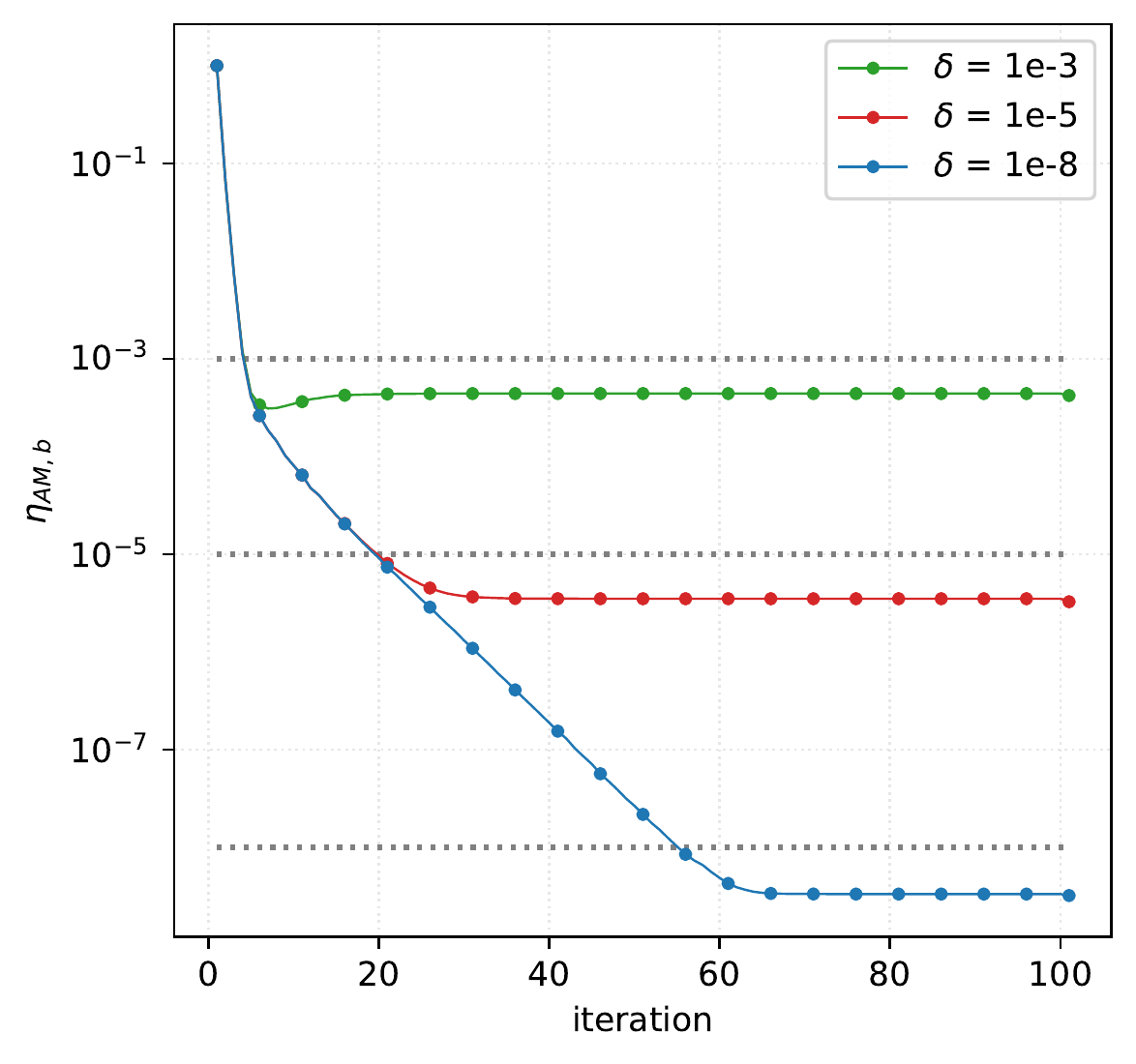}
  		\caption{Convergence history of TT-GMRES on a 3-d convection diffusion problem, $n = 64$, for three different rounding accuracies $\delta$}
  		\label{fig:CCD}
 	\end{figure}
  	{This property is well illustrated in Figure~\ref{fig:CCD} in the case of preconditioned GMRES. The $3$d convection-diffusion problem with $63$ discretization points is solved  using $3$ different rounding accuracies, i.e., $\delta\in\{10^{-3}, 10^{-5}, 10^{-8}\}$, and a maximum of $100$ iterations. For each value of $\delta$, the backward error $\eta_{\ten{AM}, \ten{b}}(\ten{t}_k)$ decreases and stagnates around $\delta$.} 

 The second significant difference between the two GMRES variants is the choice of the rounding threshold along the iterations that is constant for us and varies as the inverse $\Vert \tilde{{r}}_k \Vert$ in~\cite{Dolgov2013}.
  This variation of the rounding is illustrated in Figure~\ref{fig:CD_Dolg}.
We solve with the two different algorithms the same convection-diffusion problem with $63$ discretization  points in each space dimension. We select three different rounding accuracies $\delta\in\{10^{-3}, 10^{-5}, 10^{-8}\}$ and
perform $100$ iterations of full GMRES (i.e., no restart). In Figure~\ref{fig:CD_Dolg_delta} we see the extreme growth of the rounding threshold, when it is scaled by the norm of $\tilde{r}_k$, the least-squares residual norm that becomes smaller and smaller.
  	\begin{figure}[!htb]
  		\centering 		
  		\subfloat[Convergence history with $\tilde{\eta}_{\ten{b}}$ with least squares residual]{\includegraphics[scale=0.3, width=0.3\linewidth, height = 0.3\linewidth]{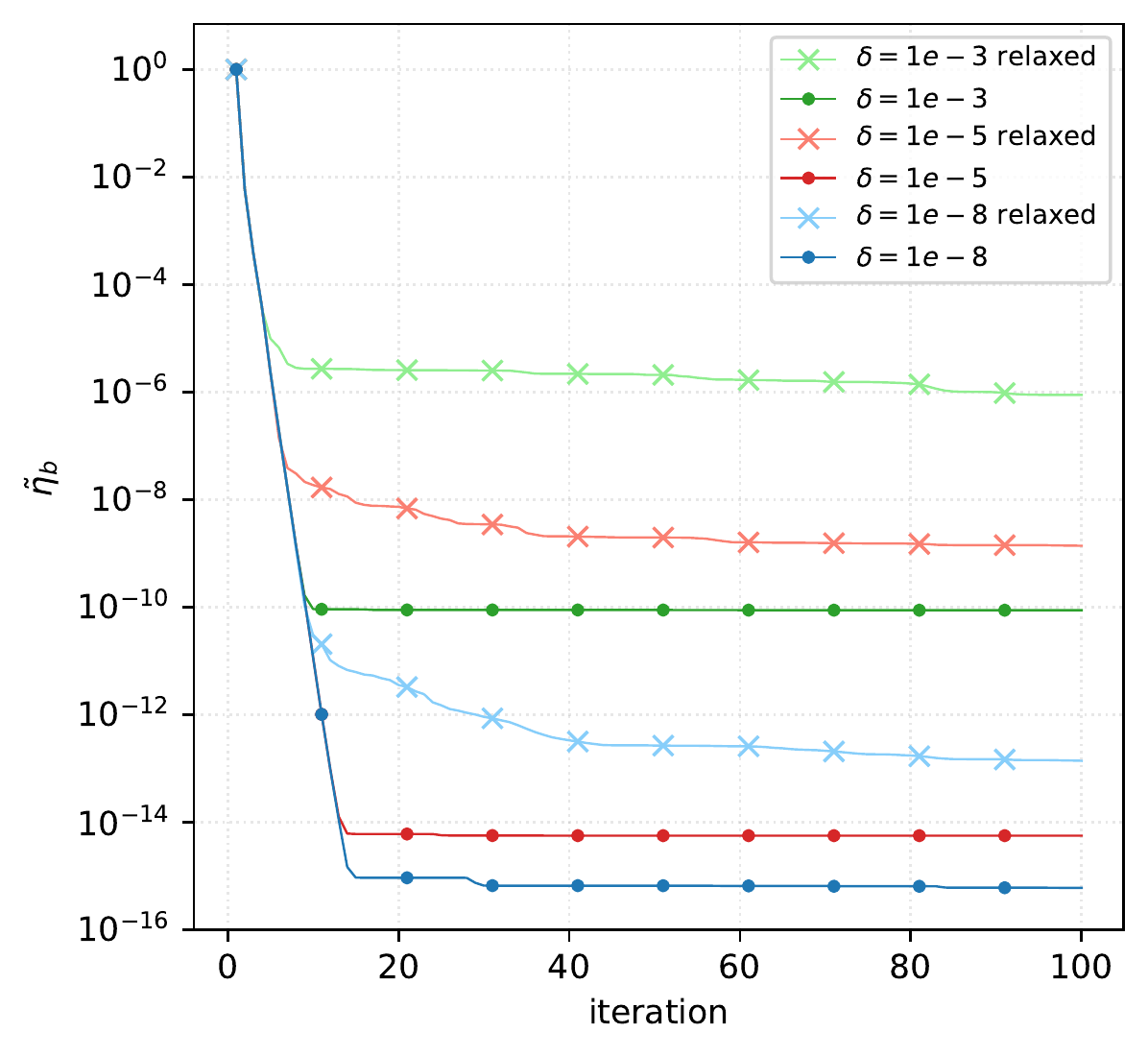}\label{bfig:CD_Dolg_etabLS}}
  		\vspace{3mm}
  		\subfloat[Convergence history with $\eta_{\ten{b}}$ with true residual]{\includegraphics[scale=0.3, width=0.3\linewidth, height = 0.3\linewidth]{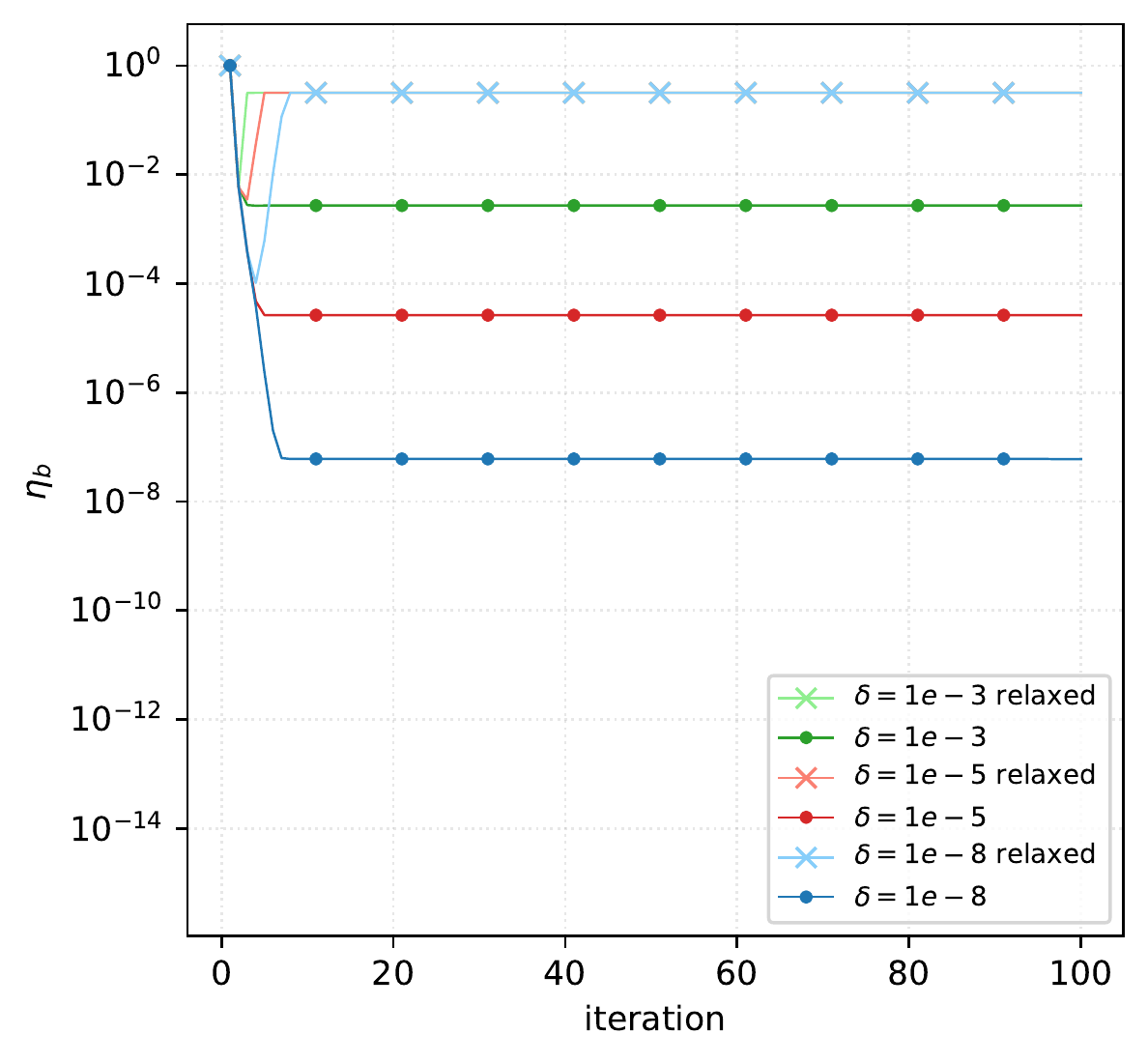}\label{bfig:CD_Dolg_etab}}
  		\vspace{3mm}
  		\subfloat[Convergence history $\eta_{\ten{AM}, \ten{b}}$ with true residual]{\includegraphics[scale=0.3, width=0.3\linewidth,height = 0.3\linewidth]{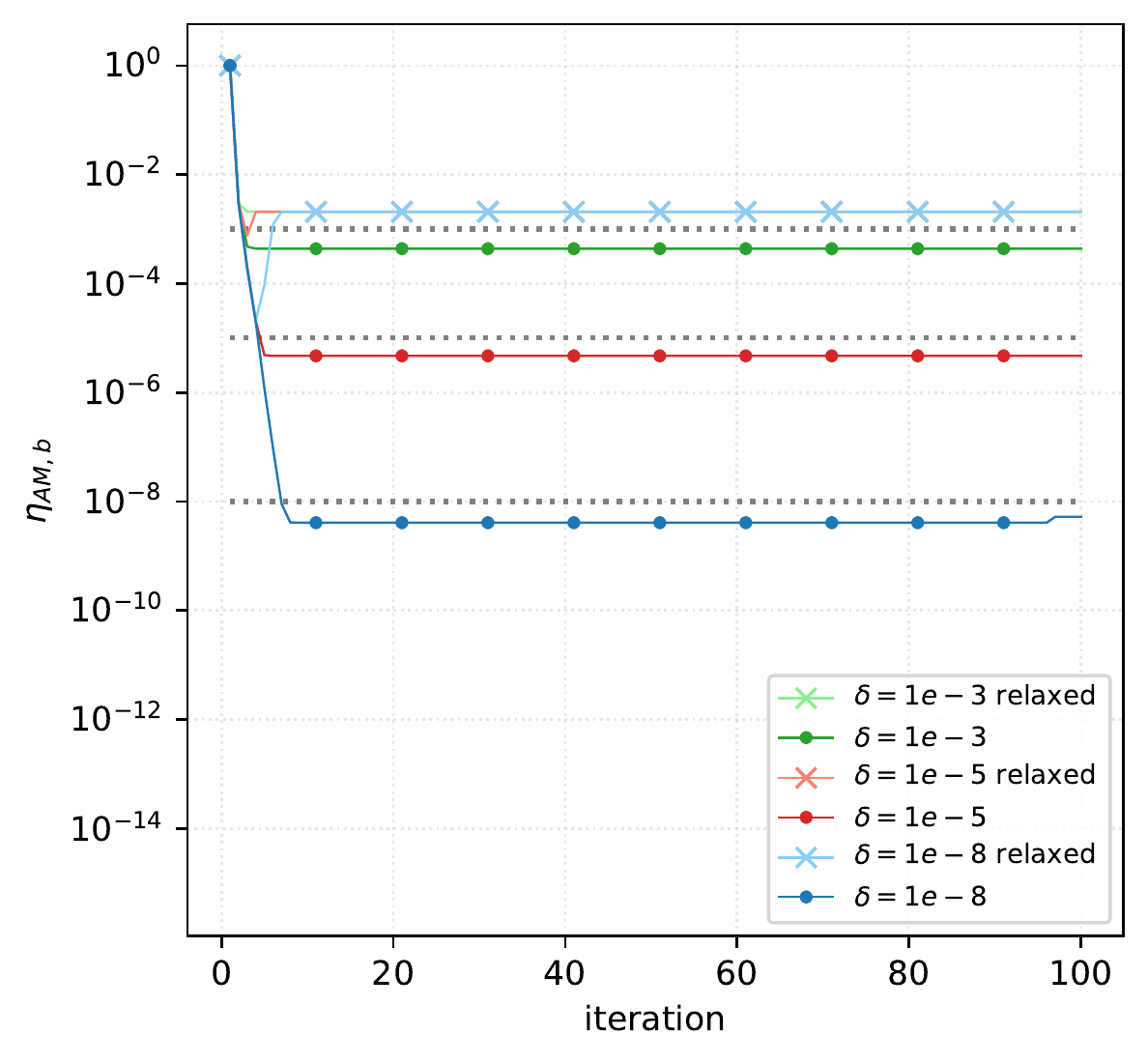}\label{bfig:CD_Dolg_etaAMb}}
  		
  		\vskip\baselineskip
  		
  		\subfloat[Max TT-rank of the last Krylov vector]{\includegraphics[scale=0.35, width=0.315\linewidth, height=0.315\linewidth]{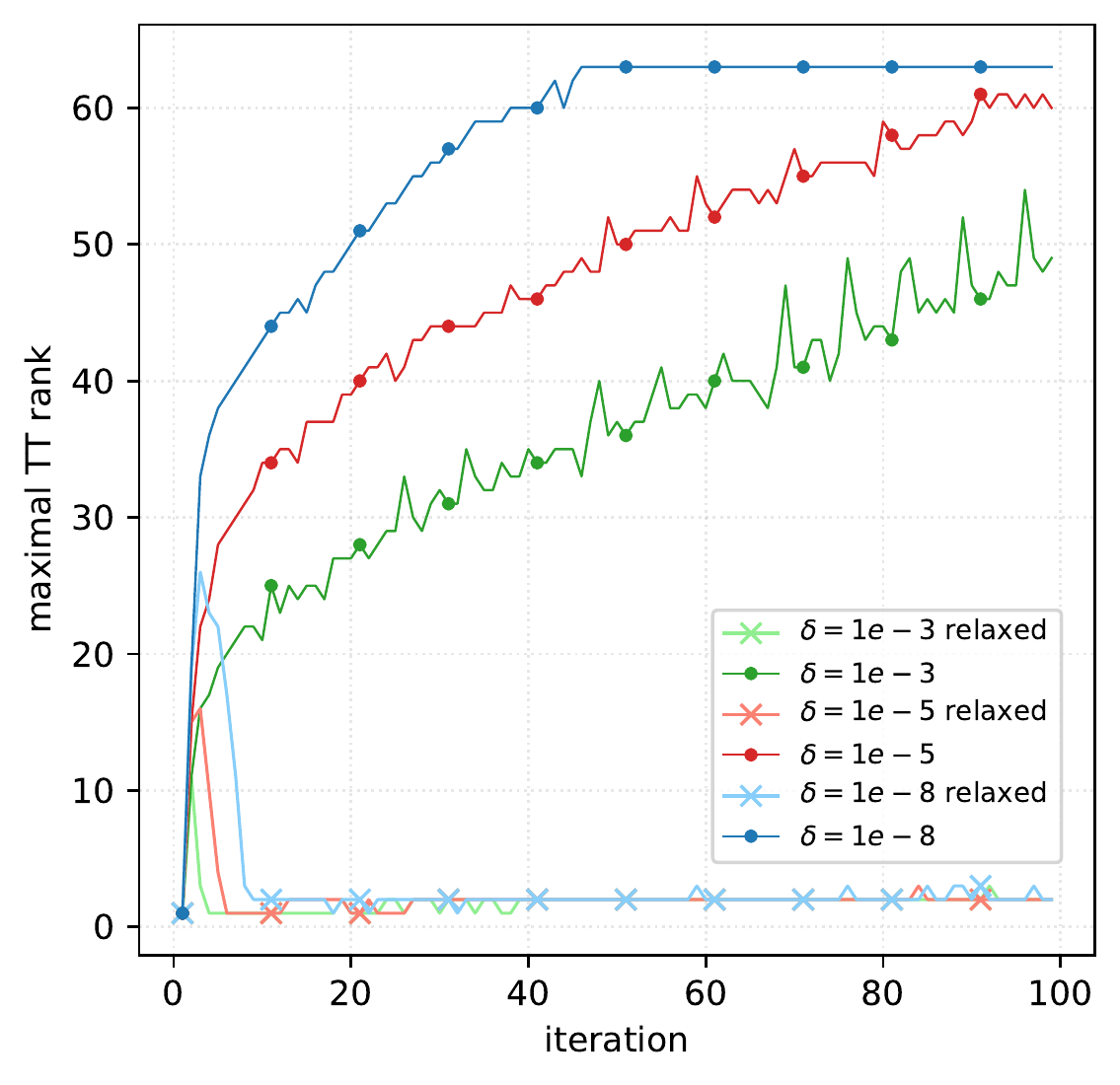}\label{fig:CD_Dolg_v-rank}}
  		\vspace{5mm}
  		\subfloat[Max TT-rank of the iterative solution]{\includegraphics[scale=0.35, width=0.315\linewidth, height=0.315\linewidth]{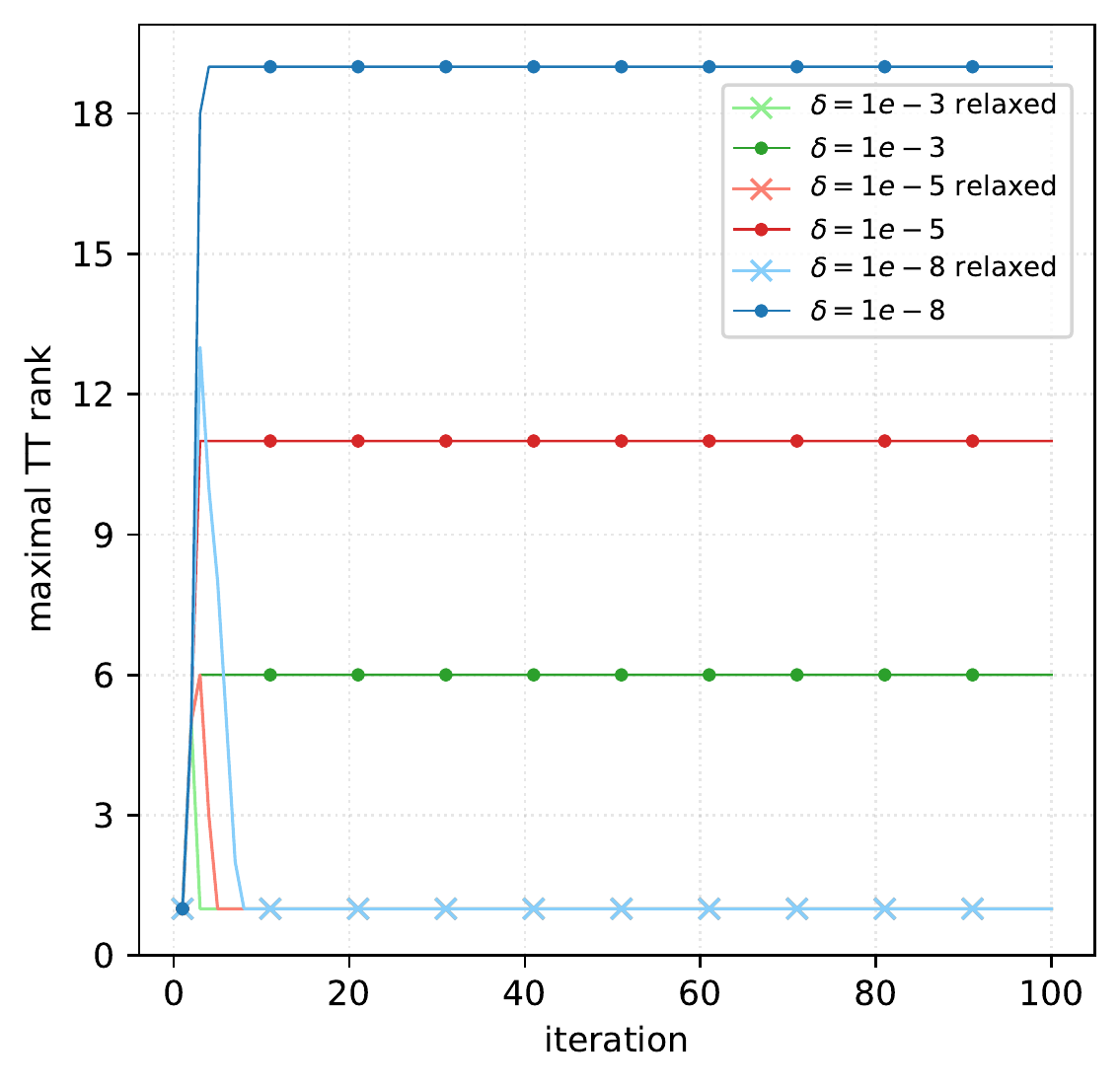}\label{fig:CD_Dolg_x-rank}}
  		\vspace{5mm}
  		\subfloat[History of the relaxed $\delta$ values]{\includegraphics[scale=0.35, width=0.315\linewidth, height=0.315\linewidth]{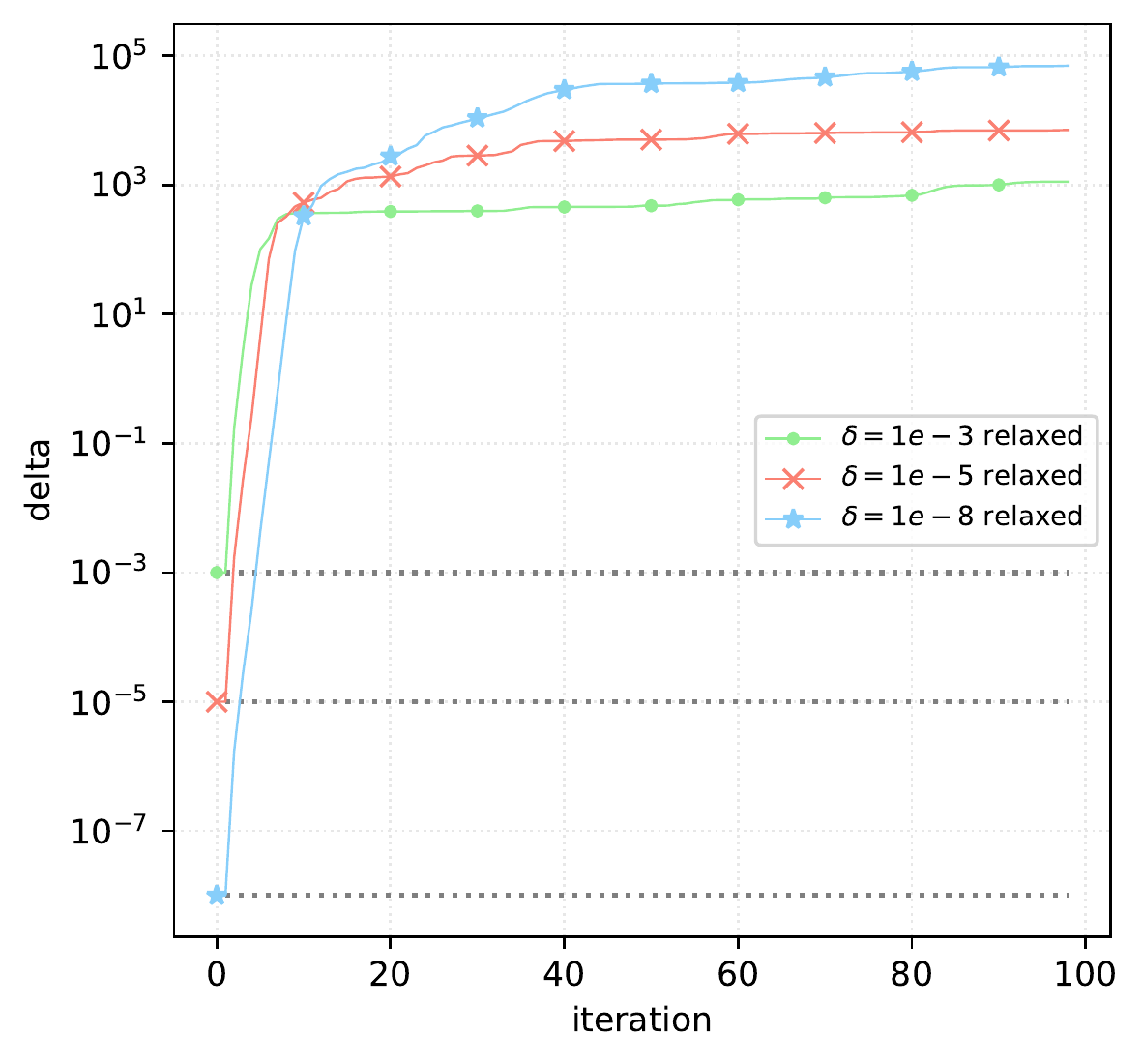}\label{fig:CD_Dolg_delta}}
  		
  		\caption{TT-GMRES and relaxed TT-GMRES for the solution of 3-d convection diffusion problem with $n = 63$}
  		\label{fig:CD_Dolg}

  	\end{figure}
 When the rounding accuracy becomes significantly large, the TT-ranks in relaxed TT-GMRES are cut to $1$, losing almost all the information carried in the tensor. 
  	Figure~\ref{bfig:CD_Dolg_etabLS} shows the scaled residual used as stopping criterion in~\cite{Dolgov2013}. 
  	We observe that if $\delta$ is not relaxed along the iterations, the value of $\tilde{\eta}_{\ten{b}}$ decreases extremely quickly, reaching $10^{-10}$ for $\delta = 10^{-3}$ and at least $10^{-14}$ for the other rounding accuracies. On the other hand if the rounding accuracy is relaxed during the iterations, we see that in all the cases $\tilde{\eta}_{\ten{b}}$ reaches at least $10^{-6}$.
  	% even when the starting values of $\delta$ is significantly greater than it. 
 However, the comparison of Figure~\ref{bfig:CD_Dolg_etabLS} and Figure~\ref{bfig:CD_Dolg_etab} illustrates the numerical difference of the least squares residual norm and the true residual norms given by Equation~\eqref{eq:diff_LS_trure_res}. 
This comparison reveals that $\tilde{\eta}_{\ten{b}}(\ten{x}_k)$ with the relaxed $\delta$ converges, but ${\eta}_{\ten{b}}(\ten{x}_k)$, that is also a
backward error as defined in~\eqref{eq:BE-b}, does not. 
It means that the solutions computed using the relaxed $\delta$ are meaningless in terms of backward error accuracy.
Similar conclusions can be drawn from
%Remark that for the first $10$ iteration approximately all the curves overlap, meaning that at the very beginning this stopping criterion is not able to detect the differences in the rounding accuracy.
  	Figure~\ref{bfig:CD_Dolg_etaAMb} that presents the history of $\eta_{\ten{AM}, \ten{b}}$ for the two algorithms. 
When the rounding accuracy is kept constant, we recover a backward stable behaviour similar to the one proved for finite precision calculation in classical linear system solution in matrix format. 
Indeed $\eta_{\ten{AM}, \ten{b}}$ always reaches and stagnates around the selected constant value of $\delta$.
On the contrary, when $\delta$ is relaxed at each iteration, the quantity $\eta_{\ten{AM}, \ten{b}}$ stagnates quickly slightly above $10^{-3}$, whatever the starting value of $\delta$. From these two figures, we conclude that relaxing the rounding accuracy and using $\tilde{\eta}_{\ten{b}}$ as stopping criterion, together or independently, do not provide any insight on the  quality of the computed solution.

 Obviously the choice of relaxing the rounding accuracy has a powerful effect on the rank of the last Krylov basis vector and on the solution, as illustrated by Figure~\ref{fig:CD_Dolg_v-rank} and~\ref{fig:CD_Dolg_x-rank}. Indeed in the case of the last Krylov basis vector its TT-rank oscillates around $1$ for all the iterations, after the $15$-th one approximately. Similarly the solution TT-rank stays equal to $1$, after increasing at the very first steps.  Unfortunately the computed solutions are numerically meaningless.

In the following,  we consider calculation with convergence threshold and rounding accuracy equal to $10^{-5}$, that is, $ \delta = \varepsilon =  10^{-5}$, with a maximum of $500$ iterations and restart $m = 25$. 
  	
\subsubsection{Poisson problem\label{sec3:Lap}}
We consider restarted TT-GMRES  for  the solution of the  $3$-d Poisson problem  with $n\in\{63, 127, 255\}$.
 	 \begin{figure}[!htb]
  	 	\centering
  	 	\subfloat[Convergence history]{\includegraphics[scale=0.45, width=0.33\linewidth, height=0.33\linewidth]{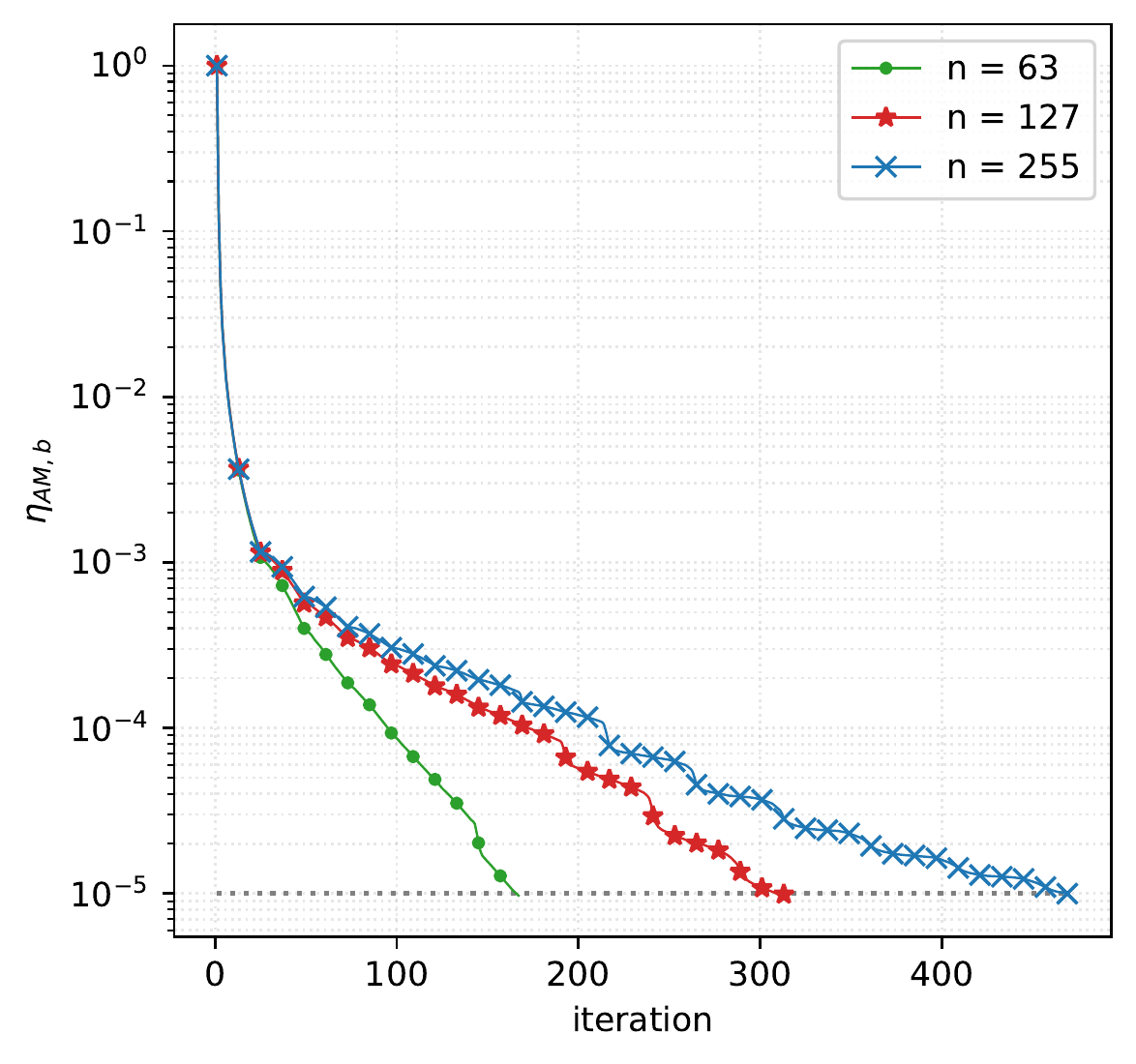}\label{fig:Lap_CH}}
  	 	\quad
  	 	\subfloat[Maximal TT-rank of the iterative solution]{\includegraphics[scale=0.45, width=0.33\linewidth, height=0.33\linewidth]{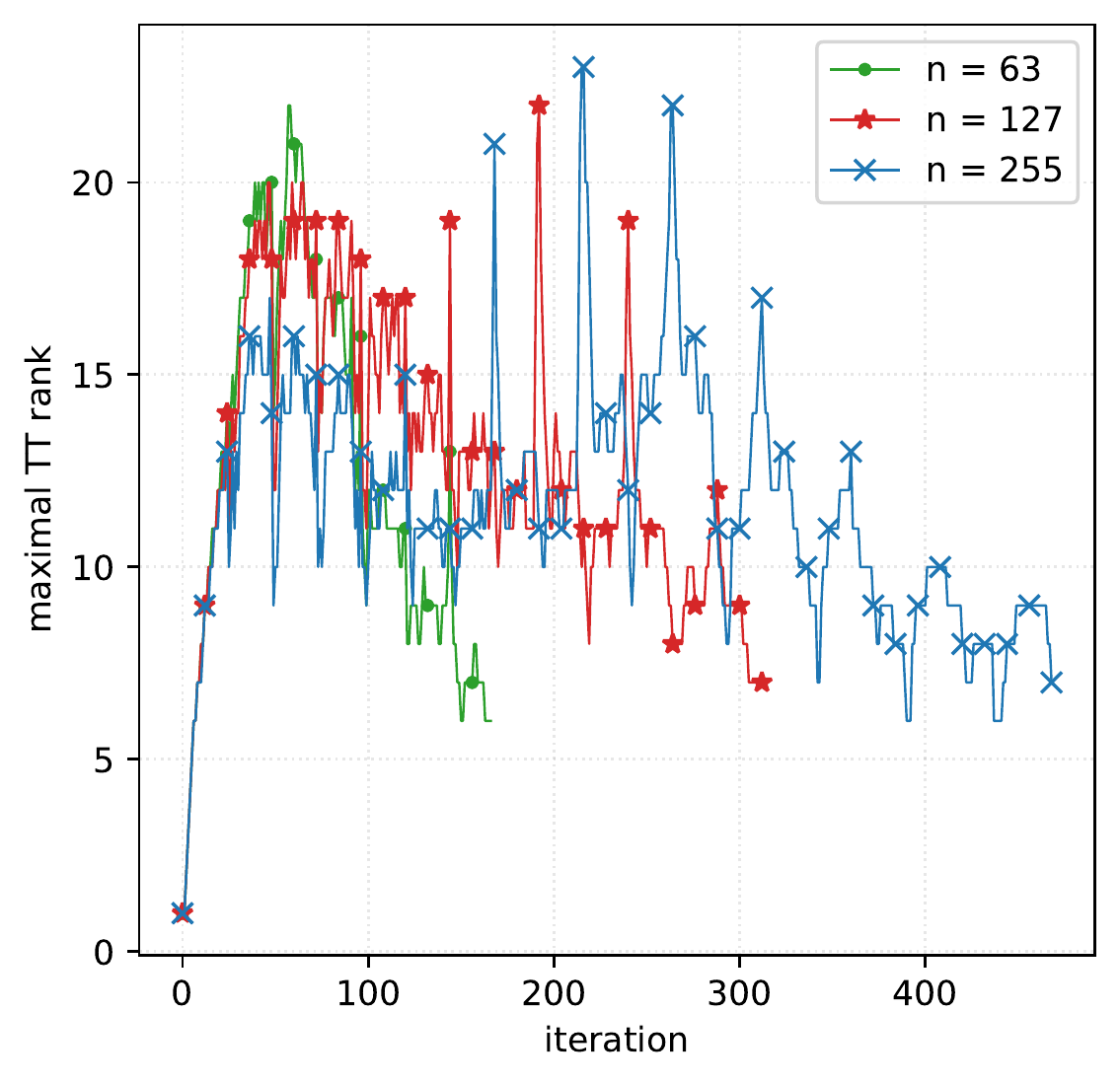}\label{fig:Lap_x-rank}}
  	 	\vskip\baselineskip
  	 	\subfloat[Maximal TT-rank of the last Krylov vector]{\includegraphics[scale=0.45, width=0.33\linewidth, height = 0.33\linewidth]{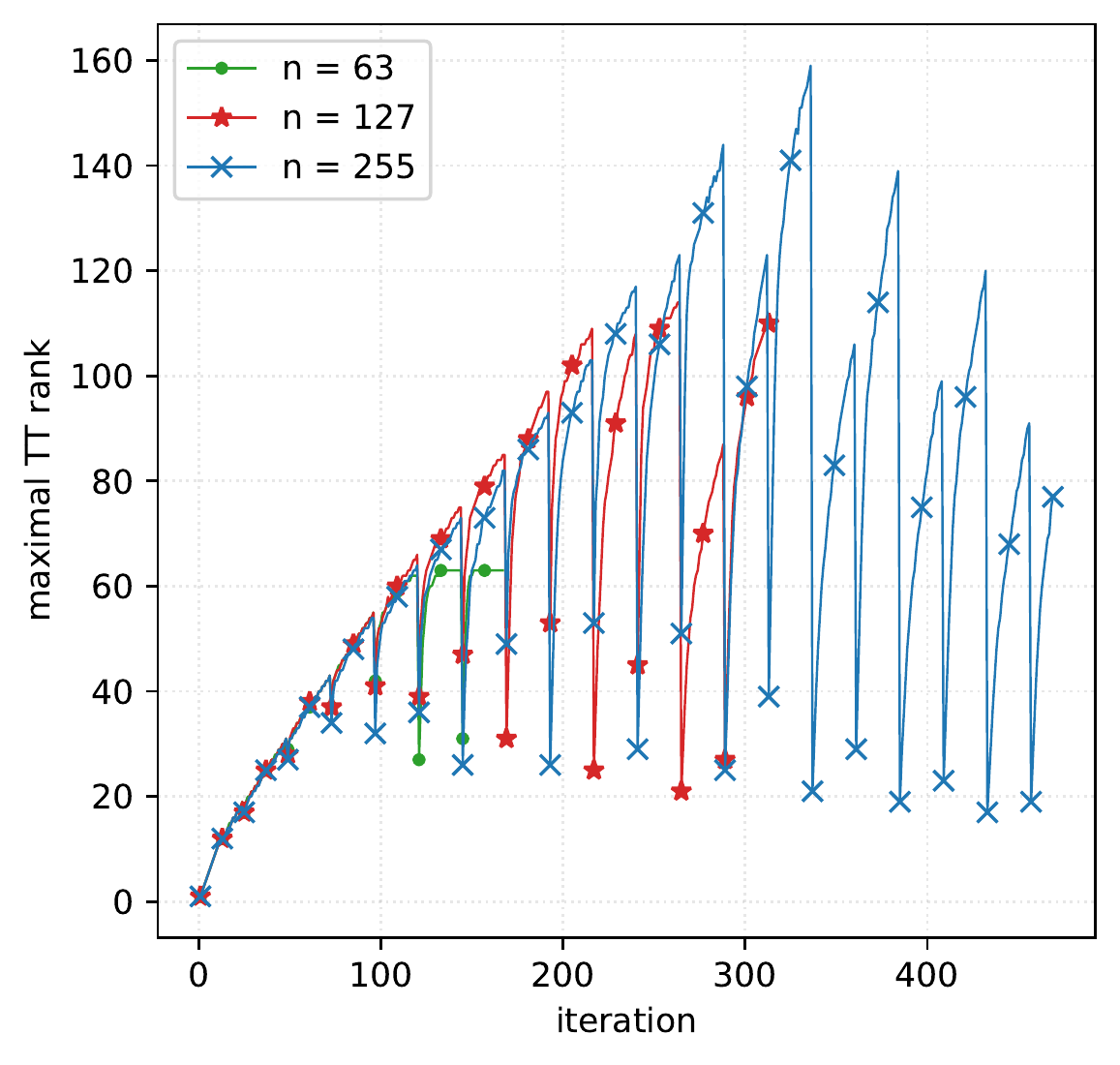}\label{fig:Lap_v-rank}}
  	 	\subfloat[Compression ratio for the last Krylov vector]{\includegraphics[scale=0.45, width=0.33\linewidth,height = 0.33\linewidth]{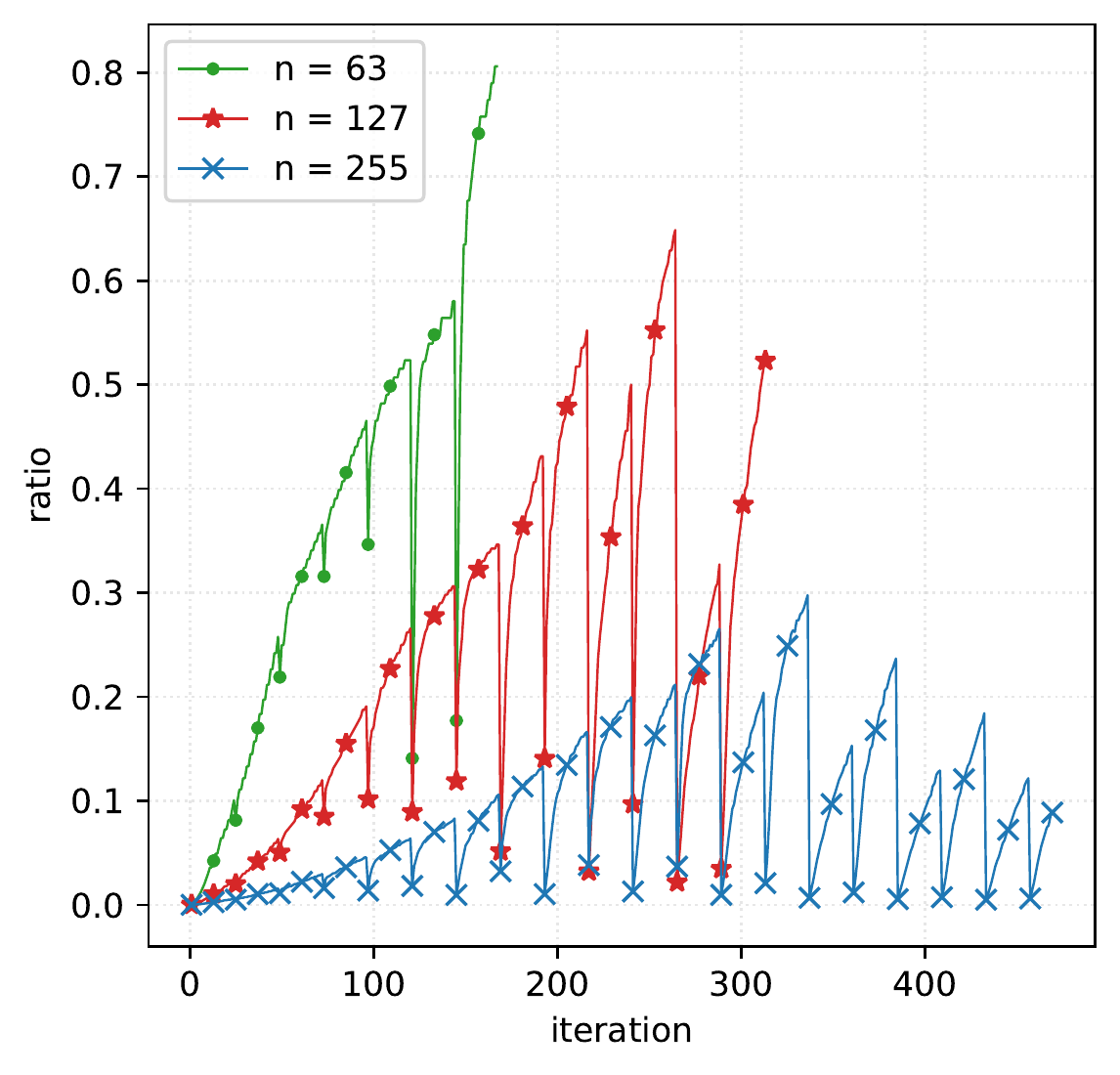}\label{fig:Lap_v-ratio}}
  	 	\subfloat[Compression ratio for the entire Krylov basis]{\includegraphics[scale=0.45, width=0.33\linewidth, height = 0.33\linewidth]{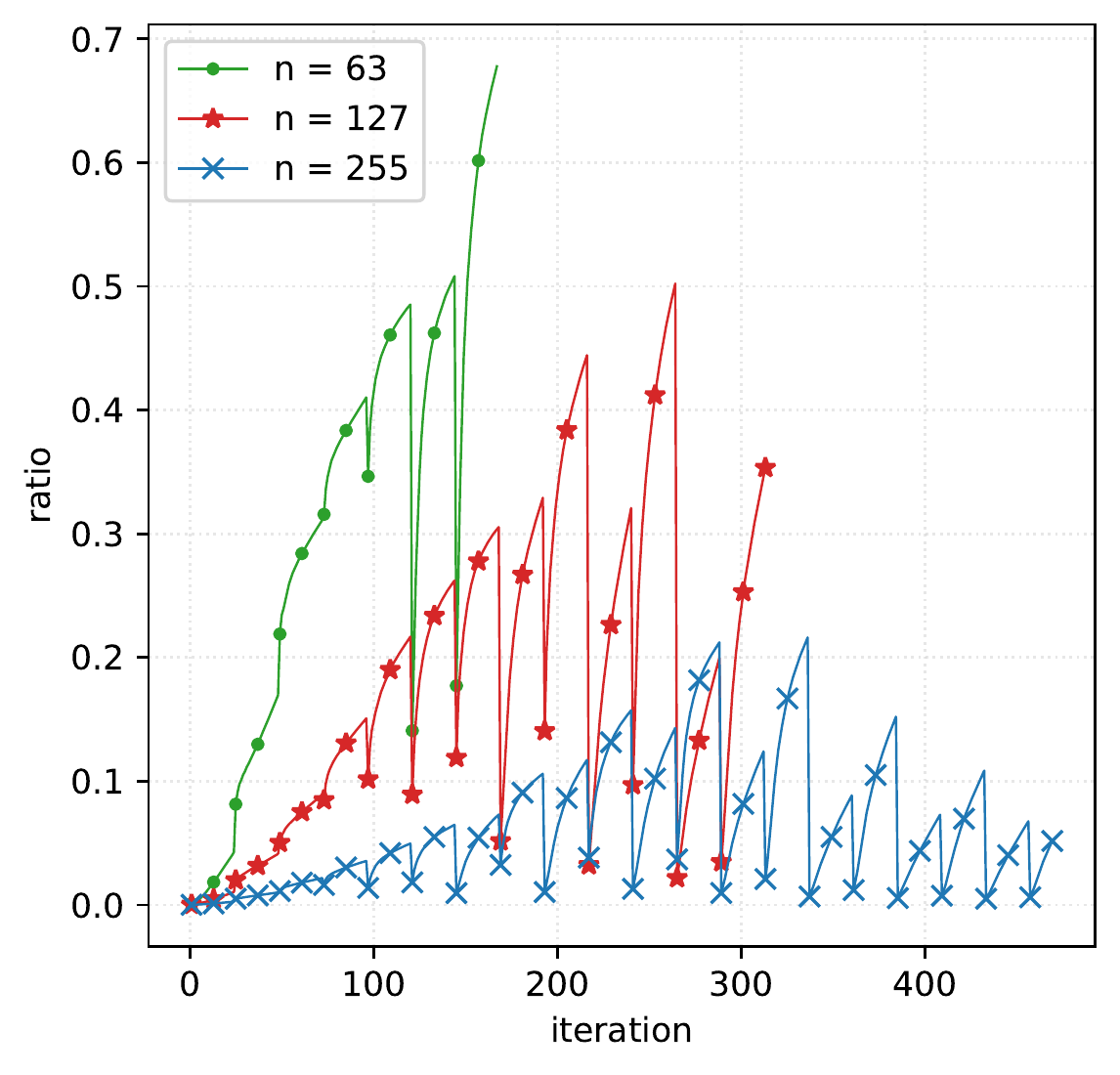}\label{fig:Lap_v-ratio_s}}
  	 	\caption{3-d Poisson problem using $\delta = \varepsilon = 10^{-5}$}
  	 	\label{fig:Lap}
  	 \end{figure}
Figure~\ref{fig:Lap_CH} shows that the algorithm is able to converge to the prescribed tolerance
$\varepsilon = 10^{-5}$  with a number of iterations that increases with the number of discretization points.
This high number of steps to solve a quite simple PDE motivates the need of a preconditioner.
Indeed in general the larger the number of TT-GMRES iterations, the larger the TT-rank growth for the Krylov basis vectors; consequently, the higher the computational cost per iteration.
For that example, it can be seen in Figure~\ref{fig:Lap_x-rank}, that the rank of the current iterate grows significantly during the
first iterations  (first $100$ iterations for $n = 63$ and the first $200$ steps for $n\in\{127, 255\}$) before decreasing in a non monotonic way.\ignore{In this specific case the TT-ranks of the iterative solution, after reaching a pick during the first $100$ iterations for $n = 63$ and the first $200$ steps for $n\in\{127, 255\}$, stay bounded with very small oscillations between $5$ and $10$, near to the last iterations.} We infer that this particular behaviour is related to  the separable nature of the analytical solution, which is \ignore{in TT-format has TT-ranks, actually I guess it's the rank to be equal to 1 and this implies TT-rank equal to 1 equal to $1$, i.e.,}
  	\[
  		u(x, y, z) = \begin{bmatrix}
  		\text{diag}(1-x^2)
  		\end{bmatrix}\otimes
  		\begin{bmatrix}
  		\text{diag}(1-y^2)
  		\end{bmatrix}\otimes
  		\begin{bmatrix}
  		\text{diag}(1-z^2)
  		\end{bmatrix}
  	\]
  	with rank $1$ and as consequence its TT-rank is also bounded by $1$.
  	After some iterations TT-GMRES seems to capture the main structure of the solution, being able to almost halve the TT-ranks, as it is visible in Figure\ref{fig:Lap_x-rank}. Another quantity monitored during the iterations is the growth of the last Krylov vector TT-ranks. In Figure~\ref{fig:Lap_v-rank} the maximum TT-rank of the last Krylov vector presents a steep increase during a first phase, followed by slight decreasing phase. The behaviour of the maximum TT-rank establishes the trend in the compression ratio of the last vector and of the entire basis. Indeed the curves of Figures \ref{fig:Lap_v-ratio} and \ref{fig:Lap_v-ratio_s} are the same of Figure~~\ref{fig:Lap_v-rank} scaled by a constant, equal to $n^3$ for the first and $kn^3$ for the second where $k$ is equal to the current iteration in the restart. Lastly in Figure~\ref{fig:Lap_v-rank} mainly during the second phase, there are many consecutive drops in the maximum TT-ranks which appear with a specific frequency. They are due to the restart after every other $25$-th iteration. In fact at restart the new Krylov vector is equal to the normalized rounded residual, whose basic starting TT-ranks is the one of $\ten{x}$, equal to $21$ at maximum. Lastly notice that in the worst case storing the last Krylov vector and the entire Krylov basis request approximately $80\%$ for $n = 63$  of the memory that would be used for storing entirely them.   
Furthermore, this ratio decreases when the number of points per mode increases (i.e., $n\in\{63, 127, 255\}$), that is an appealing feature of the TT-format that allows the solution of larger problems for a given memory budget compared to  the situation where the full tensors would have to be stored.
 %    \color{lightgray}
%    \begin{itemize}
%        \item describe how the problem is written in TT format
%        \item we select this example because GMRES converges without preconditioner and it allows us to show that the  rank of the $v_i$ increases the iteration ? and the restart does not allow us to monitor this growth ?
%        \item plot with side to side: convergence history w.r. iterations and rank of of the last Krylov vector w.r. ''iterations'' for a few values of the space discretization.
%        \item basic case without preconditioner, that is not needed and is explained by separability features of the solutions
%        \item error with respect to the exact solution and separability of the variables
%    \end{itemize}
%	\normalcolor
 %%%%%%%%%%%%%%%%%%%%%%%%%%%%%%%%%%%%%%%%%%%
    \subsubsection{Convection-diffusion\label{sssec:CD}}
    \begin{figure}[!htb]
    	\centering
    	\subfloat[Convergence history]{\includegraphics[scale=0.45, width=0.33\linewidth, height=0.33\linewidth]{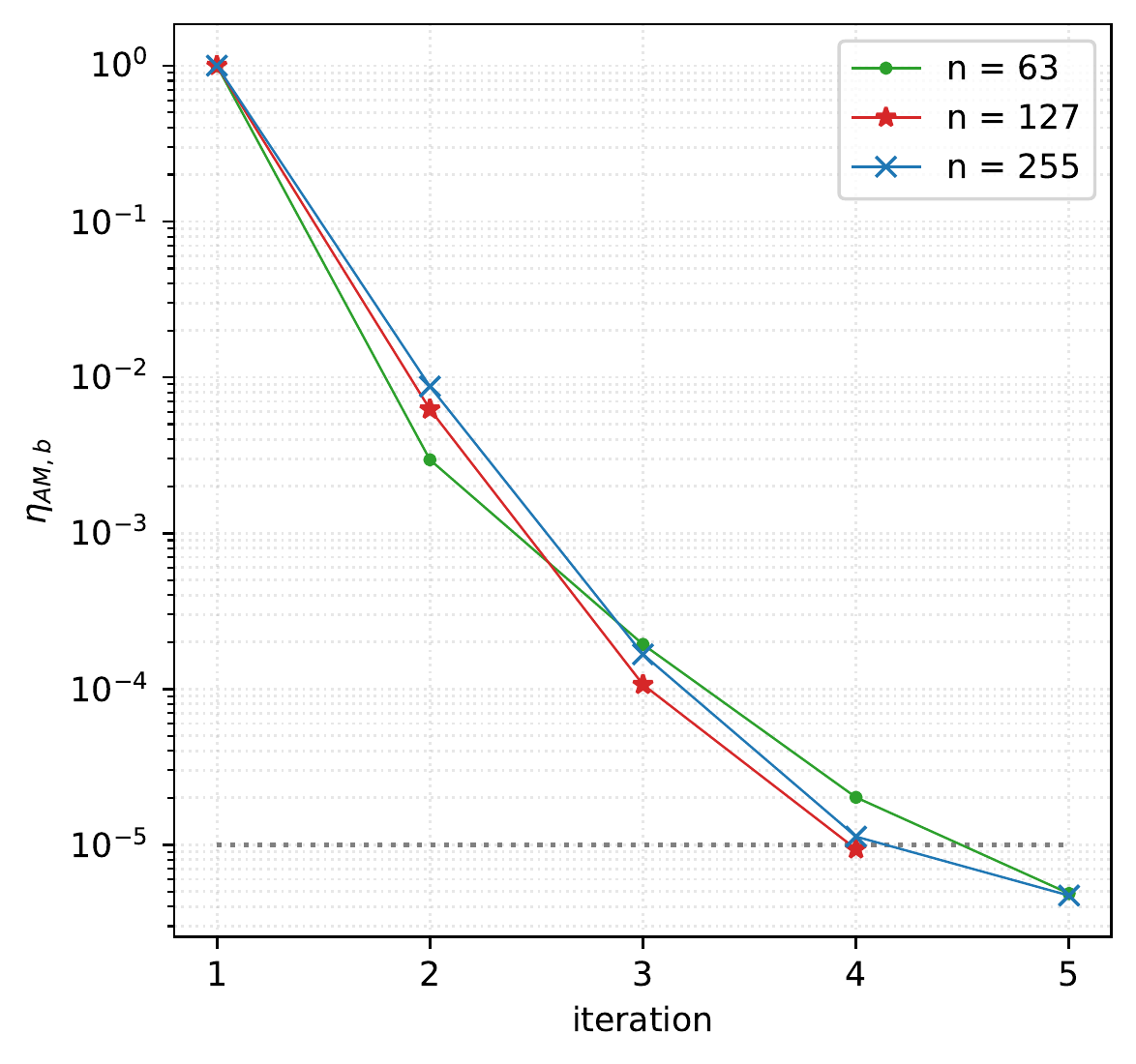}\label{fig:CD_CH}}
    	\quad
    	\subfloat[Maximal TT-rank of the last Krylov vector ]{\includegraphics[scale=0.45, width=0.33\linewidth, height=0.33\linewidth]{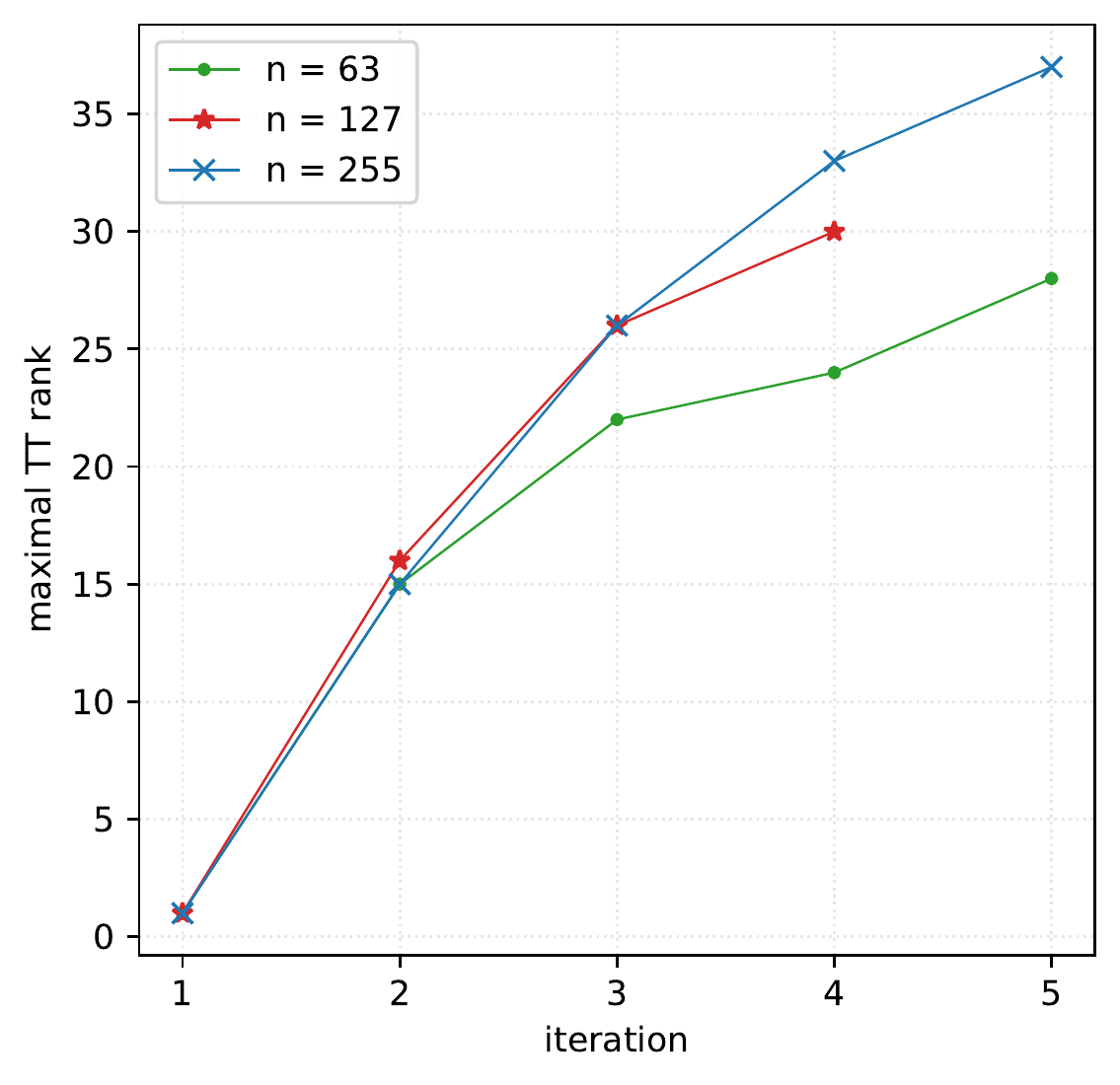}\label{fig:CD_v-rank}}
    	\vskip\baselineskip
    	\subfloat[Compression ratio for the last Kyrolv vector]{\includegraphics[scale=0.45, width=0.33\linewidth, height=0.33\linewidth]{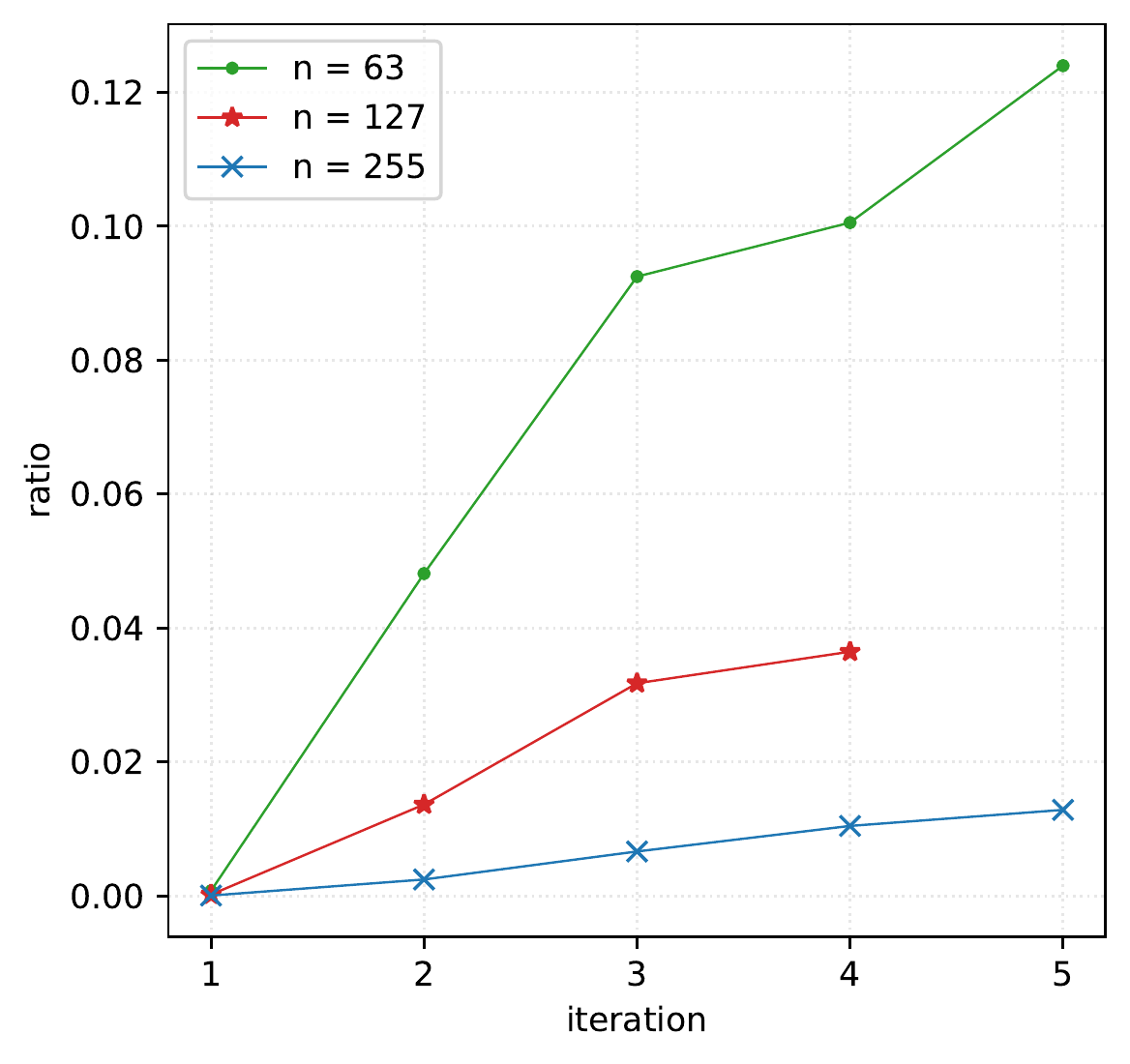}\label{fig:CD_v-ratio}}
    	\quad
    	\subfloat[Compression ratio for the entire Krylov basis]{\includegraphics[scale=0.45, width=0.33\linewidth, height=0.33\linewidth]{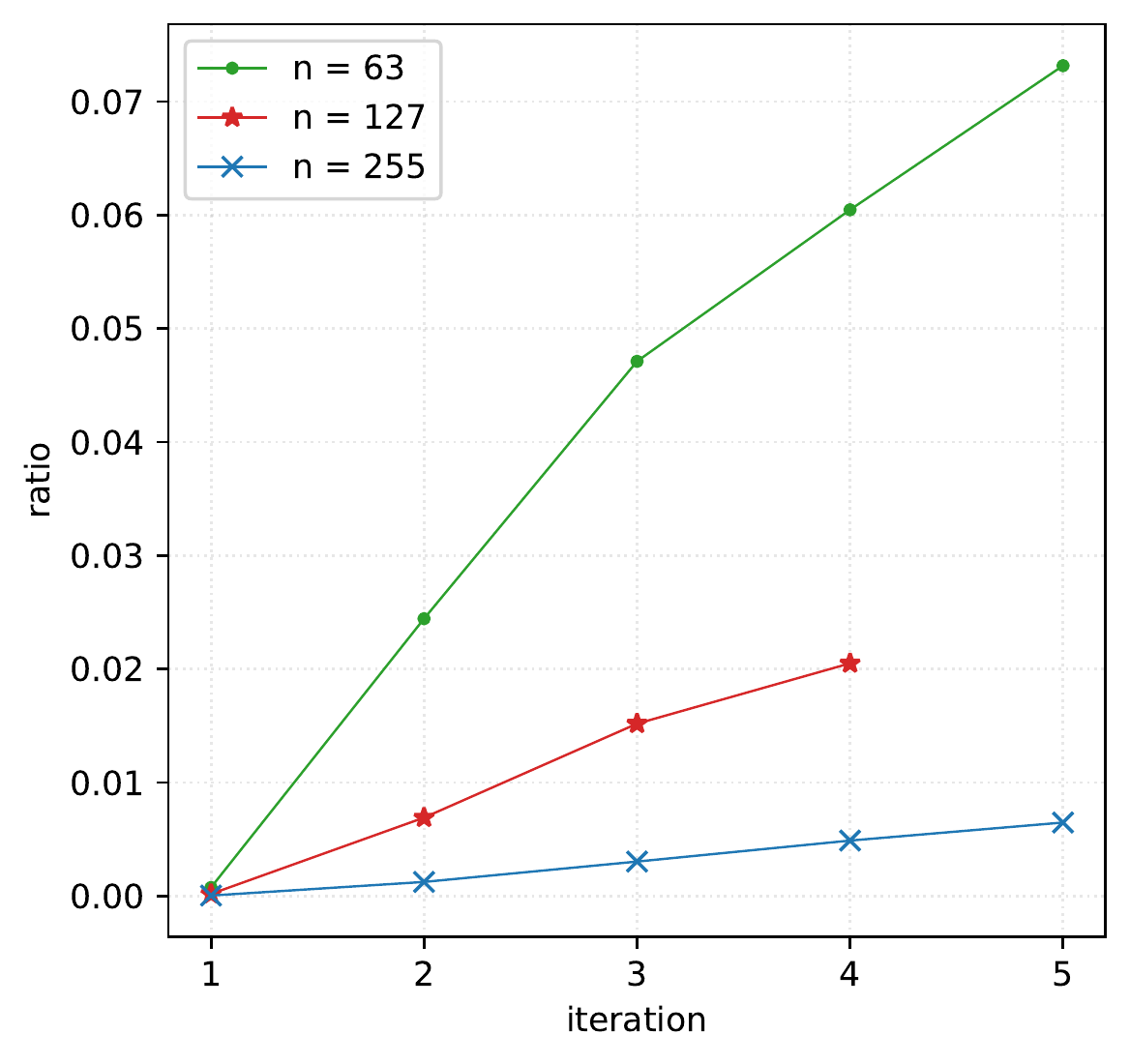}\label{fig:CD_v-ratio_s}}
    	\caption{3-d Convection diffusion using $\delta = \varepsilon = 10^{-5}$}
    	\label{fig:CD}
    \end{figure}
  We test three different grid {dimensions}, i.e., $n\in\{63, 127, 255\}$, with preconditioner $\ten{M}$ from Equation~\eqref{eqTT:5} with $q\in\{16,32\}$. Indeed without it, even with the smallest {dimension}, TT-GMRES does not converge to the prescribed tolerance $\varepsilon = 10^{-5}$ in a reasonable number of iterations. However using the preconditioner defined in Equation~\eqref{eqTT:5}, an approximated solution is found in $5$ or less iterations, as displayed in Figure~\ref{fig:CD_CH}. The preconditioner in this case has an extremely strong effect, from which  the TT-rank growth and the memory consumption benefit. In Figure~\ref{fig:CD_v-rank} the maximum TT-rank exceeds in the worst case the value $35$, but to fully interpret this information the compression ratio must be taken into consideration. In fact, Figure~\ref{fig:CD_v-ratio} shows that in the worst case to store the last Krylov vector in TT-format we use approximately $12\%$ of the memory we would need to store the full tensor. Similarly in Figure~\ref{fig:CD_v-ratio_s} we see that storing in TT-format the entire Krylov basis request in the worst case only $7\%$ of the memory that would be used to store the full tensors basis.
 Although not reported in this document, a more stringent accuracy would require a smaller rounding threshold and consequently a larger memory to store the TT-vectors.

 %%%%%%%%%%%%%%%%%%%%%%%%%%%%%%%%%%%%%%%%%%%
  %%%%%%%%%%%%%%%%%%%%%%%%%%%%%%%%%%%%%%%%%%%%
  
%    \color{lightgray}
%    \begin{itemize}
%        \item describe how the problem is written in TT format
%        \item basic case without preconditioner (expect to not converge in a reasonable number of iterations)
%        \item error with respect to the exact solution (matrix case)
%    \end{itemize}
%    \normalcolor
	
% \cleardoublepage %LG: to be removed once the report will be finished
% 	
 	\subsection{Solution of parameter dependent linear operators\label{sec3:s2}}
 	 %%%%%%%%%%%%%%%%%%%%%%%%%%%%%%%%%%%%%%%%%%%
  	This section focuses on 4-d PDEs, namely parametric convection-diffusion and stationary heat equations. The domain of both problems is obtained as a Cartesian product of a $3$-d space domain and a further parameter space. The common idea for these PDEs is solving for all discrete parameter values simultaneously, getting an ``all-in-one'' solution. The structure of the operators enables us to check numerically the quality of the theoretical bounds stated in Section\ref{ssec:Ai1}.   
%  	\color{lightgray}
%  	\par What do we want to illustrate
%	\begin{enumerate}
%    	\item numerical concerns: have a few comparison in terms of accuracy between the tensor calculation and the regular matrix calculation (address the question of stopping criterion threshold - ideally a priori otherwise possibly hand tuning
%    	\item computational concerns: what about tile to solution and memory consumption) ?
%	\end{enumerate}
%	\normalcolor

  \subsubsection{Parametric convection diffusion\label{sssec:PCD}}
   %%%%%%%%%%%%%%%%%%%%%%%%%%%%%%%%%%%%%%%%%%%
  The parametric convection diffusion problem is a variation of Problem~\eqref{eqNE:CD}, defined as
  \begin{equation}
  	\label{eqNE:PCD}
  	\begin{dcases}
  		&-\alpha\Delta u + 2y(1-x^2)\frac{\partial u}{\partial x} -2x(1-y^2)\frac{\partial u}{\partial y} = 0 \quad\text{in}\quad \Omega = [-1, 1]^3 \, ,\\[5pt]
  		&u_{\{y = 1\}} = 1\qquad\text{and}\qquad u_{\partial\Omega \setminus \{y = 1\}} = 0 \, .
  	\end{dcases}
  \end{equation}
  As in Section~\ref{sssec:CD} let define a grid of $n$ points along each direction of $\Omega$, then the final discrete operator of this PDE is $\ten{A}_\alpha = \alpha\ten{\Delta}_3 + \ten{D}$ with $\alpha \in [1,10]$ and $\ten{D}$ defined in Equation~\eqref{eqCD:1}. Similarly, the right-hand side $\ten{c}_\alpha\in\R^{n\times n\times n}$ depends on the parameter $\alpha\in [1,10]$ because of the boundary conditions.
  %\LG{Shouldn't it be a $c(\alpha)$?}\MI{fixed}  
  To solve for multiple discrete values of $\alpha$, getting an ``all-in-one'' problem and solution, we tensorize $\ten{\Delta}_3$ and $\ten{D}$ by a diagonal matrices, adding a fourth dimension. The tensor operator  for the simultaneous solution is $\ten{A}\in\R^{(p\times p)\times(n\times n)\times(n\times n)\times(n\times n)}$ defined as
  \[
  	\ten{A} = A\otimes\ten{\Delta}_d + \I_p\otimes \ten{D} \, ,
  \]
  where $A = \text{diag}(\alpha_1, \dots \alpha_p)$ with $\alpha_i\in [1,10]$ logarithmically distributed for $i\in\{1,\dots, p\}$. {The right-hand side of the ``all-in-one'' problem is $\ten{b}\in\R^{p\times n\times n \times n }$ such that
  \[
  	\ten{b}^{[\ell]} = \frac{1}{\norm{\ten{c}_{\alpha_\ell}}}\ten{c}_{\alpha_\ell}
  	\qquad\text{for}\qquad \ell \in\{1, \dots, p\}
  \]
  using the slice notation introduced in Section~\ref{ssec:Ai1}. By construction $\norm{\ten{b}} = \sqrt{p}$, i.e., the discrete ``all-in-one'' problem fits into the hypothesis of Proposition~\ref{prop:eta_Ab} and \ref{prop:eta_Ab_m}}. Remark that the ``all-in-one'' linear operator is directly constructed as TT-matrix from the TT-matrix of the single linear system, while the ``all-in-one'' right-hand side is constructed as full tensor and then converted into a TT-vector. 
  
TT-GMRES is used for solving the ``all-in-one'' linear system for $n\in\{63, 127, 255\}$ and $p = 20$, with the preconditioner $\overline{\ten{M}}$ defined in Equation~\eqref{eqTT:5} with $q\in\{16,32\}$ tensorized with the identity
\begin{equation}
	\label{eq:prec_d+1}
	\ten{M} = \I_p \otimes \overline{\ten{M}}.
\end{equation} Figure~\ref{fig:PCD_CH} shows that the algorithm converges in less than $20$ iterations for the first two values of $n$ and in less than $25$ for $n = 255$; that is, no restart is needed. For the computational side, Figure~\ref{fig:PCD_v-rank} displays the maximal TT-rank of the last Krylov vector, which in the worst case in lower than $100$. This result translates in terms of memory by a need of slightly more than $4\%$ of the memory that would be required to store the full Krylov vector in the worst case, as highlighted by Figure~\ref{fig:PCD_v-ratio}. Looking at the cost of storing the entire Krylov basis in Figure~\ref{fig:PCD_v-ratio_s}, we see that TT-format requires around $2\%$ of the memory necessary to store the entire  Krylov basis in full tensor format. 

 We now investigate the tightness of the bound given in Proposition~\ref{prop:eta_Ab} and \ref{prop:eta_Ab_m}.
 Figure~\ref{fig:PCD_SvsM_eta_AM_b} shows the quality of the bound for $\eta_{\ten{b}_\ell}$ for $\ell\in\{1,\dots, p\}$. For all the values of $n$, the $\eta_{\ten{b}_1}$ curve dominates the other during the first half of the iterations. In the optimal case, the difference between $\eta_{\ten{b}_\ell}$ and $\eta_{\ten{b}}$ is lower than one order of magnitude. 
  To plot the $\eta_{\ten{AM}, \ten{b}}$ bound from Proposition~\ref{prop:eta_Ab}, we define a vector $\upsilon_\ell\in\R^{w}$ whose $k$-th component corresponds to the value of the coefficient $\rho_{\ell}$ from Equation~\eqref{eqP2:T} evaluated for the solution at the $k$-th iteration, i.e.,
  \[
  	\upsilon_\ell(k) = \rho_\ell(\ten{t}_k)\qquad\text{for every}\qquad k\in\{1,\dots, w\}
  \]
  with $w$ equal to the number of iterations to converge. Let $\ell_m$ and $\ell_M$ the parameter index for which the norm of $\upsilon_\ell$ is minimal and maximal respectively, i.e.,   
  \begin{equation}
  \label{eq3:mM}
  	\ell_m = \argmin_{\ell\in\{1,\dots,p\}} \norm{\upsilon_\ell}\quad\text{and}\quad\ell_M = \argmax_{\ell\in\{1,\dots,p\}} \norm{\upsilon_\ell}
  \end{equation}
   which in our specific case are equal to $1$ and $14$ respectively. 
   In Figure~\ref{fig:PCD_SvsM_eta_AM_b} we display in  $\eta_{\ten{A}\ten{M},\ten{b}}(\ten{t}_k)$ scaled by $\rho_{\ell}$ 
(see Equation~\eqref{eqP2:T} from Proposition~\ref{prop:eta_Ab}) 
and by $\rho^{*}$ (see Equation~\eqref{eqCP2:T1} from Corollary~\ref{cor:eta_Ab}) versus $\eta_{\ten{A}_\ell\overline{\ten{M}}, \ten{b}_\ell}(\ten{t}^{[\ell]}_{k})$ for $\ell\in\{1, 14\}$ and for all the values of $n$. 
%\LG{It is not really $\eta_{\ten{A}_\ell\ten{M}, \ten{b}_\ell}(\ten{x}_\ell^{(k)})$ but    $\eta_{\ten{A}_\ell\ten{M}, \ten{b}_\ell}(\ten{t}_\ell^{(k)})$ with $\ten{x}_\ell^{(k)} = M \ten{t}_\ell^{(k)}$?}
	
   The three scaled curves overlap from the third iterations for all the grid {dimensions}, meaning that the approximation of the scaling coefficient given by $\rho^*$ is extremely valid in this example. We see that the orange curve corresponding to $\eta_{\ten{A}_5\overline{\ten{M}}, \ten{b}_5}$ and the blue one for $\eta_{\ten{A}_{20}\overline{\ten{M}}, \ten{b}_{20}}$
   intersect frequently, with a difference of one order at most. Moreover the difference between $\eta_{\ten{A}_{5}\overline{\ten{M}},\ten{b}_5}$ and $\eta_{\ten{AM},\ten{b}}$ scaled by $\rho_5$ is lower than one order of magnitude in the optimal case, while in the worst case it is not larger than two orders. Therefore we conclude that for this PDE the bound of the ``all-in-one'' for the individual solution is quite tight. {Notice that to estimate $\rho^*$ no extra computation is required, while the norm of $\ten{A}_\ell\overline{\ten{M}}\ten{t}_k^{[\ell]}$ has to be computed to get the value of $\rho_{\ell}(\ten{t}_k)$.}
    \begin{figure}[!htb]
        \centering
        \subfloat[Convergence history]{\includegraphics[scale =0.45, width=0.33\linewidth, height=0.33\linewidth]{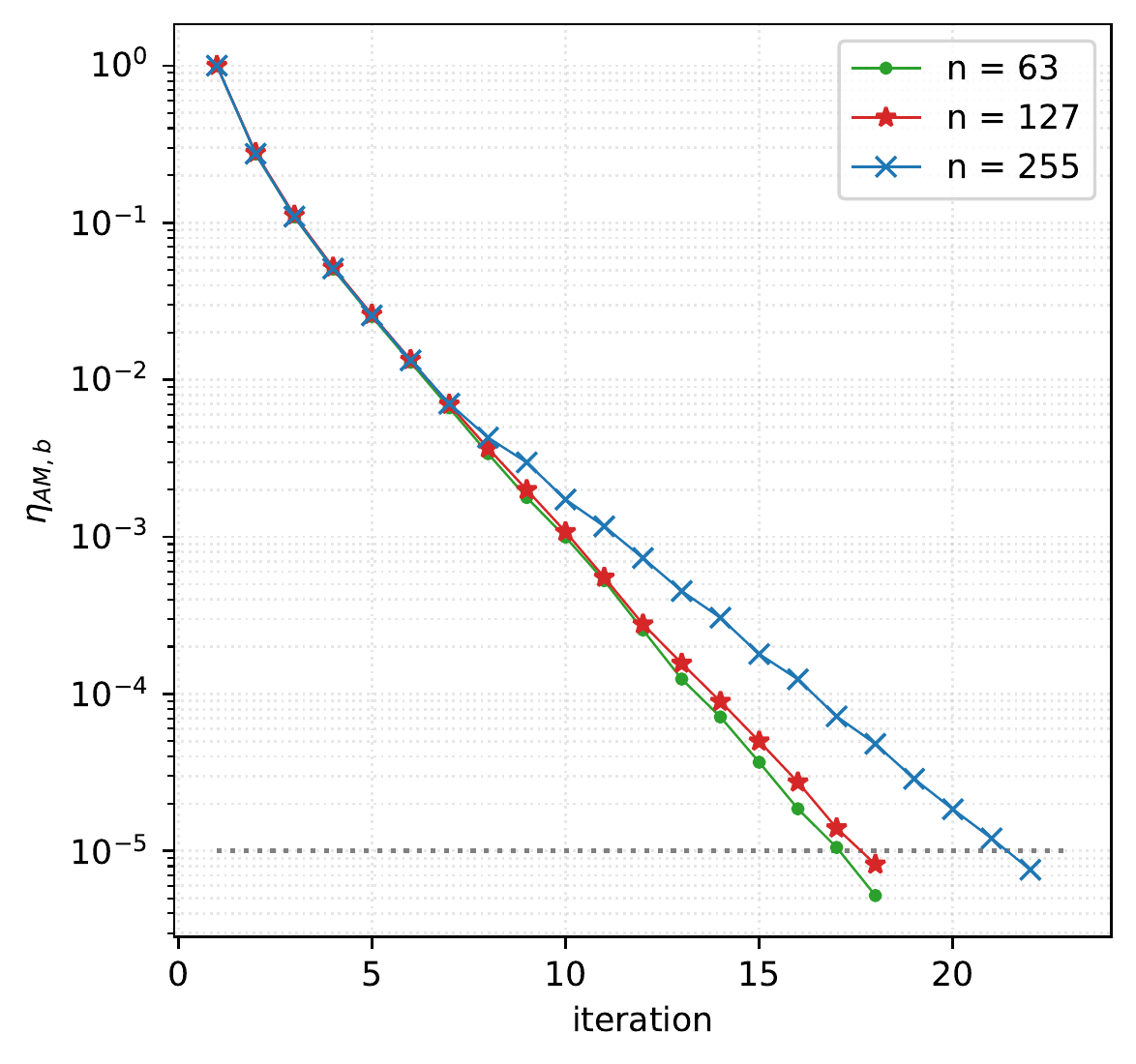}\label{fig:PCD_CH}}
        \quad
        \subfloat[Maximal TT-rank of the last Krylov vector ]{\includegraphics[scale=0.4, width=0.33\linewidth, height=0.33\linewidth]{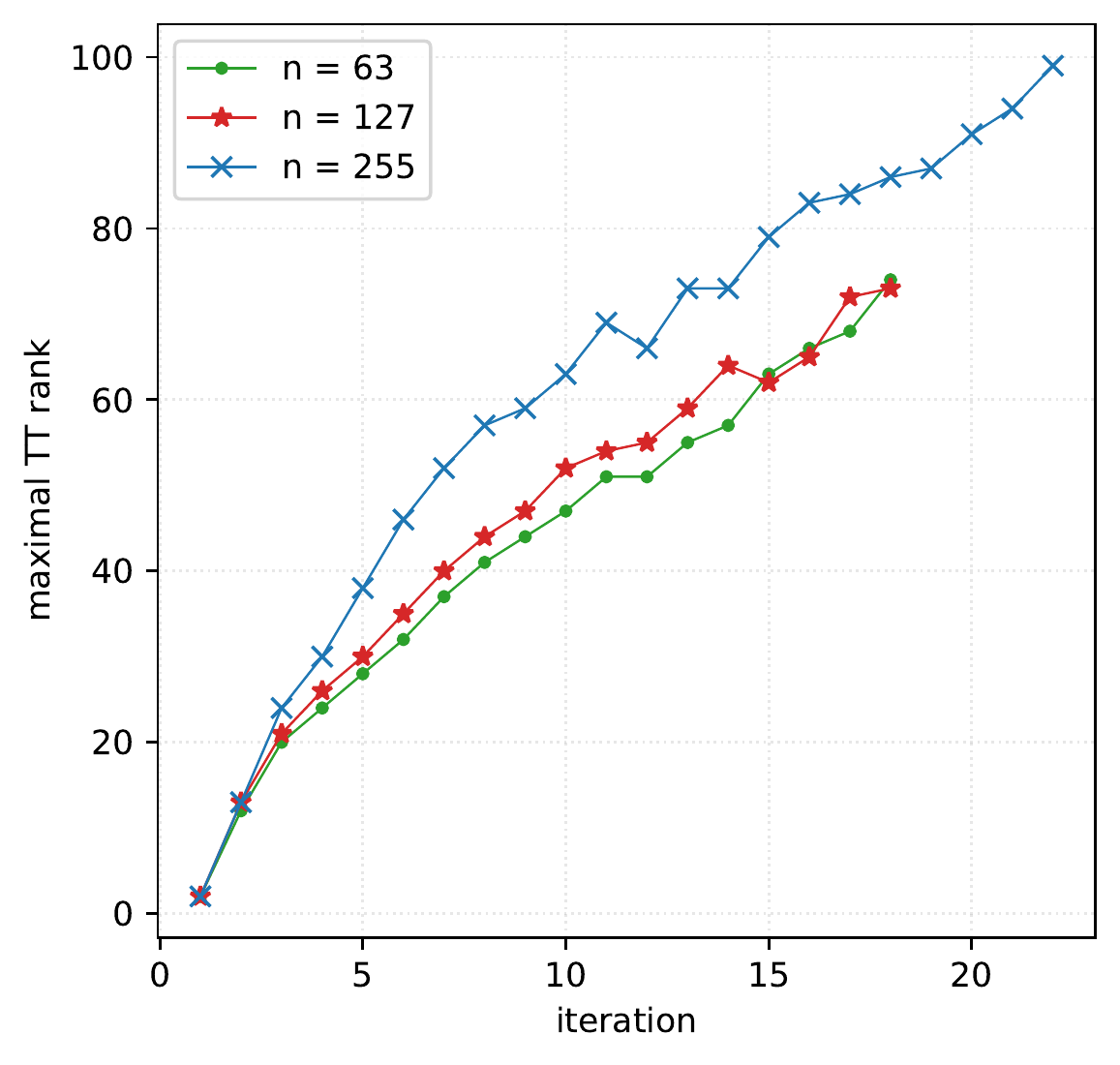}\label{fig:PCD_v-rank}}
        \vskip\baselineskip
        \subfloat[Compression ratio for the last Kyrolv vector]{\includegraphics[scale=0.45, width=0.33\linewidth, height=0.33\linewidth]{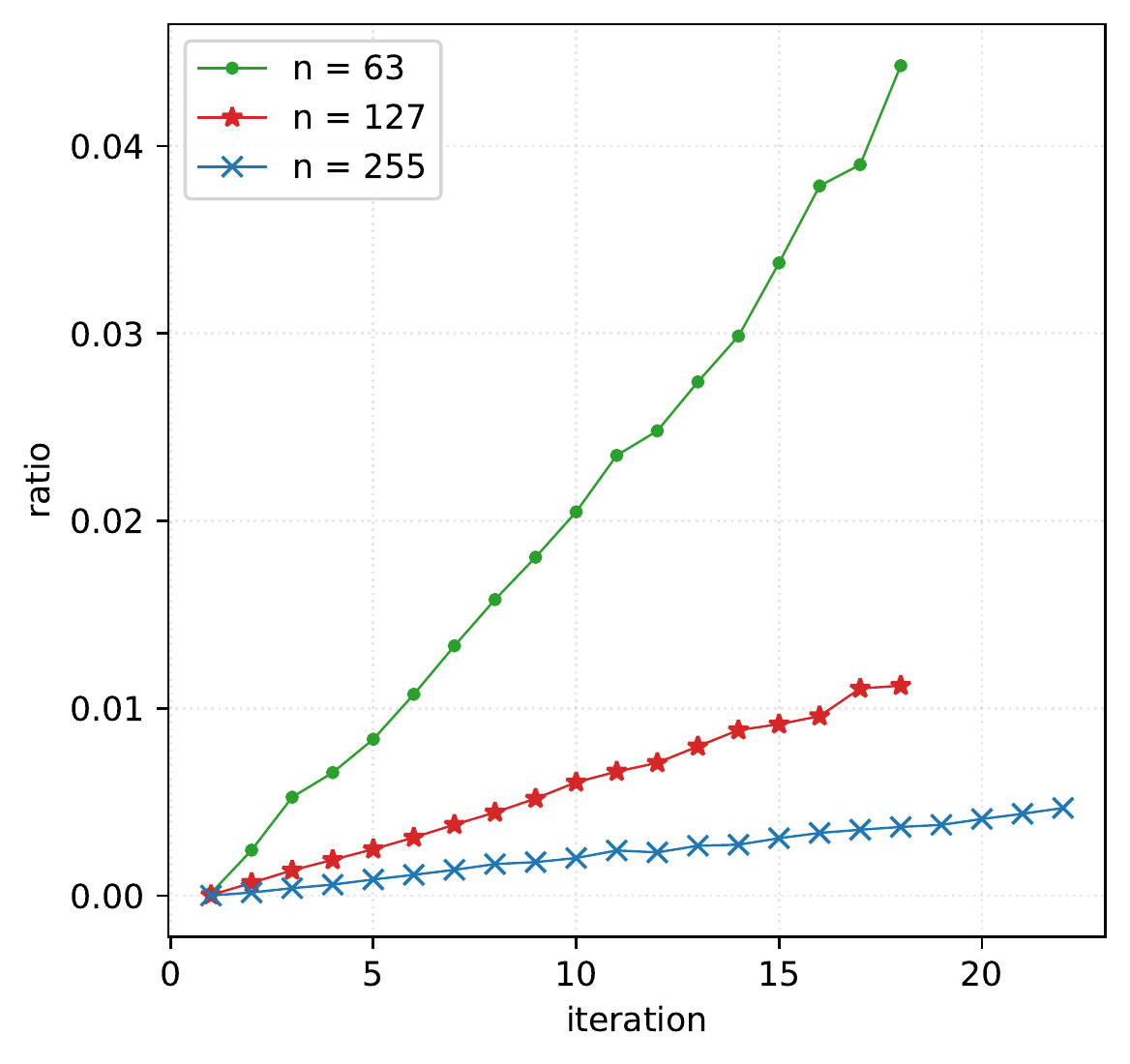}\label{fig:PCD_v-ratio}}
        \quad
        \subfloat[Compression ratio for the entire Krylov basis]{\includegraphics[scale=0.45, width=0.33\linewidth, height=0.33\linewidth]{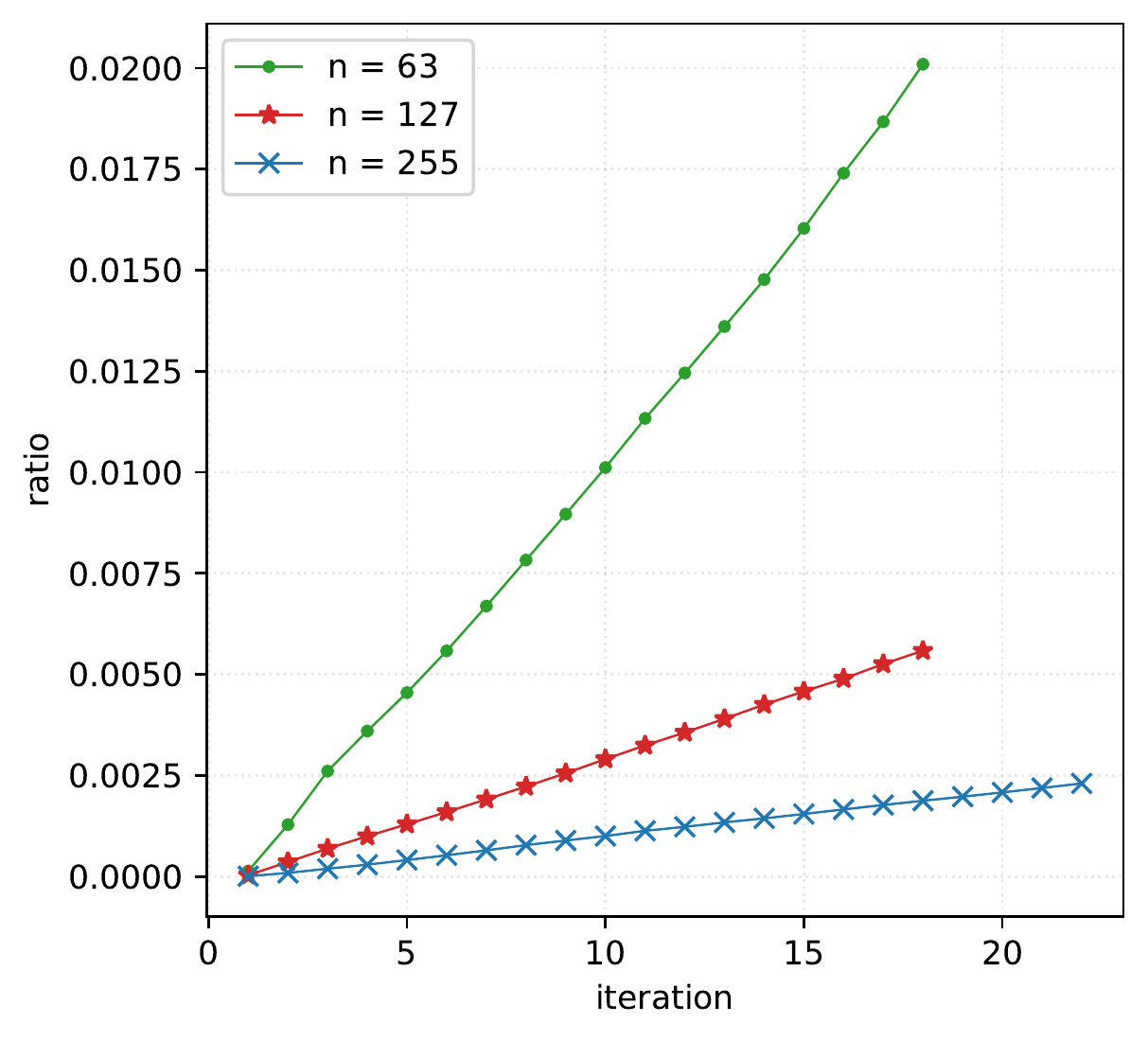}\label{fig:PCD_v-ratio_s}}
      \caption{4-d Parametric convection diffusion using $\delta=\varepsilon~=~10^{-5}$}
      \label{fig:PCD}
  	\end{figure}
 
  		\begin{figure}[!htb]
  		\centering
  		\subfloat[Convergence history in $\eta_{\ten{b}}$ for $n~=~63$]{\includegraphics[scale =0.3, width=0.31\linewidth, height=0.31\linewidth]{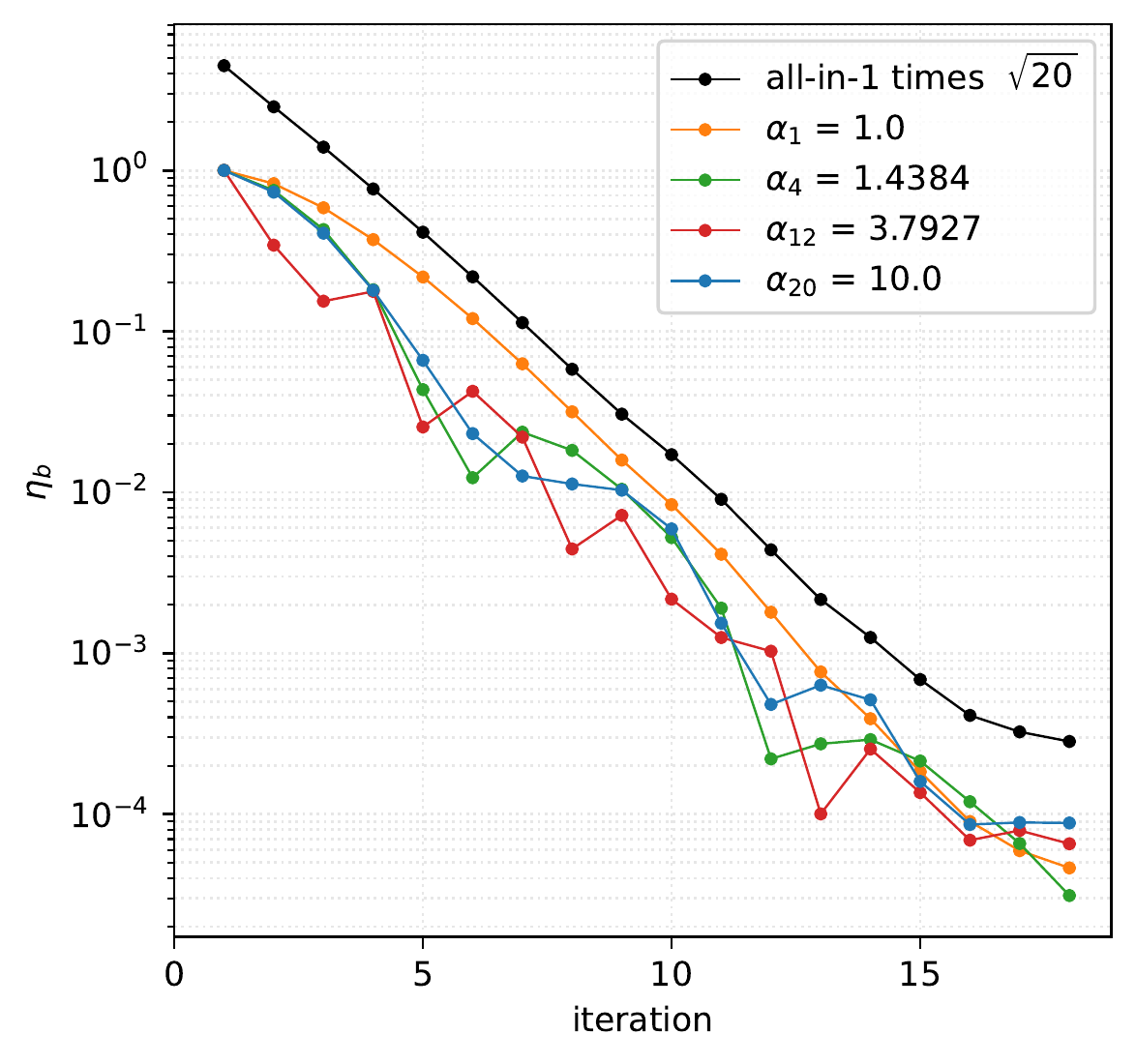}\label{fig:PCD_eta-b_CH_64}}
  		\quad
  		\subfloat[Convergence history in $\eta_{\ten{b}}$ for $n~=~127$]{\includegraphics[scale=0.3, width=0.31\linewidth, height=0.31\linewidth]{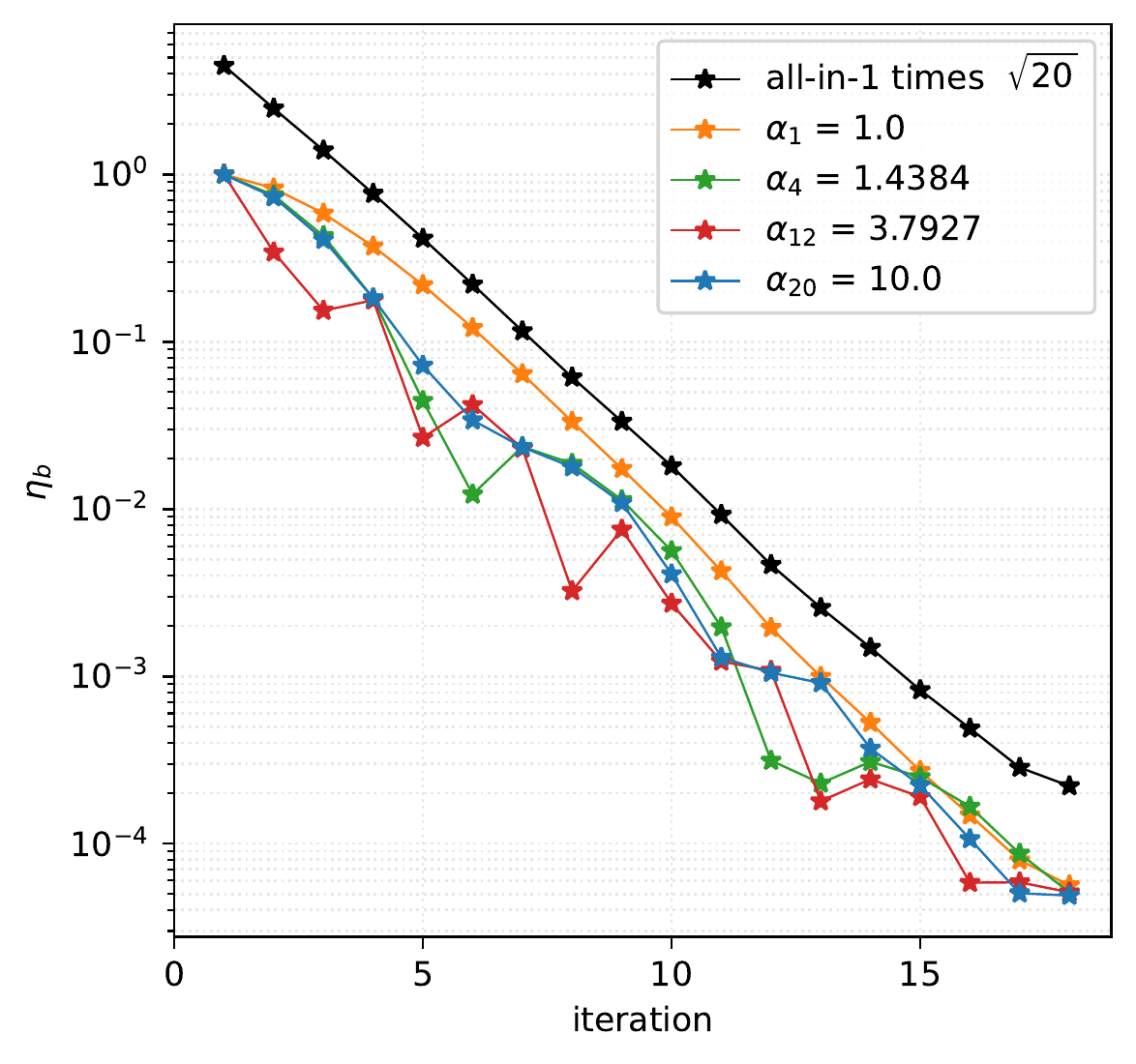}\label{fig:PCD_eta-b_CH_128}}
  		\quad
  		\subfloat[Convergence history in $\eta_{\ten{b}}$ for $n~=~255$]{\includegraphics[scale=0.3, width=0.31\linewidth, height=0.31\linewidth]{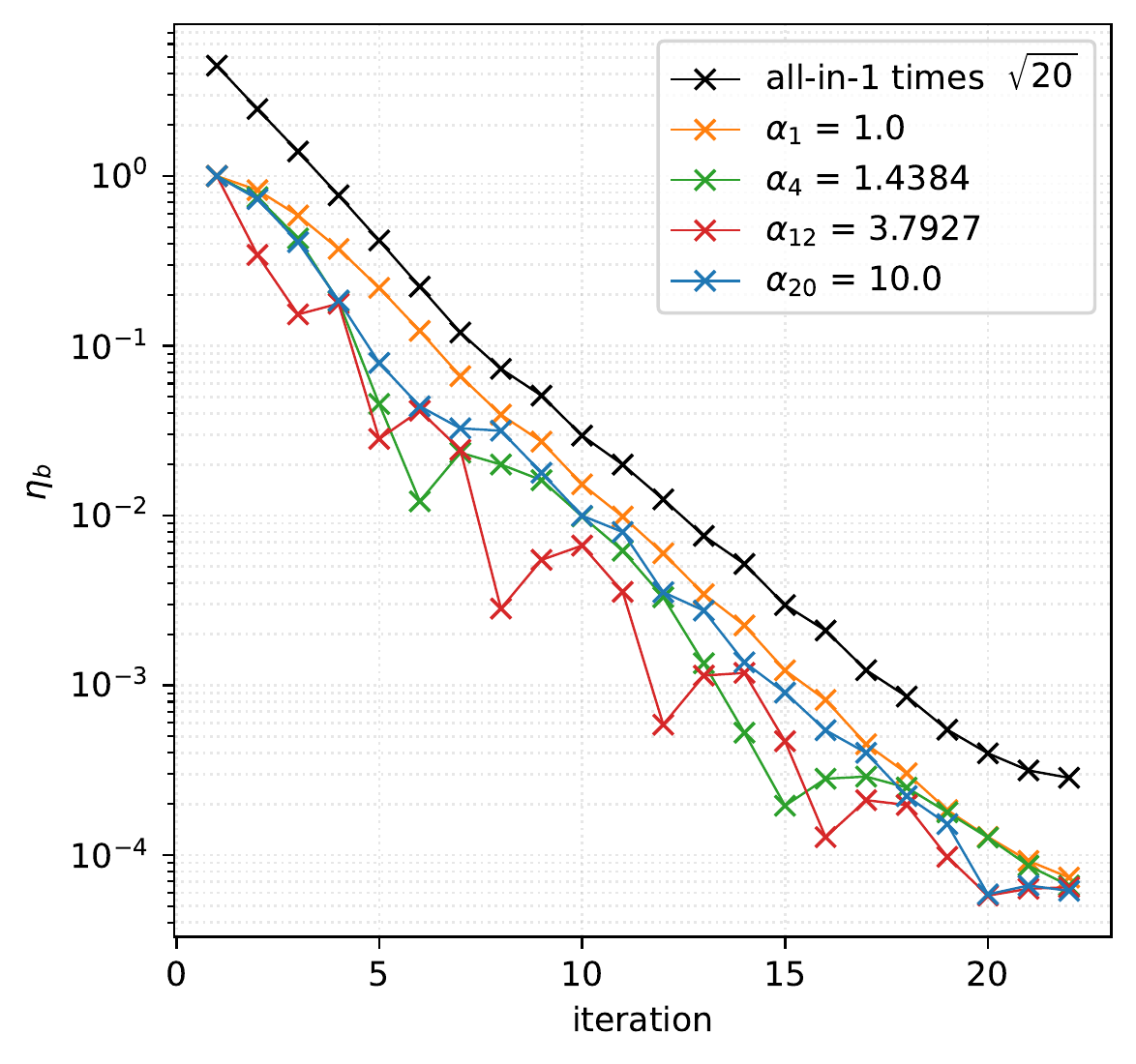}\label{fig:PCD_eta-b_CH_256}}
  		\caption{4-d Parametric convection diffusion $\eta_\ten{b}$ bound using $\delta=\varepsilon~=~10^{-5}$}
  		\label{fig:PCD_SvsM_eta_b}
  	\end{figure}
  	
  	\begin{figure}[!htb]
  		\centering
  		\subfloat[Convergence history in $\eta_{\ten{AM},\ten{b}}$ for $n~=~63$]{\includegraphics[scale =0.3, width=0.31\linewidth, height=0.31\linewidth]{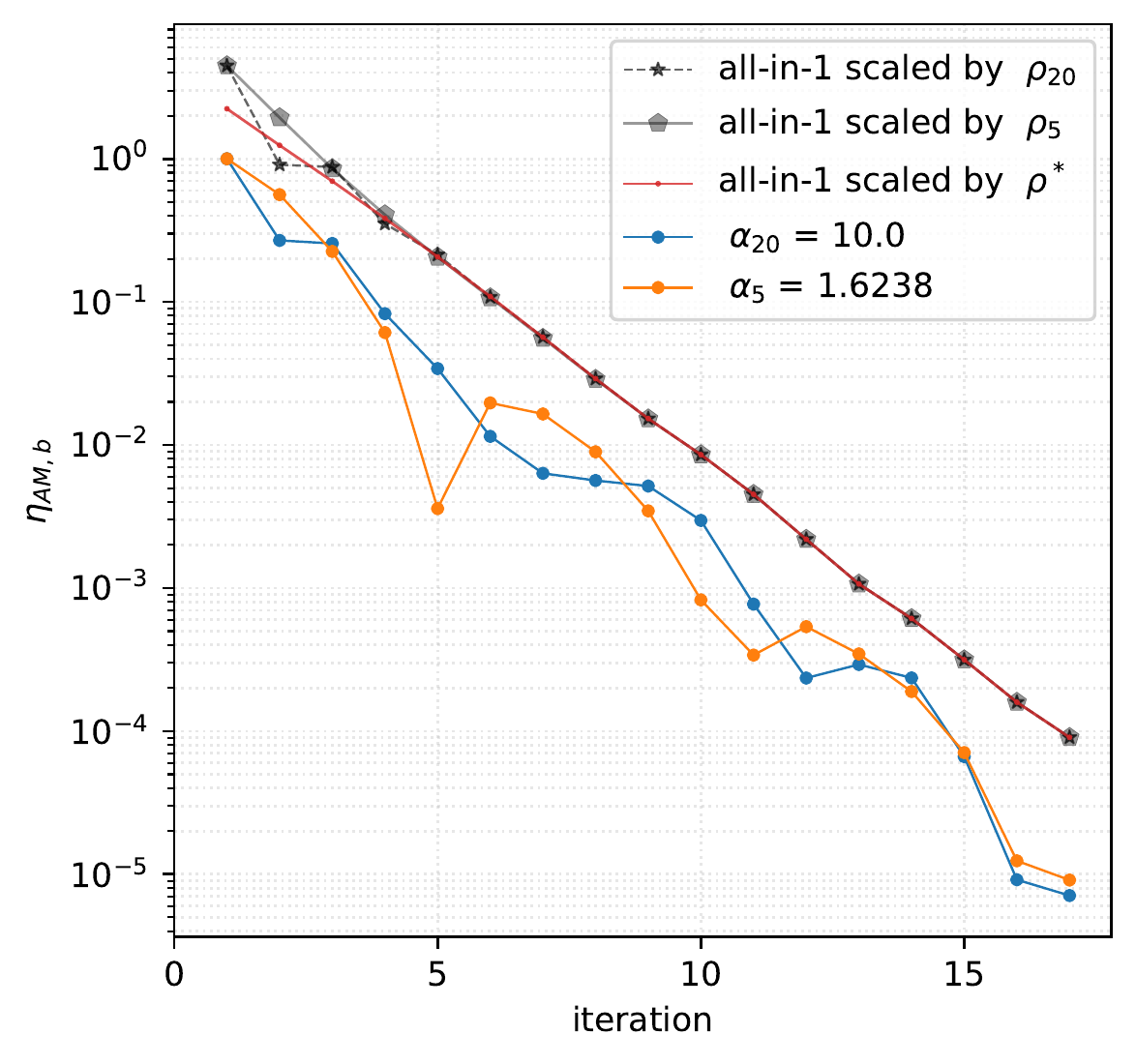}\label{fig:PCD_eta-AMb_CH_64}}
  		\quad
  		\subfloat[Convergence history in $\eta_{\ten{AM},\ten{b}}$ for $n~=~127$]{\includegraphics[scale=0.3, width=0.31\linewidth, height=0.31\linewidth]{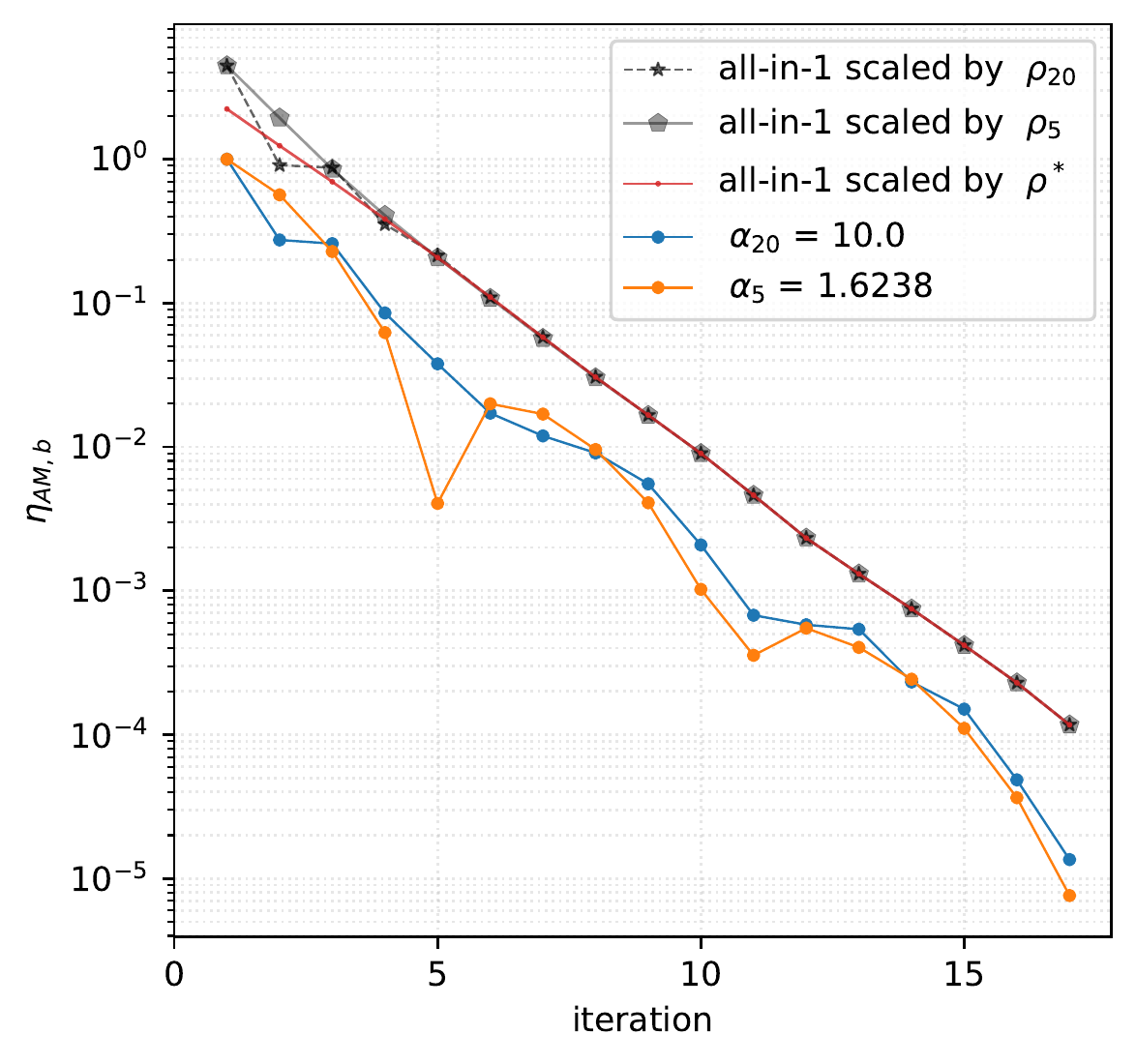}\label{fig:PCD_eta-AMb_CH_128}}
  		\quad
  		\subfloat[Convergence history in $\eta_{\ten{AM},\ten{b}}$ for $n~=~255$]{\includegraphics[scale=0.3, width=0.31\linewidth, height=0.31\linewidth]{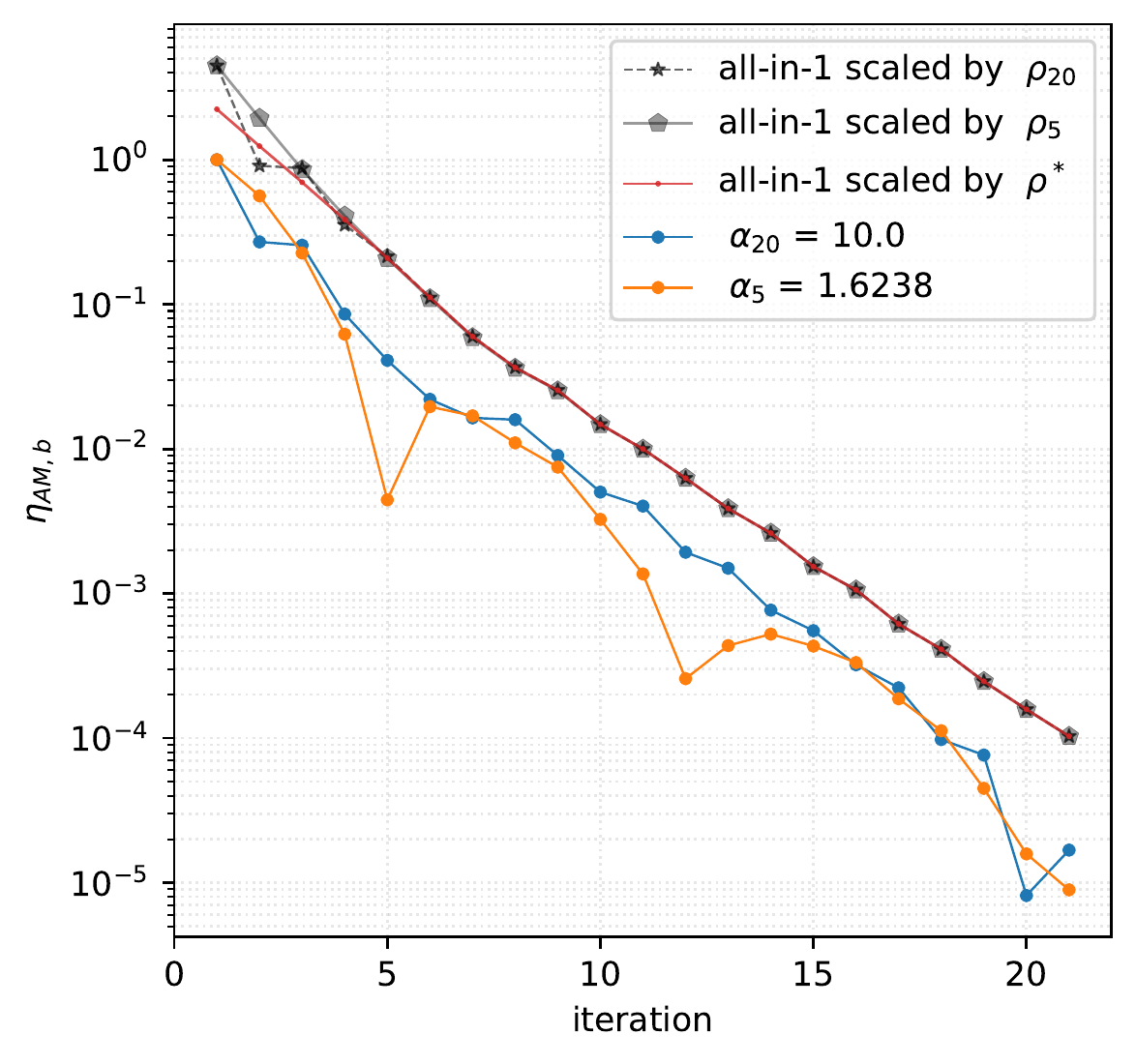}\label{fig:PCD_eta-AMb_CH_256}}
  		\caption{4-d Parametric convection diffusion $\eta_{\ten{AM},\ten{b}}$ bound using $\delta=\varepsilon = 10^{-5}$}
  		\label{fig:PCD_SvsM_eta_AM_b}
  	\end{figure}
%  
%  	\color{lightgray}
%  	\begin{itemize}
%  		\item describe how the problem is written in TT format, highlighting the order growth
%  		\item comparison with the single vs multiple resolution
%  	\end{itemize}
%  	\normalcolor
% \cleardoublepage %LG: to be removed once the report will be finished

  \subsubsection{Heat equation with parametrized diffusion coefficient\label{sssec:CK}}
  	We consider the heat equation with parametrized diffusion coefficient studied in~\cite{KressnerTobler2011} and defined as 
  	\begin{equation}
  	\label{eqNE:CK}
  	\begin{dcases}
  	-\nabla\cdot(\sigma_\theta(x,y,z)\nabla u(x,y,z)) &= 1 \quad\text{in}\quad \Omega = [-1,1]^3 \, ,\\
  	\hspace{125pt} u &= 0\quad \text{in}\quad\partial\Omega.
  	\end{dcases}
  	\end{equation}
  	where the coefficient $\sigma_\theta$ being a piece-wise constant function such that
  	\[
  	\sigma_\theta(x,y,z)=
  	\begin{dcases}
  	1 +\theta \quad  &\text{in}\quad[-0.5, 0.5]^3 ,\\
  	\hspace{10pt} 1 \quad  &\text{elsewhere},
  	\end{dcases}
  	\]
  	with $\theta\in [0,10]$. The function $\sigma_\theta$, rewritten as $\sigma_\theta(x, y, z)  = 1 + \theta\1_{\Xi}(x,y,z)$ where $\1_\Xi$ is the indicator function of $\Xi$, provides a linear dependency on $\theta$ for the PDE. If $\Xi_x$ is the projection of set $\Xi$ over the $x$-axis and similarly for $\Xi_y$ and $\Xi_z$, then  $\sigma_\theta(x,y,z) = 1 + \theta\1_{\Xi_x}(x)\1_{\Xi_y}(y)\1_{\Xi_z}(z)$. The problem stated in Equation~\eqref{eqNE:CK} writes equivalently 
  	\begin{equation}
  		\label{eqNE:CK1}
  		\begin{dcases}
  			-\Delta u(x,y,z) -\theta\nabla\cdot\Bigl( \1_{\Xi_x}(x)\1_{\Xi_y}(y)\1_{\Xi_z}(z)\nabla u(x,y,z)\Bigr) &= 1 \quad\text{in}\quad \Omega = [-1,1]^3 \, ,\\
  		\hspace{240pt} u &= 0\quad \text{in}\quad\partial\Omega.
  		\end{dcases}
  	\end{equation}
	After setting a grid on $n$ points along each direction on $\Omega$, the first term $\ten{B}_0$ of the operator in~\eqref{eqNE:CK1} is discretized by the $3$-d Laplacian $\ten{\Delta}_3$. For the second term $\ten{B}_1$, notice that the indicator function $\1_{\Xi}$ is trivially not differentiable on $\Xi$ boundaries. So it is approximated on the grid points, paying attention to not set them on $\partial\Xi$. The final expression of $\ten{B}_1$ is
	\[
	\ten{B}_1 = D_{x}\Delta_1 \otimes D_y \otimes D_z +  D_x \otimes D_{y}\Delta_1 \otimes D_z + D_x \otimes D_{y} \otimes D_z\Delta_1
	\]
	where $\Delta_1$ is the 1-d discrete Laplacian, $D_x = \text{diag}(\1_{\Xi_{x_i}})\in\R^{n\times n}$ and similarly for $D_y$ and $D_x$. Remark that $\ten{B}_1$ is a Laplacian-like operator, which is expressed in TT-format according to Equation~\eqref{eqTT:2} and~\eqref{eqTT:2a}. 
	The final discrete TT-operator of Problem~\eqref{eqNE:CK} is 
  	\[
  	\ten{A}_\theta = \ten{B}_0 + \theta\ten{B}_1. 
  	\]
  	The right-hand side is  $\ten{c}\in\R^{n\times n\times n}$ such that $\ten{c}(i_1, i_2, i_3) = 1$ for $i_k\in\{1, \dots, n\}$ for $k\in\{1,2,3\}$. To study the quality of the bounds expressed in Proposition~\ref{prop:eta_b} and~\ref{prop:eta_Ab}, the tensor $\ten{c}$ is normalized, i.e., it is scaled by $1/n^3$.
  	%\LG{Yes, but this should also be the case in the previous section as the tightness of the bounds on Prop 2.2 and 2.3 are investigated? - we should say also why we normailize them,i.e., to study the mentioned bounds}\MI{fixed}
  	Since we want to solve for $p$ values of $\theta$ in $[0,10]$ simultaneously, i.e., we want to solve $p$-times the discrete Problem~\eqref{eqNE:CK} for different values of $\theta$, we tensorize $\ten{B}_0$ and $\ten{B}_1$ by a diagonal matrices, adding a fourth dimension. The tensor discrete operator $\ten{A}\in\R^{(p\times p)\times (n\times n)\times (n\times n)\times (n\times n)}$ of the ``all-in-one'' problem writes
  	\[
  	\ten{A} = \I_{p} \otimes \ten{B}_0 + \Theta\otimes \ten{B}_1
  	\]
  	where $\Theta = \text{diag}(\theta_1, \dots, \theta_p)$ for $\theta_i\in[0,10]$ uniformly distributed for $i\in\{1,\dots, p\}$. The right-hand side of the ``all-in-one'' problem is 
  	\[
  	\ten{b} = \1_p \otimes \ten{c}
  	\]
  	with $\1_p\in\R^{p}$ a vector of ones. Remark that since $\norm{\ten{c}} = 1$ by construction, then $\norm{\ten{b}} = \sqrt{p}$.
	We perform experiments with full TT-GMRES (i.e., no restart) for $n\in\{63, 127\}$ and $p = 20$, with the preconditioner defined in Equation~\eqref{eq:prec_d+1} with $q \in\{16,32\}$. Figure~\ref{fig:CK_CH} shows that TT-GMRES converges to the prescribed tolerance in approximately $20$ iterations. From the point of view of the memory consumption, in Figure~$\ref{fig:CK_v-rank}$ we see that for $n = 63$ the maximum TT-rank is lower $200$, while for $n = 127$ it is lower than $250$. In terms of memory saving, Figure~\ref{fig:CK_v-ratio} shows that in the worst case we are using only $10\%$ and less than $5\%$ of the memory necessary to store one full tensor of the Krylov basis and the entire full basis respectively.
	
In Figure~\ref{fig:CK_SvsM_eta_b} we have the relation of $\eta_{\ten{b}}$ and $\eta_{\ten{b}_\ell}$ for $\ell\in\{1,\dots, p\}$. All the curves present the same shape, with the one associated with $\theta_1 = 0$ being the most peculiar one. We see that in the optimal case the distance between the ``all-in-one'' curve and the individual ones is lower than one order of magnitude, while in the worst case, realized by $\theta_1 = 0$, the difference is approximately almost of two orders. A similar argument holds for $\eta_{\ten{AM}, \ten{b}}$ bound. As in Section~\ref{sssec:PCD}, we compute $\ell_m$ and $\ell_M$, as defined in Equation~\eqref{eq3:mM}, which are equal to $\ell_m=20$ and $\ell_M=1$ respectively. In Figure~\ref{fig:CK_SvsM_eta_AM_b} we see that the two curves $\eta_{\ten{A}_\ell\overline{\ten{M}}, \ten{b}_\ell}$ have a starting and ending overlapping part, while in the internal part they differ by less than one order of magnitude. The three scaled curves for $\eta_{\ten{A}\ten{M}, \ten{b}}$ overlap from the third iteration. As in the previous studied case, $\rho^*$ from Corollary~\ref{cor:eta_Ab} provides a good approximation of the scaling coefficient. In the optimal case the distance is of one order of magnitude approximately, while in the worst one a little more than one order. %The curve of $\theta_1$ is however much closer to the others than in Figure~\ref{fig:CK_SvsM_eta_b}.
  	\begin{figure}[!htb]
  		\centering
  		\subfloat[Convergence history]{\includegraphics[scale =0.45, width=0.33\linewidth, height=0.33\linewidth]{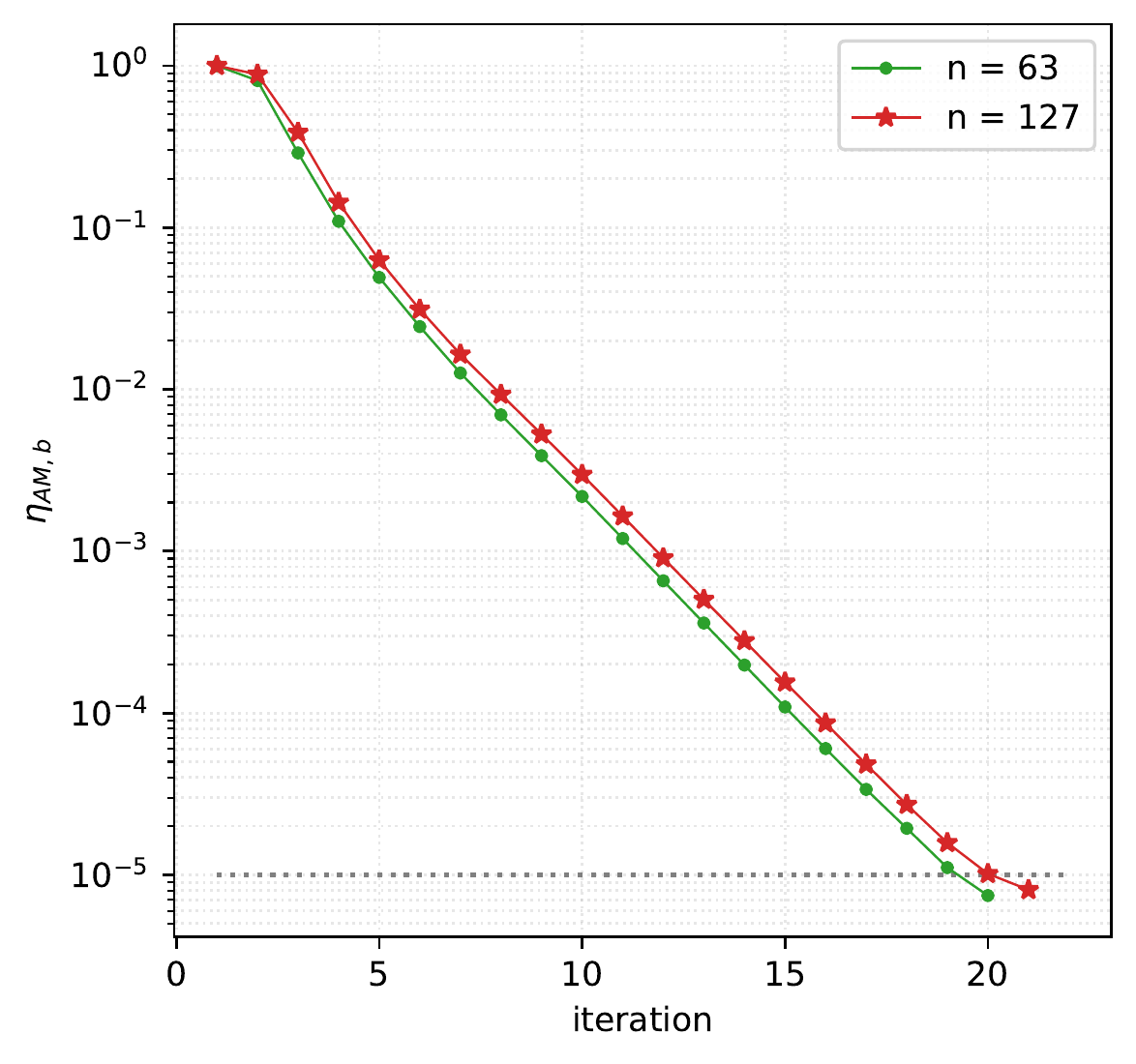}\label{fig:CK_CH}}
  		\quad
  		\subfloat[Maximal TT-rank of the last Krylov vector ]{\includegraphics[scale=0.45,width=0.33\linewidth, height=0.33\linewidth]{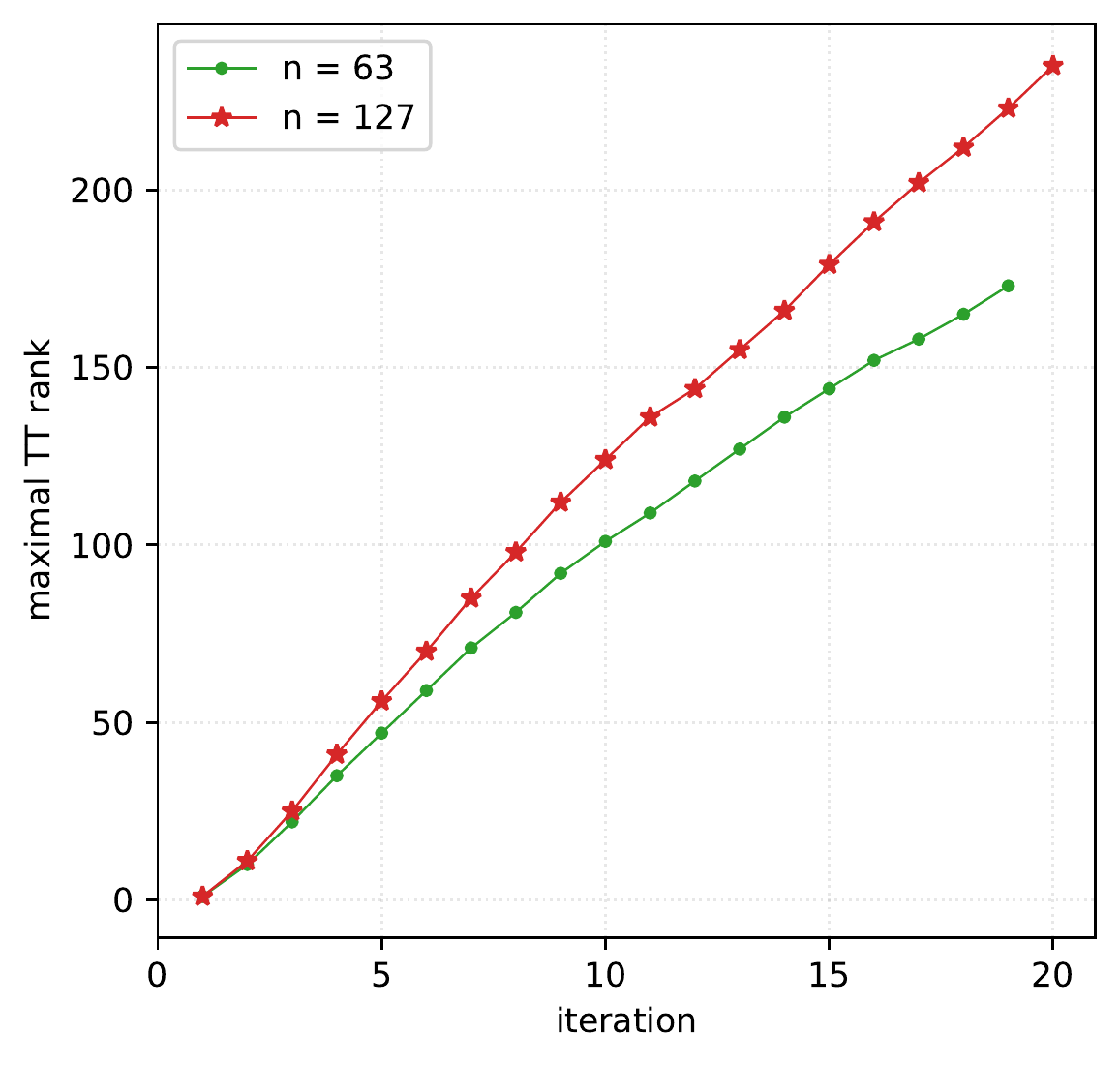}\label{fig:CK_v-rank}}
  		\vskip\baselineskip
  		\subfloat[Compression ration for the last Kyrolv vector]{\includegraphics[scale=0.45, width=0.33\linewidth, height=0.33\linewidth]{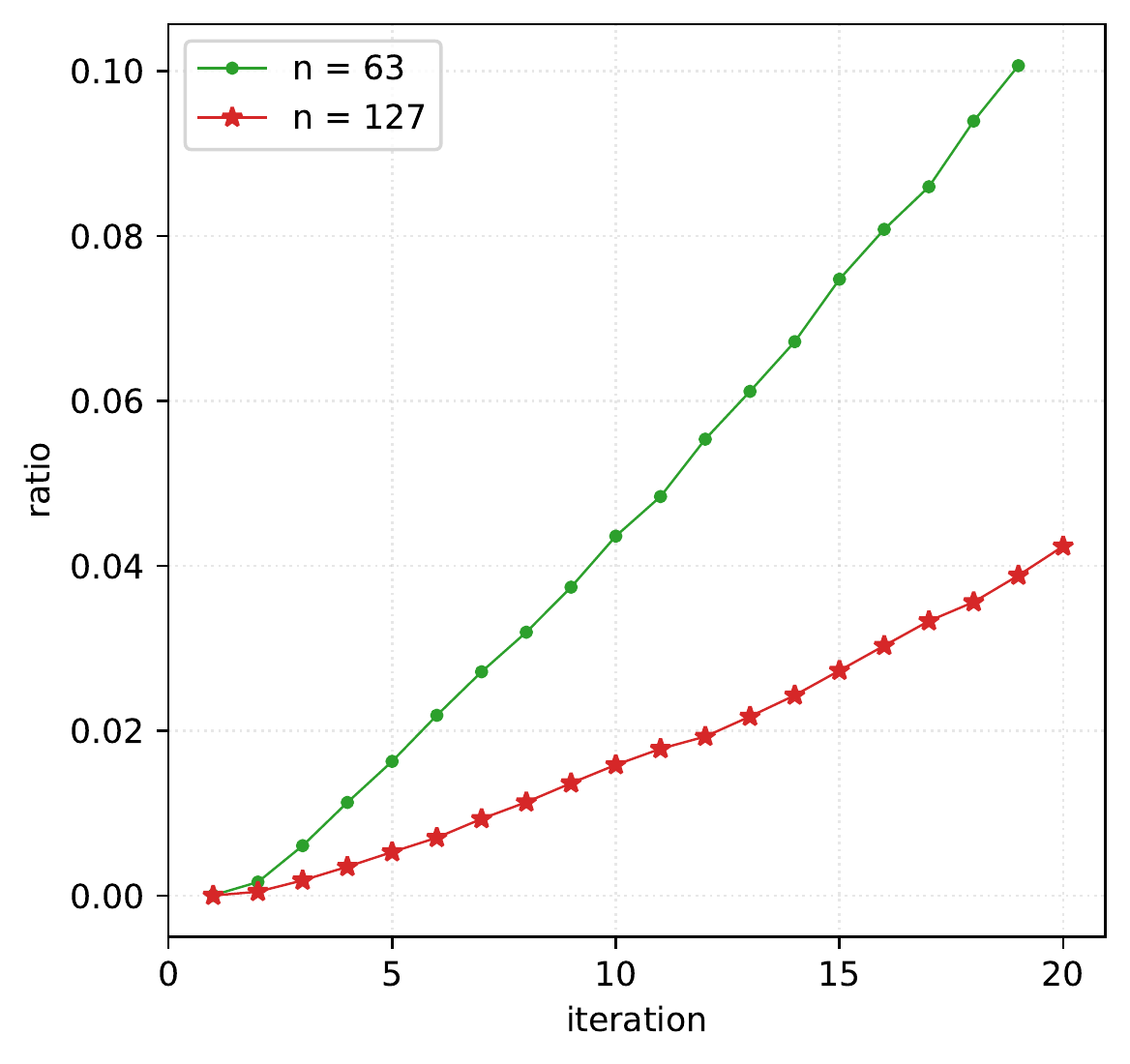}\label{fig:CK_v-ratio}}
  		\quad
  		\subfloat[Compression ratio for the entire Krylov basis]{\includegraphics[scale=0.45, width=0.33\linewidth, height=0.33\linewidth]{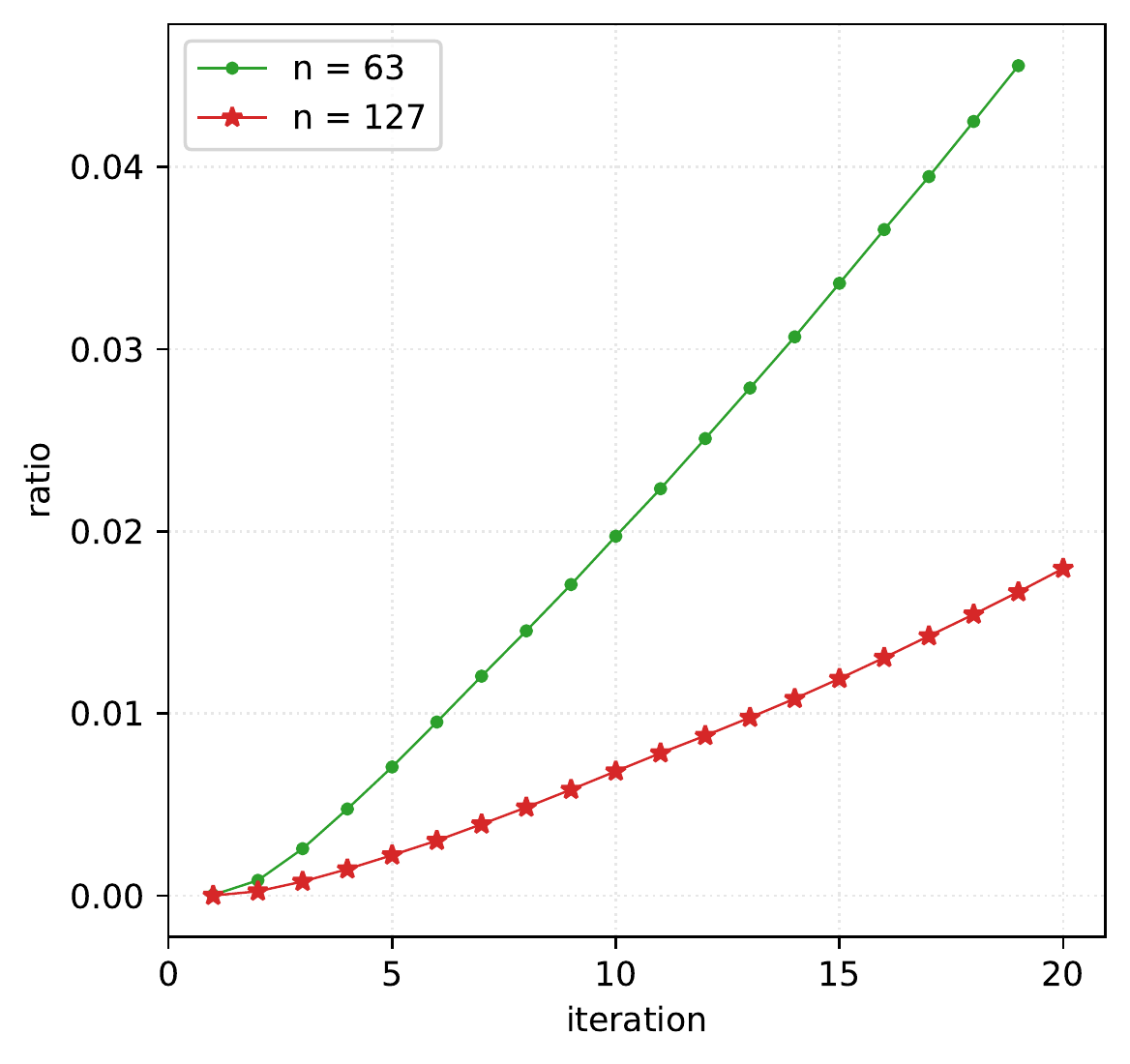}\label{fig:CK_v-ratio_s}}
  		\caption{4-d Heterogeneous convection diffusion using $\delta = \varepsilon = 10^{-5}$}
  		\label{fig:CK}
  	\end{figure}
  
  	\begin{figure}[!htb]
  		\centering
  		\subfloat[Convergence history in $\eta_{\ten{b}}$ for $n~=~63$]{\includegraphics[scale =0.45, width=0.33\linewidth, height=0.33\linewidth]{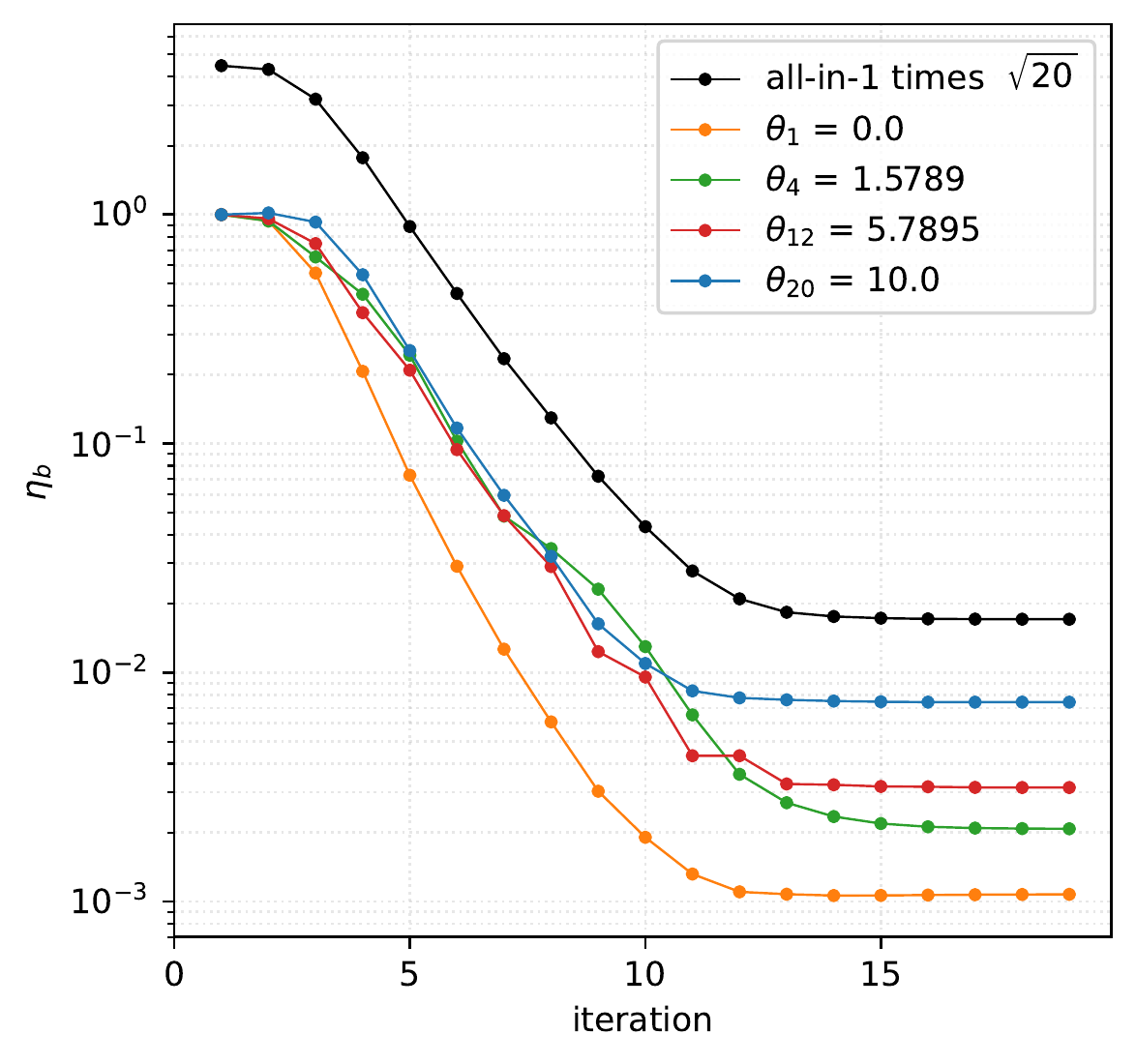}\label{fig:PCD_eta-b_CK_64}}
  		\quad
  		\subfloat[Convergence history in $\eta_{\ten{b}}$ for $n~=~127$]{\includegraphics[scale=0.45, width=0.33\linewidth, height=0.33\linewidth]{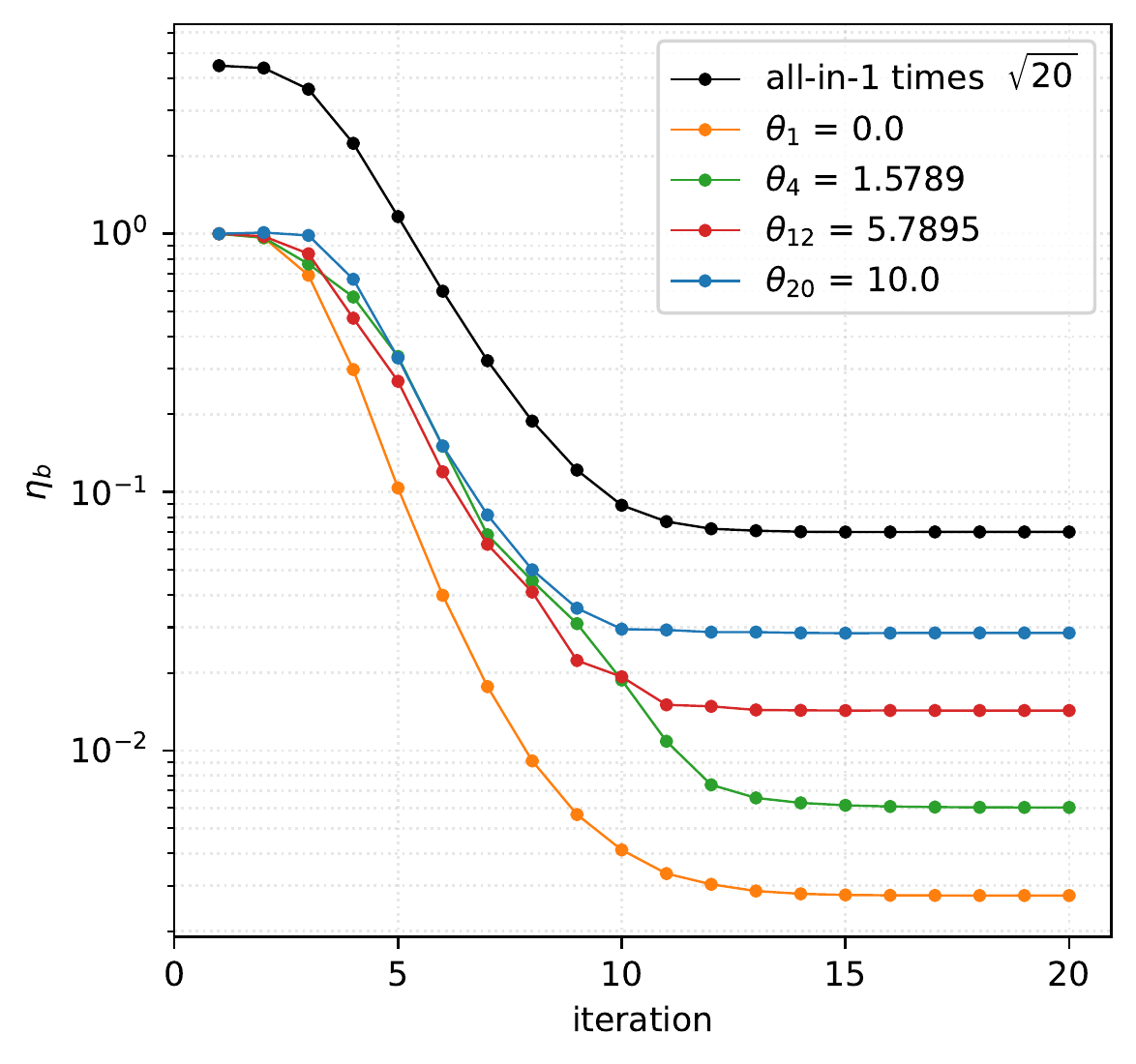}\label{fig:PCK_eta-b_CH_128}}
  	\caption{4-d Heterogeneous convection diffusion $\eta_{\ten{b}}$ bound using $\delta = \varepsilon = 10^{-5}$} 
  	\label{fig:CK_SvsM_eta_b}
  	\end{figure}
  	
  	\begin{figure}[!htb]
  		\centering
  		\subfloat[Convergence history in $\eta_{\ten{AM}, \ten{b}}$ for $n = 63$]{\includegraphics[scale =0.45, width=0.33\linewidth, height=0.33\linewidth]{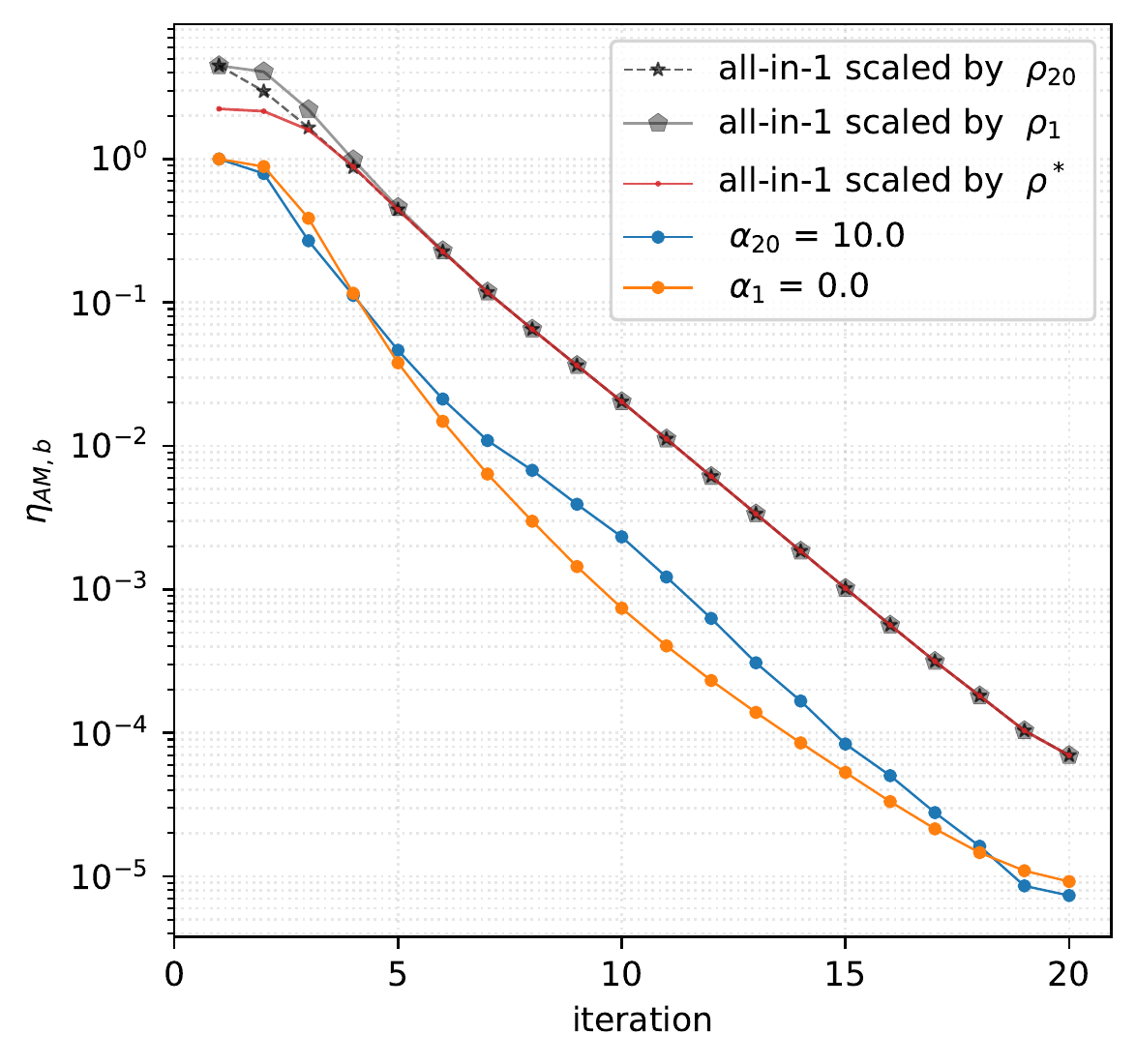}\label{fig:PCD_eta-AMb_CK_64}}
  		\quad
  		\subfloat[Convergence history in $\eta_{\ten{AM},\ten{b}}$ for $n = 127$]{\includegraphics[scale=0.45, width=0.33\linewidth, height=0.33\linewidth]{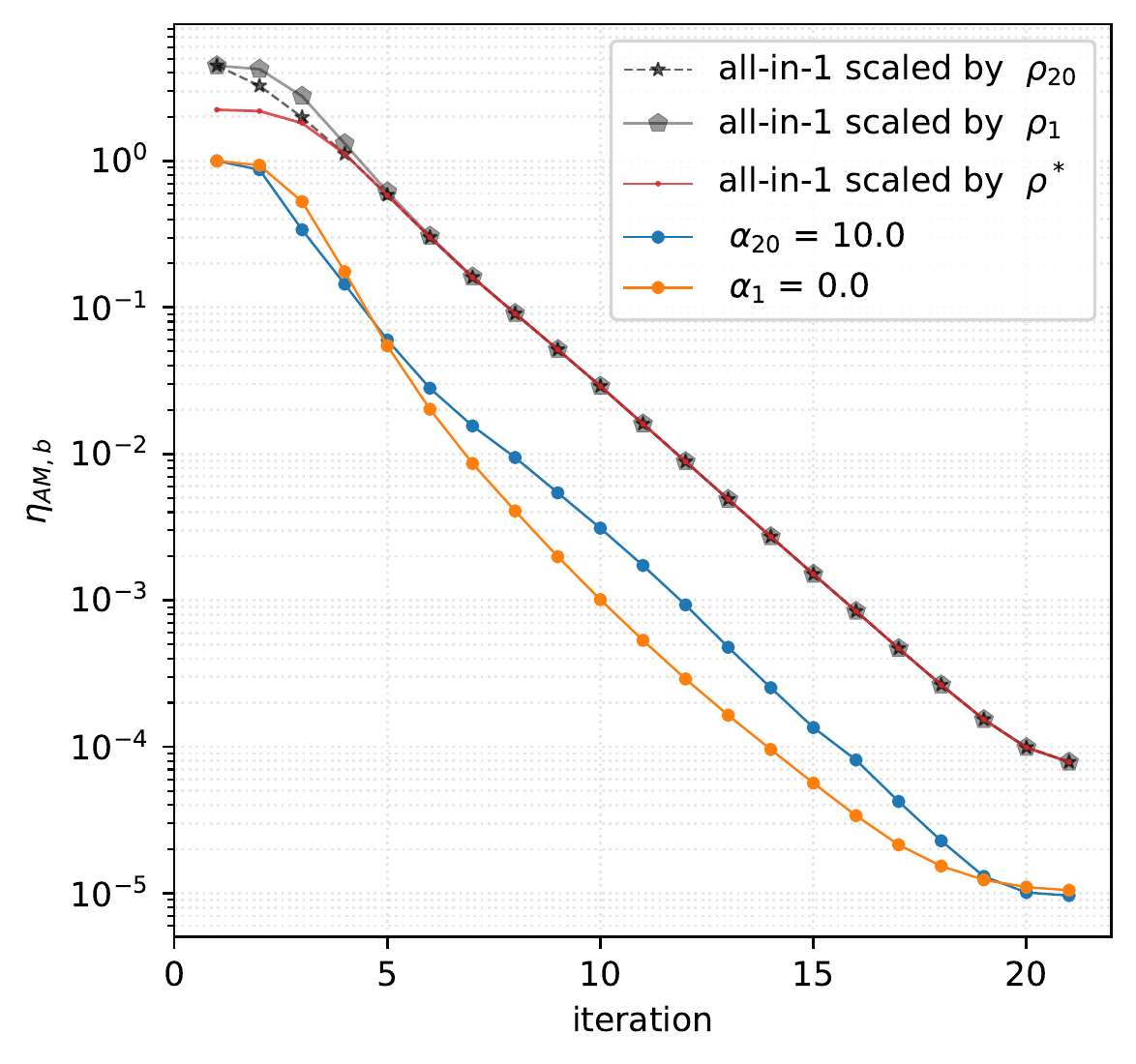}\label{fig:PCK_eta-AMb_CH_128}}
  		\caption{4-d Heterogeneous convection diffusion $\eta_{\ten{AM},\ten{b}}$ bound using $\delta = \varepsilon = 10^{-5}$} 
  	\label{fig:CK_SvsM_eta_AM_b}
  	\end{figure}
%  	\color{lightgray}
%  	\begin{itemize}
%  		\item describe how the problem is written in TT format
%  	\end{itemize}
%  	\normalcolor
%  	
% \cleardoublepage %LG: to be removed once the report will be finished
  	\subsection{Solution of parameter dependent right-hand sides}
  	The aim of this section is to investigate the numerical properties of some examples in the context of the multiple right-hand side solution, following the tensorized approach described in~\ref{ssec:Ai1}.
  	\subsubsection{Poisson problem\label{sec4:mLap}} In this subsection we solve simultaneously multiple Poisson problems stated in Equation~\eqref{eqNE:Lap} with modified right-hand sides. Let $-\ten{\Delta}_3$ be the discretization of the Laplacian over a Cartesian grid of $n$ points per mode for the domain $\Omega = [0, 1]^3$. Let $\ten{b}\in\R^{n\times n\times n}$ be the right-hand side discretization defined in Section~\ref{sec3:Lap}. We define the individual linear system as
  	\[
  	-\ten{\Delta}_3 \ten{u}_\ell = \ten{b} + \ten{e}^{[\ell]}
  	\]
  	where $\ten{e}^{[\ell]}\in\R^{n\times n \times n}$ is the $\ell$-th slice with respect to the first mode of $\ten{e} \in \R^{p\times n\times n \times n}$ a realization of the normal distribution $\mathcal{N}(0,1)$. Since the aim is solving simultaneously the $p$ problems, as in Section~\ref{ssec:Ai1}, we define the ``all-in-one'' tensor linear operator $\ten{A}\in\R^{(p\times p)\times(n\times n)\times(n\times n)\times(n\times n)}$
  	\[
  		\ten{A} = \I_p \otimes (\ten{-\Delta}_3)
  	\]
  	while the ``all-in-one'' right-hand side is $\ten{c}\in\R^{p\times n\times n\times n}$ such that
  	\[
  		\ten{c} = \1_p\otimes\ten{b} + \ten{e}.
  	\]
  	We consider the solution of the problem with $n\in\{63, 127, 255\}$ and $p = 20$. To speed up the convergence we introduce the preconditioner defined in~\eqref{eq:prec_d+1} with $q\in\{16, 32\}$. Notice that theoretically the TT-rank of $\ten{c}$ may become extremely large, %\LG{I would vote to remove this part of the sentence on the convergence rate as we do not report on this numerical feature and I am not sure that we are talking about the time to solution or about the number of iterations: slowing down the convergence or }
	leading to a memory over-consumption and higher computational costs. To face this drawback, we impose a small TT-rank to $\ten{e}^{[\ell]}$, so that the TT-rank of $\ten{c}$ ends up being $11$ at maximum. To study the bounds stated in Section~\ref{ssec:Ai1}, 
  	we need to comply with the hypothesis so that we scale each individual right-hand side by its norm, so that $\norm{\ten{c}} =\sqrt{p}$. 
%\LG{this has definitely be be mentionned also in previous section}\MI{fixed}
  	
  	\begin{figure}[!htb]
  		\centering
  		\subfloat[Convergence history]{\includegraphics[scale =0.45, width=0.33\linewidth, height=0.33\linewidth]{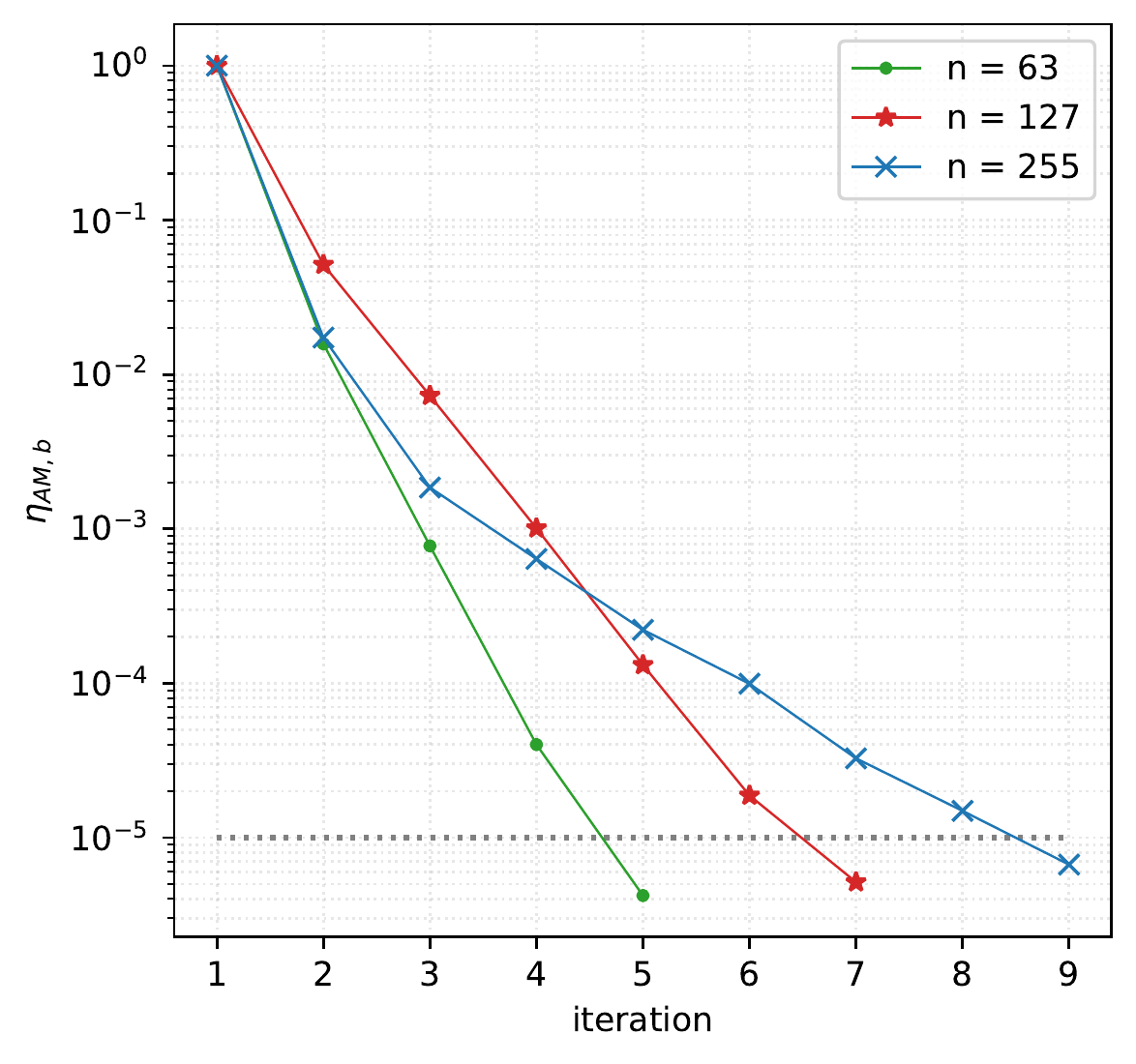}\label{fig:mLap_CH}}
  		\quad
  		\subfloat[Maximal TT-rank of the last Krylov vector ]{\includegraphics[scale=0.45,width=0.33\linewidth, height=0.33\linewidth]{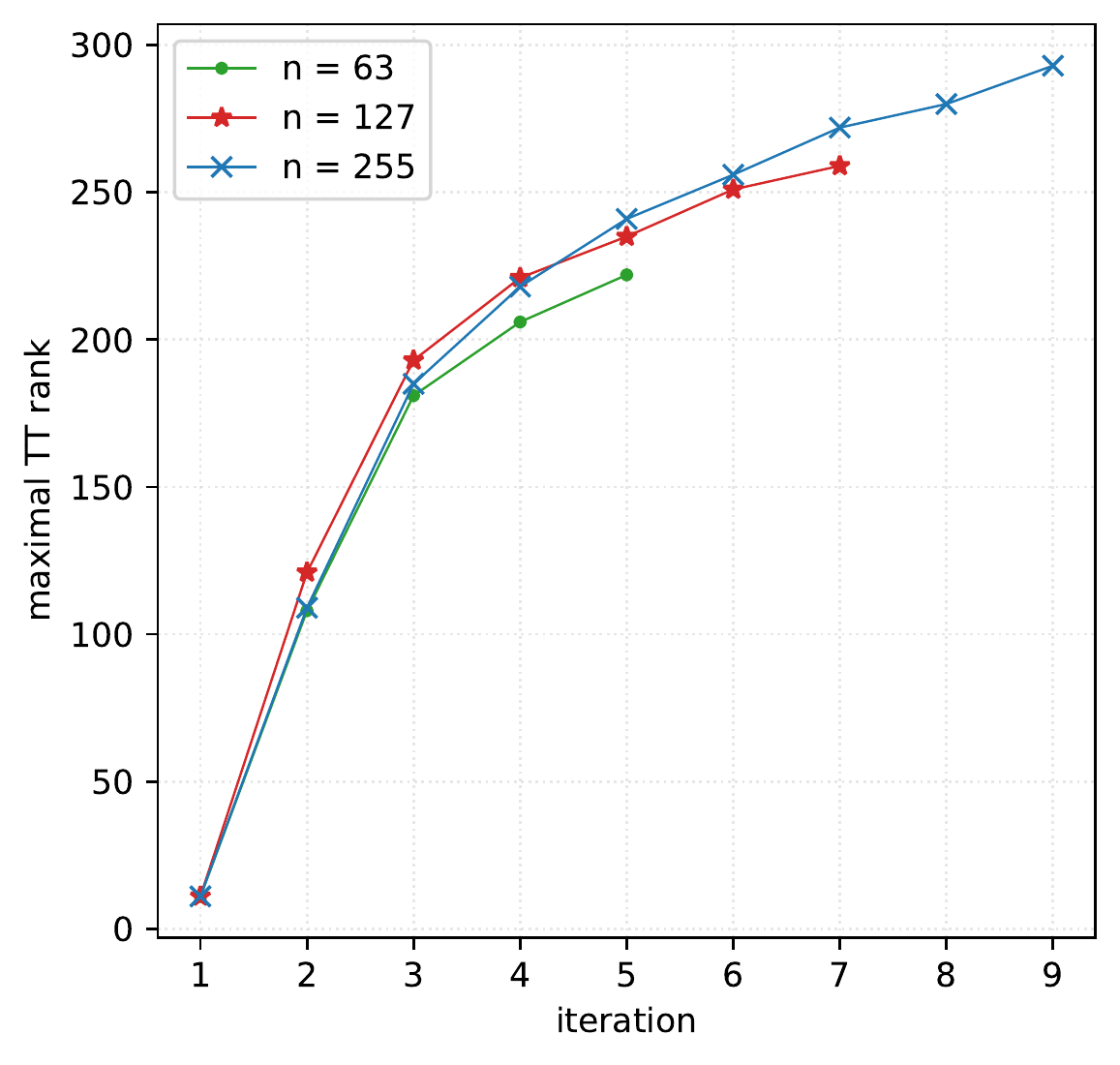}\label{fig:mLap_v-rank}}
  		\vskip\baselineskip
  		\subfloat[Compression ration for the last Kyrolv vector]{\includegraphics[scale=0.45, width=0.33\linewidth, height=0.33\linewidth]{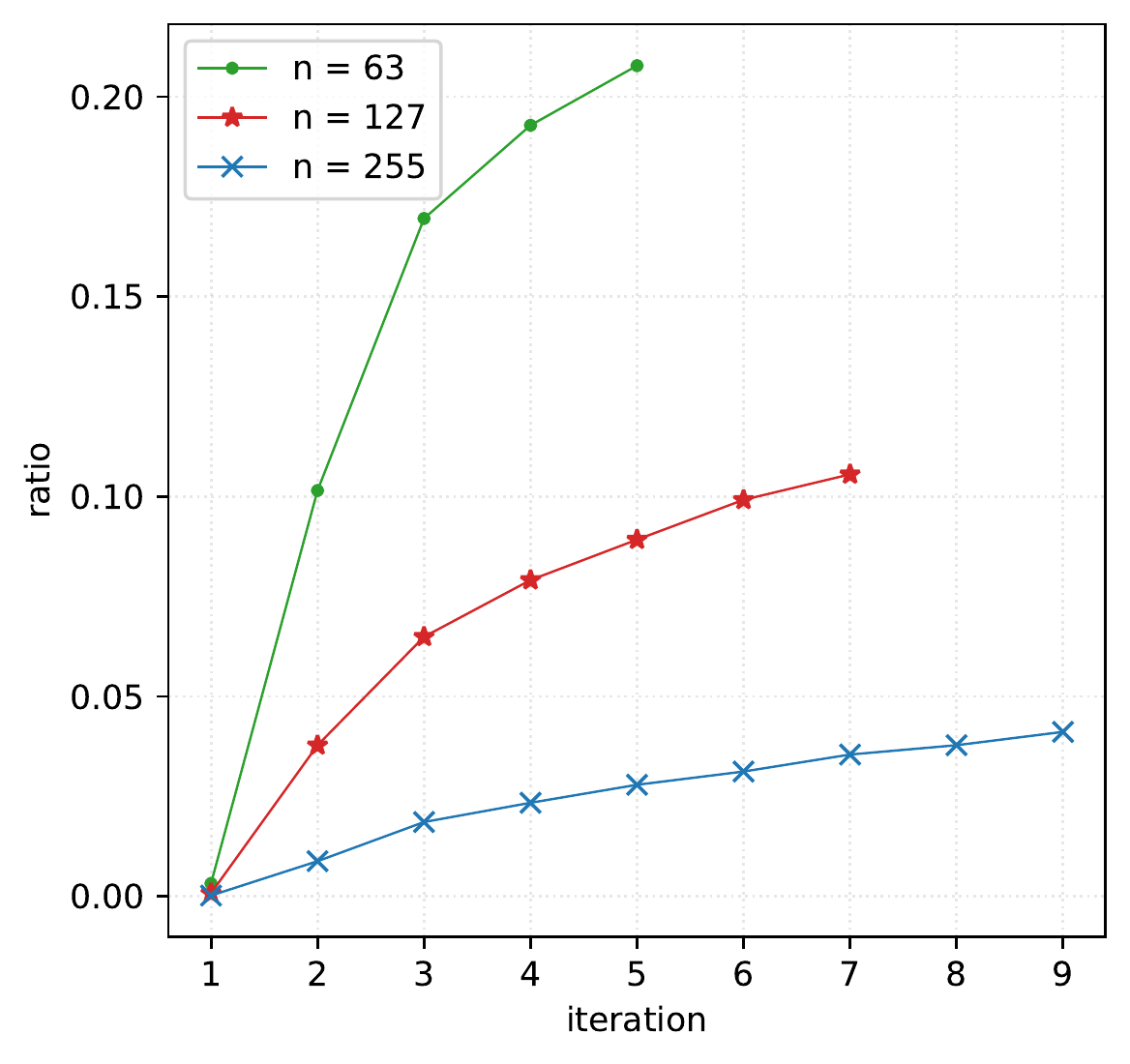}\label{fig:mLap_v-ratio}}
  		\quad
  		\subfloat[Compression ratio for the entire Krylov basis]{\includegraphics[scale=0.45, width=0.33\linewidth, height=0.33\linewidth]{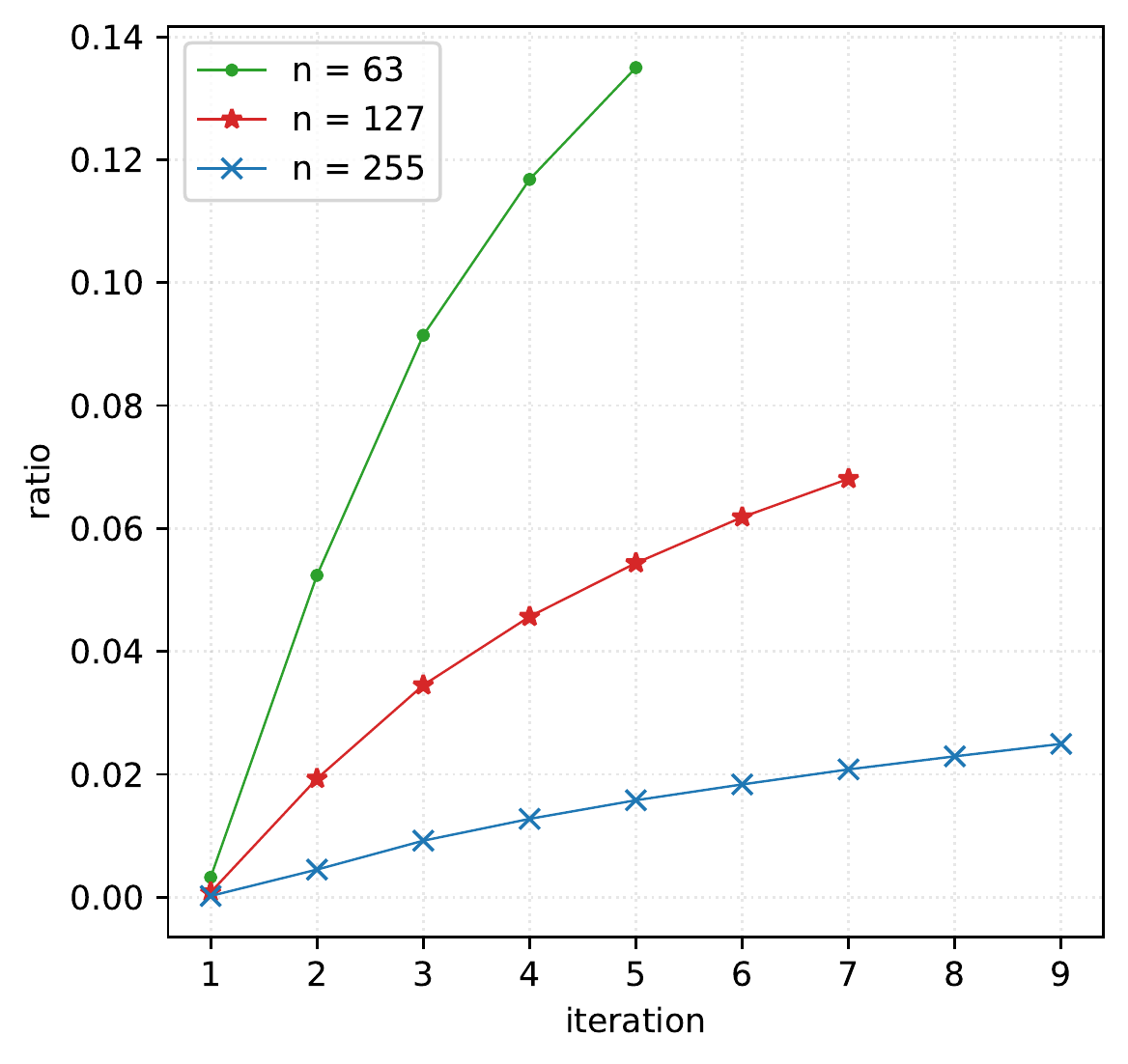}\label{fig:mLap_v-ratio_s}}
  		\caption{4-d multiple right-hand side Poisson problem using $\delta = \varepsilon = 10^{-5}$}
  		\label{fig:mLap}
  	\end{figure}
  	As we can see in Figure~\ref{fig:mLap_CH}, TT-GMRES converges in $5$ iterations for $n = 63$, in $7$ for $n = 127$ and in $9$ for $n = 255$. Figure~\ref{fig:mLap_v-rank} shows that the TT-rank of the last Krylov vector becomes quickly large, with maximum values ranging from $200$ to $300$. However looking at Figures~\ref{fig:mLap_v-ratio} and~\ref{fig:mLap_v-ratio_s}, the compression ratio for a single basis vector and for the entire basis remains extremely small, from $0.05$ to $0.2$ for the first one and from $0.02$ and $0.14$ for the entire basis, meaning that the TT approach is still effective from the memory point of view. As in the parametric operator case, we study the bounds expressed in Propositions~\ref{prop:eta_b} and~\ref{prop:eta_Ab_m}. In Figure~\ref{fig:mLap_SvsM_eta_b}, we see that the bound for $\eta_{\ten{b}}$ is always quite tight, around $1$ order of magnitude approximately. To use the result of Proposition~\ref{prop:eta_Ab_m}, we set $w$ equal to the number of iterations to converge and for every $\ell\in\{1,\dots, p\}$, we define the vector $\gamma_\ell\in\R^w$ such that
  	\[
  	\gamma_\ell(i) = \psi_\ell(\ten{t}_k)\quad\text{for every}\qquad k\in\{1,\dots, w\}.
  	\]
  	We define $\ell_m$ and $\ell_M$ as the indexes which realize the minimum and the maximum of $\gamma_\ell$ norm, i.e.,
  	\begin{equation}
  	\label{eq3:mrhsM}
  	\ell_m = \argmin_{\ell\in\{1,\dots,p\}} \norm{\gamma_\ell}\quad\text{and}\quad\ell_M = \argmax_{\ell\in\{1,\dots,p\}} \norm{\gamma_\ell} .
  	\end{equation}
  	In this specific case for each grid point step, the value of $\ell_m$ and $\ell_M$ is reported in Figure~\ref{fig:mLap_SvsM_eta_AM_b}. The same Figure~shows that the bound in this specific case is quite good, with approximately less of $1$ order of magnitude of difference, in the optimal and in the worst case. Moreover the three scaled ``all-in-one'' curves overlap from the second iteration, suggesting again that $\rho^*$ from Corollary~\ref{cor:eta_Ab} is a good approximation of the scaling factors.  
  	\begin{figure}[!htb]
  		\centering
  		\subfloat[Convergence history in $\eta_{\ten{b}}$ for $n = 63$]{\includegraphics[scale =0.3, width=0.31\linewidth, height=0.31\linewidth]{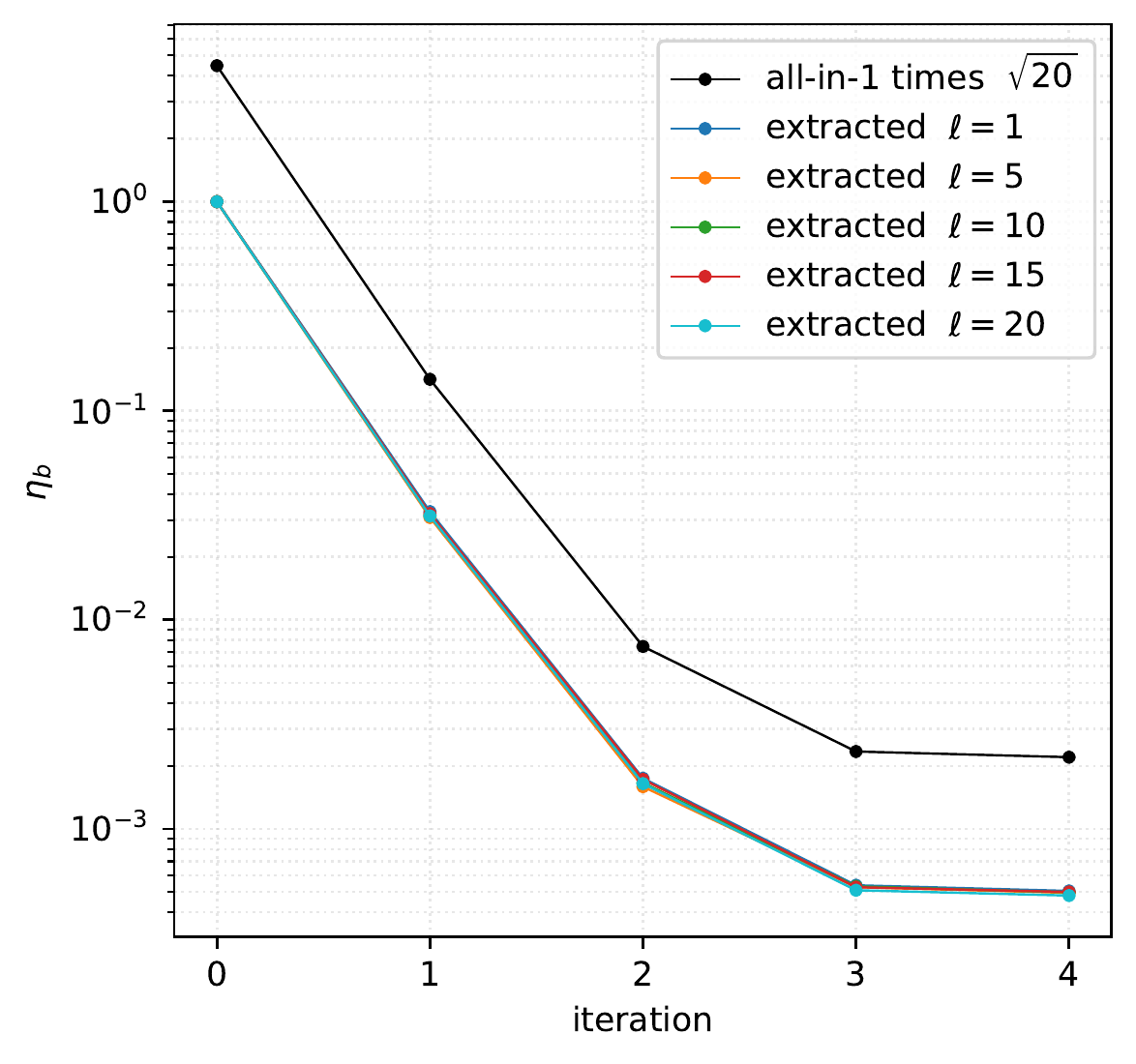}\label{fig:mLap_eta-b_64}}
  		\quad
  		\subfloat[Convergence history in $\eta_{\ten{b}}$ for $n = 127$]{\includegraphics[scale=0.3, width=0.31\linewidth, height=0.31\linewidth]{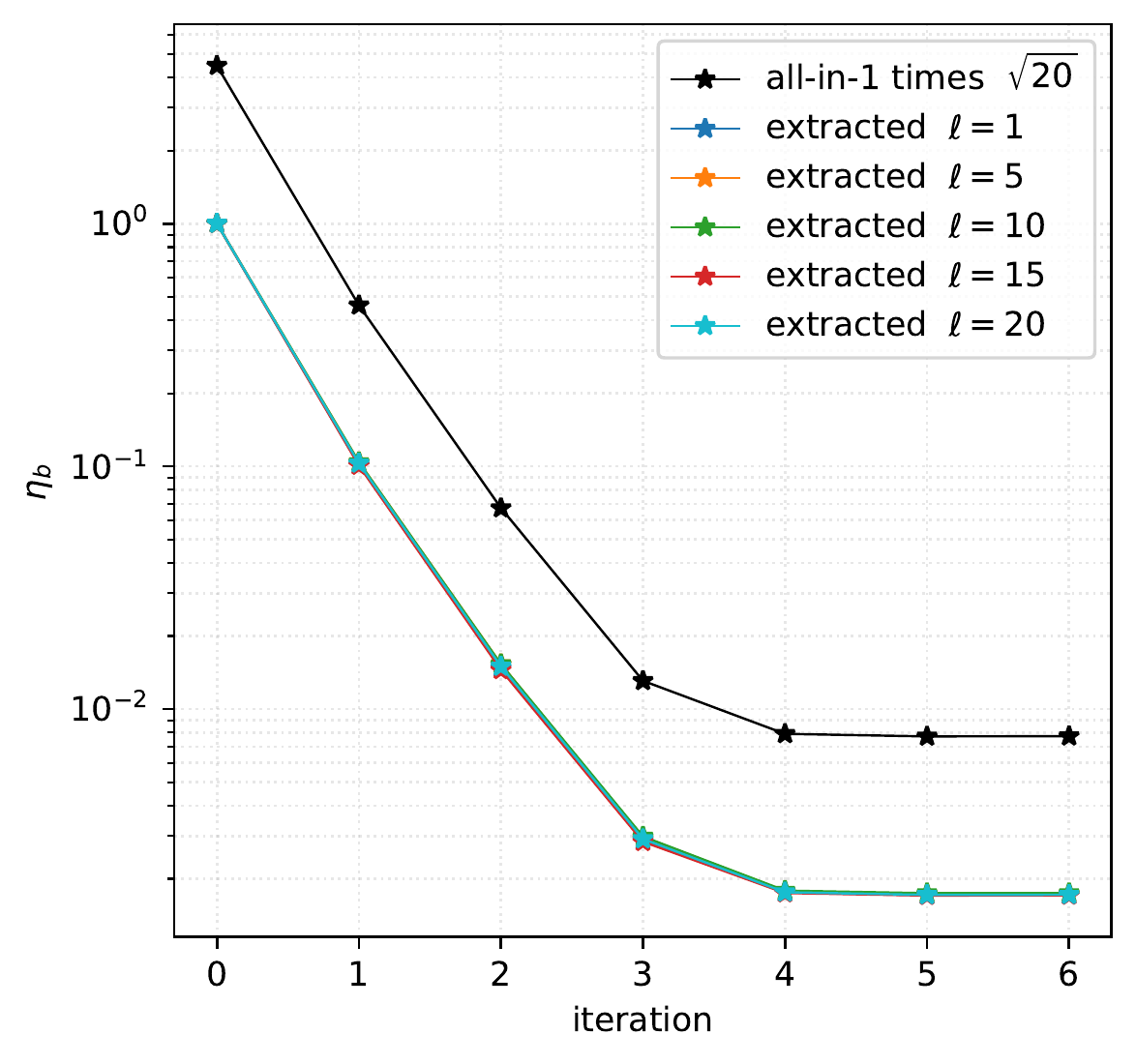}\label{fig:mLap_eta-b_CH_128}}
  		\quad
  		\subfloat[Convergence history in $\eta_{\ten{b}}$ for $n = 127$]{\includegraphics[scale=0.3, width=0.31\linewidth, height=0.31\linewidth]{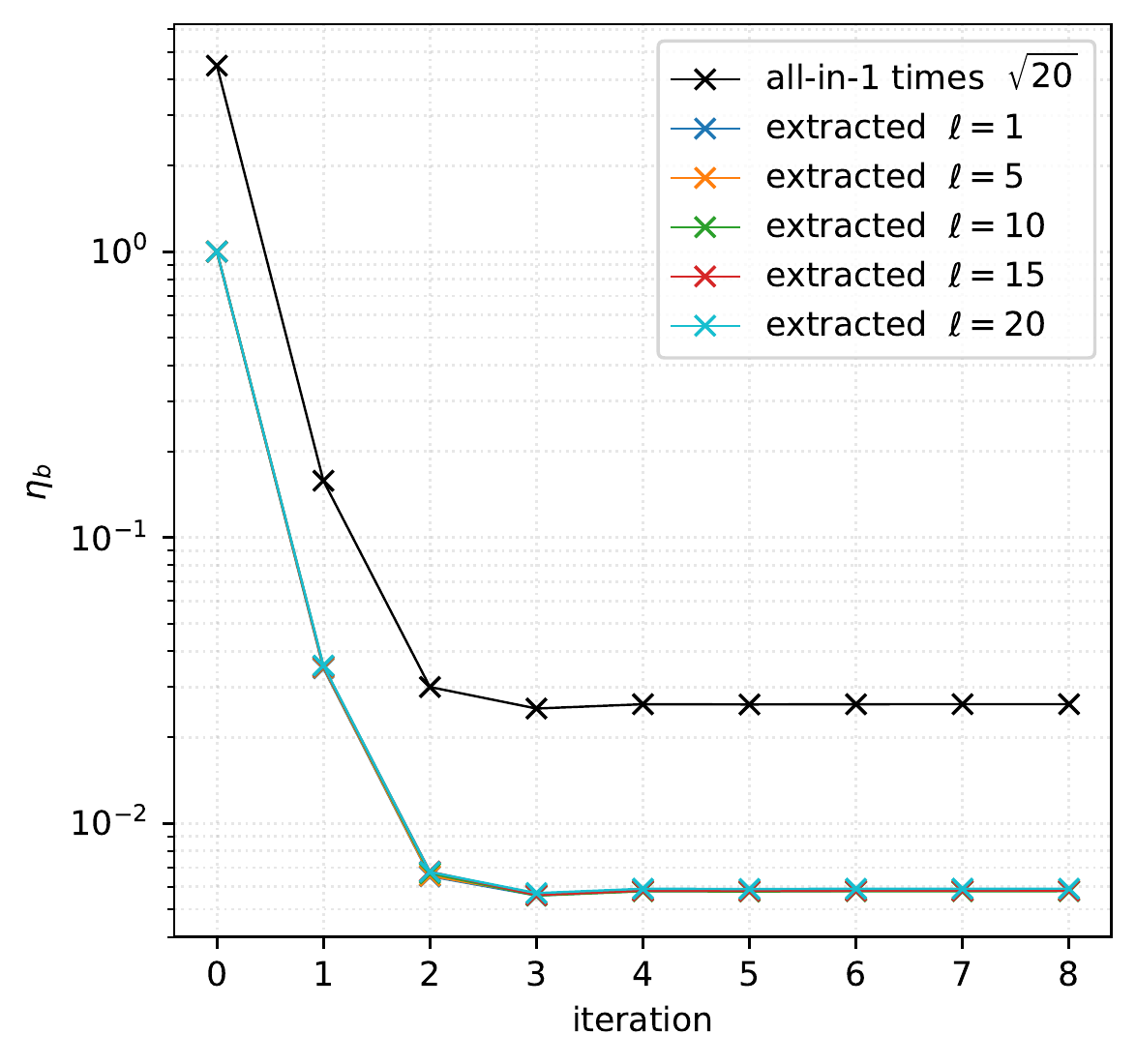}\label{fig:mLap_eta-b_CH_256}}
  		\caption{4-d  Poisson problem $\eta_{\ten{b}}$ bound  using $\delta = \varepsilon= 10^{-5}$} 
  		\label{fig:mLap_SvsM_eta_b}
  	\end{figure}

  	\begin{figure}[!htb]
  	  \centering
  	\subfloat[Convergence history in $\eta_{\ten{AM}, \ten{b}}$ for $n = 63$]{\includegraphics[scale =0.3, width=0.31\linewidth, height=0.31\linewidth]{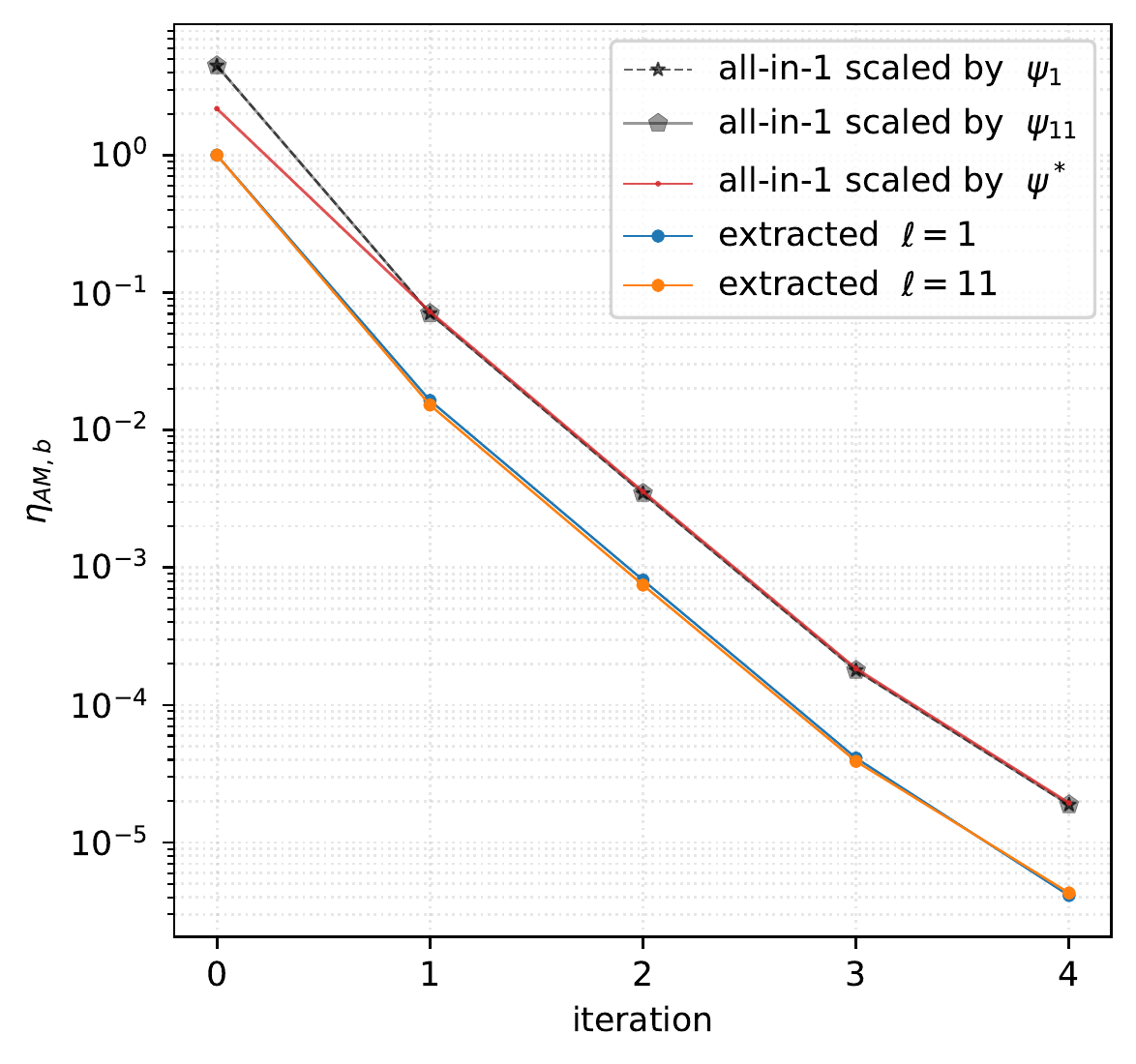}\label{fig:mLap_eta-AMb_CK_64}}
  	\quad
  	\subfloat[Convergence history in $\eta_{\ten{AM},\ten{b}}$ for $n = 127$]{\includegraphics[scale=0.3, width=0.31\linewidth, height=0.31\linewidth]{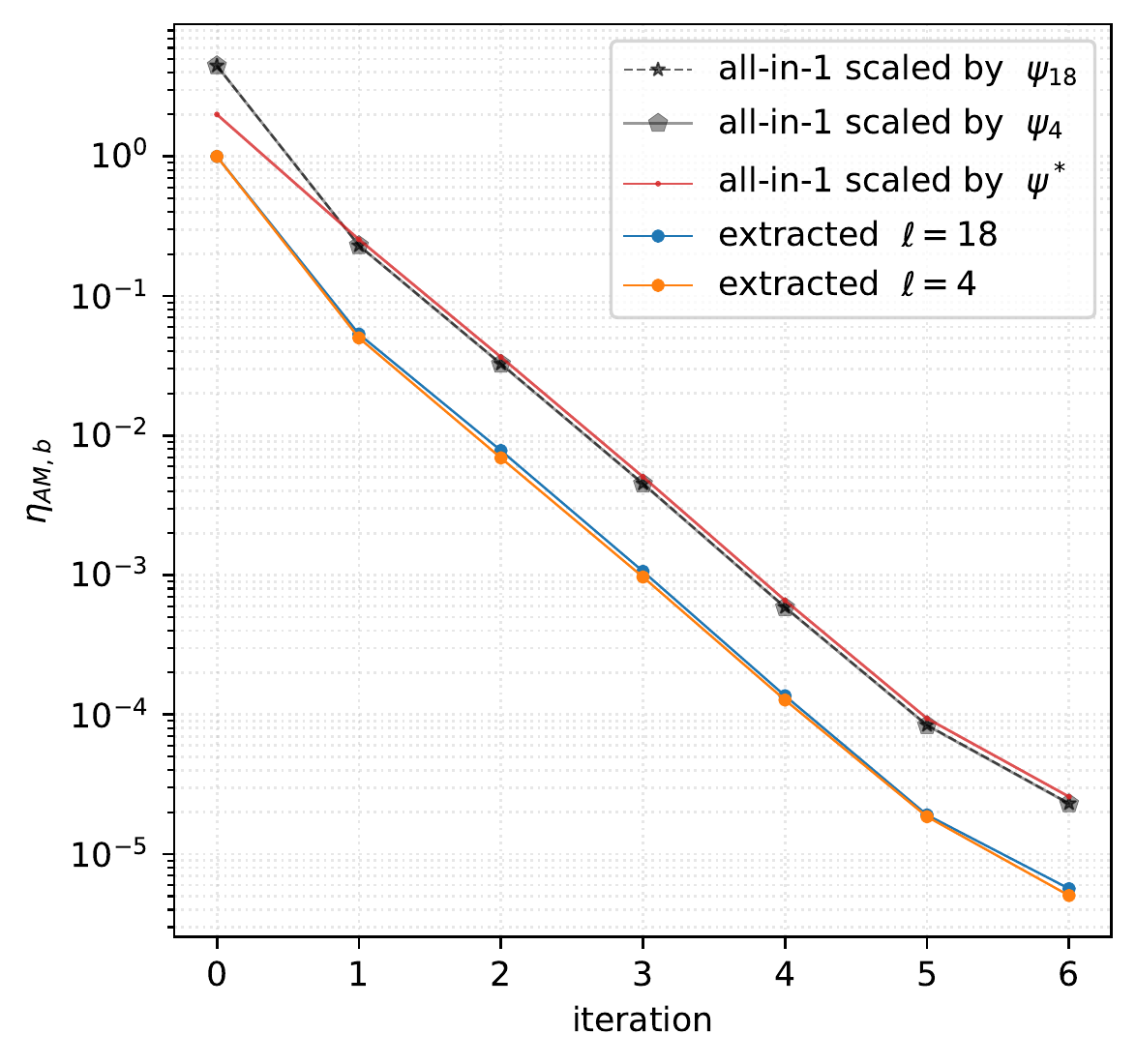}\label{fig:mLap_eta-AMb_CH_128}}
  	\quad
  	\subfloat[Convergence history in $\eta_{\ten{AM},\ten{b}}$ for $n = 127$]{\includegraphics[scale=0.3, width=0.31\linewidth, height=0.31\linewidth]{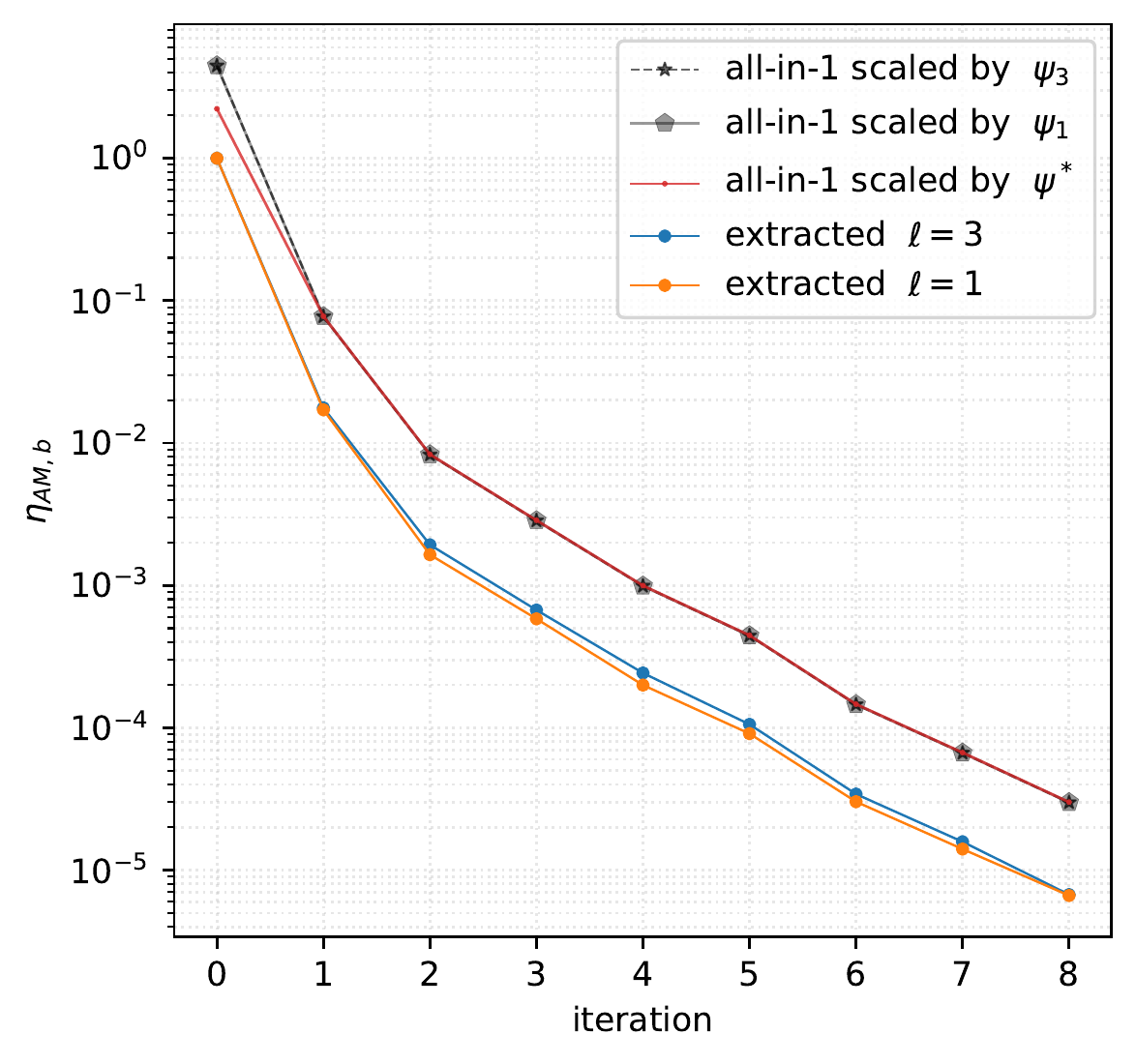}\label{fig:mLap_eta-AMb_CH_256}}
  	\caption{4-d multiple right-hand side Poisson problem $\eta_{\ten{AM},\ten{b}}$ bound using $\delta = \varepsilon= 10^{-5}$} 
  	\label{fig:mLap_SvsM_eta_AM_b}
  	\end{figure}
  	
  	\subsubsection{Convection-diffusion problem\label{sec4:mCD}} As previously, the aim of this subsection is to illustrate the solution of multiple convection-diffusion problem~\eqref{eqNE:CD}, with different right-hand sides. Let $\ten{A}_0$ be the discretization of~\eqref{eqNE:CD} operator over a Cartesian grid of $n$ points per mode for the domain $\Omega = [0, 1]^3$. Let $\ten{b}\in\R^{n\times n\times n}$ be the right-hand side discretization defined in Section~\ref{sssec:CD}.
 %\LG{double check if this is the correct section we want to refer to?}\MI{fixed}. 
 We define the individual linear system as
  	\[
  	\ten{A}_0 \ten{u}_\ell = \ten{b} + \ten{e}_\ell
  	\]
  	where $\ten{e}_\ell\in\R^{n\times n \times n}$ is a realization of the normal distribution $\mathcal{N}(0,1)$ for every $\ell\in\{1,\dots, p\}$. Since the aim is solving simultaneously the $p$ problems, as in Section~\ref{ssec:Ai1}, we define the ``all-in-one'' tensor linear operator $\ten{A}\in\R^{(p\times p)\times (n\times n)\times(n\times n)\times(n\times n)}$
  	\[
  	\ten{A} = \I_p \otimes (\ten{-\Delta}_3)
  	\]
  	while the ``all-in-one'' right-hand side is $\ten{c}\in\R^{p\times n\times n\times n}$ such that
  	\[
  	\ten{c}(\ell, i_1, i_2, i_3) = \ten{b}(i_1, i_2, i_3) + \ten{e}_\ell(i_1, i_2, i_3). 
  	\]
  	for every $i_k\in\{1,\dots, n_k\}$, $\ell\in\{1,\dots, p\}$ for $k\in\{1,\dots,3\}$. The problem is solved for $n\in\{63, 127\}$ and $p = 20$. As in all the previous cases of study, we use the preconditioner stated in~\eqref{eq:prec_d+1} with $q\in\{16, 32\}$ and we impose a small TT-rank to $\ten{e}_\ell$, so that the TT-rank of $\ten{c}$ ends up being $11$ at maximum.
  	
  	Figure~\ref{fig:mCD_CH} illustrates the convergence history in $5$ iterations for both the grid {dimensions}. If we compare this Figure~with Figure~\ref{fig:CD_CH}, we observe that the curves are very similar. Generally speaking, the number of iterations for GMRES to converge, neglecting the effect of the rounding, is equal to the number of eigenvectors which span the subspace where the right-hand side lives. This implies that if all the right-hand sides belong to the same linear subspace, the number of iterations necessary to converge is the same, implying that under this hypothesis solving for $1$ or $p$ right-hand sides requires the same number of iterations. This point is further discussed in 
\reportPaper{Appendix~\ref{app:eig}}{~\cite[Appendix~C]{coulaud2022}}. 
  	From the point of view of the memory consumption, the comparison of Figure~\ref{fig:CD_v-rank} and~\ref{fig:mCD_v-rank} shows that the solution for $20$ right-hand sides leads to TT-rank significantly larger, from $25$ to $30$ in the single right-hand side solution versus more than $200$ for the $20$ right-hand side ``all-in-one'' system. However if we had solved $20$ systems independently, summing all the TT-ranks, we could have reached a maximum of $500$ up to $700$. This becomes more interesting if we compare the compression ratios for the last Krylov vector, looking at Figure~\ref{fig:CD_v-ratio} and~\ref{fig:mCD_v-ratio}. We have a ratio from $0.02$ up to $0.12$ for a single right-hand side solution versus $0.1$ up to $0.17$ for the simultaneous one, which shows that these ratios are extremely closed, considering that in the second case we are solving in a higher dimension. A similar argument holds for the ratio of compression of the entire Krylov basis. In Figure~\ref{fig:mCD_v-ratio_s}, the ratio is between $0.06$ and $0.12$, while in Figure~\ref{fig:CD_v-ratio_s} it is between $0.01$ and $0.07$.
  	\begin{figure}[!htb]
  		\centering
  		\subfloat[Convergence history]{\includegraphics[scale =0.45, width=0.33\linewidth, height=0.33\linewidth]{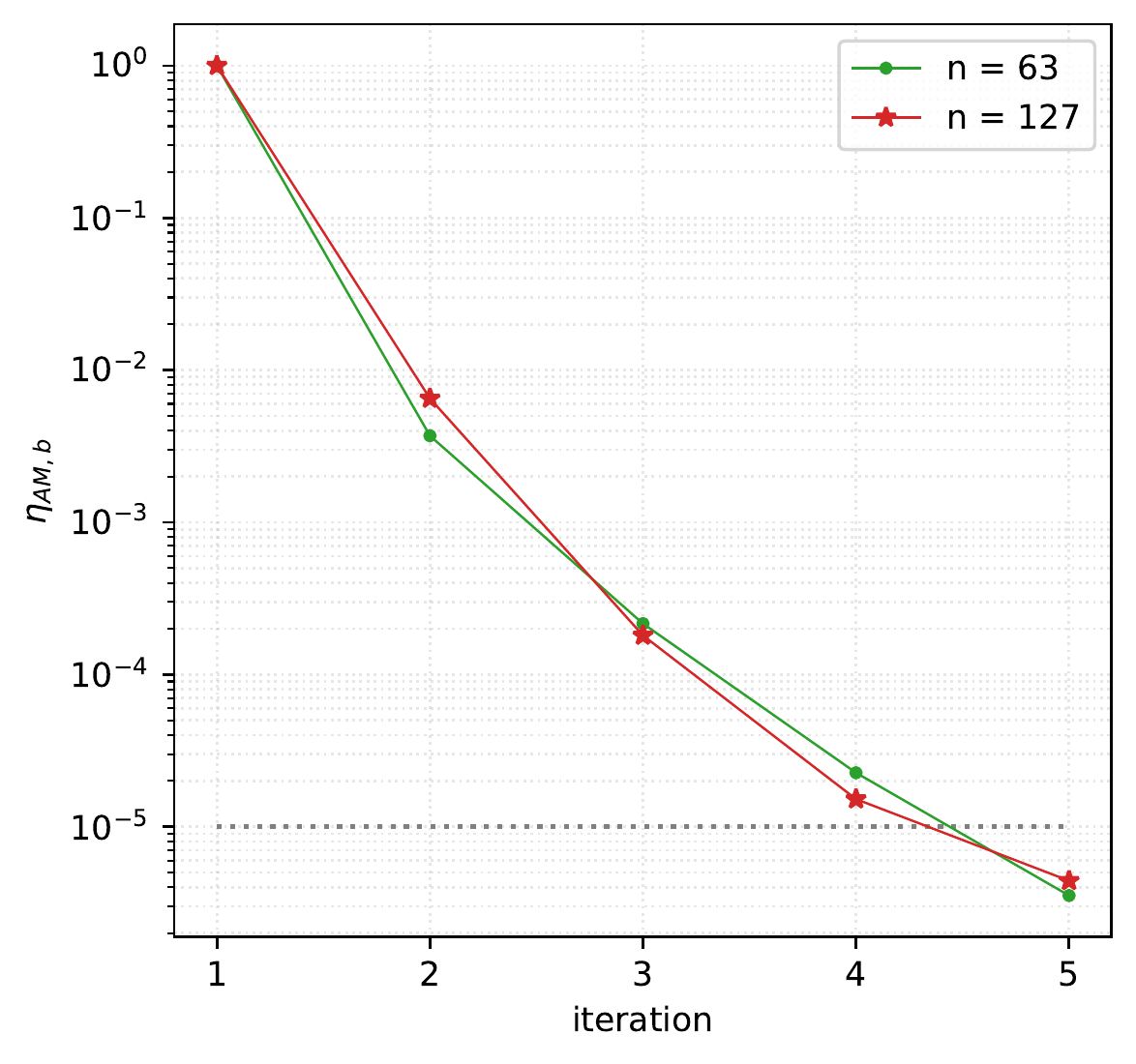}\label{fig:mCD_CH}}
  		\quad
  		\subfloat[Maximal TT-rank of the last Krylov vector ]{\includegraphics[scale=0.45,width=0.33\linewidth, height=0.33\linewidth]{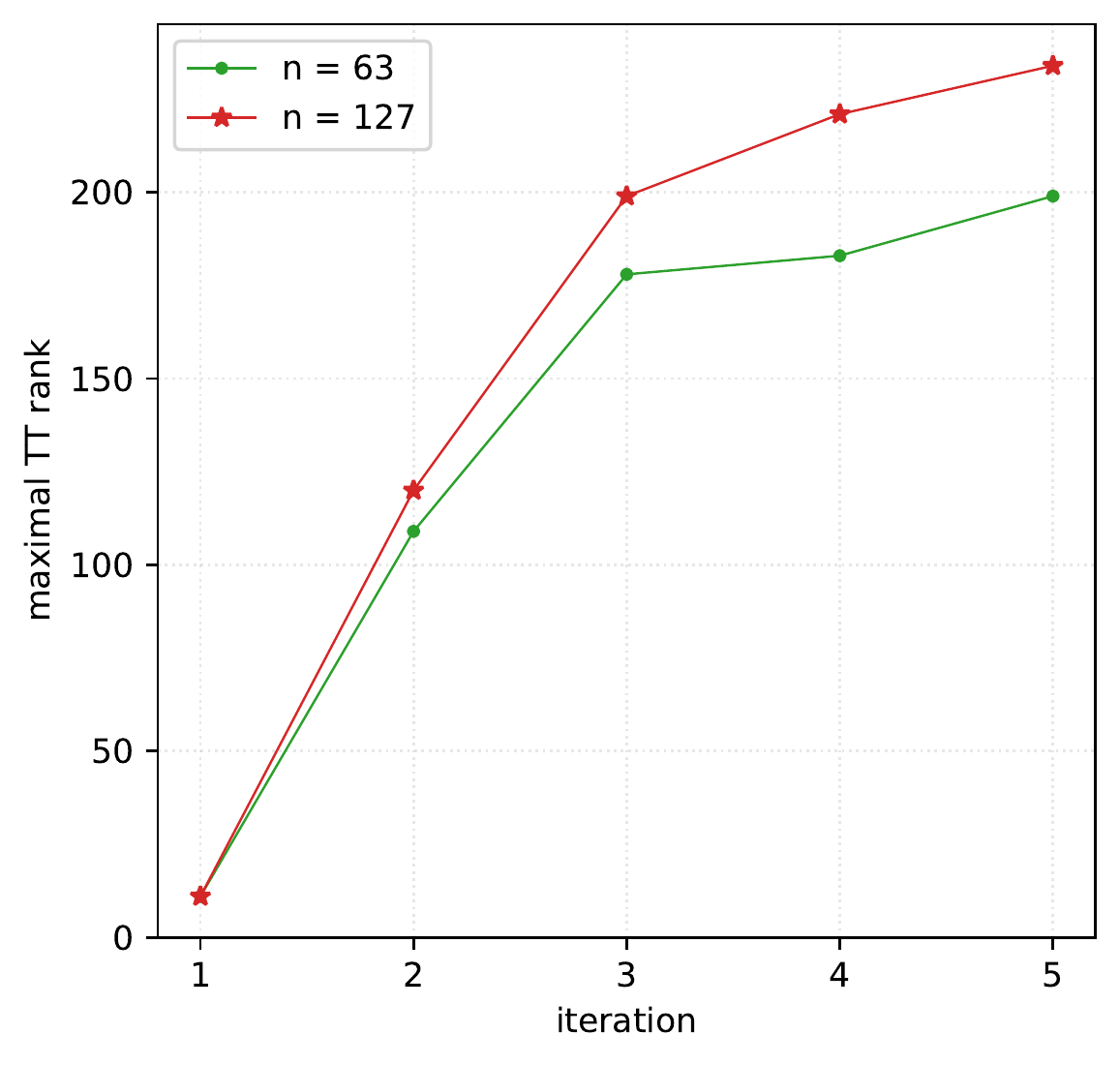}\label{fig:mCD_v-rank}}
  		\vskip\baselineskip
  		\subfloat[Compression ration for the last Kyrolv vector]{\includegraphics[scale=0.45, width=0.33\linewidth, height=0.33\linewidth]{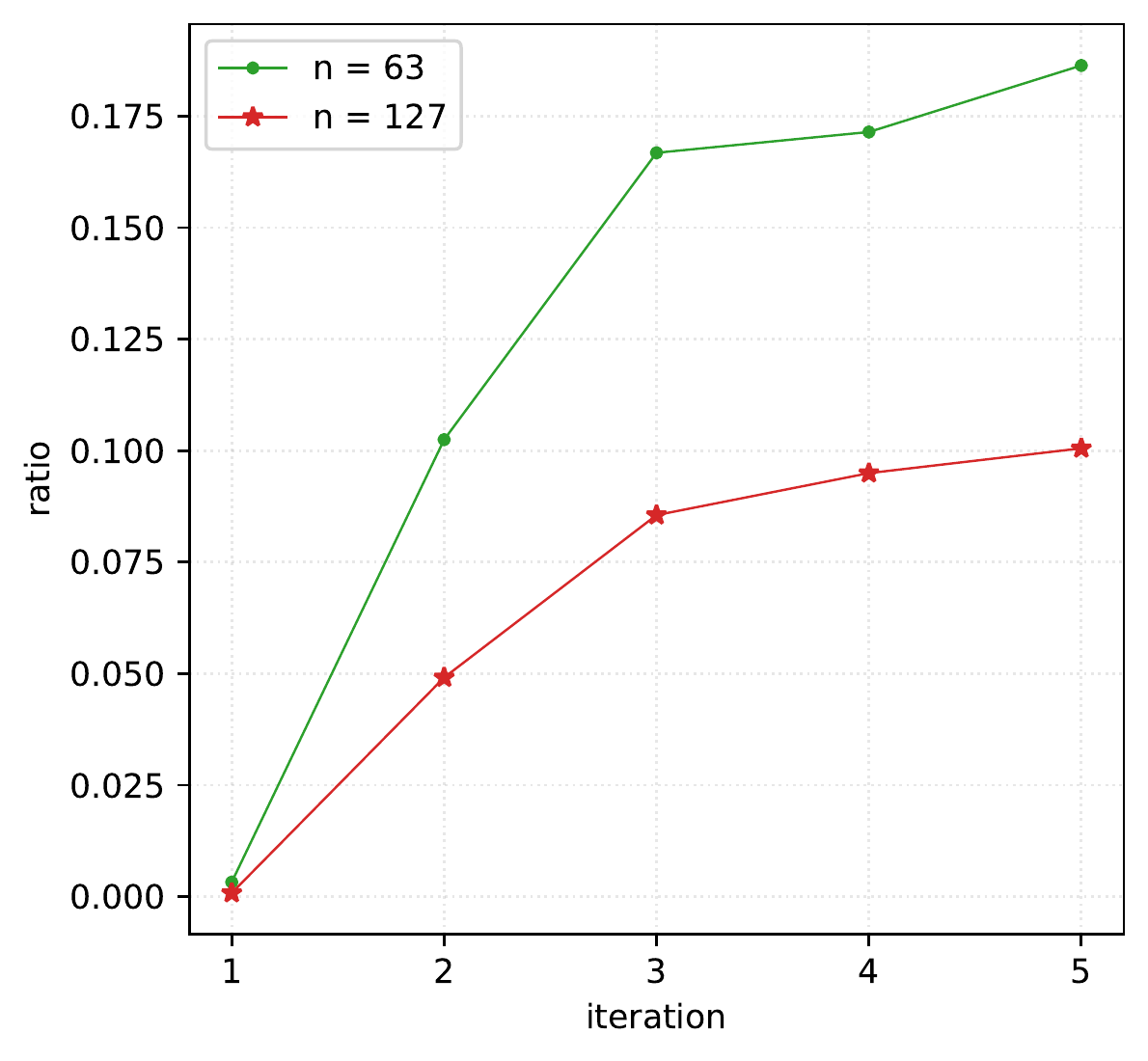}\label{fig:mCD_v-ratio}}
  		\quad
  		\subfloat[Compression ratio for the entire Krylov basis]{\includegraphics[scale=0.45, width=0.33\linewidth, height=0.33\linewidth]{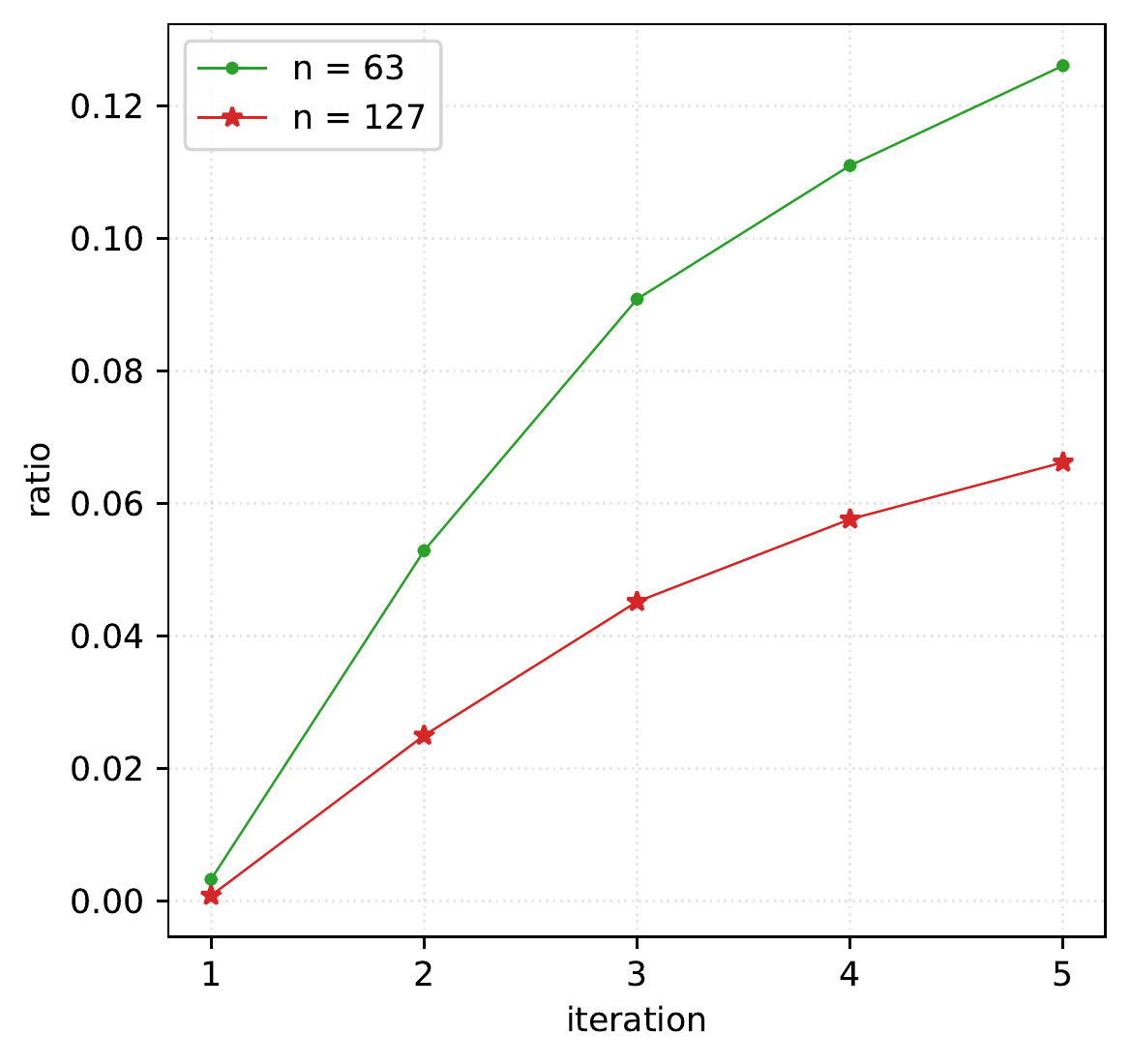}\label{fig:mCD_v-ratio_s}}
  		\caption{4-d multiple right-hand sides convection-diffusion problem  using $\delta = \varepsilon= 10^{-5}$}
  		\label{fig:mCD}
  	\end{figure}
  	
  	In Figure~\ref{fig:mCD_SvsM_eta_b}, we present the bound for $\eta_{\ten{b}}$ stated in Proposition~\ref{prop:eta_b}.
  	We see that it is quite tight during the first iterations and gets more loose at the end, setting at more than $1$ order of difference. As in the previous subsection, we compute $\ell_m$ and $\ell_M$ according to Equation~\eqref{eq3:mrhsM}, deciding which curves are plotted in Figure~\ref{fig:mCD_SvsM_eta_AM_b}. The resulting bound, displayed in Figure~\ref{fig:mCD_SvsM_eta_AM_b}, is quite tight, being of slightly less than $1$ order of magnitude approximately, with the three scaled curves overlapping from the second iteration. 
  	\begin{figure}[p]
  		\centering
  		\subfloat[Convergence history in $\eta_{\ten{b}}$ for $n = 63$]{\includegraphics[scale =0.45, width=0.33\linewidth, height=0.33\linewidth]{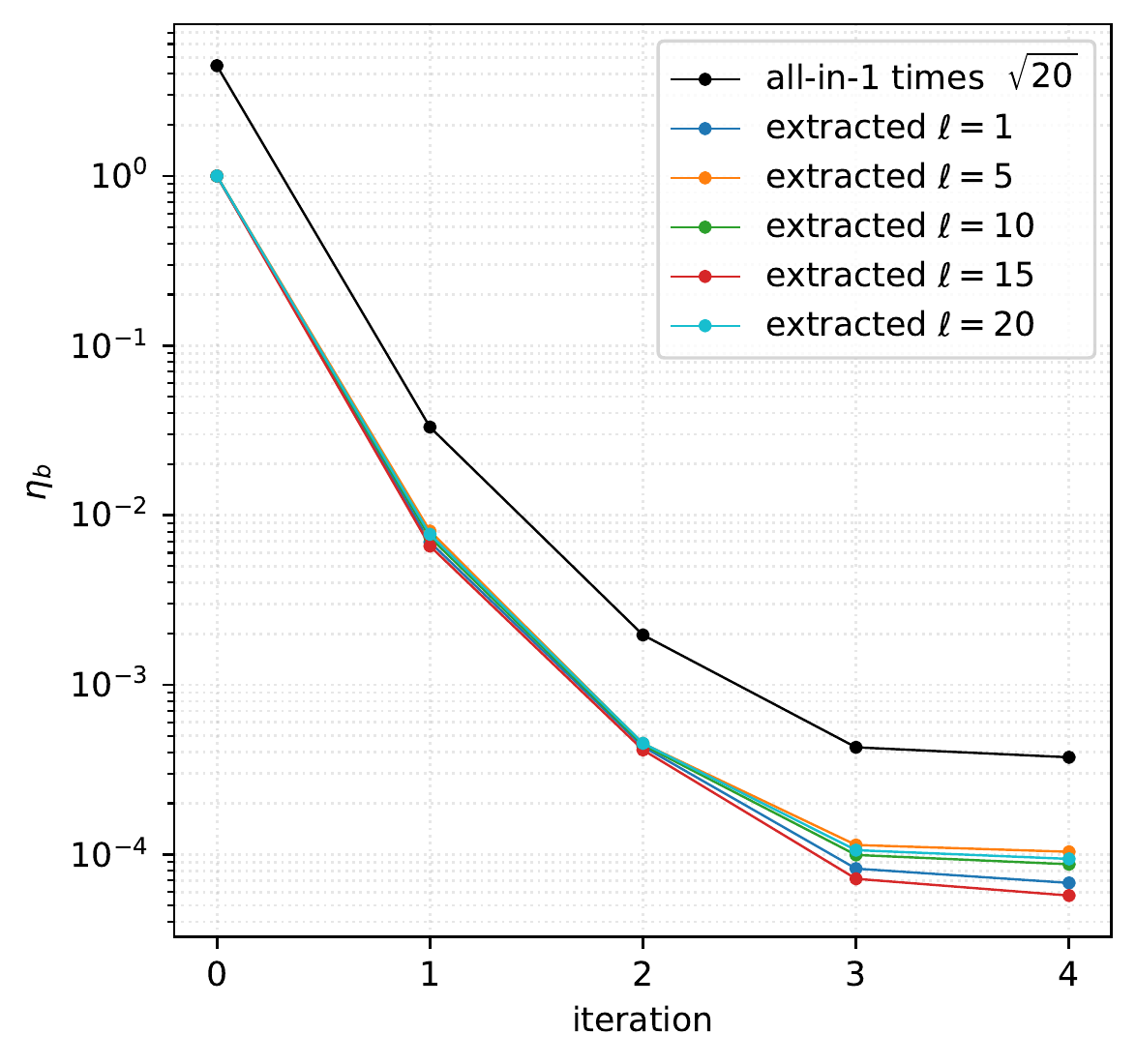}\label{fig:mCD_eta-b_64}}
  		\quad
  		\subfloat[Convergence history in $\eta_{\ten{b}}$ for $n = 127$]{\includegraphics[scale=0.45, width=0.33\linewidth, height=0.33\linewidth]{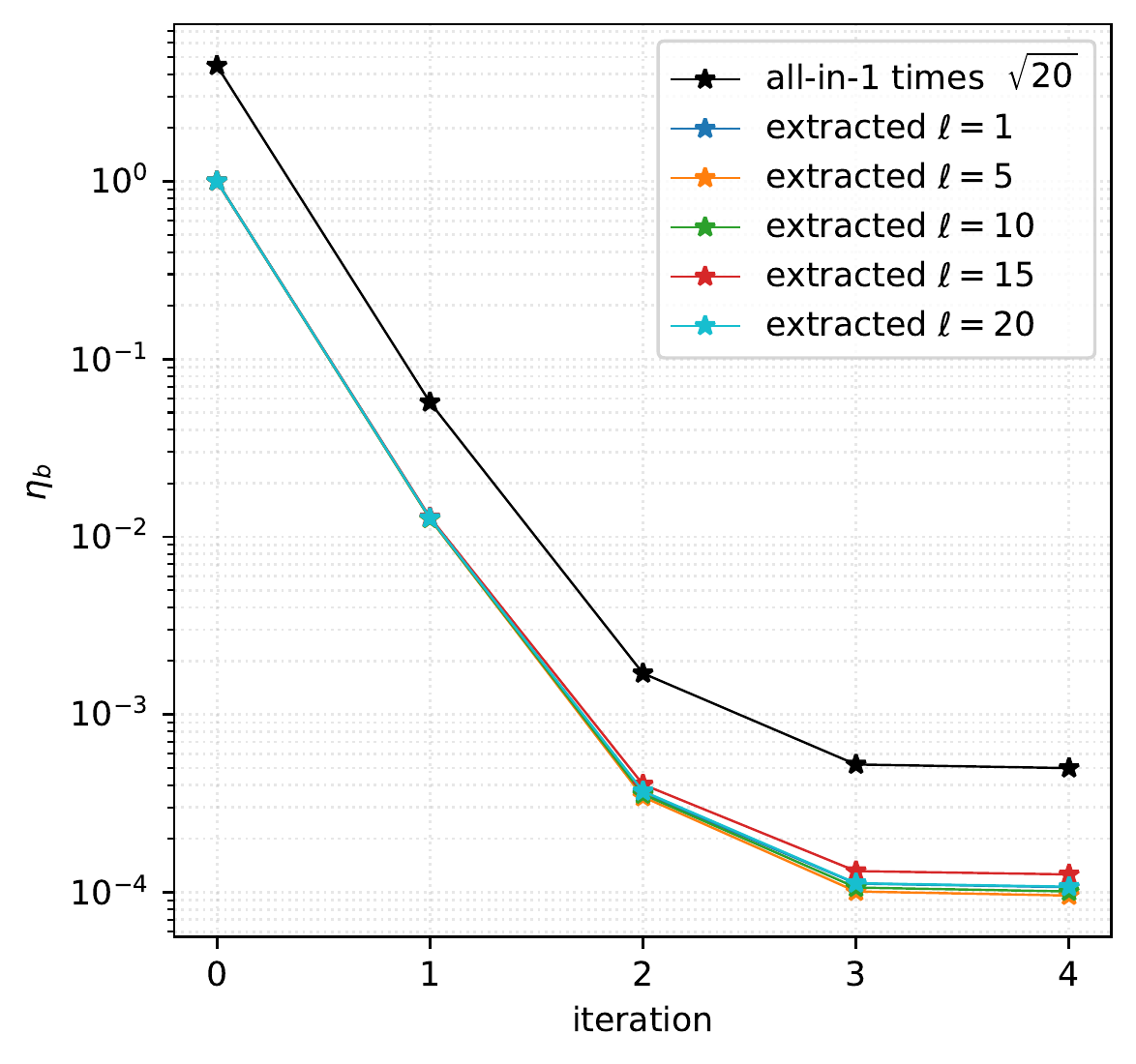}\label{fig:mCD_eta-b_CH_128}}
  		
  		\caption{4-d convection-diffusion problem $\eta_{\ten{b}}$ bound using $\delta = \varepsilon= 10^{-5}$} 
  		\label{fig:mCD_SvsM_eta_b}
  	\end{figure}
  	\begin{figure}[p]
  		\centering
  		\subfloat[Convergence history in $\eta_{\ten{AM}, \ten{b}}$ for $n =~63$]{\includegraphics[scale =0.45, width=0.33\linewidth, height=0.33\linewidth]{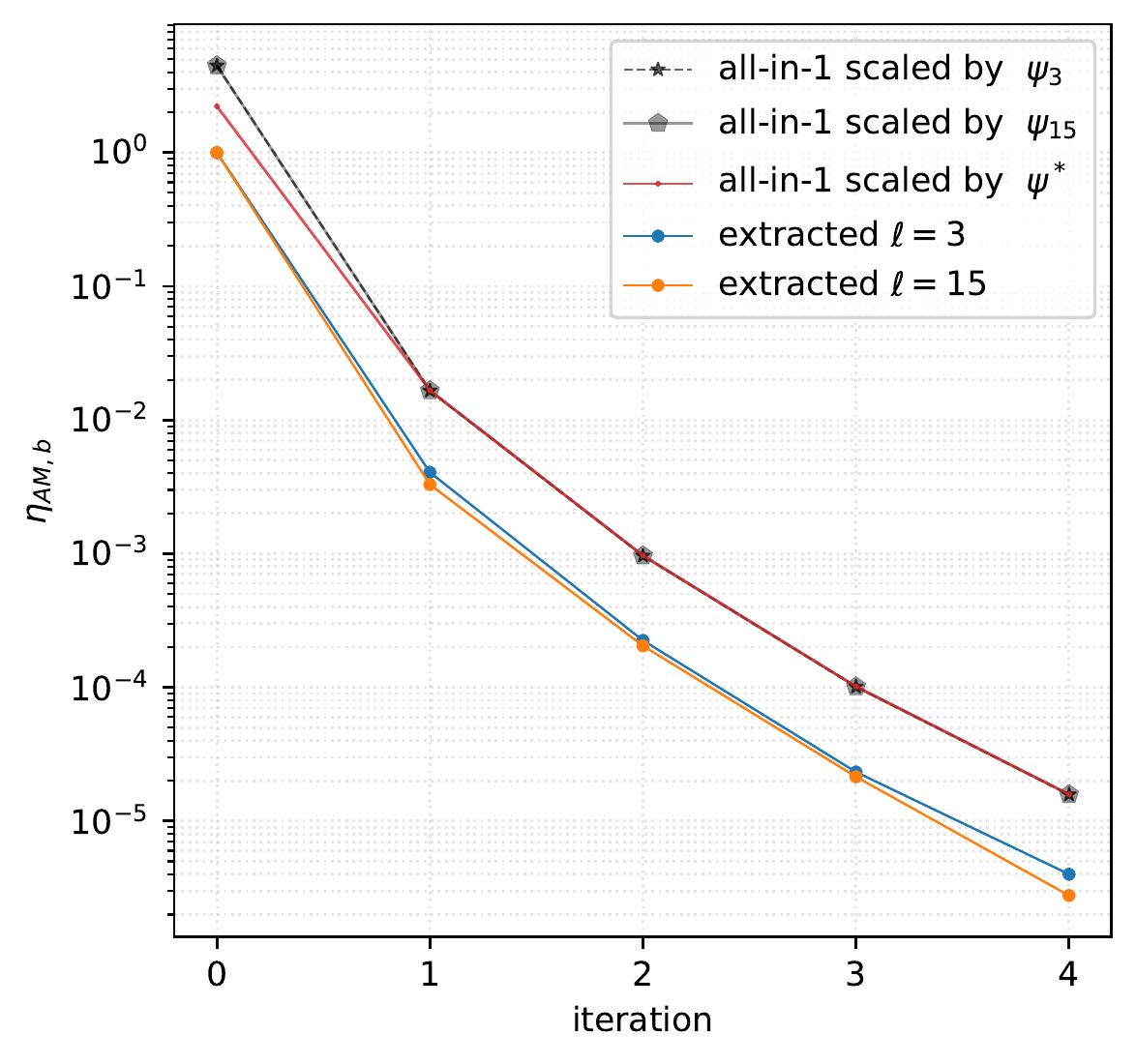}\label{fig:mCD_eta-AMb_CK_64}}
  		\quad
  		\subfloat[Convergence history in $\eta_{\ten{AM},\ten{b}}$ for $n =~127$]{\includegraphics[scale=0.45, width=0.33\linewidth, height=0.33\linewidth]{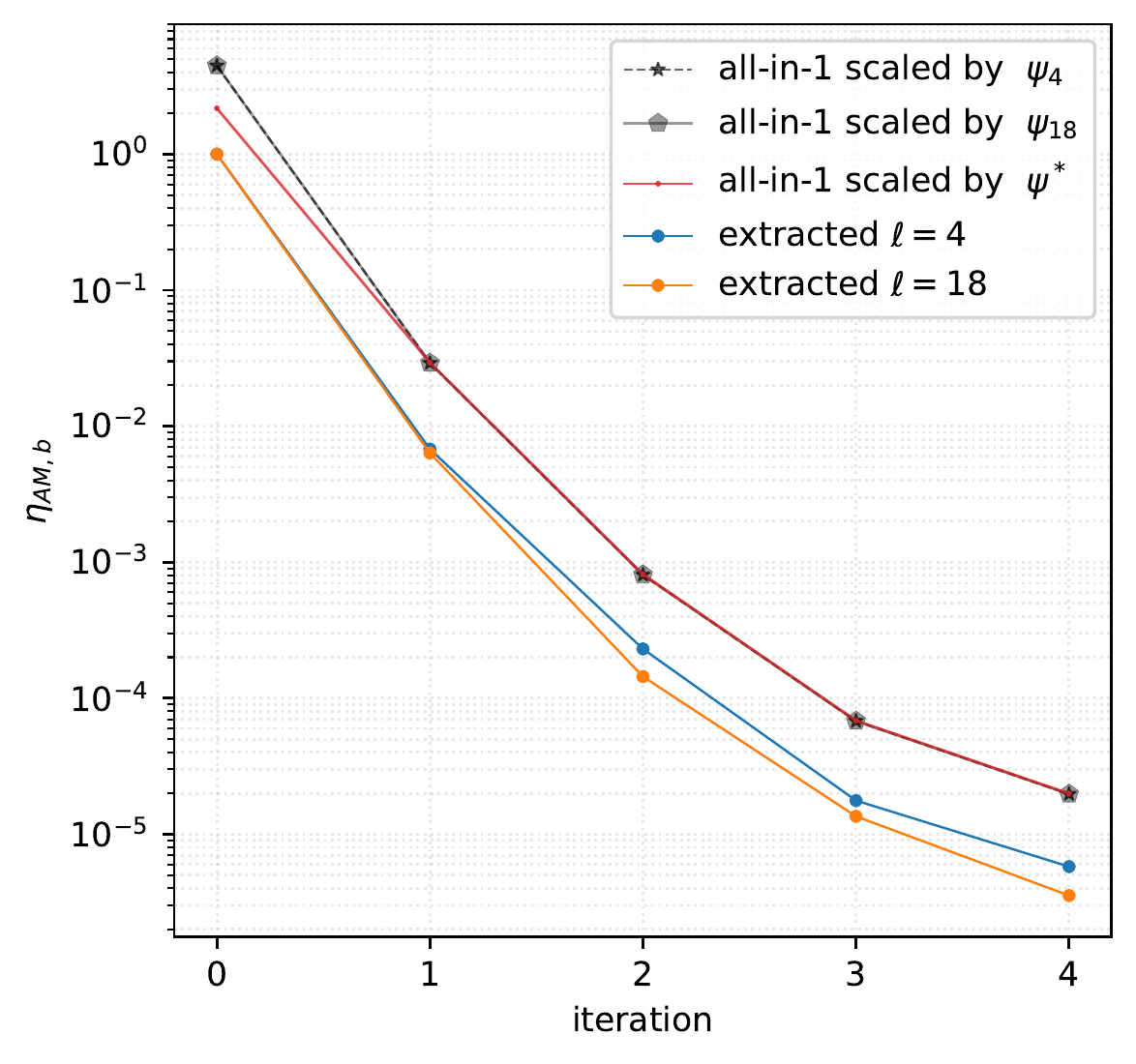}\label{fig:mCD_eta-AMb_CH_128}}
  		
  		\caption{4-d multiple right-hand side convection-diffusion problem $\eta_{\ten{AM},\ten{b}}$ bound using $\delta = \varepsilon= 10^{-5}$} 
  		\label{fig:mCD_SvsM_eta_AM_b}
  	\end{figure}
  
\reportPaper{}{We refer to Appendix~A (Appendix~B) of~\cite{coulaud2022} for further details on the
  the preconditioner choice and the convergence of problems with the same operator and different right-hand sides solved simultaneously respectively.}

%\cleardoublepage

% \par\color{lightgray}
% 	\subsubsection{Time convection-diffusion equation}
% 	What do we want to illustrate
%   \begin{itemize}
%       \item describe how the problem is written in TT format, highlighting the order growth
%       \item different discretization schemes (order 2 for both time and space vs order 1 for time and 2 for space)
%       \item comparison with the single vs multiple solutions (?)
%   \end{itemize}
% 	\subsection{5-d mode example}
% 	{Parametric time depending convection diffusion}
% 	\begin{enumerate}
%  	 	\item numerical concerns: similar to 4-d mode examples
%   	\item computational concerns idem 4-d modes ?
%  \end{enumerate}
%  \begin{itemize}
%       \item describe how the problem is written in TT format, highlighting the order growth
%       \item different discretization schemes (order 2 for both time and space vs order 1 for time and 2 for space)
%       \item comparison with the single vs multiple resolution (?)
%   \end{itemize}
%\normalcolor

\section{Concluding remarks}
%	\color{gray}{
%	\begin{itemize}
%		\item Effect of the nature of the solution (separability)
%   \begin{itemize}
%				\item That will be illustrated on the Poisson equation without preconditioner 
%			\end{itemize}
%		\item Affordable rank growth through efficient preconditioner
%		\item relationship between rounding and target GMRES accuracy
%			\begin{itemize}
%				\item No free lunch: large scale calculation lower accuracy
%			\end{itemize}
%	\end{itemize}
%	
%	}\normalcolor
	In this work we proposed a GMRES algorithm for solving high-{dimensional} liner systems expressed in TT-format and we investigated numerically its backward stability. The examples presented in Section~\ref{sec3} suggest that the backward stability properties observed in the matrix framework still hold true in the tensor context where the recompression (TT-rounding operation) is the dominating part of the computational round-off. Several of these examples enable us to evaluate the tightness of the  proposed backward  bounds, theoretically proved in Section~\ref{sec2}. The existence of these bounds together with the memory requirement illustrate the capabilities of the simultaneous approach when solving parametric tensor linear systems. In Section~\ref{ssb:Dolgov} we highlight the differences between our algorithm and its previously presented implementation~\cite{Dolgov2013}, stressing that our approach guarantees the backward stability property of the computed solution. The proposed TT-GMRES algorithm still carries some intrinsic drawbacks. The memory requirement increases with the number of iterations, making crucial the use of an efficient preconditioner. Therefore the development of effective preconditioner for multilinear operators is a challenging open  question. 
	%Similarly we noticed that the rounding tolerance has to be lower or equal than the GMRES one. Consequently large size problems impose the less accurate computations and a computationally less accurate solution obtained from TT-GMRES.

\section*{Acknowledgement}
Experiments presented in this paper were carried out using the PlaFRIM experimental testbed, supported by Inria, CNRS (LABRI and IMB), Université de Bordeaux, Bordeaux INP and Conseil Régional d’Aquitaine (see https://www.plafrim.fr). 

%\bibliographystyle{plain}
%\bibliography{references}
    \printbibliography
\clearpage
\appendix
\renewcommand{\thesubsection}{\thesection.\Alph{subsection}}
\renewcommand\thetheorem{\thesubsection.\arabic{theorem}}
\section*{Appendices}
\addcontentsline{toc}{section}{Appendices}
\renewcommand{\thesubsection}{\Alph{subsection}}
\subsection{Preconditioner parameter study} \label{app:prec}
In this first appendix the preconditioner firstly introduced in~\eqref{eqTT:5} is further investigated. In particular we focus on the effect on the convergence of the number of addends and on the compression accuracy chosen to compute it. 
As in Subsection~\ref{ssec2:TT}, let $\ten{M}\in\R^{(n\times n)\times \dots \times (n\times n)}$ be the  $d$-order {TT-matrix} that approximate the inverse of the discrete Laplacian $\ten{\Delta}_d$, cf.~\cite{Hackbusch2006I,Hackbusch2006II}, defined as
\[
\ten{M} = \sum_{k=-q}^{q} c_k \exp(-t_k\Delta_1)\otimes \cdots\otimes \exp(-t_k\Delta_1)
\]
where $c_k = \eta t_k$, $t_k = \exp(k\eta)$ and $\eta = \pi/q$. As already mentioned, since $\ten{M}$ is a sum of tensors, its TT-rank is greater or equal than $2q+1$. The magnitude of TT-ranks of $\ten{M}$  conditions the TT-ranks of the Krylov basis vectors and of the final solution. Therefore it is convenient to keep $\ten{M}$ TT-rank significantly small, either by reducing the number of addends, i.e., choosing a low value for $q$, or by compressing $\ten{M}$ with an accuracy $\tau$.  
With the help of a Poisson problem, we illustrate the trade off between number of addends and compression accuracy which leads to the optimal convergence.
We consider the Poisson problem, written as
	\[
		\begin{cases}
		-\Delta u = 1 \quad&\text{in}\quad \Omega = [0, 1]^3 \\
		\;\;\; u = 0 \quad&\text{in}\quad \partial \Omega
		\end{cases}
	\]
 so that the effect of the preconditioner is as much evident as possible. Let $-\ten{\Delta}_3$ be the discretization of the Laplacian operator over a grid of $n = 63$ points per mode. Similarly let $\ten{b}\in\R^{n\times n\times n}$ be a tensor with all the entries equal to $1$. Then, setting $\ten{A} = -\ten{\Delta}_3$, TT-GMRES solves the tensor linear system $\ten{A}\ten{u} = \ten{b}$, preconditioning it on the right as
 \[
 	\ten{A}\ten{M}_{q, \tau}\ten{t} = \ten{b}
 \]
 for $q\in\{2, 8, 16, 32, 64\}$ and $\tau\in\{10^{-2}, 10^{-8}\}$. The parameters of TT-GMRES are tolerance $\varepsilon = 10^{-16}$, rounding accuracy $\delta = 10^{-5}$, dimension of the Krylov space $m = 25$ and a maximum of $2$ restart. We set TT-GMRES tolerance equal to the machine precision so that the algorithm performs all the $50$ iterations. In Table~\ref{tab:1}, we report the maximal TT-rank of $\ten{M}_{q, \tau}$ and the $\norm{\ten{A}\ten{M}_{q, \tau}}_2$ rounded at the third digits for all the combinations of $q$ and $\tau$. Remark that fixed a value for $q$ the L2 norm of the preconditioned linear system is the same up to the third digits for both the values of $\tau$. This seems to suggest that the number of addends plays a key role in determining the quality of the preconditioner, while the rounding accuracy $\tau$ affects more significantly the TT-rank, removing kind of unnecessary information. Indeed for $\tau = 10^{-2}$ and $q \ge 8$, the maximal value of the TT-rank is always $5$, but depending for an increasing number of addends, the L2 norm gets closer to $1$. Similarly for $\tau = 10^{-8}$ and $q\ge 32$, the maximal TT-rank is $15$ and the rounded L2 norm is equal to $1$. 
 \begin{table}[H]
 	%\hspace{-30pt}
 	\centering
 	\scalebox{0.85}{
 	\begin{tabular}{ll ccccc ccc ccccc}
 		\toprule
 		&& \multicolumn{5}{c}{$\mathbf{\tau = 10^{-2}}$} &&&& \multicolumn{5}{c}{{$\mathbf{\tau = 10^{-8}}$}} \\
 		\cmidrule(lr){3-7} \cmidrule(lr){10-15}
 		\thead{$\mathit{q}$}   && \thead{$\mathbf{2}$} & \thead{$\mathbf{8}$} & \thead{$\mathbf{16}$} & \thead{$\mathbf{32}$} & \thead{$\mathbf{64}$} &&&& \thead{$\mathbf{2}$} & \thead{$\mathbf{8}$} & \thead{$\mathbf{16}$} & \thead{$\mathbf{32}$} & \thead{$\mathbf{64}$} \\
 		\cmidrule(lr){3-7} \cmidrule(lr){10-15}
 		Max TT-rank of $\ten{M}$&& $2$ & $5$ & $5$ & $5$ & $5$ &&&& $2$ & $7$ & $13$ & $15$ & $15$\\
 		L2 norm of $\ten{A}\ten{M}_{q, \tau}$&& $0.012$&
 		$0.276$&
 		$0.949$&
 		$1.00$&
 		$1.00$
 		 &&&& $0.012$&
 		 $0.276$&
 		 $0.949$&
 		 $1.00$&
 		 $1.00$\\ 
 		\bottomrule
 	\end{tabular}}
 	\caption{Preconditioner properties for grid step $n = 63$.}
 	\label{tab:1}
 \end{table}

Looking at the convergence history in Figures~\ref{fig:LapP_CH_1e-2} and~\ref{fig:LapP_CH_1e-8}, a value of $q \ge 16$ is already sufficient to reach in a very low number of iterations the bound $10^{-5}$, due to the TT-GMRES rounding value $\delta$. Figure~\ref{fig:LapP_CH_1e-8} shows clearly that keeping more information in the preconditioner, TT-GMRES may reach very low levels. However in Figure~\ref{fig:LapP_v-rank_1e-8} we observe the side effect of more information. The TT-rank of the last Krylov vector increases significantly for very accurate preconditioner. More precisely, comparing Figures~\ref{fig:LapP_v-rank_1e-2} and~\ref{fig:LapP_v-rank_1e-8}, the TT-rank of the last Krylov vector doubles if the preconditioner is more accurately rounded. Notice also that in Figure~\ref{fig:LapP_v-rank_1e-2}, the TT-rank for $q\in\{16, 32, 64\}$ is almost the same. For the solution viewpoint, the rounding accuracy chosen for the preconditioner has not a big impact on its TT-rank. Indeed, as plotted in Figure~\ref{fig:LapP_x-rank_1e-2} and~\ref{fig:LapP_x-rank_1e-8}, for both the values of $\tau$ and for all $q \ge 8$, the TT-rank of the solution is equal to $5$, while only for $q = 2$ it increases, meaning that only $5$ addends are not sufficient to speed up the discrete Laplacian convergence. 

\begin{figure}[!htb]
	\centering
	\subfloat[\footnotesize Convergence history for $\tau = 10^{-2}$ and rounding $\delta = 10^{-5}$]{\includegraphics[scale=0.45, width=0.4\linewidth, height=0.4\linewidth]{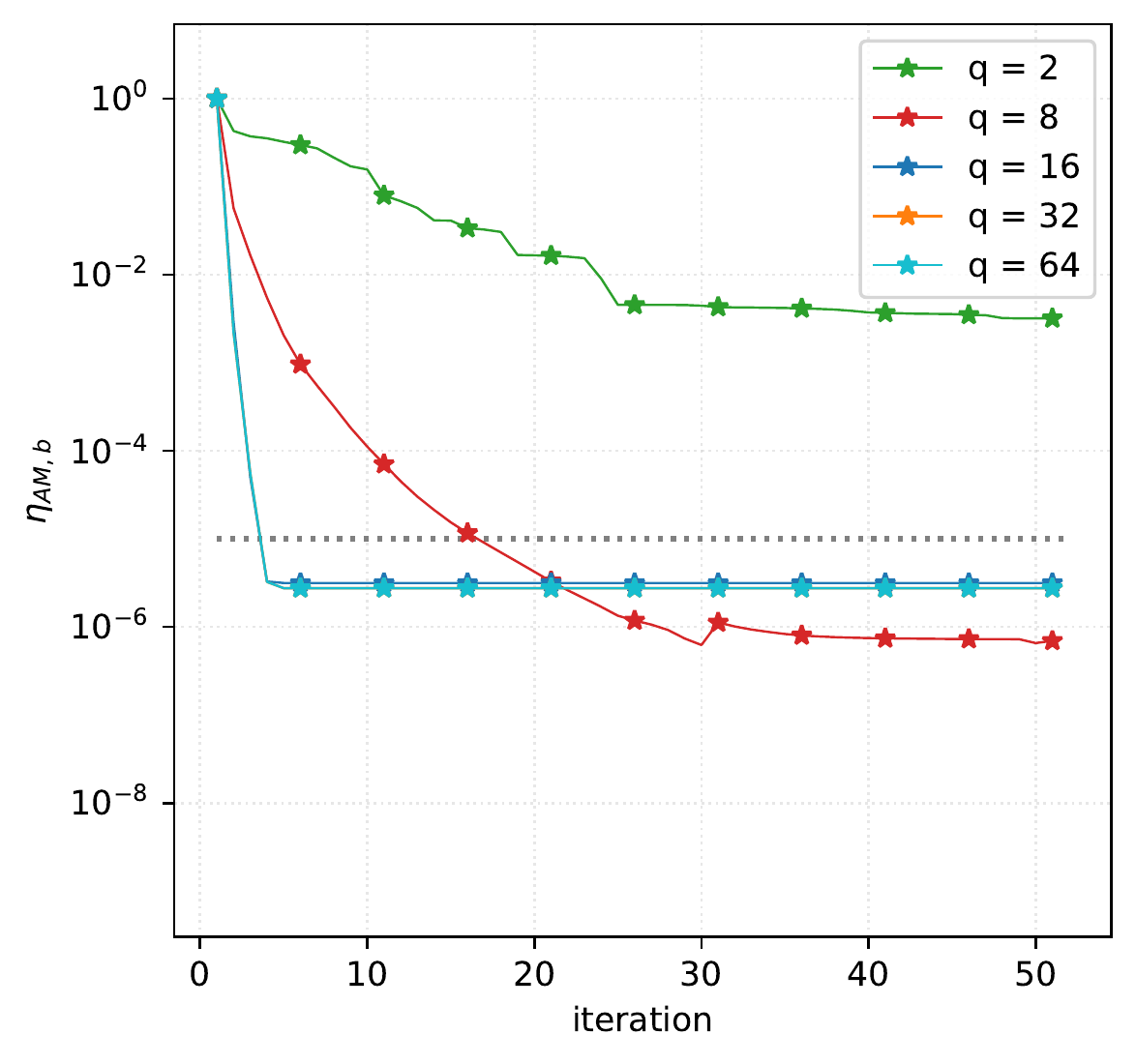}\label{fig:LapP_CH_1e-2}}
	\quad
	\subfloat[\footnotesize Convergence history for $\tau = 10^{-8}$ and rounding $\delta = 10^{-5}$]{\includegraphics[scale=0.45, width=0.4\linewidth, height=0.4\linewidth]{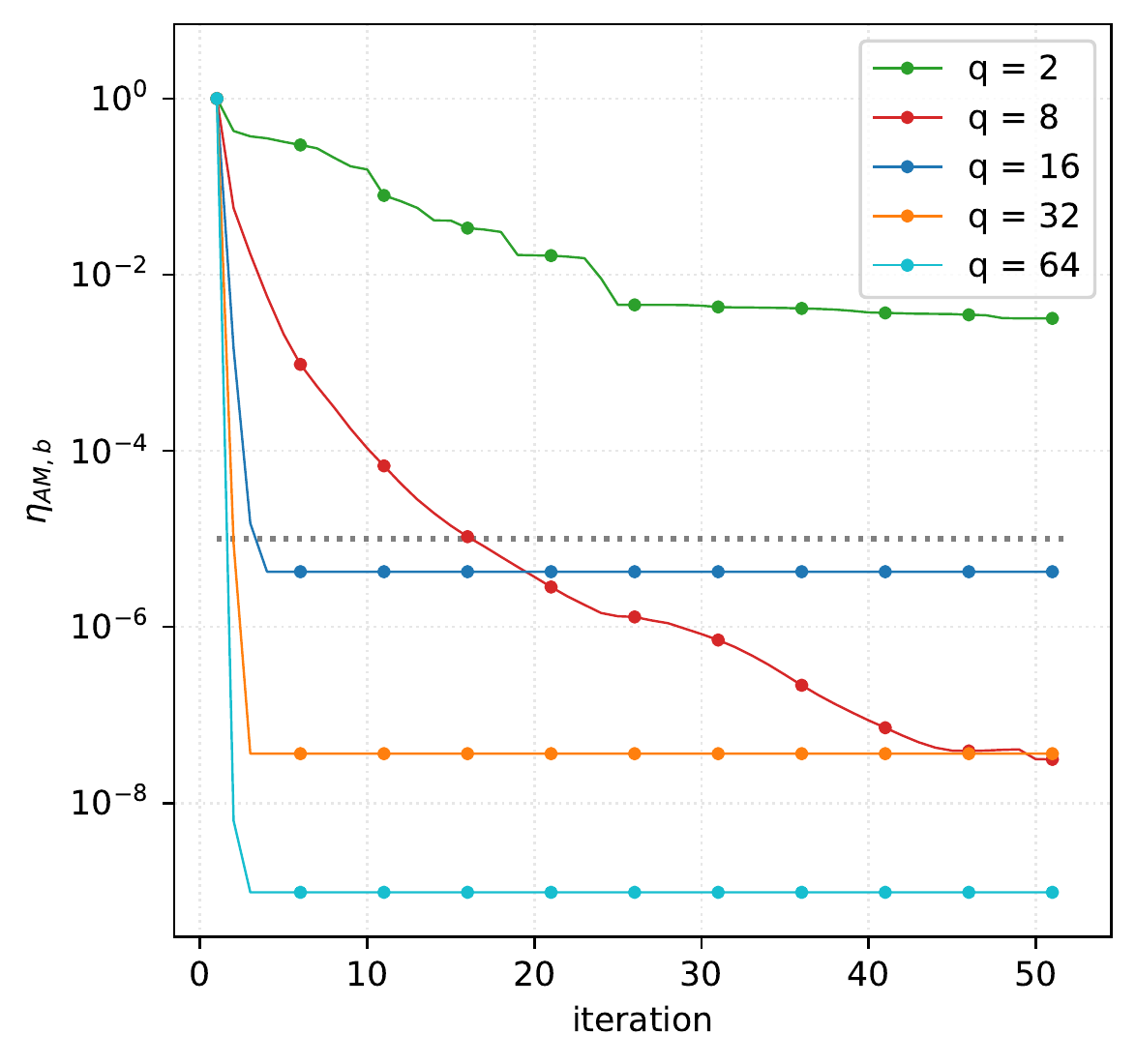}\label{fig:LapP_CH_1e-8}}
	\caption{3-d Poisson problem, comparing preconditioners}
\end{figure}

\begin{figure}[!htb]
		\centering
		\ContinuedFloat
		\subfloat[\footnotesize Maximal TT-rank of the last Krylov vector for $\tau = 10^{-2}$]{\includegraphics[scale=0.45, width=0.4\linewidth, height=0.4\linewidth]{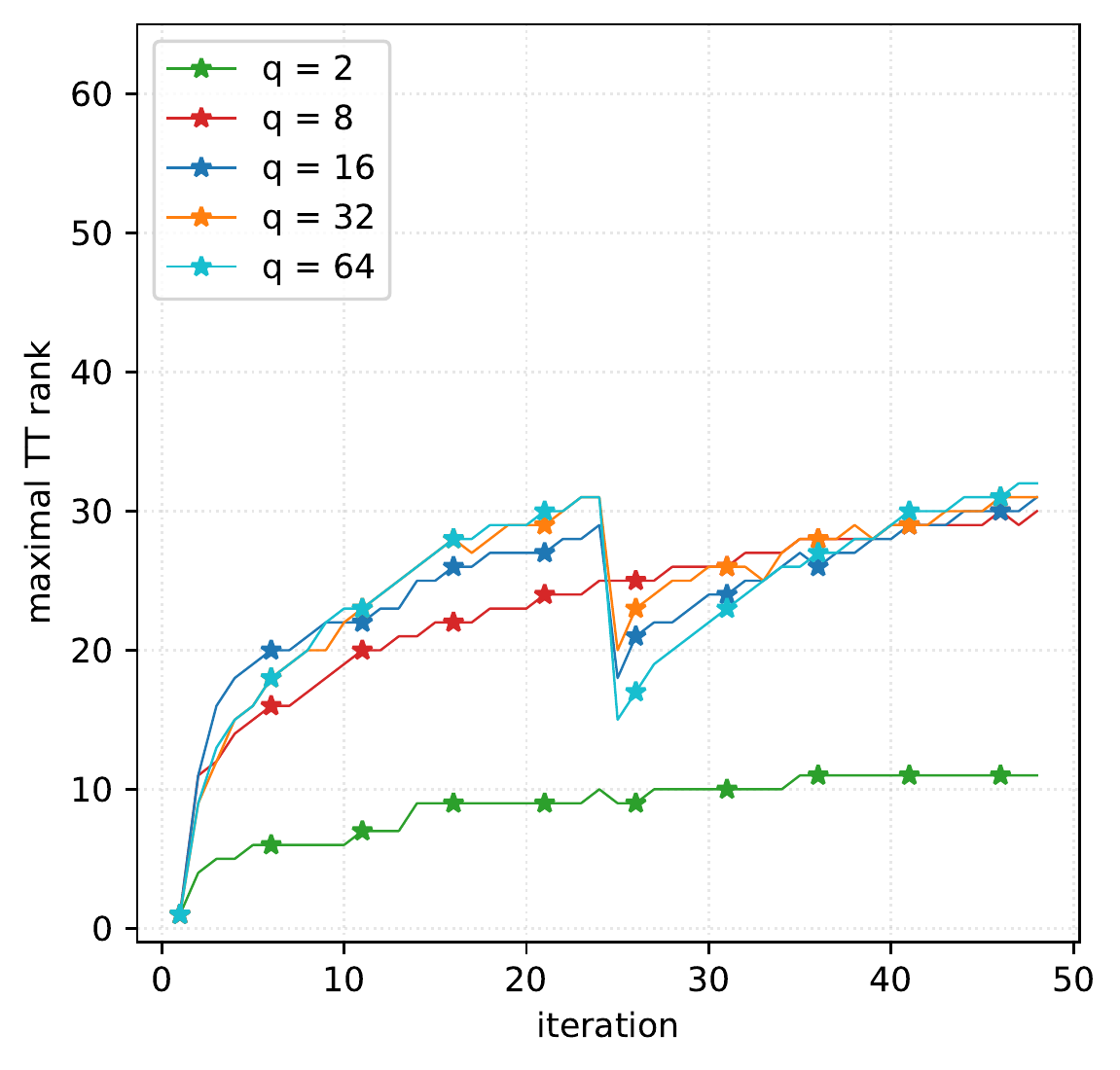}\label{fig:LapP_v-rank_1e-2}}
		\quad
		\subfloat[\footnotesize Maximal TT-rank of the last Krylov vector for $\tau = 10^{-8}$]{\includegraphics[scale=0.45, width=0.4\linewidth, height=0.4\linewidth]{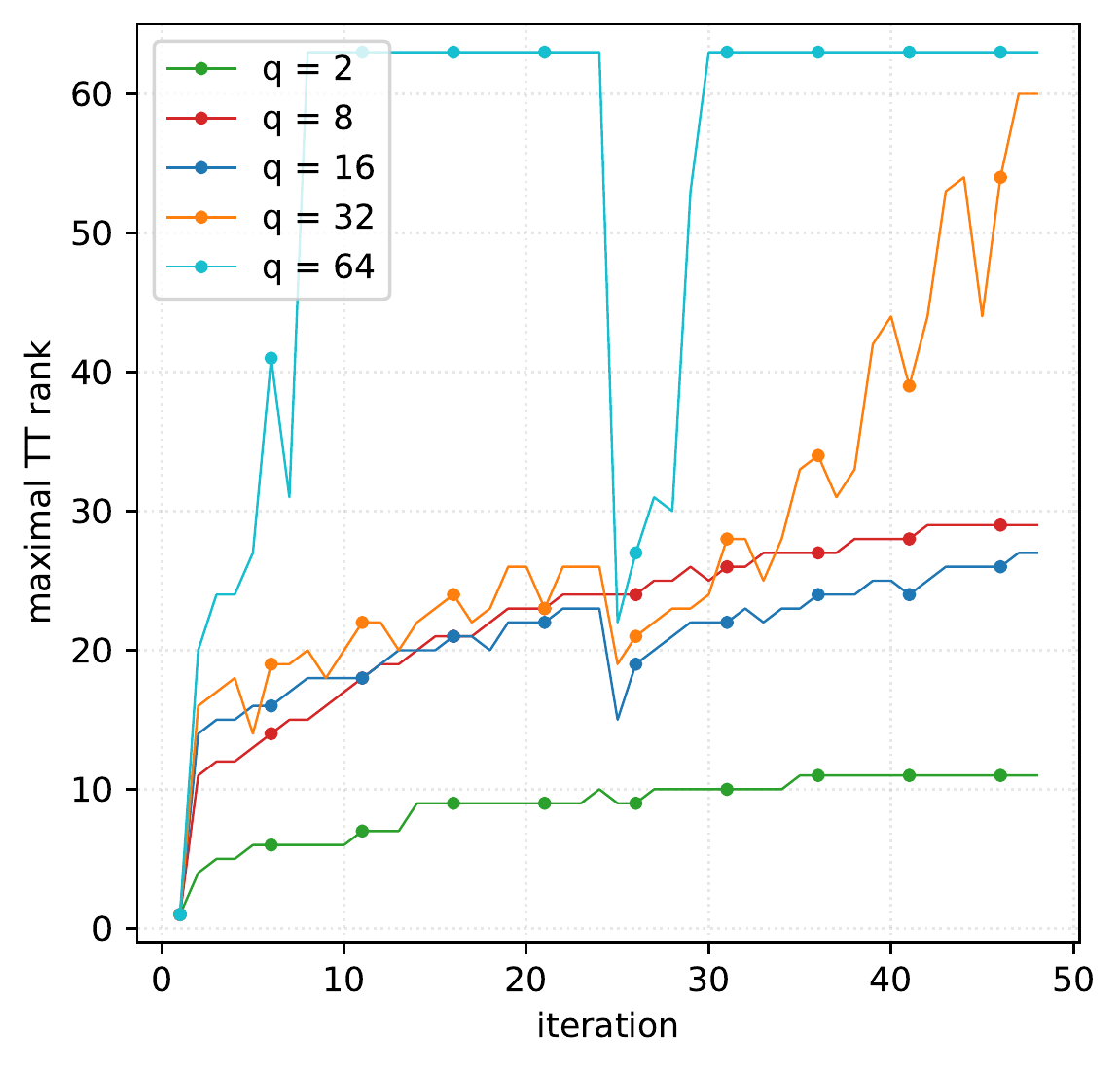}\label{fig:LapP_v-rank_1e-8}}
		\caption{3-d Poisson problem, comparing preconditioners}
\end{figure}

\begin{figure}[!htb]	
		\centering
		\ContinuedFloat
	\subfloat[\footnotesize{Maximal TT-rank of the iterative solution for $\tau = 10^{-2}$}]{\includegraphics[scale=0.45, width=0.4\linewidth, height=0.4\linewidth]{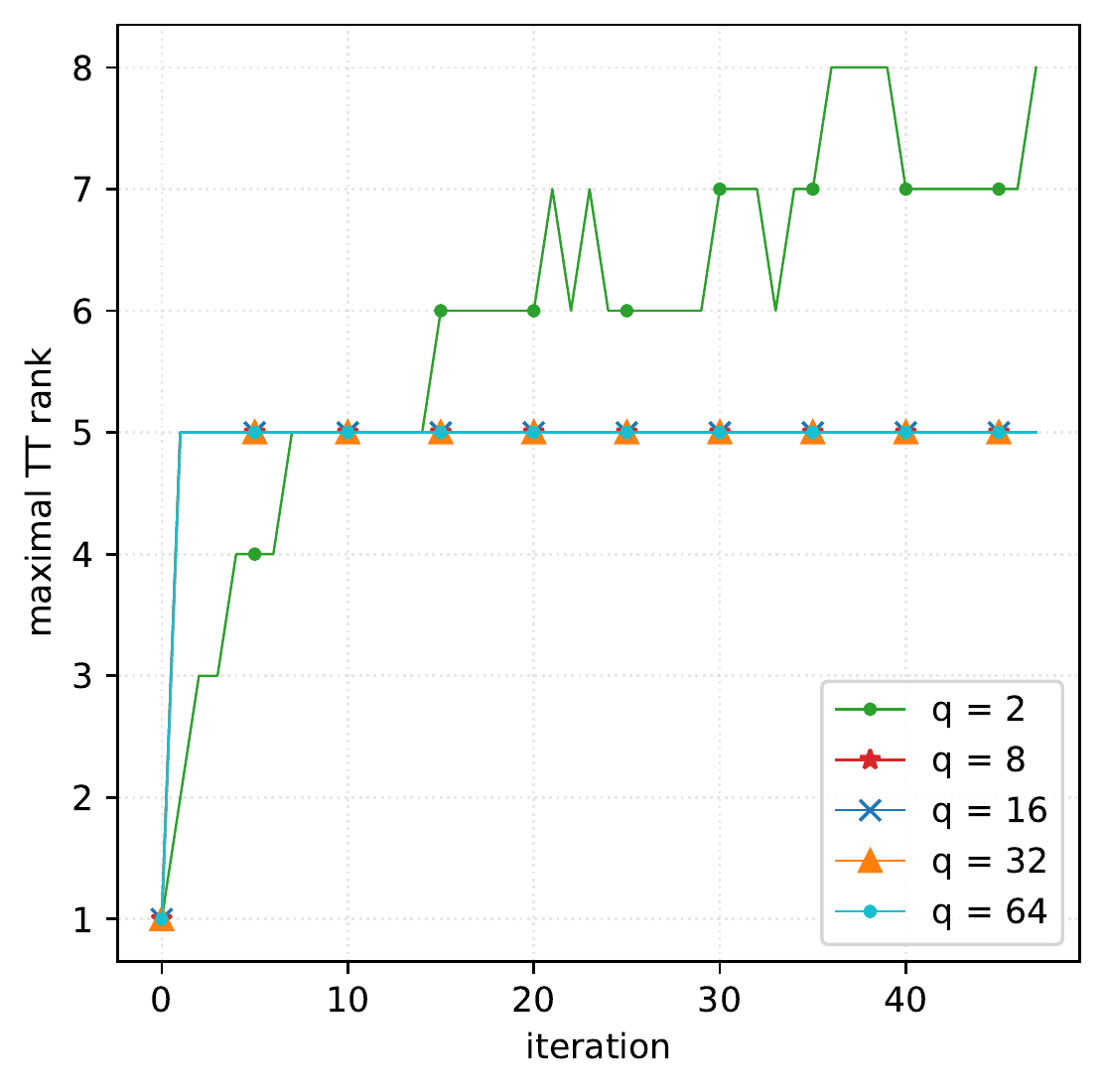}\label{fig:LapP_x-rank_1e-2}}
	\quad
	\subfloat[\footnotesize{Maximal TT-rank of the iterative solution for $\tau = 10^{-8}$}]{\includegraphics[scale=0.45, width=0.4\linewidth, height=0.4\linewidth]{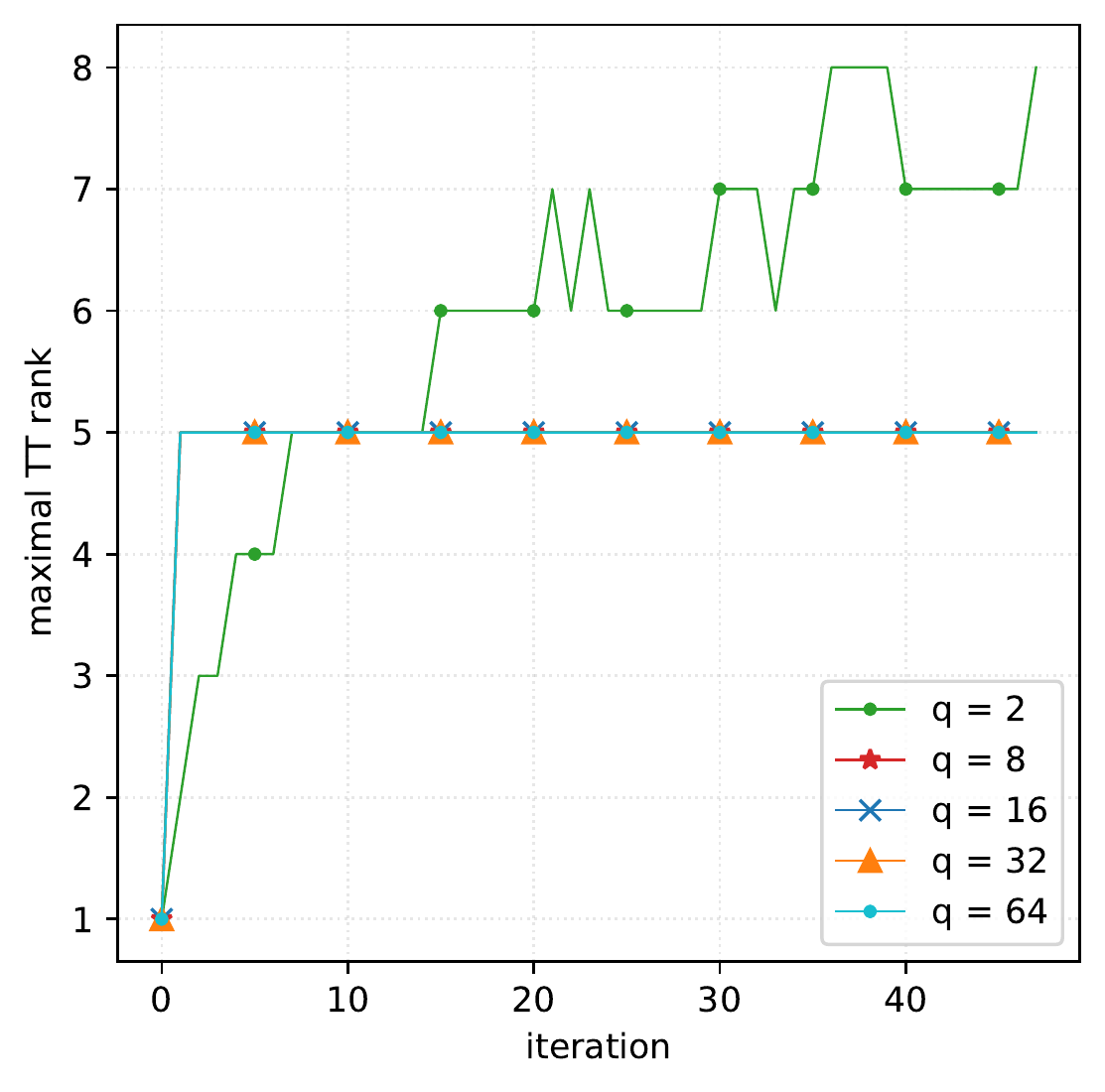}\label{fig:LapP_x-rank_1e-8}}
	\caption{3-d Poisson problem, comparing preconditioners}
	\label{fig:LapP}
\end{figure}
\clearpage
\subsection{Multiple right-hand sides: a focus on eigenvectors\label{app:eig}}
In this appendix we study further the convergence of a multiple right-hand side problem. Indeed comparing the convergence history of the convection-diffusion problem, see Subsection~\ref{sssec:CD} and of the multiple right-hand side convection-diffusion problem discussed in~\ref{sec4:mCD}, notice that the number of iterations necessary to converge is equal, $5$ in both cases. This appendix explains the causes of the phenomenon. 
\par As we already explained, given a (tensor) linear system $\ten{A}\ten{x} = \ten{b}$ and the null tensor as initial guess, at the $k$-th iteration GMRES minimizes the norm of the residual $\ten{r}_{k} = \ten{A}\ten{x} - \ten{b}$ on the Krylov space of dimension $k$ defined as
\[
	\mathcal{K}_k(\ten{A}, \ten{b}) =\text{span}\bigl\{\ten{b}, \ten{A}\ten{b}, \dots, \ten{A}^{k-1}\ten{b}\bigr\}
\]
where $\ten{A}^{h}$ is obtained from $h$ contractions over the indexes $(1,3,\dots, 2d-1, 2d+1)$ of tensor operator $\ten{A}$.
If $\ten{b}$ is equal to $\ten{e}_i$ an eigenvector of the tensor operator $\ten{A}$, then the Krylov space writes
\begin{equation}\nonumber
	\begin{split}
		\mathcal{K}_k(\ten{A}, \ten{b}) &=\text{span}\bigl\{\ten{b}, \ten{A}\ten{b}, \dots, \ten{A}^{k-1}\ten{b}\bigr\}\\
		&=\text{span}\bigl\{\ten{e}_i, \lambda_i\ten{e}_i, \dots, \lambda^{k-1}_i\ten{e}_i\bigr\}\\
		&=\text{span}\bigl\{\ten{e}_i\}
	\end{split}
\end{equation}
where $\lambda_i$ is the $i$-th eigenvalue of $\ten{A}$. The Krylov space dimension is equal to $1$, i.e., the number of eigenvector $\ten{e}_i$ necessary to express the right-hand side. Theoretically the number of iterations necessary to converge, i.e., the dimension of the Krylov space where the exact solution lives, is By linearity equal to the number of eigenvector necessary to express the right-hand side as their linear combination. In the problems presented in~\ref{sec4:mLap} and~\ref{sec4:mCD}, we add a random generated tensor to the chosen right-hand side. Comparing the results a single right-hand side and multiple ones for the convection-diffusion problem, we may conclude that the introduced error has not increased the number of eigenvectors, for the tolerance chosen. 
\par Let now consider a more peculiar problem with two right-hand side, living in subspaces generated by different eigenvector. More  in details one right-hand side belongs to the subspace generated by a single eigenvector, while the other to the subspace generated by $j$ different eigenvector. Thanks to our previous argument, theoretically the two systems converge independently with a different number of iterations, one for the first and $j$ for the second. When we solve the two systems together, we expect the ``all-in-one'' system to converge as the slowest one, i.e., as the slowest converging one.
 Let $\ten{e_1}, \dots, \ten{e}_{j+1}$ be the first $j+1$ different eigenvectors of the $3$-dimensional discrete Laplacian $-\ten{\Delta}_3$. We consider the two following linear systems
\begin{align}
	-\ten{\Delta}_3 \ten{y}_1 &= \ten{e}_1\label{eq:mEig1}\\
	-\ten{\Delta}_3 \ten{y}_2 &= \sum_{\ell = 2}^{j+1}\ten{e}_\ell\label{eq:mEig2}. 
\end{align}
As described in Subesection~\ref{ssec:Ai1}, we define the ``all-in-one'' linear system with the `diagonal' tensor operator $\ten{A}$ and the ``all-in-one'' right-hand side $\ten{b}$. TT-GMRES is used to solve this ``all-in-one'' system for $j=10$, for a grid step {dimension} equal to $n\in\{63, 127, 255\}$, without preconditioner, with tolerance $\varepsilon$ and rounding accuracy $\delta$ equal to $10^{-5}$, no restart and a maximum of $50$ iterations. Figure~\ref{fig:meig_CH} shows the convergence history of the problem with the three different grid {dimensions}. Since there is no preconditioner the convergence is kind of slow, if compared with Figure~\ref{fig:mLap_CH}. At the same time, the TT-ranks growth not too quickly, because both there is not a random generated error and there are just two right-hand side.  
\begin{figure}[!htb]
	\centering
	\subfloat[Convergence history]{\includegraphics[scale = 0.45, width=0.33\linewidth, height=0.33\linewidth]{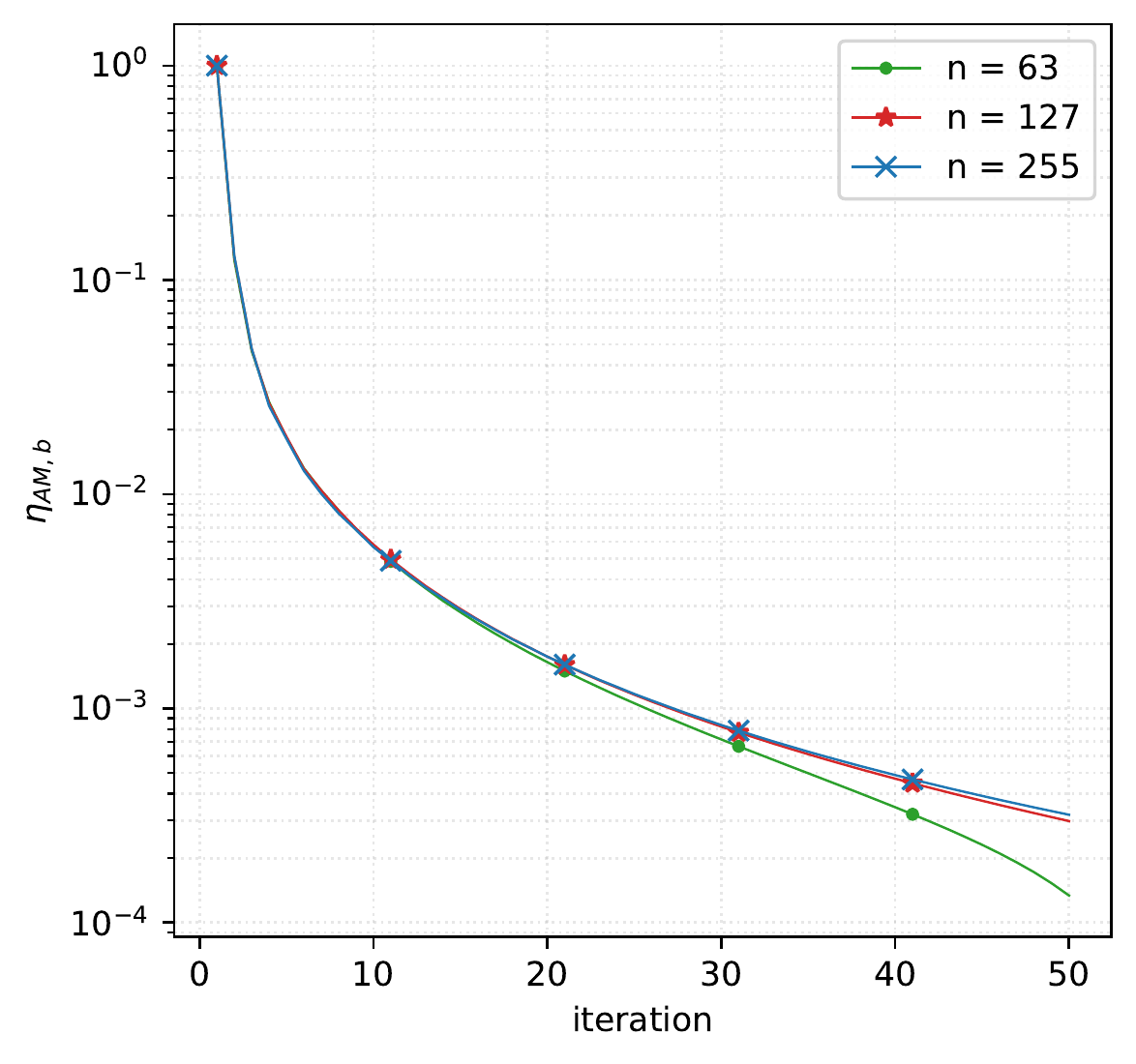}\label{fig:meig_CH}}
	\quad
	\subfloat[Maximal TT-rank of the last Krylov vector]{\includegraphics[scale=0.45, width=0.33\linewidth, height=0.33\linewidth]{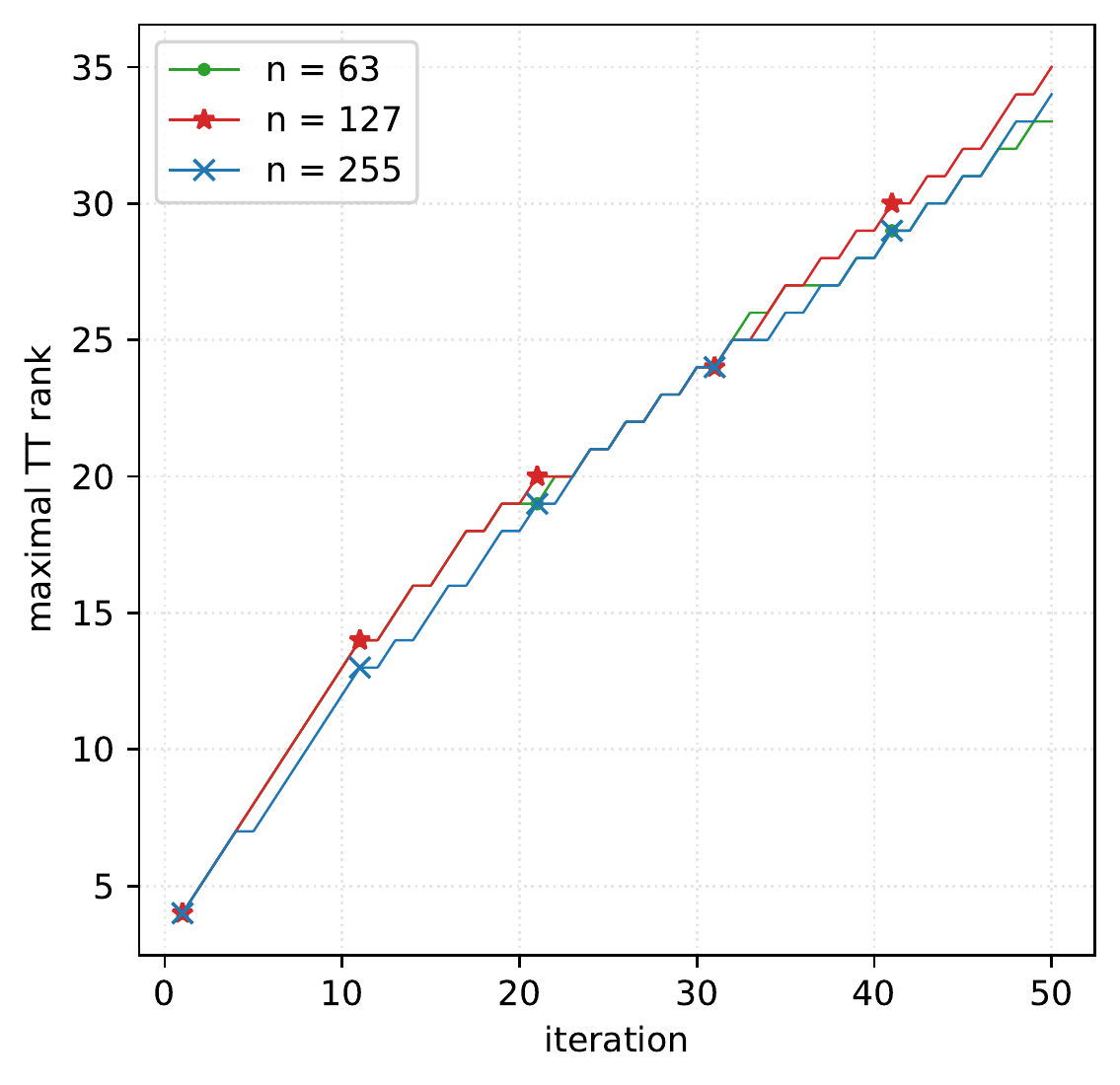}\label{fig:meig_v-rank}}
	\caption{4-d multiple right-hand side problem with eigenvector} 
	\label{fig:mEig}
\end{figure}
The residual curve associated with Equation~\ref{eq:mEig1} in blue is almost flatten for all the grid {dimensions}, suggesting that GMRES almost immediately minimized it completely. On the other side the orange curve of problem~\eqref{eq:mEig2} residual decreases slowly, meaning that minimizing its norm requires more steps.  Lastly in all the plots the ``all-in-one'' residual curve follows exactly the eigenvector sum problem residual, confirming that the general convergence of the ``all-in-one'' system is decided by the slowest converging system, that is our intuition, confirmed in for all the grid {dimensions} in Figure~\ref{fig:mEig_res}. As last remark the ``all-in-one'' doesn't converge in $10$ iterations because of the rounding effect, which slows down the convergence.

\begin{figure}[!htb]
	\centering
	\subfloat[Residual norm for $n = 63$]{\includegraphics[scale =0.3, width=0.31\linewidth, height=0.31\linewidth]{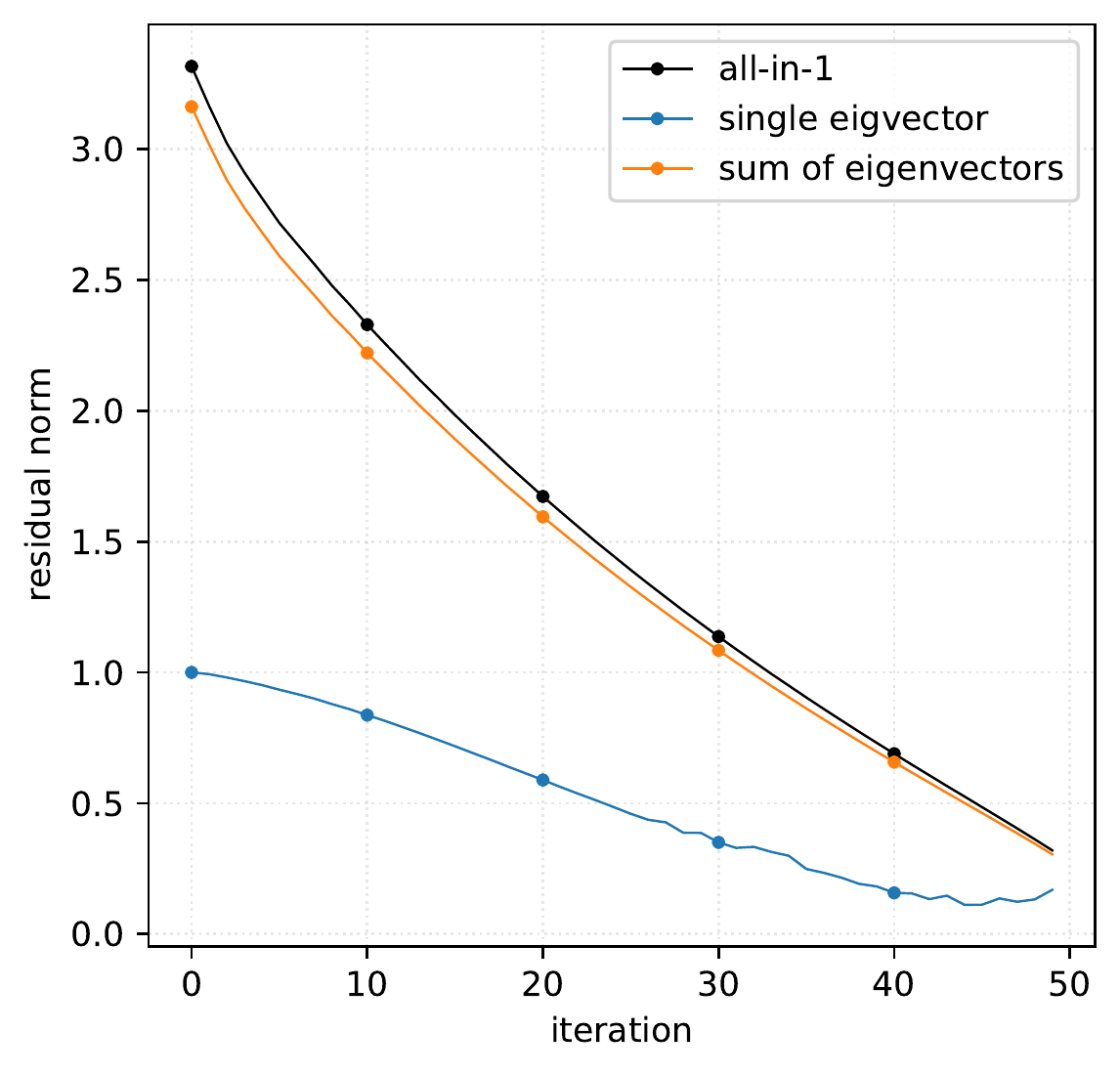}\label{fig:meig_res_64}}
	\quad
	\subfloat[Residual norm for $n = 127$]{\includegraphics[scale =0.3, width=0.31\linewidth, height=0.31\linewidth]{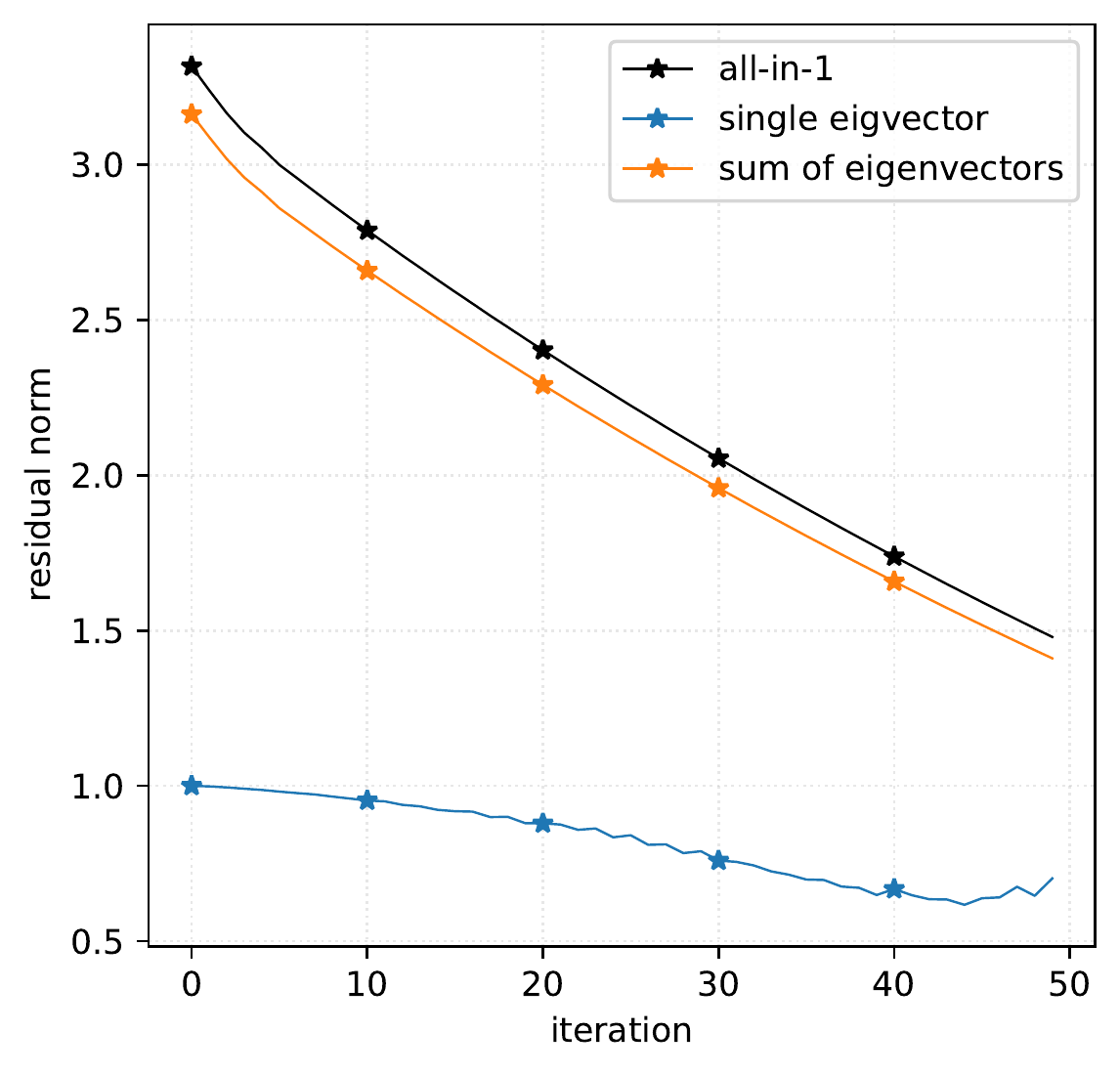}\label{fig:meig_res_128}}
	\quad
	\subfloat[Residual norm for $n = 255$]{\includegraphics[scale =0.3, width=0.31\linewidth, height=0.31\linewidth]{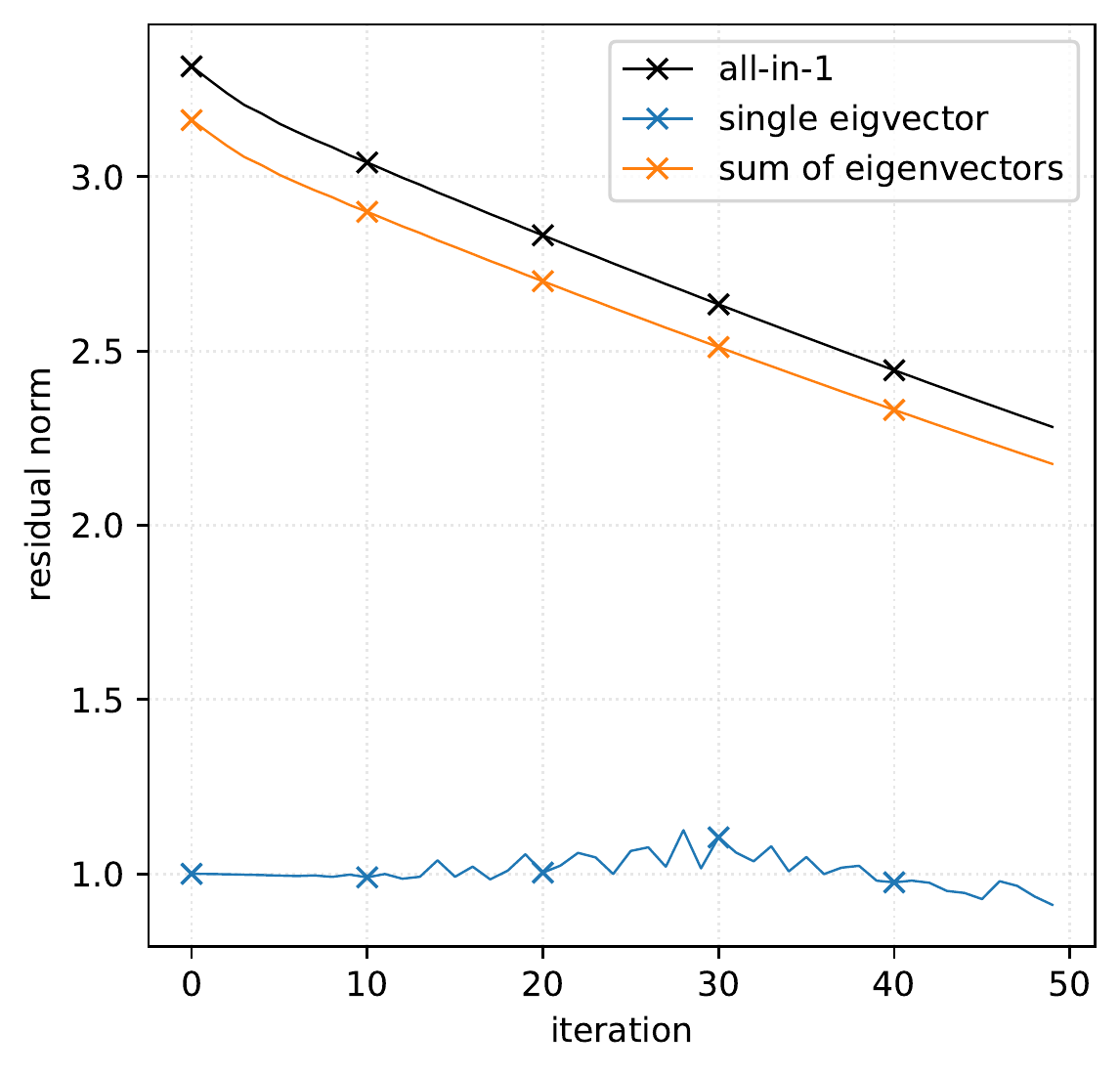}\label{fig:meig_res_256}}
	\caption{4-d multiple right-hand side problem, residual comparison} 
	\label{fig:mEig_res}
\end{figure}

\subsection{{Further details on the ``all-in-one'' system}}  \label{app:tt}
This appendix describes in details the construction in TT-format of the ``all-in-one'' system. As conclusion we provides a corollary to Proposition~\ref{prop:eta_Ab_m}. 

\par As previously stated, given a tensor $\ten{a}\in\R^{n_1\times \cdots \times n_d}$ in TT-format with TT-cores $\coreTT{a}_k\in\R^{r_{k-1}\times n_k\times r_k}$, we denote by $\ten{a}^{[k, i_k]}$ the $i_k$-th slice with respect to mode $k$, which in TT-format writes as
		%	=======
		%	\par As previously stated, given a tensor $\ten{a}\in\R^{n_1\times \cdots \times n_d}$ in TT-format with TT-cores $\core{\ten{a}}{k}\in\R^{r_{k-1}\times n_k\times r_k}$, we denote by $\ten{a}^{(k, i_k)}$ the $i_k$-th slice with respect to mode $k$, which in TT-format writes as
		%	>>>>>>> 0b3a490ec425af4949d0f0dd98e660a79ad7fb84
	\[
		\ten{a}^{[k, i_k]} = \coreTT{a}_1\cdots\coreTT{a}_{k-1}\matcoreTT{A}_k(i_k)\coreTT{a}_{k+1}\cdots \coreTT{a}_d
	\]
with $\matcoreTT{A}_k(i_k)\in\R^{r_{k-1}\times r_k}$. Since henceforth we will take slice only with respect to the first mode, instead of writing $\ten{a}^{[1, i_1]}$ for the $i_1$-th slice on the first mode we will simply write $\ten{a}^{[i_1]}$. Similarly $\ten{A}^{[\ell]}$ denotes the $(\ell, \ell)$-th slice of $\ten{A}\in\R^{(n_1\times n_1)\times \dots \times (n_d\times n_d)}$ with respect to the first two modes.

\par We start constructing the elements of the ``all-in-one'' system from the $p$ individual right-hand sides. Let $\ten{b}_\ell\in\R^{n_1\times \cdots\times n_d}$ be a TT-vector for every $\ell\in\{1,\dots, p\}$ with TT-cores $\coreTT{b}_{\ell,k}\in\R^{s_{\ell, k}\times n_k\times s_{\ell, k+1}}$ for every $k\in\{1,\dots,d\}$ with $s_1 = s_{d+1} = 1$, i.e.,
		%	=======
		%	\par We start constructing the elements of the ``all-in-one'' system from the $p$ individual right-hand sides. Let $\ten{b}_\ell\in\R^{n_1\times \cdots\times n_d}$ be a TT-vector for every $\ell\in\{1,\dots, p\}$ with TT-cores $\core{\ten{f}}{\ell,k}\in\R^{s_{\ell, k}\times n_k\times s_{\ell, k+1}}$ for every $k\in\{1,\dots,d\}$ with $s_1 = s_{d+1} = 1$, i.e.,
		%	>>>>>>> 0b3a490ec425af4949d0f0dd98e660a79ad7fb84
\begin{equation}
\label{def:b_ell}
\ten{b}_\ell = \coreTT{b}_{\ell,1}\cdots \coreTT{b}_{\ell,d}
\end{equation}
and its $(i_1, \dots, i_d)$ element writes
	\[
		\ten{b}_\ell (i_1,\dots, i_d) = \matcoreTT{B}_{\ell, 1}(i_1)\cdots \matcoreTT{B}_{\ell, d}(i_d)
	\]
with $\matcoreTT{B}_{\ell, k}(i_k)\in\R^{s_{\ell, k}\times s_{\ell, k+1}}$, $\matcoreTT{B}_{\ell, 1}(i_1)\in\R^{1\times s_{\ell,2}}$ and $\matcoreTT{B}_{\ell, d}(i_d)\in\R^{s_{\ell, d}\times 1}$. For simplicity we impose $s_k = s_{\ell, k}$ for every $\ell\in\{1,\dots,p\}$ and $k\in\{1,\dots, d\}$.

\begin{remark}
	This assumption on the TT-rank of $\ten{b}_\ell$ is not binding. Indeed setting $s_k = \max_{h\in\{1,\dots, p\}} s_{h,k}$, then each core tensor $\coreTT{b}_{h,k}$ of mode sizes $(s_{h, k}, n_h, s_{h, k+1})$ can be extended with zeros to $(s_{k}, n_h, s_{k+1})$ 
\end{remark}

	We want to construct a tensor $\ten{b}\in\R^{p\times n_1\times \cdots \times n_d}$ such that its $\ell$-th slice with respect to the first mode of $\ten{b}$ is $\ten{b}_\ell$, i.e., $\ten{b}^{[\ell]} = \ten{b}_\ell.$ As consequence, the $k$-th TT-core of $\ten{b}$ is $\coreTT{b}_k\in\R^{ps_{k}\times n_k\times ps_{k+1}}$ such that
		%	=======
		%	\begin{remark}
		%		This assumption on the TT-rank of $\ten{b}_\ell$ is not binding. Indeed setting $s_k = \max_{h\in\{1,\dots, p\}} s_{h,k}$, then each core tensor $\core{\ten{f}}{h,k}$ of size $(s_{h, k}, n_h, s_{h, k+1})$ can be extended with zeros to $(s_{k}, n_h, s_{k+1})$ 
		%	\end{remark}
		%	
		%	We want to construct a tensor $\ten{b}\in\R^{p\times n_1\times \cdots \times n_d}$ such that its $\ell$-th slice with respect to the first mode of $\ten{b}$ is $\ten{b}_\ell$, i.e., $\ten{b}^{(\ell)} = \ten{b}_\ell.$ As consequence, the $k$-th TT-core of $\ten{b}$ is $\core{\ten{f}}{k}\in\R^{ps_{k}\times n_k\times ps_{k+1}}$ such that
		%	>>>>>>> 0b3a490ec425af4949d0f0dd98e660a79ad7fb84
	\[
	\matcoreTT{B}_k(i_k) = \begin{bmatrix} \matcoreTT{B}_{1,k}(i_k)&&\\
	&\ddots&\\
	&& \matcoreTT{B}_{p,k}(i_k)
	\end{bmatrix}\;\in\R^{ps_k\times ps_{k+1}}
	\]
for $k\in\{1,\dots, d-1\}$, while $\coreTT{b}_{d}\in\R^{ps_{d-1}\times n_d\times 1}$ and $\coreTT{b}_0\in\R^{1\times p \times ps_1}$ are 
	\[
	\matcoreTT{B}_d(i_d) = \begin{bmatrix} \matcoreTT{B}_{1,d}(i_d)\\
	\vdots\\
	\matcoreTT{B}_{p,d}(i_d)
	\end{bmatrix}\; \in\R^{ps_d\times 1}\qquad\text{and}\qquad \matcoreTT{B}_0(\ell) = \begin{bmatrix} 0 \cdots 1 \cdots 0  \end{bmatrix}\; \in\R^{1\times ps_1}
	\]
with the $\ell$-th component of $\matcoreTT{B}_0(\ell)$ beign the only non-zero element. The TT-expression of $\ten{b}$ is
\begin{equation}
\label{def:b}
\ten{b} = \coreTT{b}_0\coreTT{b}_1\cdots\coreTT{b}_d.
\end{equation}

By construction we have that the $\ell$-th slice of $\ten{b}$ with respect to mode $1$ is 
	\[
	\begin{split}
	\ten{b}^{[\ell]} & =\matcoreTT{B}_0(\ell)\coreTT{b}_1\cdots \coreTT{b}_d \\
	&=\coreTT{b}_{\ell,1}\cdots \coreTT{b}_{\ell,d} \\
	&= \ten{b}_\ell.
	\end{split}
	\]

\par We illustrate now the construction of the ``all-in-one'' system tensor linear operator. Let $\ten{C},\ten{G}\in\R^{(n_1\times n_1)\times \cdots \times (n_d\times n_d)}$ be two TT-matrix with $k$-th TT-core $\coreTT{c}_k\in\R^{r_{k}\times n_k \times n_k \times r_{k+1}}$ and $\coreTT{g}_k\in\R^{q_{k}\times n_k \times n_k \times q_{k+1}}$ for $k\in\{1,\dots, d\}$ with $q_1 = r_1 = r_{d+1} = q_{d+1} = 1$, whose TT-expression is
	\begin{equation}
	\label{eq:C:CG}
		\ten{C} = \coreTT{c}_1\cdots 		\coreTT{c}_d\qquad\text{and}\qquad
		\ten{G} = \coreTT{g}_1\cdots \coreTT{g}_d.
	\end{equation}
Given a diagonal matrix $D = \text{diag}(\alpha_1,\dots, \alpha_p)$, we define $\ten{A}\in\R^{(p\times p)\times(n_1\times n_1)\times \cdots\times (n_d\times n_d)}$ as 
	\begin{equation}
	\label{def:A}
		\ten{A} = \mathbb{I}_p\otimes \ten{C} + D\otimes \ten{G}
	\end{equation}
Then the expression of $\coreTT{a}_k\in\R^{(r_k+q_k)\times n_k\times n_k \times (r_{k+1} + q_{k+1})}$ the $k$-th TT-core of $\ten{A}$ is
	\begin{equation}
	\label{eq:C:Acores}
		\matcoreTT{A}_k(i_k, j_k) = \begin{bmatrix}
				\matcoreTT{C}_k(i_k, j_k) & \0\\
				\0 & \matcoreTT{G}_k(i_k, j_k)
				\end{bmatrix}
		\qquad\text{and}\qquad 
		\matcoreTT{A}_d(i_d, j_d) =\begin{bmatrix}
				\matcoreTT{C}_d(i_d, j_d)\\
				\matcoreTT{G}_d(i_d, j_d)
				\end{bmatrix}
	\end{equation}

for every $i_k, j_k\in\{1,\dots, n_k\}$ and $k\in\{1,\dots, d\}$. The first TT-core $\coreTT{a}_0\in\R^{1\times p\times p \times 2}$ writes
		%=======
		%for every $i_k, j_k\in\{1,\dots, n_k\}$ and $k\in\{1,\dots, d\}$. The first TT-core $\core{\ten{a}}{0}\in\R^{1\times p\times p \times 2}$ writes
		%>>>>>>> 0b3a490ec425af4949d0f0dd98e660a79ad7fb84
	\begin{equation}
	\label{eq:C:Acores1}	
		\matcoreTT{A}_0(\ell, m) =  \delta_{\ell, m}a_{\ell}\qquad\text{with}\qquad a_\ell = 
		\begin{bmatrix}
		1 & \alpha_{\ell}
		\end{bmatrix}  
	\end{equation}
with $\delta_{\ell, m}$ the Kronecker delta, for $\ell, m\in\{1,\dots, p\}$.
The final TT-expression of $\ten{A}$ is
	\[
		\ten{A} = \coreTT{a}_0\coreTT{a}_1\cdots \coreTT{a}_d.
	\]
Remark now that $\ten{A}^{(\ell, m)}$ the $(\ell, m)$-th slice with respect to mode $1$ of $\ten{A}$ is 
	\[
		\ten{A}^{[\ell, m]} =\matcoreTT{A}_0(\ell, m)  \coreTT{a}_1\cdots\coreTT{a}_d= \delta_{\ell,m}a_\ell  \coreTT{a}_1\cdots\coreTT{a}_d.
	\]
If $\ell \ne m$, then $\ten{A}^{[\ell, m]} = \0$. On the other side, if $\ell$ and $m$ are equal, then
	\[
		\ten{A}^{[\ell, \ell]} = \mathbb{I}(\ell, \ell)a_\ell\,\coreTT{a}_1\cdots\coreTT{a}_d = \coreTT{c}_1\cdots \coreTT{c}_d + \alpha_{\ell}\, \coreTT{g}_1\cdots \coreTT{g}_d = \ten{C} + \alpha_{\ell}\, \ten{G}.
	\]

\par Let consider $\ten{A}$ and $\ten{b}$ as defined in Equation~\eqref{def:A} and~\eqref{def:b}, given $\ten{x}\in\R^{p\times n_1\times \dots \times n_d}$ and define the new vector
	\[
		\ten{r} = \ten{A}\ten{x} - \ten{b}.
	\]
We want to prove that $\ten{r}^{[\ell]}$ the $\ell$-the slice with respect to the first mode of $\ten{r}$ is equal to the difference of the $\ell$-th slices, i.e.,
	\begin{equation}
	\label{eq:C:ths}
		\ten{r}^{[\ell]} = \ten{A}^{[\ell, \ell]}\ten{x}^{[\ell]} - \ten{b}_\ell = (\ten{C} + \alpha_{\ell}\ten{G})\ten{x}^{[\ell]} - \ten{b}_\ell 
	\end{equation}
	since the $(\ell, \ell)$-th slice of $\ten{A}$ is $\ten{C} + \alpha_{\ell}\ten{G}$ for every $\ell\in\{1,\dots, p\}$. Remark that the $\ell$-th slice of $\ten{b}$ is $\ten{b}_\ell$ by construction. As consequence, the Equation~\eqref{eq:C:ths} is true if we show that the $\ell$-th slice of the contraction between $\ten{A}$ and $\ten{x}$ is equal to the contraction of their $\ell$-th slices, i.e., 
	\[
		(\ten{A}\ten{x})^{[\ell]} = \ten{A}^{[\ell, \ell]}\ten{x}^{[\ell]}.
	\]
 
	\begin{lemma}
		\label{lm:2}
		Given $\ten{A}$, $\ten{C}$, $\ten{G}$ as in Equations~\eqref{eq:C:CG} and~\eqref{def:A}, let
		$\ten{x}\in\R^{p\times n_1\times \dots \times n_d}$ be a $(d+1)$-order tensor. 
		Then the $\ell$-th slice of $\ten{A}\ten{x}$ is equal to the product of their $\ell$-th slices, i.e.
		\[
			(\ten{A}\ten{x})^{[\ell]} = \ten{A}^{[\ell, \ell]}\ten{x}^{[\ell]} = (\ten{C} + \alpha_{\ell}\ten{G})\ten{x}^{[\ell]}.
		\]
		
		Defined  and $\ten{w} = (\ten{C}+\alpha_{\ell} \ten{G})\ten{x}^{[\ell]}$, then $\ten{y}^{[\ell]}$ the $\ell$-th slice of $\ten{y}$ with respect to mode $1$ is equal to $\ten{w}$, i.e., 
		\[
			\ten{y}^{[\ell]} = \ten{w}.
		\]
	\end{lemma}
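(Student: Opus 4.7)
The plan is to exploit the block-diagonal structure of $\ten{A}$ in its first two modes. By construction, $\ten{A} = \mathbb{I}_p\otimes \ten{C} + D\otimes \ten{G}$ with $D=\mathrm{diag}(\alpha_1,\dots,\alpha_p)$, which means that as a function of the first two indices $(\ell,m)$, $\ten{A}$ is supported only on the diagonal $\ell=m$. I would therefore compute the element $(\ten{A}\ten{x})(\ell,i_1,\dots,i_d)$ directly from the contraction formula and observe that the sum over the ``row'' index $m$ collapses to the single term $m=\ell$.

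Concretely, I would write
\[
(\ten{A}\ten{x})(\ell,i_1,\dots,i_d) = \sum_{m=1}^{p}\sum_{j_1,\dots,j_d}\ten{A}(\ell,m,i_1,j_1,\dots,i_d,j_d)\,\ten{x}(m,j_1,\dots,j_d),
\]
and then substitute the definition of $\ten{A}$ to get $\ten{A}(\ell,m,i_1,j_1,\dots,i_d,j_d) = \delta_{\ell,m}\bigl[\ten{C}(i_1,j_1,\dots,i_d,j_d) + \alpha_\ell\,\ten{G}(i_1,j_1,\dots,i_d,j_d)\bigr]$. The Kronecker delta forces $m=\ell$, leaving exactly
\[
(\ten{A}\ten{x})(\ell,i_1,\dots,i_d) = \sum_{j_1,\dots,j_d}\bigl(\ten{C}+\alpha_\ell\ten{G}\bigr)(i_1,j_1,\dots,i_d,j_d)\,\ten{x}(\ell,j_1,\dots,j_d),
\]
which is precisely $\bigl((\ten{C}+\alpha_\ell\ten{G})\ten{x}^{[\ell]}\bigr)(i_1,\dots,i_d)$. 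Since by Equation~\eqref{eq:C:Acores1} the $(\ell,\ell)$-slice of $\ten{A}$ equals $\ten{C}+\alpha_\ell\ten{G}$, this identifies $(\ten{A}\ten{x})^{[\ell]}$ with $\ten{A}^{[\ell,\ell]}\ten{x}^{[\ell]}$, which proves both equalities stated in the lemma.

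The argument is essentially a one-line consequence of the block-diagonal structure; there is no real obstacle. The only point needing a little care is keeping the indexing conventions consistent (which mode is ``row'' and which is ``column'' in the contraction, and what ``slice with respect to mode $1$'' means for a $2d$-modes tensor operator). Everything else is bookkeeping. Alternatively, one could give a TT-core-level proof by inserting the definition of $\matcoreTT{A}_0(\ell,m)=\delta_{\ell,m}a_\ell$ from Equation~\eqref{eq:C:Acores1} into the TT-representation of $\ten{A}\ten{x}$ and using the block structure of $\matcoreTT{A}_k$ given by Equation~\eqref{eq:C:Acores} to factor the contraction through $\ten{C}$ and $\ten{G}$; this is more verbose but serves as a sanity check that the elementwise identity persists at the TT-format level without requiring an intermediate expansion to dense form.
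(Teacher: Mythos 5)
Your argument is correct, but it takes a different route from the paper. You prove the identity elementwise in dense format: the Kronecker delta coming from $\ten{A}=\mathbb{I}_p\otimes\ten{C}+D\otimes\ten{G}$ collapses the contraction over the first column index to the single term $m=\ell$, and the slice identity follows immediately. The paper instead carries out the entire computation at the TT-core level: it forms the cores $\matcoreTT{Y}_k(i_k)=\sum_{j_k}\matcoreTT{A}_k(i_k,j_k)\kron\matcoreTT{X}_k(j_k)$ of $\ten{A}\ten{x}$, the cores $\matcoreTT{H}_k$ of $\ten{C}+\alpha_\ell\ten{G}$ via the TT-sum rule, observes $\coreTT{h}_k=\coreTT{a}_k$ for $k\ge 2$ and $\matcoreTT{H}_1(i_1,j_1)=\matcoreTT{A}_0(\ell,\ell)\matcoreTT{A}_1(i_1,j_1)$, and concludes via the mixed-product property of the Kronecker product that the cores of $(\ten{A}\ten{x})^{[\ell]}$ and of $(\ten{C}+\alpha_\ell\ten{G})\ten{x}^{[\ell]}$ coincide. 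This matters for the intent of the appendix: the main text explicitly says these identities are ``clearly true'' in full format and defers to the appendix precisely for the TT-format details, so the core-level bookkeeping (which also exhibits the ranks and the explicit representation one would implement) is the actual content being supplied, not just a sanity check. Your dense-format argument does establish the stated tensor equality, since the slicing and contraction operations are representation-independent and TT contraction is exact; it is shorter and more transparent, but it buys less --- it does not produce the explicit TT-cores of the result, which is what the rest of the appendix (e.g., the norm decomposition in Lemma~\ref{lm:1} and the residual slice formula) builds on. The TT-core variant you sketch in your last sentence is essentially the paper's proof; if you were writing for this appendix, that is the version to flesh out.
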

	\begin{proof}
		Let $\coreTT{x}_k\in\R^{t_{k}\times n_k\times t_{k+1}}$ be the $k$-th TT-core of $\ten{x}$ for $k\in\{1,\dots, d\}$ with $t_{d+1} = 1$ and $\ten{x}_0\in\R^{1\times p\times t_{1}}$, getting
		\begin{equation}
		\label{def:x}
			\ten{x} = \coreTT{x}_0\coreTT{x}_1\cdots \coreTT{x}_d
		\end{equation}
		Set $\ten{y} =\ten{A}\ten{x}$, then by the property of TT-contraction, we get
		\[
		\begin{split}
			\ten{y}(\ell, i_1,\dots, i_d) 
					&= \sum_{j_0, j_1,\dots, j_d = 1}^{p, n_1,\dots, n_d}\matcoreTT{A}_0(\ell,j_0)\matcoreTT{A}_1(i_1,j_1)\cdots \matcoreTT{A}_d(i_d, j_d)\matcoreTT{X}_0(j_0)\matcoreTT{X}_1(j_1)\cdots \matcoreTT{X}_d(j_d)\\
					&= \sum_{j_0, j_1,\dots, j_d = 1}^{p, n_1,\dots, n_d} \bigl(\matcoreTT{A}_0(\ell, j_0)\kron \matcoreTT{X}_0(j_0)\bigr)\bigl(\matcoreTT{A}_1(i_1, j_1)\otimes_{\textsc{K}}\matcoreTT{X}_1(j_1)\bigr)\cdots \bigl(\matcoreTT{A}_d(i_d, j_d)\otimes_{\textsc{K}}\matcoreTT{X}_d(j_d)\bigr)\\
					&= \matcoreTT{Y}_0(\ell)\matcoreTT{Y}_1(i_1)\cdots \matcoreTT{Y}_d(i_d)
		\end{split}
		\]
		where $\matcoreTT{Y}_k(i_k) \in\R^{r_{k}t_k\times r_{k+1}t_{k+1}}$ with $r_1 = 1$, $Y_{d}(i_d) \in\R^{r_d t_d\times 1}$ and $\matcoreTT{Y}_0(\ell) \in\R^{1\times t_1}$ defined as
		\begin{equation}
		\label{eq:C:Ycores}
			\begin{split}
				\matcoreTT{Y}_0(i_0) & = \sum_{j_0 = 1}^{p}\matcoreTT{A}_0(\ell, j_0)\kron \matcoreTT{X}_0(j_0)\\
				\matcoreTT{Y}_k(i_k) &= \sum_{j_k = 1}^{n_k}\matcoreTT{A}_k(i_k, j_k)\otimes_{\textsc{K}}\matcoreTT{X}_k(j_k)\qquad\text{for}\qquad k\in\{2,\dots, d\}.
			\end{split}
		\end{equation}
		Remark now that in the expression $\matcoreTT{Y}_0(\ell)$, the quantity $\matcoreTT{A}_0(\ell,j_0)$ is actually a vector of $2$ elements  times $\delta_{\ell,j_0}$, so we replace the Kronecker product with a simple scalar-matrix product, writing
		\begin{equation}
			\begin{split}
			\label{eq:C:Ycore1}
				\matcoreTT{Y}_0(\ell) 	&= \sum_{j_0 = 1}^{p} \delta_{\ell,j_0}a_\ell\kron \matcoreTT{X}_0(j_0) \\
							&= \matcoreTT{A}_0(\ell,\ell)\kron \matcoreTT{X}_0(\ell).
							%\\&= \begin{bmatrix}
							%X_0(\ell)&\alpha_{\ell}X_0(\ell). 
							%\end{bmatrix}
			\end{split}
		\end{equation}
		 %Therefore the final expression of the $(\ell, i_1,\dots, i_d)$ element of $\ten{y}$ is
		%\[
		%\ten{y}(\ell, i_1,\dots, i_d) = \begin{bmatrix}
		%X_0(\ell)&\alpha_{\ell}X_0(\ell) 
		%\end{bmatrix}Y_1(i_1)\cdots Y_d(i_d). 
		%\]
		
		Let $\ten{x}^{[\ell]}$ be the $\ell$-th slice with respect to the first mode of $\ten{x}$, whose TT-expression is
		\begin{equation*}
			%\label{def:x_ell}
			\ten{x}^{[\ell]} = \matcoreTT{X}_0(\ell)\coreTT{x}_1\cdots \coreTT{x}_d.
		\end{equation*}
		We define $\ten{x}^{[\ell]}$ TT-cores to get the clean expression as 
		\begin{align*}
		\coreTT{x}_{\ell,1} &= \matcoreTT{X}_0(\ell)\ten{x}_1 \;\in\R^{1\times n_1\times t_2}\\
		\coreTT{x}_{\ell,k} &= \coreTT{x}_{k}\in\R^{t_k\times n_k\times t_{k+1}}\qquad\text{for}\qquad k\in\{2,\dots, d\}
		\end{align*}
		getting
		\[
			\ten{x}^{[\ell]} = \coreTT{x}_{\ell,1}\cdots\coreTT{x}_{\ell,d}.
		\]
		To compute $\ten{w} = (\ten{C}+\alpha_{\ell}\ten{G})\ten{x}^{[\ell]}$, we need to clarify the structure of the $k$-th TT-core of $\ten{H} = (\ten{C} + \alpha_{\ell}\ten{G})$, given by the TT-sum rule. Therefore the $k$-th TT-core is $\coreTT{h}_k\in\R^{(r_k + q_k)\times n_k\times n_k \times (r_{k+1} + q_{k+1})}$ such that
		\begin{equation}
			\label{eq:C:CGcores}
			\matcoreTT{H}_k(i_k, j_k) = \begin{bmatrix}
			\matcoreTT{C}_k(i_k, j_k) & \0 \\
			\0 & \matcoreTT{G}_k(i_k, j_k)
			\end{bmatrix}\qquad\text{and}\qquad \matcoreTT{H}_d(i_d, j_d) = \begin{bmatrix}
			\matcoreTT{C}_d(i_d, j_d)\\
			\matcoreTT{G}_d(i_d, j_d)
			\end{bmatrix}
		\end{equation}
		for $i_k, j_k\in\{1,\dots, n_k\}$ and $k\in\{2,\dots, d\}$ with $r_{d+1} + q_{d+1} = 1$. The first TT-core $\coreTT{h}_{1}\in\R^{1\times n_1\times n_1\times (r_2+q_2)}$ is
		\[
			\matcoreTT{H}_1(i_1, j_1) = \begin{bmatrix}
			\matcoreTT{C}_1(i_1, j_1)\\
			\alpha_{\ell}\matcoreTT{G}_1(i_1, j_i)
			\end{bmatrix}
		\]
		for $i_1, j_1\in\{1,\dots, n_1\}$. 
		\par Compute the $(i_1,\dots, i_d)$ element of $\ten{w} = \ten{H}\ten{x}^{[\ell]} =  (\ten{C}+\alpha_{\ell}\ten{G})\ten{x}^{[\ell]}$ as
		\[
		\begin{split}
			\ten{w}(i_1,\dots, i_d) 
			&= \sum_{j_1,\dots, j_d}^{n_1\dots n_d}\matcoreTT{H}_1(i_1, i_1)\cdots \matcoreTT{H}_d(i_d, j_d)\matcoreTT{X}_{\ell,1}(j_1)\cdots \matcoreTT{X}_{\ell,d}(j_d)\\
			&= \sum_{j_1,\dots, j_d}^{n_1,\dots, n_d} \bigl(\matcoreTT{H}_1(i_1, j_1))\kron \matcoreTT{X}_{\ell,1}(j_1)\bigr)\cdots \bigl(\matcoreTT{H}_d(i_d, j_d)\kron\matcoreTT{X}_{\ell,d}(j_d)\bigr)\\
			&= \matcoreTT{W}_1(i_1)\matcoreTT{W}_2(i_2)\cdots \matcoreTT{W}_d(i_d)
		\end{split}
		\]
		where $\matcoreTT{W}_k(i_k)\in\R^{r_kt_k\times r_{k+1}t_{k+1}}$, $W_d\in\R^{r_dt_d \times 1}$ and $\matcoreTT{W}_1 \in\R^{1\times r_2t_2}$ are defined as
		\begin{align}
		\matcoreTT{W}_1(i_1) &= \sum_{j_1 = 1}^{n_1}\matcoreTT{H}_1(i_1, j_1)\kron \matcoreTT{X}_0(\ell)\matcoreTT{X}_1(j_1)\\
		\matcoreTT{W}_k(i_k) &= \sum_{j_k=1}^{n_k}\matcoreTT{W}_k(i_k, j_k)\kron\matcoreTT{X}_k(j_k)\qquad\text{for}\qquad k\in\{2,\dots, d\}.
		\end{align}
		
		Comparing Equation~\eqref{eq:C:CGcores} and~\eqref{eq:C:Acores}, we get $\coreTT{h}_k= \coreTT{a}_k$ for $k\in\{2,\dots, d\}$. As consequence $\matcoreTT{W}_k(i_k)$ writes as
		
		\[
			\matcoreTT{W}_k(i_k) = \sum_{j_k=1}^{n_k}\matcoreTT{H}_k(i_k, j_k)\kron\matcoreTT{X}_k(j_k) = \sum_{j_k=1}^{n_k}\matcoreTT{A}_k(i_k, j_k)\kron\matcoreTT{X}_k(j_k) = \matcoreTT{Y}_k(i_k).		
		\] 
		If we show that $\matcoreTT{Y}_0(\ell)Y_1(i_1)$ is equal to $\matcoreTT{W}_1(i_1)$, then the thesis holds true. Remark now that $\matcoreTT{H}_1(i_1, j_1)$ can be expressed equivalently as
		\[
			\matcoreTT{H}_1(i_1, j_1) = \begin{bmatrix}
			1 & \alpha_{\ell}
			\end{bmatrix}	\begin{bmatrix}
			\matcoreTT{C}_1(i_1, j_1) & \0 \\
			\0 & \matcoreTT{G}_1(i_1, j_1) 
			\end{bmatrix} 
			%= \begin{bmatrix}
			%1 & \alpha_{\ell}
			%\end{bmatrix}A_1(i_1, j_1)
			 = \matcoreTT{A}_0(\ell, \ell)\matcoreTT{A}_1(i_1,j_1).	
		\]
		from Equation~\eqref{eq:C:Acores} and~\eqref{eq:C:Acores1}. Thanks to this last equation, we have that
		\[
		\begin{split}
		\Bigl(\matcoreTT{H}_1(i_1, j_1)\Bigr)\kron\Bigl(\matcoreTT{X}_0(\ell)\matcoreTT{X}_1(j_1)\Bigr) &=
		\Bigl(\matcoreTT{A}_0(\ell, \ell) \matcoreTT{A}_1(i_1, j_1)\Bigr)\kron\Bigl(\matcoreTT{X}_0(\ell)\matcoreTT{X}_1(j_1)\Bigr)\\ &=\Bigl(\matcoreTT{A}_0(\ell, \ell)\kron \matcoreTT{X}_0(\ell)\Bigr)\Bigl(\matcoreTT{A}_1(i_1, j_1))\kron \matcoreTT{X}_1(j_1)\Bigr)
		%\\ 
		%&= X_0(\ell)\Bigl(A_1(i_1, j_1))\otimes_{\textsc{K}}X_1(j_1)\bigr).	
		\end{split}
		\]
		by the mixed-product property of the Kronecker product. Summing over index $j_1$ the previous equation leads to
		\[
		\begin{split}
		\matcoreTT{W}_1(i_1) &= \sum_{j_1=1}^{n_1}\Bigl(\matcoreTT{H}_1(i_1, j_1)\Bigr)\kron\Bigl(\matcoreTT{X}_0(\ell)\matcoreTT{X}_1(j_1)\Bigr) \\
		&=  \sum_{j_1=1}^{n_1}\Bigl(\matcoreTT{A}_0(\ell, \ell)\kron \matcoreTT{X}_0(\ell)\Bigr)\Bigl(\matcoreTT{A}_1(i_1, j_1))\kron \matcoreTT{X}_1(j_1)\Bigr)\\
		&= \matcoreTT{Y}_0(\ell)\matcoreTT{Y}_1(i_1),
		\end{split}
		\]
		from Equations~\eqref{eq:C:Ycores} and~\eqref{eq:C:Ycore1}
		i.e., the thesis.
	\end{proof}

 By the result of Lemma~\ref{lm:2}, we get that the $\ell$-th slice of $\ten{A}\ten{x}$ writes
	\[
		(\ten{A}\ten{x})^{[\ell]} = \ten{C}\ten{x}^{[\ell]}.
	\]
	Therefore the $\ell$-th slice of $\ten{r}$ is 
	\[
		\ten{r}^{[\ell]} = \ten{C}\ten{x}^{[\ell]} - \ten{b}^{[\ell]}.
	\]
	
	As conclusive result of this construction, we want to show that
	
	\[
		||\ten{r}||^2 = \sum_{\ell = 1}^{p}||\ten{r}^{[\ell]}||^{2}.
	\]
	
	\begin{lemma}
		\label{lm:1}
		Given $\ten{s}\in\R^{n_0\times n_1\times \dots \times n_d}$ and its $i_0$-th slice with respect to the first mode $\ten{s}^{(i_0)}$ then 
		\[
		||\ten{s}||^2 = \sum_{i_0 = 1}^{n_0}||\ten{s}^{[i_0]}||^2.
		\]
	\end{lemma}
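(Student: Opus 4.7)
My plan is to unfold the Frobenius norm directly from its definition as the square root of the sum of squares of all entries, and then separate the outer sum over the first-mode index from the inner sums over the remaining modes. Since the slice $\ten{s}^{[i_0]}$ is by definition the subtensor $\ten{s}(i_0,:,\dots,:) \in \R^{n_1\times\cdots\times n_d}$, its squared Frobenius norm is exactly the inner sum.

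Concretely, I would write
\[
\norm{\ten{s}}^2 = \sum_{i_0=1}^{n_0}\sum_{i_1=1}^{n_1}\cdots\sum_{i_d=1}^{n_d} \ten{s}(i_0,i_1,\dots,i_d)^2,
\]
then pull the $i_0$-sum to the outside and recognize the remaining multi-sum as the squared Frobenius norm of $\ten{s}^{[i_0]}$:
\[
\norm{\ten{s}}^2 = \sum_{i_0=1}^{n_0}\left(\sum_{i_1=1}^{n_1}\cdots\sum_{i_d=1}^{n_d} \ten{s}^{[i_0]}(i_1,\dots,i_d)^2\right) = \sum_{i_0=1}^{n_0}\norm{\ten{s}^{[i_0]}}^2.
\]

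There is no real obstacle here; the statement is essentially Fubini for finite sums applied to the definition of the Frobenius norm, and the only thing to be careful about is simply identifying the slice notation $\ten{s}^{[i_0]}$ with the inner array of entries. The lemma is stated purely in full-tensor terms, so no TT-format machinery is needed in the proof, even though it will be applied to TT-tensors elsewhere.
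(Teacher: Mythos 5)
Your proof is correct, but it takes a genuinely different route from the paper's. You argue directly on the full-tensor entries: the squared Frobenius norm is the sum of squares over all indices, and splitting off the $i_0$-sum immediately identifies the inner sums as $\norm{\ten{s}^{[i_0]}}^2$. The paper instead carries out the entire computation inside the TT formalism: it writes each entry as the product of core slices $\matcoreTT{S}_0(i_0)\matcoreTT{S}_1(j_1)\cdots\matcoreTT{S}_d(j_d)$, rewrites the square of this scalar product as a chain of Kronecker products, applies the mixed-product property to absorb $\matcoreTT{S}_0(i_0)$ into the first core, and then recognizes the result as the TT-expression of the squared norm of each slice. Your argument is shorter, representation-independent, and makes clear that the identity is just finite Fubini applied to the Frobenius norm -- nothing about TT-format is logically needed for the statement as given. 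What the paper's longer computation buys is a consistency check internal to the appendix: the slices $\ten{s}^{[i_0]}$ are there \emph{defined} through their TT-cores (by contracting $\matcoreTT{S}_0(i_0)$ into $\coreTT{s}_1$), so redoing the norm calculation at the level of cores verifies that this TT-level definition of the slice reproduces the expected norm decomposition. Both proofs are valid; yours is the more economical one for the lemma as stated.
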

	\begin{proof}
	Let $\ten{s}\in\R^{n_0\times n_1\times \cdots \times n_{d}}$ be a  $(d+1)$-order tensor expressed in TT-format with TT-cores 
	$\coreTT{{s}}_{k}\in\R^{r_{k}\times n_k\times r_{k+1}}$ with $r_0 = r_{d+1} = 1$, such that
		\[
			\ten{s} = \coreTT{s}_0\cdots \coreTT{s}_{d}.
		\]
		
		Let define the TT-expression of $\ten{s}^{(i_0)}\in\R^{n_1\times \cdots \times n_{d}}$ is
		\begin{equation*}
			%\label{eq:2}
			\ten{s}^{[i_0]} = \matcoreTT{S}_0(i_0)\coreTT{s}_1\cdots \coreTT{s}_{d}.
		\end{equation*}

		To have a correct TT-representation of $\ten{s}^{[i_0]}$ we define its TT-cores $\coreTT{s}_{i_0,k}\in\R^{r_{k-1}\times n_k \times r_k}$ as follows

		\begin{align*}
			\coreTT{s}_{i_0,1} &= \matcoreTT{S}_0(i_0)\coreTT{s}_1 \;\in\R^{1\times n_1\times r_2}\\
			\coreTT{s}_{i_0, k} &= \coreTT{s}_{k}\;\in\R^{r_{k}\times n_k\times 	r_{k+1}}\qquad\text{for}\qquad k\in\{2,\dots, d\}.
		\end{align*}
		Compute now the norm of $\ten{s}$ as
		\begin{equation*}
			\begin{split}
			\label{eq:4}
			||\ten{s}||^2 &= \sum_{i_0, j_1,\dots, j_d = 1}^{n_0,n_1\dots, n_d}\matcoreTT{S}_0(i_0)\matcoreTT{S}_1(j_1)\cdots \matcoreTT{S}_d(j_d)\matcoreTT{S}_0(i_0)\matcoreTT{S}_1(j_1)\cdots \matcoreTT{S}_d(j_d)\\
			&=\sum_{i_0,j_1,\dots, j_d = 1}^{n_0,n_1,\dots, n_d}\Bigl(\matcoreTT{S}_0(i_0)\kron \matcoreTT{S}_0(i_0)\Bigr)\Bigl(\matcoreTT{S}_1(j_1)\kron \matcoreTT{S}_1(j_1)\Bigr)\cdots 	\Bigl(\matcoreTT{S}_d(j_d)\kron \matcoreTT{S}_d(j_d)\Bigr).
			\end{split}
		\end{equation*}
		Applying the mixed-product property of the Kronecker product to the first two matrix product, this last equation writes as
		\begin{equation*}
			\begin{split}
			||\ten{s}||^2 &=\sum_{i_0,j_1,\dots, j_d = 1}^{n_0,n_1,\dots, n_d}\Bigl(\matcoreTT{S}_0(i_0)\kron \matcoreTT{S}_0(i_0)\Bigr)\Bigl(\matcoreTT{S}_1(j_1)\kron \matcoreTT{S}_1(j_1)\Bigr)\cdots 	\Bigl(\matcoreTT{S}_d(j_d)\kron \matcoreTT{S}_d(j_d)\Bigr)\\
			&= \sum_{i_0=1}^{n_0}\sum_{j_1, \dots, j_d = 1}^{n_1,\dots, n_d}\Bigl(\matcoreTT{S}_0(i_0)\matcoreTT{S}_1(j_1)\kron \matcoreTT{S}_0(i_0)\matcoreTT{S}_1(j_1)\Bigr)\cdots \Bigl(\matcoreTT{S}_d(j_d)\kron \matcoreTT{S}_d(j_d)\Bigr)\\
			&= \sum_{i_0=1}^{n_0}\sum_{j_1, \dots, j_d = 1}^{n_1,\dots, n_d}\Bigl(\matcoreTT{S}_{i_0,1}(j_1)\kron \matcoreTT{S}_{i_0,1}(j_1)\Bigr)\cdots \Bigl(\matcoreTT{S}_{i_0,d}(j_d)\kron \matcoreTT{S}_{i_0,d}(j_d)\Bigr)\\
			&= \sum_{i_0=1}^{n_0}||\ten{s}^{[i_0]}||^2.
			\end{split}
		\end{equation*}
		i.e., the thesis.
	\end{proof}
	
	By the result of Lemma~\ref{lm:1}, we have
	\[
		||\ten{r}||^2 = \sum_{\ell = 1}^{p}||\ten{r}^{[\ell]}||^{2}.
	\]

	Once the ``all-in-one'' system has been completely described in its construction in TT-format, we present a further result related to Proposition~\ref{prop:eta_Ab}.
	\begin{corollary}
	\label{cor:eta_Ab:II}
	Under the hypothesis of Corollary~\ref{cor:eta_Ab}, if there exists a $k^\dagger\in\N$ such that $\norm{\ten{x}^{[\ell]}_k}~\le~ \norm{\ten{A}^{-1}}_2\sqrt{p}$ for every $k\ge k^\dag$, then
	\begin{equation}
	\label{eqCP2:T2}
	\eta_{\ten{A},\ten{v}}(\ten{x}_k)\,\rho^\dag \ge \eta_{\ten{A}_\ell, \ten{v}_\ell}(\ten{x}^{[\ell]}_k)\qquad\text{where}\qquad \rho^\dag = \frac{\sqrt{p}}{2-\nu}(1+\kappa_2(\ten{A}))
	\end{equation}
	with $\kappa_2(\ten{A}) = ||\ten{A}||_2||\ten{A}^{-1}||_2$ for every $\ell\in\{1,\dots, p\}$ and for every $k\in\N$ such that $k \ge k^{\ddagger}$ where $k^{\ddagger} = \max \{k^{**}, k^\dag\}$ with $k^{**}$ given in Corollary~\ref{cor:eta_Ab}.
\end{corollary}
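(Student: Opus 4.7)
The plan is to upgrade Corollary~\ref{cor:eta_Ab} by showing that, under the extra hypothesis, the iterate-dependent scaling $\rho^{*}(\ten{x}_k)$ can be replaced by the constant $\rho^{\dagger}$. For $k\ge k^{**}$, Corollary~\ref{cor:eta_Ab} already supplies
\[
\eta_{\ten{A},\ten{b}}(\ten{x}_k)\,\rho^{*}(\ten{x}_k)\ \ge\ \eta_{\ten{A}_\ell,\ten{b}_\ell}(\ten{x}_k^{[\ell]}),
\qquad \rho^{*}(\ten{x}_k)=\frac{\norm{\ten{A}}_2\norm{\ten{x}_k}+\sqrt{p}}{2-\nu},
\]
so the whole statement will follow once we verify the pointwise inequality $\rho^{*}(\ten{x}_k)\le \rho^{\dagger}$ on the set $k\ge k^{\ddagger}=\max\{k^{**},k^{\dagger}\}$ and then simply chain the two inequalities.

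The main step is therefore to control $\norm{\ten{A}}_2\norm{\ten{x}_k}$ by $\sqrt{p}\,\kappa_2(\ten{A})$. To that end, I would combine the slice identity $\norm{\ten{x}_k}^2=\sum_{\ell=1}^{p}\norm{\ten{x}_k^{[\ell]}}^{2}$ from Lemma~\ref{lm:1} with the new hypothesis on the slices of $\ten{x}_k$ to conclude $\norm{\ten{x}_k}\le \norm{\ten{A}^{-1}}_2\sqrt{p}$, which is consistent with the asymptotic estimate $\norm{\ten{A}^{-1}\ten{b}}\le \norm{\ten{A}^{-1}}_{2}\norm{\ten{b}}=\norm{\ten{A}^{-1}}_{2}\sqrt{p}$ as $\ten{x}_k$ approaches the exact solution. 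Multiplying by $\norm{\ten{A}}_{2}$ converts the right-hand side into the condition number $\kappa_2(\ten{A})$ scaled by $\sqrt{p}$, giving
\[
\rho^{*}(\ten{x}_k)\ \le\ \frac{\sqrt{p}\,\kappa_2(\ten{A})+\sqrt{p}}{2-\nu}\ =\ \frac{\sqrt{p}}{2-\nu}\bigl(1+\kappa_2(\ten{A})\bigr)\ =\ \rho^{\dagger}.
\]

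Substituting this uniform bound back into the inequality of Corollary~\ref{cor:eta_Ab} then yields~\eqref{eqCP2:T2} for every $k\ge k^{\ddagger}$, as the definition of $k^{\ddagger}$ was precisely chosen so that the hypotheses of both Corollary~\ref{cor:eta_Ab} and the present statement apply simultaneously. The only delicate point is the bookkeeping that connects the per-slice hypothesis on $\norm{\ten{x}_k^{[\ell]}}$ with the global bound on $\norm{\ten{x}_k}$; once that link is in place via Lemma~\ref{lm:1}, the remaining algebra is a direct substitution of $\kappa_2(\ten{A})=\norm{\ten{A}}_2\norm{\ten{A}^{-1}}_2$.
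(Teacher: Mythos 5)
Your overall strategy---chaining Corollary~\ref{cor:eta_Ab} with a uniform bound $\rho^{*}(\ten{x}_k)\le\rho^{\dagger}$ and using nonnegativity of the backward errors to conclude---is the natural and presumably intended one (the paper states this corollary without any proof, so there is nothing more elaborate to match). The gap is precisely in the ``delicate point'' you flag and then wave through. To get $\rho^{*}(\ten{x}_k)\le\rho^{\dagger}$ you need $\norm{\ten{A}}_2\norm{\ten{x}_k}\le\sqrt{p}\,\kappa_2(\ten{A})$, i.e., $\norm{\ten{x}_k}\le\sqrt{p}\,\norm{\ten{A}^{-1}}_2$. But feeding the stated per-slice hypothesis $\norm{\ten{x}_k^{[\ell]}}\le\norm{\ten{A}^{-1}}_2\sqrt{p}$ into Lemma~\ref{lm:1} gives
\[
\norm{\ten{x}_k}^2=\sum_{\ell=1}^{p}\norm{\ten{x}_k^{[\ell]}}^{2}\;\le\; p\cdot p\,\norm{\ten{A}^{-1}}_2^{2},
\]
hence only $\norm{\ten{x}_k}\le p\,\norm{\ten{A}^{-1}}_2$: you lose a factor $\sqrt{p}$, and the constant you actually obtain is $\frac{\sqrt{p}}{2-\nu}\bigl(1+\sqrt{p}\,\kappa_2(\ten{A})\bigr)$, which is not $\rho^{\dagger}$.

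The way to close this is not to route the argument through Lemma~\ref{lm:1} at all: the bound $\norm{\ten{x}_k}\le\norm{\ten{A}^{-1}}_2\sqrt{p}$ is exactly the hypothesis read as a statement about the full iterate $\ten{x}_k$, which is consistent with your own heuristic $\norm{\ten{A}^{-1}\ten{b}}\le\norm{\ten{A}^{-1}}_2\norm{\ten{b}}=\norm{\ten{A}^{-1}}_2\sqrt{p}$ --- note that the $\sqrt{p}$ there comes from $\norm{\ten{b}}=\sqrt{p}$, whereas each individual right-hand side has $\norm{\ten{b}_\ell}=1$, so a slice-level bound of the same size is not the natural asymptotic estimate. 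If you insist on a genuinely per-slice hypothesis, it must be $\norm{\ten{x}_k^{[\ell]}}\le\norm{\ten{A}^{-1}}_2$ for every $\ell$ (which the exact individual solutions satisfy, since $\norm{\ten{A}_\ell^{-1}}_2\le\norm{\ten{A}^{-1}}_2$ for the block-diagonal operator); then Lemma~\ref{lm:1} delivers $\norm{\ten{x}_k}\le\sqrt{p}\,\norm{\ten{A}^{-1}}_2$ and your substitution goes through. As written, your accounting does not establish $\rho^{*}(\ten{x}_k)\le\rho^{\dagger}$, and you should make the choice of reading explicit before chaining the inequalities.
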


\end{document}